\documentclass[12pt]{article} 

\usepackage{amsfonts} 					
\usepackage{amssymb} 					
\usepackage{amsmath} 					
\usepackage{geometry} 					
\usepackage{fancyhdr} 					
\usepackage{graphicx} 					
\usepackage{epsf} 						
\usepackage{epsfig} 					
\usepackage[bottom]{footmisc}			
\usepackage{natbib} 					
\usepackage{setspace}					
\usepackage{rotating}					
\usepackage{threeparttable} 	
\usepackage{booktabs}					
\usepackage{caption}					
\usepackage{array} 						
\usepackage{float} 						
\usepackage{forest}						
\usepackage{tikz} 						
\usepackage{pgfkeys}
\usepackage{comment}					
\usepackage{xcolor}						
\usepackage{multirow}					
\usepackage{mathtools}					
\usepackage{pdflscape} 					
\usepackage{subcaption}					

\usepackage[colorlinks, urlcolor=blue, linkcolor=blue,citecolor=blue,backref=page]{hyperref}

\usepackage{float}
\usepackage{amsmath,amsfonts,esint}
\usepackage{amsmath,amstext,rotating}
\usepackage{amsfonts,amssymb,graphics,xspace,endnotes}
\usepackage{lineno}
\usepackage{amsthm}
\usepackage{physics}
\usepackage{graphicx}
\usepackage{verbatim}
\usepackage{natbib}
\usepackage{comment}
\usepackage{algorithm}
\usepackage{algorithmic}
\usepackage[utf8]{inputenc} 
\usepackage[T1]{fontenc} 
\usepackage{pgfplots}
\pgfplotsset{compat=1.18}

\newcommand{\be}{\begin{eqnarray*}}
\newcommand{\ee}{\end{eqnarray*}}

\renewcommand{\theequation}{\thesection.\arabic{equation}}
\newtheorem{assumption}{Assumption}
\newtheorem{condition}{Condition}
\newtheorem{definition}{Definition}

\newtheorem{theorem}{Theorem}[section]
\newtheorem{lemma}{Lemma}[section]

\newcommand{\mcA}{{\mathcal A}}

\DeclareMathOperator*{\sign}{sign}
\newcommand{\PP}{{\mathbb P}}

\newcommand{\citeasnoun}{\citet}

\input{tcilatex}

\usetikzlibrary{patterns}

\usepackage{natbib}
\setcitestyle{sort&compress}

\AtBeginDocument{}

\hypersetup{
    colorlinks,
    linkcolor={blue!75!black},
    urlcolor={blue!75!black},
    citecolor={blue!75!black},
}

\title{Random Set Quantile  Estimation of  Partially Identified Discrete Response  Models\thanks{First Version: May 2022. This paper previously circulated under the title
``Sharp and Robust Estimation of
Partially Identified Discrete Response Models". We are grateful for helpful comments from  Patrick Guggenberger, Ilya Molchanov, Francesca Molinari, conference participants at the 2019 Asian ES meetings in Xiamen, the 2019 Chinese ES meetings in Guangzhou,  the 2019 Midwest Econometric Study Group in Columbus, Ohio, the 2022 Montreal Econometrics summer conference, the 2022 Advances in Econometrics conference at Toulouse School of Economics, the IPDC2024 in Orl\'{e}ans, the 2024 CEMMAP workshop, the 2024 $EC^2$ and seminar participants at Georgetown, UC Berkeley, UC Louvain, University of Bristol, University of Warwick, UC Riverside, Yale, University of Iowa, Boston University, University of Arizona, University of Miami. }}
 
\author{Shakeeb Khan\thanks{Department of Economics, Boston College.}  \and
Tatiana Komarova\thanks{Corresponding author. Faculty of Economics, University of Cambridge.  tk670@cam.ac.uk.}
\and 
Denis Nekipelov\thanks{Amazon.}} 

\date{\today \\ \vspace{.5cm}}

\begin{document}
\maketitle

\spacing{1}


\begin{abstract}
Semiparametric discrete choice models are widely applied in economics, yet a fundamental tension arises when covariates are discrete as regression coefficients that are point identified under continuous regressors may become only partially identified. We show that this is not merely an identification problem but creates serious estimation pathologies. Classical estimators, including the maximum score estimator of \citet{manski1975}, not only have population maximizers that are {\em outer regions} of the identified set  (\citet{komarova2013}) but also converge to a random set drawn from a finite collection of deterministic regions that partition that outer region. 
To resolve this failure, we introduce the Random Set Quantile (RSQ) estimator which extracts the $\tau$-quantile of the classical estimator for $\tau \in (1/2,1)$. We prove this result for a class of widely used models, which includes binary/multinomial choice 
 and discrete outcome panel data models. This construction is consistent and locally robust across the full parameter space, including precisely those configurations where classical estimators break down. A feasible implementation based on the $m$-out-of-$n$ bootstrap inherits both properties. We apply the methodology to the 2019 UK General Election, where the discrete support of Brexit-related covariates generates the partial identification our theory analyzes.

\vspace{0.05in}

\begin{description}
\item[Keywords:] Maximum score estimation, Partial identification, Identified set, Robustness,
Random Set Quantile, Panel data discrete choice, Multinomial choice.

\item[JEL Classification:] C14, C31, C25.

\end{description}
\end{abstract}
\spacing{1.2}
\newpage 

\section{Introduction}
Early work on modeling discrete choices of economic agents with random utility over those choices led to the creation of an important class of semiparametric econometric models.  Here a discrete outcome variable is a parametric function of observed regressors, but also depends on the random noise whose distribution is unknown and is not specified parametrically. Pioneering this line of work, \citet{manski1975}, \citet{manski1985}, \citet{manski2} became foundational papers, elucidating conditions under which the parameters of such models attain point identification. Specifically, in a binary choice model, by stipulating the conditional quantile of the random noise to be zero, coupled with, most notably, a continuity assumption regarding the distribution of at least one regressor, these papers established the precise conditions sufficient for parameter point identification.

In practice, however, discrete-only covariates are common, particularly in a range of policy-relevant applications. Income categories, education levels, geographic indicators, treatment assignments are all discrete by nature. When all regressors are discrete, the continuity condition  is violated which may lead to the failure of point identification, the case analyzed in \citet{komarova2013}. In this paper  we take  discrete regressors seriously as an important  modeling feature rather than an inconvenience.  

We demonstrate that depending on the true parameter value of the data generating process, the identified set may
be either a singleton or a set of strictly positive dimension. We study not only
identification but also what happens to estimation when the identified set changes
its structure across the parameter space. Our analysis delivers three interconnected
contributions.

Our first contribution is formalizing the concepts of both an \textit{exhaustive characterization} and a
\textit{coarse characterization} of a semiparametric model. An exhaustive
characterization fully exploits all observable implications of the model and maps every observable functional of the conditional distribution of the
outcome to a constraint on the underlying parameter of interest. A coarse characterization uses a strict subset of these implications. We show that three leading classical estimators, which  are the maximum score estimator
of \citet{manski1975, manski1985} for cross-sectional binary choice, the
conditional maximum score estimator of \citet{manski1987} for panel binary
choice, and the multinomial maximum score estimator of \citet{manski1975}, 
each induce a specific coarse structure. Using these coarse structures is  essentially 
equivalent to using the respective exhaustive ones when at least one regressor is continuous, but this changes when all regressors are discrete. 

Our second contribution is to show, to our knowledge for the first time in full generality, that a broad class of classical estimators that induce coarse structures (including those above) behave fundamentally differently with discrete regressors than in settings with continuous covariates and point identification. We establish four general theorems that characterize the weak limits of such estimators when coarse and exhaustive characterizations differ with positive probability. These results show that the estimators may be \textit{inconsistent}: instead of converging in probability to the identified set, they may converge weakly to a \textit{random set} that is a draw from a finite collection of deterministic regions partitioning a strict superset of the identified set, with the identified set being on the boundary of each region and  its closure being equal to the intersection of any strict majority of the corresponding closed regions.
These results apply to all three leading examples and, more broadly, to any model within our general framework.



In our leading examples, configurations 
in which inconsistencies of classical estimators occur   correspond to economically meaningful situations, such as indifference or threshold behavior. As a result, the failure of standard estimators is not a theoretical curiosity but a feature that can be expected in a wide range of empirical applications.
  Thus, what appears as a knife-edge phenomenon in models with a continuous regressor, becomes a first-order feature of the data once covariates are discrete, and therefore should not  be ignored in practice.

Our third contribution builds on the structure described for classical estimators in four general theorems. We propose a new class of estimators based on the \textit{quantile of a random set} \citep{molchanov2006book}. Our  Random Set Quantile (RSQ) estimator maps the random set output of a classical estimator into its $\tau$-quantile, $\tau \in (1/2,1)$, taken over its closure. This is the set of parameter values that belong to the estimator’s output with probability at least $\tau$. Leveraging the majority intersection property, and the fact that all deterministic regions receive equal asymptotic probability in our leading examples, the RSQ estimator recovers the closure of the identified set exactly in the limit, for any $\tau \in (1/2,1)$.

The RSQ estimator satisfies both criteria we use for evaluation. It is
\textit{consistent} at every parameter value in the data generating process,
including those where the classical estimators are inconsistent. It is also 
\textit{locally robust} in the sense that its limiting distribution varies continuously under
$\sqrt{n}$-local perturbations of the true parameter, in all directions that
resolve indifference relations driving the coarseness. Additionally,
we show that maximum score
estimators are also locally robust in those same directions, hence the two
estimators differ on exactly the consistency criterion. The RSQ estimator
dominates. Consistency is a natural and desirable property for any set estimator. Robustness is equally important in our setting as a lack of robustness would imply that small perturbations in the data generating process associated e.g. with minor changes in the distribution of discrete covariates  can lead to large and discontinuous changes in inference. Both criteria are introduced and discussed in Section \ref{estimatorandcriteria}. 


A feasible version of the RSQ estimator applied to maximum score estimator uses an $m$-out-of-$n$ bootstrap with $m = o(n)$. This  inherits the theoretical properties of the
infeasible version.


We illustrate our methodology using the 2019 UK General Election, where Labour
secured only 202 of 650 constituencies, a historic low. The central question is
whether the Leave vote in the 2016 Brexit referendum reduced Labour's probability
of retaining a constituency. The discrete support of these covariates in this application generates exactly the partial identification structure our theory analyzes. The RSQ estimate delivers a set that is a strict subset of the maximum score region, has a strictly smaller affine dimension and yields an economically important  conclusion about the differential effect of the Leave vote on Labour-held constituencies that is missed by the maximum score estimate.


Our paper draws on rich prior literature on identification and inferences for point and partially
identified semiparametric discrete choice models. This includes classic work on the maximum score estimator in \citet{manski1975},  \citet{manski1985}, \citet{manski2} with the distribution theory developed in \citet{kimpollard} and
the smoothed version of the estimator \citet{horowitz-sms} yielding asymptotically normally distributed parameters.  This model was further studied in settings with heteroskedastic errors in \citet{khan:13}, and in settings with discrete regressors\footnote{Important other work in the discrete regressor setting includes \citet{torgovitskyqe}, where the parameter of interest was not the regression coefficients as is the case here and  other referenced papers, but the the average treatment effect, for which he derived new methods to characterize the identified set.
In other related work to the paper considered here, \citet{rosenura2022} considered finite sample properties of a related estimator that is based on moment inequalities.
} in \citet{komarova2013}. 

The paper is also related to the partial identification literature, particularly the one using random set theory which was introduced to econometrics by \citet{beresteanumolinari2008} with the focus on using 
the Aumann expectation. Our approach is distinct in that we exploit the quantile of a random set rather than its expectation. 

The paper also connects to the literature on nonregular inference, particularly \citet{andrews-boundary} on boundary parameters and \citet{andrews-guggenberger-ema} on hybrid inference in moment inequality models. An important example  which they include is the set of moment inequality models, where  discontinuity arises when inequalities are binding.\footnote{Recent important work on inference in these models, particularly for subvector inference can be found in \citet{bugnicanayshi} and \citet{kaidomolinaristoye}.} Our robustness analysis mirrors their local-alternatives framework but applies it to set-valued estimators. The structural parallel between the discontinuity of identified sets in moment inequality models and in our discrete-regressor setting, together with the important differences in the underlying mechanism, is discussed in detail in Section ~\ref{estimatorandcriteria}.

The rest of the paper is organized as follows. Section~\ref{sec:generalsetting} introduces the general model, the concepts of
exhaustive and coarse characterizations, and the three theoretical examples. 
Section~\ref{sec:main_estimators_coarse} formalizes the discrete setup and states
the four general theorems on the asymptotic behavior of coarse-structure-based
estimators, with applications to the three examples. 
Section~\ref{estimatorandcriteria} introduces 
consistency and local robustness as our two evaluation criteria for set-estimators. 
Section~\ref{sec:quantile} develops the RSQ estimator and establishes its
consistency and robustness. 
Section~\ref{sec:classic} compares the RSQ estimator to the maximum score,
establishing their shared robustness and highlighting differing consistency. 
Section~\ref{sec:monte_carlo} illustrates the relative performance of the RSQ and Maximum Score Estimators through a simulation study that considers both point and partially identified designs. It also gives a  guide to feasible RSQ estimation in practice. 
Section~\ref{application} presents the empirical application. 
Section~\ref{sec:conclusion} concludes. All proofs are in the Appendix.

\section{General Setting}
\label{sec:generalsetting}

The three theoretical examples used throughout this paper (cross-sectional and panel binary choice,  multinomial choice), as will be detailed by our discussion later,  all share a structure that may not  be immediately apparent from their individual formulations. In each case, the 
observable data impose restrictions on the unknown parameter $\alpha$ through 
a mapping from observable conditional probabilities to constraints on the linear 
index $x'\alpha$. But a classical maximum score approach in each case turns out to silently drop some of the restrictions. To formalize this insight and unify the analysis across the three models (as well as make it applicable in other semiparametric settings), we introduce a general framework that captures these features.





We consider a general semiparametric model indexed by an 
outcome $Y \in \mathbb{R}^p$, covariates $X \in \mathbb{R}^k$, and an unknown 
parameter $\alpha \in \mathcal{A} \subseteq \mathbb{R}^k$, where $\mathcal{A}$ is a compact set. It is  characterized by the following equivalence:
\begin{equation}
\label{eq:formulation}
\phi(F(Y|X=x)) \in C_m \quad \iff \quad \psi(x,\alpha) \in D_m, \quad m \in \mathcal{M},
\end{equation}
where $\phi: \mathcal{F} \to \mathbb{R}^{q_1}$ is a $q_1$-dimensional functional of the conditional cumulative distribution function (c.d.f.) $F(Y|X=x)$, and $\psi: \mathbb{R}^k \times \mathbb{R}^k \to \mathbb{R}^{q_2}$ is a $q_2$-dimensional function of the linear index $x'\alpha$. The sets $C_m$ belong to a collection $\mathcal{C} = \{C_m\}_{m \in \mathcal{M}}$ of subsets of $\mathbb{R}^{q_1}$, and the sets $D_m$ belong to a collection $\mathcal{D} = \{D_m\}_{m \in \mathcal{M}}$ of subsets of $\mathbb{R}^{q_2}$.

We impose the following conditions on the model structure:

\begin{condition}
\label{cond:disjointness}
The collections $\mathcal{C} = \{C_m\}_{m \in \mathcal{M}}$ and $\mathcal{D} = \{D_m\}_{m \in \mathcal{M}}$ consist of sets that are either identical or disjoint. That is, for any $C_m, C_{m'} \in \mathcal{C}$, either $C_m = C_{m'}$ or $C_m \cap C_{m'} = \emptyset$, and similarly for $D_m, D_{m'} \in \mathcal{D}$.
\end{condition}

Condition 1 formalizes the requirement that  distinct configurations of $\phi(F(Y|X=x))$ or $\psi(x,\alpha) $ 
are distinguishable and do not overlap.

The index set $\mathcal{M}$ is unrestricted in its cardinality; it may be finite, countably infinite, or uncountable. However, in our applications, $\mathcal{M}$ typically represents a finite collection of indices.

\begin{condition}
\label{cond:positive_probability}
For every $C_m \in \mathcal{C}$, $m \in \mathcal{M}$, there exists a distribution of $X$ and a conditional c.d.f. $F(Y|X=x)$ such that $P(\phi(F(Y|X=x)) \in C_m) > 0$. Consequently, for the corresponding $D_m \in \mathcal{D}$, it follows that $P(\psi(x,\alpha) \in D_m) > 0$.
\end{condition}

This condition ensures that each set $C_m$ in the collection $\mathcal{C}$ is empirically relevant, meaning there exists at least one distribution of the covariates $X$ and a corresponding conditional distribution $F(Y|X=x)$ for which the functional $\phi(F(Y|X=x))$ has a positive probability of lying in $C_m$. This guarantees that no set $C_m$ is redundant or unattainable under the model. Similarly, through the equivalence in equation \eqref{eq:formulation}, each corresponding set $D_m$ in $\mathcal{D}$ is also attainable with positive probability under the  index $\psi(x,\alpha)$.


\begin{definition} 
\label{def:exhaustive}
The characterization in equation \eqref{eq:formulation} is \textbf{exhaustive}, and the collections $\mathcal{C} = \{C_m\}_{m \in \mathcal{M}}$ and $\mathcal{D} = \{D_m\}_{m \in \mathcal{M}}$ are \textbf{irreducible}, if there does not exist another valid characterization of the model with collections $\widetilde{\mathcal{C}} = \{\widetilde{C}_{\tilde{m}}\}_{\tilde{m} \in \widetilde{\mathcal{M}}}$ and $\widetilde{\mathcal{D}} = \{\widetilde{D}_{\tilde{m}}\}_{\tilde{m} \in \widetilde{\mathcal{M}}}$ that satisfy the following:
\begin{itemize}
    \item[(i)] Conditions \ref{cond:disjointness} and \ref{cond:positive_probability}.
    \item[(ii)] Each $\widetilde{C}_{\tilde{m}} \in \widetilde{\mathcal{C}}$ (respectively, $\widetilde{D}_{\tilde{m}} \in \widetilde{\mathcal{D}}$), $\tilde{m} \in \widetilde{\mathcal{M}}$, is a subset of some $C_m \in \mathcal{C}$ (respectively, $D_m \in \mathcal{D}$), $m \in \mathcal{M}$.
    \item[(iii)] For some $\tilde{m} \in \widetilde{\mathcal{M}}$ and $m \in \mathcal{M}$, (i) $\widetilde{C}_{\tilde{m}}$ is a proper subset of $C_m$, (ii) there exists a distribution of $X$ and a conditional c.d.f. $F(Y|X=x)$ such that $P(\phi(F(Y|X=x)) \in C_m \setminus \widetilde{C}_{\tilde{m}}) > 0$.
\end{itemize}
\end{definition}

In essence, the characterization in equation \eqref{eq:formulation} is exhaustive if it cannot be refined further by constructing strictly finer collections $\widetilde{\mathcal{C}}$ and $\widetilde{\mathcal{D}}$ that still satisfy the model’s conditions. Intuitively, it fully exploits all observable implications of the data generating process and is, therefore, closely tied to the concept of the identified set for the parameter $\alpha$. 

\noindent \textbf{Notation:} For convenience of our subsequent discussion, for every $x$ we let  $m(x) \in \mathcal M$ denote the unique index such that $\phi(F(Y| X=x)) \in C_{m(x)}$. 

\begin{definition}[Identified Set]
Consider an {exhaustive} characterization of the model in equation \eqref{eq:formulation}. The identified set $\mathcal{A}_0$ for the parameter $\alpha$ is defined as:
\begin{equation}
\label{eq:identified_set}
\mathcal{A}_0 = \left\{ a \in \mathbb{R}^k : \psi(x,a) \in D_{m(x)}, \; \text{for a.e. } x \right\}.
\end{equation}
In semiparametric models, it is standard to normalize the norm of $\alpha$ (or one of its components) prior to identification analysis. We assume that such a normalization is already incorporated into the definition of $\mathcal{A}_0$ as well as the original parameter set $\mathcal{A}$.
\end{definition}

We now formalize the notion of a coarse characterization. A coarse characterization replaces the tight 
constraint sets $D_m$ from the exhaustive characterization with 
larger sets, thereby potentially admitting additional 
parameter values as consistent with the observable data.


\begin{definition}
\label{def:coarse}
Suppose the characterization in equation \eqref{eq:formulation} is \textbf{exhaustive}. A one-directional characterization of the model, given by
\begin{equation}
\label{eq:coarse}
\phi(F(Y|X=x)) \in C_m \quad \implies \quad \psi(x,\alpha) \in D^*_m, \quad m \in \mathcal{M},
\end{equation}
is called \textbf{coarse} if the collection $\mathcal{D}^* = \{D^*_m\}_{m \in \mathcal{M}}$ satisfies the following:
\begin{itemize}
    \item[(i)] Condition \ref{cond:positive_probability}.
    \item[(ii)] For each $m \in \mathcal{M}$, the set $D^*_m$ is a superset of $D_m$, i.e., $D_m \subseteq D^*_m$.
    \item[(iii)] For some $m \in \mathcal{M}$ such that $D^*_m$ is a proper superset of $D_m$ ($D^*_m \supset D_m$), there exists a distribution of $X$ and a value of $\alpha$ such that $P(\psi(x,\alpha) \in D^*_m \setminus D_m) > 0$.
\end{itemize}
\end{definition}

This coarse characterization is one-directional, mapping knowledge of the observable functional $\phi(F(Y|X=x))$ to information about the parameter $\alpha$ via the sets $D^*_m$. Unlike the exhaustive characterization, the collection $\mathcal{D}^* = \{D^*_m\}_{m \in \mathcal{M}}$ is not required to satisfy Condition \ref{cond:disjointness}, allowing the sets $D^*_m$ to overlap in non-trivial ways. Furthermore, for some $m \in \mathcal{M}$, the sets $D^*_m$ are strictly larger than their counterparts $D_m$ in the exhaustive characterization, with positive probability under certain data generating processes. 

This relaxation implies that identification analysis based on a coarse characterization may not yield the true identified set $\mathcal{A}_0$. Instead, it often produces a strict superset of $\mathcal{A}_0$ for some DGPs, as will be demonstrated in subsequent examples. For now, we define the set of all parameter vectors consistent with a given coarse characterization as:
\begin{equation*}
\mathcal{A}^* = \left\{ a \in \mathbb{R}^k : \psi(x,a) \in D^*_{m(x)}, \; \text{for a.e. } x \right\},
\end{equation*}
with the normalization already incorporated. Specific examples of such sets and their relations to $\mathcal{A}_0$ will be provided later.

The interest in coarse characterizations stems from their prevalence in the literature. As argued in this paper, many significant estimators are built on coarse structures rather than exhaustive ones. This is primarily because these estimation approaches were designed with continuous covariates (or at least one continuous covariate) in mind, where the index $x'a$ falls into $D^*_m \setminus D_m$ with only zero probability. However, when discrete covariates are allowed, $D^*_m \setminus D_m$ may contain index values with positive probability, leading to significant consequences for the properties of these estimation approaches, as demonstrated in this paper.


\paragraph*{Objective functions and estimators} Estimators based on coarse characterizations may take various forms, typically being sample versions of population criteria that reward compliance
with the coarse restriction $\psi(x,a)\in D^*_{m(x)}$. One example of such objective functions would be based on \emph{score} function $s(\cdot, \cdot)$  that depends on $\phi(F(Y|X=x))$, drawn from the suitable space of conditional probability distributions, and on  $t \in \mathbb{R}^q$ such that
\begin{align} 
s(\phi(F(Y|X=x)),t) & \text{ is constant in } t \text{ on } D_{m}^* \quad \forall\,m \quad   a.e. x,\label{eq:pointwise_separation1} \\ 
\label{eq:pointwise_separation2}
t\in D_{m(x)}^*, \quad t' \notin D_{m(x)}^* & \Rightarrow\ s(\phi(F(Y|X=x)), t) > \ s_{m}( \phi(F(Y|X=x)), t') \quad \, a.e. x.
\end{align}
With such a score function, we  can consider, for instance, the population objective function 
\begin{equation}
\label{ex_obective_population}
Q(a)=E\left[w(x)\, s\!\big(\phi(F(Y|X=x)), \psi(x,a)\big)\right], 
\end{equation} 
with weights $ w(x)\ge 0$, $\PP(w(x)>0)=1$.
Under correct specification, \eqref{eq:pointwise_separation1}-\eqref{eq:pointwise_separation2} imply  that for each $x$ the
pointwise maximizers are those $a$ such that $\psi(x,a)\in D_{m(x)}^*$, and therefore
$\arg\max_a Q(a)=\mathcal A^*$.

Given data $\{(y_i,x_i)\}_{i=1}^n$ and  a plug-in estimator of $\phi(\widehat{F}(\cdot| X=x))$), the  direct sample criterion is 
\begin{equation}
\label{ex_obective_sample_discrete}
\widehat Q(a)
=
\sum_{x\in\mathcal X_n} w(x)
s\!\big(\phi(\widehat{F}(Y| X=x)),\psi(x,a)\big) \, \widehat P(X=x),
\end{equation}
where $\mathcal X_n$ is the sample support of $X$. 

Two natural choices of the score function are the following. 
\begin{enumerate}
\item[\quad] \textit{Indicator scores:}\quad $s(\phi(F(Y|X=x)), t)=s(\phi(F(Y|X=x)),\mathbf 1\!\{t\in D_{m(x)}^*\})$.
\item[\quad] \textit{Soft (distance) scores:}\quad $s(\phi(F(Y|X=x)),t)=s(\phi(F(Y|X=x)),-\rho\!\big(\mathrm{dist}(t,D_{m(x)}^*)\big))$
with $\rho$ increasing and $\rho(0)=0$.
\end{enumerate}


Generally, we can interpret the objectives \eqref{ex_obective_population}
as \textit{general monotone-score objectives}  due to conditions \eqref{eq:pointwise_separation1}-\eqref{eq:pointwise_separation2}. These objectives include max-score, rank-based, inequality-penalty, and smoothed
versions thereof. 

In the exposition above, we introduced a coarse structure and then considered population objectives maximized over the corresponding coarse set $\mathcal{A}^*$. This ordering is purely expository. In practice, researchers do not choose a coarse structure first but rather use established estimators that implicitly induce one. As we show below, maximum score estimators endogenously induce coarse characterizations because they are tailored to settings with continuously distributed covariates, where the exhaustive and coarse structures differ only on sets of measure zero. When applied to environments with discrete regressors and partial identification, however, this induced coarseness becomes consequential and can lead to different identification and estimation properties.

We next consider our three leading examples and  show that some well-known estimators in econometrics  can be interpreted within this exhaustive/coarse framework.

\subsection{Theoretical Example 1: Binary choice cross-sectional model \protect \citep{manski1975,manski1985,manski-jasa}} 

\label{sec:theoreticalMSbinary}

Consider the model 
\begin{align} \label{maxscoredgp}Y & =1(X'\alpha- u\geq 0), \qquad \mathcal{Q}_{\gamma}(u|X=x)  =0,\\
\label{maxscoredgpc} & F_{u|X=x} (\cdot |x)  \text{ is strictly increasing in the right-hand side neighborhood of } 0  
\end{align}
where $\mathcal{Q}(\cdot|\cdot)$ denotes the conditional quantile operator, $\gamma\in(0,1)$, $X$ is a $k$-dimensional random vector and $\alpha$ is a $k$-dimensional unknown parameter vector of interest. This model is widely used in applications such as program participation or treatment choice, where the outcome reflects whether an individual crosses a latent utility threshold.

\noindent \textit{Exhaustive characterization.} To show that this model fits our setting of exhaustive characterization, denote 
$$\phi(F(Y|X=x)) = P(Y=1|X=x)-\gamma, \quad \psi(x,\alpha)=x'\alpha,$$
and let the collections $\mathcal{C}$ and $\mathcal{D}$ in Definition \ref{def:exhaustive} consist of the same  three sets: 
\begin{align}
\label{eq:collectionC_MS} \mathcal{C}& =\{C_1=(0, +\infty), C_2=(-\infty,0), C_3=\{0\}\}, \\
\label{eq:collectionD_MS}\mathcal{D}&=\{D_1=(0, +\infty), D_2=(-\infty, 0), D_3=\{0\}\}.
\end{align}
With these definitions of $\mathcal{C}$ and  $\mathcal{D}$ the model can be equivalently written as in (\ref{eq:formulation}).

\noindent \textit{Coarse characterization induced by the maximum score estimator.} We focus on the coarse characterization induced by the maximum score estimator commonly used in this setting. Given a sample $\{(y_i,x_i\}_{i=1}^n$, the maximum score estimator maximizes 
$ \sum_{i=1}^n (y_i-\gamma) \text{sgn}(x_i'a)$ 
over $a$ in the parameter space, 
which can  be rewritten as the maximization over $a$ of  
$$\sum_{x \in \mathcal{X}_n} (\widehat{P}(Y=1|x)-\gamma) \text{sgn}(x'a) \widehat{P}(X=x), $$
with $\mathcal{X}_n$ being the sample support , $\widehat{P}(Y=1|x)=\frac{1/n\sum_{i=1}^n y_i 1(x_i=x)}{1/n\sum_{i=1}^n 1(x_i=x)}$, $\widehat{P}(X=x)=1/n\sum_{i=1}^n 1(x_i=x)$, and $\text{sgn}(z)=1(z\geq 0)-1(z<0))$. The population version of this objective function is 
$$E \left[({P}(Y=1|X)-\gamma)  \text{sgn}(X'a \geq 0)\right].$$ 
The set of maximizers of this population objective is 
\begin{equation} 
\label{Astar_MSbinary}
\{a \in \mathcal{A}: x'a \geq 0 \text{ if } {P}(Y=1|x)-\gamma >0, \text{ and } x'a <0  \text{ if } {P}(Y=1|x)-\gamma <0\}.    
\end{equation} 
This immediately implies that the coarse characterization induced by the MS estimation uses the following collection $\mathcal{D}^*$:  
\begin{equation}
\label{eq:collectionD*_MS}\mathcal{D}^*=\{D_1^*=[0, +\infty), D_2^*=(-\infty, 0), D_3^*=\mathbb{R}\}.
\end{equation} 
Thus, (\ref{Astar_MSbinary}) describes the coarse set $\mathcal{A}^*$ given by this induced  coarse characterization.

Note that both population and sample MS objective functions here can be written in the form of \eqref{ex_obective_population} and \eqref{ex_obective_sample_discrete}, respectively, where \begin{equation}
    \label{scorefunction_bc}
    s(\phi(F(Y|X=x)),u)=|P(Y=1|x)-\gamma| \cdot 1(u \in D_{m(x)}^*),  \quad \omega(x) = 1.
\end{equation}

From the perspective of the MS estimator performance, as shown in \citeasnoun{komarova2013}, the problematic aspect of the coarse characterization is the implication $\phi(F(Y|X=x)) \in C_3 \implies \psi(x'a) \in D^*_3 = \mathbb{R}$. When $P(Y=1|X=x) = \gamma$, the corresponding $x$ does not constrain the estimation of $a$, as $x'a$ can take any value in $\mathbb{R}$. For continuous covariates, the event $P(Y=1|X=x) = \gamma$ occurs with probability zero, but for discrete covariates, this event may have positive probability, impacting the estimator's performance. 
The difference between $D_1$ (an open half-line) and $D_1^*$ (a closed half-line) affects only the boundary of the maximizer set and does not change its affine dimension. Hence, this coarsening is relatively innocuous. In contrast, the discrepancy between $D_3$ (a singleton) and $D_3^*$ (the entire real line) eliminates binding restrictions on the index and is thus substantially more consequential.

Economically, observations with $m(x)=3$ are agents sitting exactly at the decision margin $x'a=0$.  In a voting model, this is someone whose latent preference difference between candidates is zero (with $\gamma=1/2$ these are indifferent voters).  In a program participation model, this is someone whose utility from enrolling exactly meets the threshold. With continuous covariates such cases are negligible by assumption. With discrete covariates that may involve income brackets, education categories, treatment indicators, a whole mass of observations can sit at this margin. These observations are arguably the most interesting ones for policy, since they are the first to respond to any perturbation. Yet they are exactly where the maximum score estimator goes silent. As we show in Section \ref{sec:main_estimators_coarse}, this has real consequences for consistency.


\noindent \textit{Parameter vector normalization.} Before we proceed with a simple example, let us touch upon a normalization issue. As discussed in multiple works by Manski (e.g. \cite{manski1985}), one can only hope to identify parameters up to scale. For that reason, a common tradition in the literature is to normalize one of the components of $\alpha$ to 1 (alternatively, $-1$ depending on the perceived direction of the effect). The parameter space $\mathcal{A} \subseteq \mathbb{R}^k$, the identified set $\mathcal{A}_0$ and the coarse set $\mathcal{A}^*$ have to respect such normalization thus making their affine dimensions at most $k-1$. We will keep such a normalization in mind without unnecessarily overemphasizing it. 

\noindent \textit{Simple example.} Let $X$ be discrete with two values, $x_1$ and $x_2$, such that $P(Y=1\mid X=x_1)>\gamma$ and $P(Y=1\mid X=x_2)=\gamma$. The exhaustive characterization implies that the identified set $\mathcal{A}_0$ (with normalization embedded in $\mathcal{A}$) consists of all $a\in\mathcal{A}$ satisfying $x_1'a>0$, $x_2'a=0.$ By contrast, the coarse set $\mathcal{A}^*$ induced by the maximum score estimator  consists of all $a\in\mathcal{A}$ such that $x_1'a\ge 0.$  If $\mathcal{A}$ has nonempty relative interior in the $(k-1)$-dimensional normalized space and is large enough,  then both $\mathcal{A}$ and $\mathcal{A}^*$ have affine dimension $k-1$, whereas $\mathcal{A}_0$ has affine dimension $k-2$.\footnote{This discussion requires that 
$\mathcal{A}$ is not fully contained in $\{a: x_2'a=0\}$.} 
Thus, in such discrete designs, where some $P(Y=1| X=x)-\gamma$ lies in $C_3$, the coarse set  
$\mathcal{A}^*$ is typically a strict superset of $\mathcal{A}_0$.

\subsection{Theoretical Example 2: Static panel data \protect \citep{manski1987}}
\label{sec:theoreticalMSpanel}

\citet{manski1987} proposed a semiparametric extension of the panel binary choice model of \citet{andersen70}:
\begin{equation}\label{discrete:choice:panel}
 Y_{is}={\bf 1}\left(c_i+{X}_{is}'{\alpha}+\epsilon_{is}>0\right)
\end{equation}
where $i=1,2,...n$ are the cross-sectional units and $ s=1,2$ are the time periods. The binary variable $Y_{is}$ and the $k$-dimensional regressor vector $X_{is}$ are each observed and the parameter of 
interest is the $k \times 1$ vector ${\alpha}$. The variables not observed in the data are $c_i$, and $\epsilon_{is}$, the former not varying with $s$ and often referred to as 
the ``fixed effect" or the individual specific effect. \citet{manski1987} imposed no  specific distributions  on unobservables.

Denote $\mathbf{X}_i=(X_{i1},X_{i2})$ and in the spirit  of \citeasnoun{manski1987}, suppose  that  
\begin{itemize}
    \item[(i)] $F_{\epsilon_{i1}|\mathbf{X}_i,c_i}(\cdot|\cdot)=F_{\epsilon_{i2}|\mathbf{X}_i,c_i}(\cdot|\cdot)$ for all  $(c_i,\mathbf{X}_i)$ in the joint support. 
    \item[(ii)] The support of $\epsilon_{it}\,\big|\,\mathbf{X}_i,c_i$ is $\mathbb{R}$.
\end{itemize}
Note that these conditions, in particular, imply that $P(Y_{is}=1|\mathbf X_i=\mathbf x) \in (0,1)$, $s=1,2$. 

\noindent \textit{Exhaustive characterization.} To obtain an exhaustive characterization (\ref{eq:formulation})  of this model under the stated assumptions,   define 
$$\phi(F(Y_{i1},Y_{i2})|\mathbf X_i=\mathbf x)) = P(Y_{i2}=1|\mathbf X_i=\mathbf x)- P(Y_{i1}=1|\mathbf X_i=\mathbf x), \quad \psi(\mathbf X,a)=X_{i2}'a-X_{i1}'a.$$ The exhaustive characterization is then obtained using these functionals and collections $\mathcal{C}$ and $\mathcal{D}$ in (\ref{eq:collectionC_MS}) and (\ref{eq:collectionD_MS}), respectively. 

\noindent \textit{Conditional maximum score estimator and the induced coarse structure.}  Given a sample $\{(y_{i1},y_{i2},x_{i1},x_{i2})\}_{i=1}^n$,  the classical estimation approach  in this literature is the maximization over index parameter $a$ of the {\it conditional maximum score} objective function proposed by \citeasnoun{manski1987}: 
\begin{equation}\label{MS:objective:panel}
\frac{1}{n} \sum_{i=1}^n (y_{i2}-y_{i1})\text{sgn}\left((x_{i2}-x_{i1})'a\right)|.
\end{equation} 
This is equivalent to the maximization over $a$ of 
$$\sum_{\mathbf x \in \mathcal{X}_n} (\widehat{P}(y_{2}=1|\mathbf x) -\widehat{P}(y_{1}=1| \mathbf x))\text{sgn}\left((x_{i2}-x_{i1})'a\right) \widehat{P}(\mathbf X=\mathbf x),$$
where $\mathcal{X}_n$ is the sample support of $\mathbf X$, $\mathbf x=(x_1,x_2)$, $\widehat{P}(y_s=1|\mathbf x)$  and $\widehat{P}(\mathbf X=\mathbf x)$ are the plug-in estimators for ${P}(y_s=1|\mathbf x)$ and ${P}(\mathbf X=\mathbf x)$, respectively,  analogous to those  in Theoretical Example 1,  The corresponding population objective function takes the form 
$$E\left( ({P}(y_{2}=1|\mathbf x) -{P}(y_{1}=1| \mathbf x))\text{sgn}\left((x_{2}-x_{1})'a\right)\right).$$
Analogously to the cross-sectional binary choice in Theoretical Example 1, we can conclude that the optimization of such an objective induces the coarse structure with the same collection $\{D_m^*\}$ of coarse sets as in Theoretical Example 1. Both the population and the sample objective function in this case can be rewritten in the form 
\eqref{ex_obective_population} and \ref{ex_obective_sample_discrete} with $$s(\phi(F(Y=1|X=x)),u)=|{P}(y_{2}=1|x) -{P}(y_{1}=1|x))|\cdot \mathbf{1}(u \in D_{m(x)}^* ), \qquad w(x)=1.$$ 

Analogous to Theoretical Example 1, the coarseness has a particularly drastic effect when we have the case of discretely distributed $\mathbf X$ with some points in this support having $m(\mathbf x)=3$ (that is, satisfying  $\phi(F(y|\mathbf x)) \in C_3$) as then $D_{3}=\{0\}$ is replaced with $D_3^*=\mathbb{R}$. Individuals with such $\mathbf x$ are those whose behavior, on average, did not change between periods.  These stable individuals are  often the primary target of the policy being studied. A subsidy meant to induce take-up, a change in eligibility rules, a tax incentive may be designed to move people who were not moving before. When discrete covariates create a positive mass of such observations, ignoring in coarse structure the restriction they impose  removes from the estimation precisely the variation that the policy question is about.

Just as in Theoretical Example 1, we normalize one of coefficients in  $\alpha$ and assume that the parameter space $\mathcal{A}$ has a relative interior in the $(k-1)$-dimensional normalized space and is large enough to ensure that $\mathcal{A}^*$ also has a relative interior  in that space. Case $m(x)=3$ happening with a positive probability then guarantees that the identified set $\mathcal{A}_0$ built on the exhaustive characterization has a strictly smaller affine dimension that the coarse set $\mathcal{A}^*$.

\noindent \textit{Simple example.} Let $\mathbf X$ be discrete with two values, $\widetilde{\mathbf x}$ and $\mathbf x^{\diamond}$, such that  $P(Y_2=1|\mathbf X=\widetilde{\mathbf x})=P(Y_1=1| \mathbf X=\widetilde{\mathbf x})$ and $P(Y_2=1|\mathbf X=\mathbf x^{\diamond})>P(Y_1=1|\mathbf X=\mathbf x^{\diamond})$. In other words, 
$\phi(F(Y_{1},Y_{2}|\mathbf X=\widetilde{\mathbf x}) \in C_3$, $\phi(F(Y_{1},Y_{2}|\mathbf X=\mathbf x^{\diamond}) \in C_1.$

The exhaustive characterization  implies that the identified set $\mathcal{A}_0$ consists of all $a \in \mathcal{A}$ (subject to normalization) satisfying  
$\psi(\widetilde{ \mathbf x}',a) \in D_3$,  $\psi(\mathbf x^{{\diamond}},a) \in D_1,$
or equivalently, 
$(\widetilde{x}_{2}-\widetilde{x}_{1})'a=0, \quad (x^{\diamond}_{2}-x^{\diamond}_{1})'a>0$ 
(affine dimension of $\mathcal{A}_0$ is $k-2$).  
In contrast, $\mathcal{A}^*$ consists of all $a$ (subject to normalization) satisfying only 
$( x^{\diamond}_{2}- x^{\diamond}_{1})'a \geq 0$ 
(affine dimension of $\mathcal{A}^*$ is $k-1$). In particular,  $\mathcal{A}^*$ can be a strict superset of $\mathcal{A}_0$ with a strictly positive Hausdorff distance between them.

\subsection{Theoretical Example 3: Multinomial choice \citep{manski1975}}
\label{sec:theoreticalmultinomial}

We consider the multinomial choice model studied by \citet{manski1975}.
For each individual $i$, a latent utility for object $j$ is given by
$U^*_{ij}=x_{ij}'\alpha_j+\varepsilon_{ij},$ $j=1,\ldots,J,$
where $(\varepsilon_{i1},\ldots,\varepsilon_{iJ})$ are i.i.d.\ conditional on
$\mathbf x_i:=(x_{i1}',\ldots,x_{iJ}')'$.
Agents choose the alternative with the highest latent utility: 
$Y_i=j
\quad\iff\quad
U^*_{ij}>\max_{k\neq j}U^*_{ik}.$ 
Let $Y$ denote the choice variable, which  takes values in $\{1,\ldots,J\}$.

In the spirit on \citeasnoun{manski1975} assume that the support of each $\varepsilon_j|\mathbf x$. $j=1, \ldots, J$,  is $\mathbb{R}$ and the c.d.f. is continuous. 
Under  assumed conditions, events $U^*_{ij}=U^*_{ik}$ occur with probability zero  and $P(Y=j|\mathbf{x}) \in (0,1)$ for all $j=1,\ldots, J$, and a.e. $\mathbf x$. 

To obtain an exhaustive characterization, define the functionals 
$\phi(F(Y| \mathbf X=\mathbf x))
=
\bigl(P(Y=1|\mathbf x),\ldots,P(Y=J|\mathbf x)\bigr)'$, 
$
\psi(\mathbf x,a)
=
\bigl(x_1'a_1,\ldots,x_J'a_J\bigr)'$, 
and for each pair $(j,k)$ with $j<k$, let 
$$\widetilde C_{jk;>}=\{c\in\mathbb R^J:\ c_j>c_k\},\quad
\widetilde C_{jk;<}=\{c\in\mathbb R^J:\ c_j<c_k\},\quad
\widetilde C_{jk;=}=\{c\in\mathbb R^J:\ c_j=c_k\},$$
and analogously $\widetilde D_{jk;>}$, $\widetilde D_{jk;<}$, and
$\widetilde D_{jk;=}$ in $\mathbb R^J$.
Let $\kappa_{jk}\in\{>,<,=\}$ denote one of the three relations. Collections $\mathcal C$ and $\mathcal D$ consist of all nonempty intersections
$C_m=\bigcap_{j<k}\widetilde C_{jk;\kappa_{jk}}$, $
D_m=\bigcap_{j<k}\widetilde D_{jk;\kappa_{jk}},$ 
taken over all assignments $\{\kappa_{jk}\}_{j<k}$.
These sets correspond to complete (possibly weak) orderings of the choice probabilities/utility indices. 
As follows from \citeasnoun{manski1975}, the model admits the exhaustive characterization \eqref{eq:formulation}
 with the aforementioned  $\phi$, $\psi$, $\mathcal{C}$, $\mathcal{D}$. 

\vskip 0.05in 

\noindent \textit{Original multinomial maximum score.}
The multinomial maximum score estimator proposed by \citeasnoun{manski1975} maximizes,
over $a$, the function 
$\frac{1}{n}\sum_{i=1}^n\sum_{j=1}^J
\mathbf 1(Y_i=j)\,
W (\sum_{k\neq j}\mathbf 1(x_{ij}'a_j>x_{ik}'a_k)),$ 
where $W(\cdot)$ is a strictly increasing sequence of real numbers. That is, $W(J-1) > W(J-2) > \cdots > W(1) > W(0)$. Equivalently  
the sample objective function can be written as $ \sum_{x \mathcal{X_n}} \sum_{j=1}^{J}  \widehat{P}(y={j}|\mathbf{x}) \, W\left( \sum_{k \neq j} \mathbf{1}\left( x_{j}' a > x_{k}' a \right) \right) \widehat{P}(\mathbf{X}=\mathbf{x})$ and its  population version is 
\begin{equation}
\label{eqLoriginalMSmult} E_{\mathbf{x}}( \sum_{j=1}^{J}  {P}(y={j}|\mathbf{x}) \, W( \sum_{k \neq j} \mathbf{1}\left( x_{j}' a > x_{k}' a \right) )).
\end{equation} 
When conditional choice probabilities are strictly ordered,
(\ref{eqLoriginalMSmult}) is maximized by aligning deterministic utilities in the same way, so the population maximizers coincide with $\mathcal A_0$.

When $P(Y=j|\mathbf x)=P(Y=k|\mathbf x)$ for some $j\neq k$, the objective breaks ties via $x_{ij}'a_j-x_{ik}'a_k$, favoring strict inequalities over equality. Thus the maximizers need not include $\mathcal A_0$ (only its closure does), unlike the binary case where ties impose no restriction.

\vskip 0.05in

\noindent \textit{A tie-indifferent (block) maximum score objective.}
We modify the objective to coincide with the original without ties, but to impose no ordering restrictions when ties occur, yielding a superset of $\mathcal{A}_0$. This mirrors the resolution that we saw in the binary choice cases and ensures that the coarse $\mathcal{A}^*$ contains the identfied set $\mathcal{A}_0$.

Let $p$ denote a generic probability vector from the $(J-1)$-dimensional probability simplex. Let
$\mathfrak B(p)=(B_1(p),\ldots,B_{R(p)}(p))$ denote the ordered partition of
$\{1,\ldots,J\}$ into probability tie blocks, ordered from highest to lowest probability.
Define the score function
\[
s^{\mathrm{blk}}(p,u)
=
\sum_{r=1}^{R(p)}
\left( \sum_{j \in B_r(p)} \frac{p_j}{B_r(p)}\right)
\sum_{d=1}^{|B_r(p)|}W \left(
\sum_{r'\neq r}
\mathbf 1\Big\{\min_{j\in B_r(p)} u_j>\max_{k\in B_{r'}(p)} u_k\Big\} +d-1
\right),
\qquad u\in\mathbb R^J,
\]
where  $W(\cdot)$ is strictly increasing sequence (as before)  which is defined at points $0, \ldots, J-1$, with all those used in the estimator. 
The corresponding population objective function is then of the form 
\eqref{ex_obective_population} with $w(x)=1$:  
\begin{equation}
\label{eq:block_population_objective}Q(a)
=
E_{\mathbf x}\!\left[
s^{\mathrm{blk}}\!\bigl(\phi(F(Y|\mathbf x)),\psi(\mathbf x,a)\bigr)
\right],
\end{equation}
This objective coincides with
the original multinomial maximum score when probabilities are strictly ordered, but becomes flat
in directions corresponding to within-block index differences when probabilities tie.The sample analogue is obtained by replacing conditional probabilities with
their plug-in estimates: \begin{equation}
\label{eq:block_sample_objective}
\widehat Q(a)
=
\sum_{x\in\mathcal X_n}
s^{\mathrm{blk}}\!\big(\phi(\widehat{F}(y|\mathbf{x})),\psi(\mathbf{x},a)\big)\,\widehat P(\mathbf{X}=\mathbf{x}).
\end{equation}

\vskip 0.05in

\noindent \textit{Coarse characterization induced by the tie-indifferent  maximum score objective.} As in the previous two theoretical examples, we now take a closer look at which coarse structure is induced by our tie-indifferent objective. For a given $\mathbf{x}$, we have its  exhaustive regime $m(\mathbf{x})$ (thus, with $\phi( (F(Y|\mathbf{x}))\in C_{m(\mathbf{x})}$) 
and the block partition $\mathfrak B(\phi( (F(Y|\mathbf{x})))$ of the alternatives. The score $s^{\mathrm{blk}}\!\bigl(\phi(F(Y|\mathbf{x})),\psi(\mathbf{x},a)\bigr)$ 
 is maximized (pointwise in $\mathbf{x}$) whenever indices respect the strict
\emph{between-block} ordering, while imposing \emph{no restriction within blocks}.
This induces the following coarse set for the regime $m(x)$:
\begin{equation}
\label{eq:Dm_star_block}
D^*_{m(\mathbf{x})}
:=
\Bigl\{u\in\mathbb R^J:\ 
\min_{j\in B_r(\phi( (F(Y|\mathbf{x})))} u_j>\max_{k\in B_{r'}(\phi( (F(Y|\mathbf{x})))} u_k\ \ \forall r<r'\Bigr\},
\end{equation}
where $\mathfrak B(\phi( (F(Y|\mathbf{x}))))=(B_1(\phi( (F(Y|\mathbf{x})))),\ldots,B_{R(\mathbf{x})}(\phi( (F(Y|\mathbf{x})))))$.
Equivalently, \eqref{eq:Dm_star_block} keeps all strict inequalities implied by strict probability
orderings across blocks, and \emph{drops all equality restrictions} implied by ties within blocks. In other words, formally, for a regime $m$ with relations $\kappa_{jk}$,
$$
D_m^*
=
\bigcap_{j<k}
\begin{cases}
\widetilde D_{jk;>}, & \kappa_{jk}  \text{ is } >,\\
\widetilde D_{jk;<}, & \kappa_{jk} \text{ is } <,\\
\mathbb R, & \kappa_{jk}\text{ is } =.
\end{cases} \quad \supseteq  \quad D_m=
\bigcap_{j<k}
\begin{cases}
\widetilde D_{jk;>}, & \kappa_{jk}  \text{ is } >,\\
\widetilde D_{jk;<}, & \kappa_{jk} \text{ is } <,\\
\widetilde D_{jk;=}, & \kappa_{jk}\text{ is } =.
\end{cases} .
$$
Thus, whenever $P(Y=j|\mathbf x)=P(Y=k|\mathbf x)$, the coarse structure imposes
no restrictions on $x_{ij}'a_j-x_{ik}'a_k$. Thus, the block estimator is effectively built on the coarse characterization 
$ \psi(\mathbf x,\alpha)\in D^*_{m(\mathbf x)},$ 
where $D^*_{m(\mathbf x)}\supseteq D_{m(\mathbf x)}$ whenever $C_{m(\mathbf x)}$ contains at least one equality and
$D^*_{m(\mathbf x)}=D_{m(\mathbf x)}$ if $\phi(F(Y|\mathbf X=\mathbf x))$ is strictly ordered (no ties). The corresponding coarse set is 
$\mathcal A^*
=
\bigl\{a\in\mathcal A:\ \psi(\mathbf x,a)\in D^*_{m(\mathbf x)}\ \text{for a.e.\ }\mathbf x\bigr\},$ 
which always satisfies $\mathcal A_0\subseteq \mathcal A^*$ and is strict superset of $\mathcal A_0$ when a tie  occurs with a positive
probability (e.g. under discrete regressors). $\mathcal A^*$ maximizes  \eqref{eq:block_population_objective}. 


Observations with $\mathbf{x}$ that tie some options  are central to understanding substitution. E.g. in a model of product demand, they determine cross-price elasticities. In occupational sorting, they are the workers who would switch given a small wage change. When discrete covariates make such ties occur with positive probability, the coarse characterization effectively removes the identifying variation that comes from observing agents at these margins. 


\vskip 0.05in 

\noindent \textit{Simple example} To illustrate the difference between the identified set and the coarse set,  take $J=3$ and suppose $\mathbf{x}$ is discrete and its  support consists of two points $\widetilde{\mathbf{x}}$ and $\mathbf{x}^{\diamond}$ for which we have 
${P}(Y=1|\widetilde{\mathbf{x}}) = {P}(Y=2|\widetilde{\mathbf{x}})< {P}(Y=3|\widetilde{\mathbf{x}})$, ${P}(Y=1|\mathbf{x}^{\diamond}) < {P}(Y=2|\mathbf{x}^{\diamond})= {P}(Y=3|\mathbf{x}^{\diamond}), $ 
In other words, 
$\phi(F(y|\widetilde{\mathbf{x}})) \in C_{m_1}=\widetilde{C}_{12; =} \cap \widetilde{C}_{13; < }\cap \widetilde{C}_{23; < }$, $\phi(F(y|\mathbf{x}^{\diamond})) \in C_{m_2}=\widetilde{C}_{12; <} \cap \widetilde{C}_{13; < }\cap \widetilde{C}_{23; = }.$ The exhaustive characterization implies 
$\widetilde{x}_1'a_1=\widetilde{x}_2'a_2<\widetilde{x}_3'a_3$, $ x^{\diamond '}_1 a_1<x^{\diamond '}_2 a_2=x^{\diamond '}_3 a_3.$ 

Under the block maximum score objective, alternatives $1$ and $2$ for $\widetilde{\mathbf{x}}$ and  alternatives $2$ and $3$ for $\mathbf{x}^{\diamond}$ form single blocks,
and the induced coarse restriction is
$\max\{\widetilde{x}_1' a_1,\widetilde{x}_2'a_2\}<\widetilde{x}_3'a_3,$ $ x^{\diamond '}_1 a_1<\min\{x^{\diamond '}_2 a_2,x^{\diamond '}_3 a_3\}$
with no restrictions on the relative ordering of $\widetilde{x}_1'a_1$, $\widetilde{x}_2'a_2$ and of $x^{\diamond '}_2 a_2$, $x^{\diamond '}_3 a_3$. 

Thus, as long as the parameter space $\mathcal{A}$ has a nonempty relative interior in the normalized space (suitable normalizations are applied to $a_1$, $a_2$, $a_3$), the coarse set  $\mathcal{A}^*$ therefore has strictly higher affine dimension than
the exhaustive identified set $\mathcal{A}_0$.

\subsection{Illustrative designs for theoretical examples}\label{sec:illustrativedesigns}

For the discussion that follows, it will sometimes be helpful to refer to specific cases from the theoretical examples outlined above. To support this, we present illustrative designs for the first two examples in this section, and include an additional design for the third in the Appendix.



\textbf{Illustrative Design 1 (Cross-sectional binary choice under median condition)}
Let the vector of covariates be $(1,X_1,X_2)$ and, thus,  have a fixed first component. Let $X=(1,X_1,X_2)' \in \{(1,0,1),(1,1,0)\}$.  Suppose $\PP(X=(1,0,1)')=q>\frac12$ (without loss of generality). We take ${\alpha} \equiv (\alpha_{1},1,\alpha_{2})$.  
 Let $\mbox{Med}(u|X)=0.$ 
 
In this design the linear index
${X}'a$ takes two possible forms: $a_1+a_2$ (for the first support point) and $a_1+1$ (for the second support point).  If the parameter vector $\alpha$ in the DGP makes at least one of these indices zero, then the corresponding probability 
$\PP(Y=1\,|\,{X})=\frac12.$ 

We can express the identified set $\mathcal{A}_0$ using the exhaustive characterization in terms of observed choice probabilities $P(Y=1|X=(1,0,1)')$ and $P(Y=1|X=(1,1,0)').$ Since $P(Y=1|X=(1,0,1)') > (<)[=] \iff \alpha_{1}+\alpha_{2}> (<)[=] 0$. $P(Y=1|X=(1,1,0)') > (<)[=] \iff \alpha_{1}+1> (<)[=] 0$, then equivalently, it can be done  in terms of $\alpha$ in the DGP.

Using  the $\sign$ function such that 
$\sign(0)=0$,\footnote{Note that the function $\text{sgn}$ used by us above has the property $\text{sgn}(0)=1$ which aligns with its properties in  \citeasnoun{manski1985,manski-jasa}}  the identified set is 
$$
\mcA_0=\left\{
(a_1,1,a_2)\,:\,
\sign(a_1+a_2)=
\sign(\alpha_{1}+\alpha_{2}),\,
\sign(a_1+1)=\sign(\alpha_{1}+1)
\right\} \cap \mcA.
$$
Suppose  $\mcA$ has a relative interior in the normalized 2-dimensional space and is large enough to contain $(-1,1,1)$. Then we have the following four cases for the identified set
$\mcA_0:$
\begin{enumerate}
\item[(1A)] If $\alpha=(-1,1,1)$, then $\mcA_0=\{(-1,1,1)\}$ (\textit{the only point identification case}). 
\item[(1B)] If $\alpha_{1}=-1$,  $\alpha_{2}\neq 1$, then 
$\mcA_0= \left\{(-1, 1,a_{2}): \sign(-1+a_{2})=\sign(-1+\alpha_{2})\}\right\} \cap \mcA.$

$\mcA_0$ is  fully contained in one-dimensional hyperplane $\alpha_{1}=-1$ (\textit{partial  identification case; identified set has affine dimension 1}).

\item[(1C)] If $\alpha_{1}\neq -1$,  $\alpha_{1}+\alpha_{2}=0$, then 
$\mcA_0= \left\{(a_{1}, 1,-a_{1}): \sign(a_{1}+1)=\sign(\alpha_{1}+1)\}\right\} \cap \mcA.$

$\mcA_0$ is  fully contained in the one-dimensional hyperplane $a_{1}+a_{2}=0$ (\textit{partial  identification case; identified set has affine dimension 1}).

\item[(1D)] If $\alpha_{1}\neq -1$ and $\alpha_{1}+\alpha_{2}\neq 0$, then only the general expression provided above applies
(\textit{partial  identification case; identified set has a non-empty 2-dimensional interior in the normalized parameter space  and, hence, has affine dimension 2}).
\end{enumerate}
 This design exhibits some interesting complexity where even in partially identified cases, certain conditional probabilities of choice may remain equal to $1/2$, in which case identified sets have a smaller affine dimension than the dimension of the non-normalized parameters. 

\textbf{Illustrative design 2 (Two-period static binary choice panel data model under identical distribution of errors)}

Suppose the support of each ${X}_{it}$ is 
$\widetilde{\mathcal{X}}=\{(0,0)', (0,1)', (1,0)', (1,1)'\}$, $t=1,2$, and the support $\mathcal{X}$ of $(X_{i1},X_{i2})'$ will be the  Cartesian product  $\widetilde{\mathcal{X}} \times \widetilde{\mathcal{X}}$.  
Suppose the second parameter is normalized to 1, thus  leaving the unknown parameter in the normalized space to take the form $(\alpha_1,1)$. 

Suppose  $\alpha_1=1$ in the DGP. When $x_{i1}=(0,1)'$ and  $x_{i2}=(1,0)'$ (or the other way around), we are in situation when $P(Y_{i1}=1|{X}_{i1}=(0,1)', {X}_{i2}=(1,0)')=P(Y_{i2}=1|{X}_{i1}=(0,1)', {X}_{i2}=(1,0)')$, $P(Y_{i1}=1|{X}_{i1}=(1,0)', {X}_{i2}=(0,1)')=P(Y_{i2}=1|{X}_{i1}=(1,0)', {X}_{i2}=(0,1)')$. The exhaustive characterization leads to the conclusion that the identified set is $\mathcal{A}_0=\{(1,1)\}$. Similarly, for the case $\alpha_1=0$ is the case of point identification with $\mathcal{A}_0=\{(0,1)\}$.
If $\alpha_1 \notin \{0,1\}$ in the DGP, then the identified set will be an intersection of an open interval in the normalized space  with $\mathcal{A}$.


\section{Properties of estimators based on coarse structures} 
\label{sec:main_estimators_coarse} 

The three theoretical examples share a common feature with some support points \(x\) have \(\phi(F(Y| X = x))\) lying exactly on the boundary between regions \(C_m\), \(m \neq m(x)\). With discrete covariates, such boundary events can occur with positive probability, creating a set of ambiguous support points central to what follows. In this section, we formalize this and develop a unified theory of how classical estimators, which were designed for continuous settings and treating boundary events as measure-zero, behave when such events have positive probability.

We first characterize empirical settings in which estimators inducing coarse structures exhibit potentially undesirable properties (formally defined later in Section~\ref{estimatorandcriteria}), establish their asymptotic properties, and relate these findings to our theoretical examples (Sections~\ref{sec:theoreticalMSbinary}--\ref{sec:theoreticalmultinomial}) and the illustrative designs in Section~\ref{sec:illustrativedesigns}.

For an estimator that induces a coarse structure, the definitions of exhaustiveness and coarseness imply that the coarse set $\mathcal{A}^*$ contains the identified set $\mathcal{A}_0$. In our illustrative designs with discrete regressors, several possibilities arise. For some values of $\alpha$ in the DGP, $\mathcal{A}^*$ coincides with $\mathcal{A}_0$. For others, $\mathcal{A}^*$ differs from $\mathcal{A}_0$ only at the boundary. In still other cases, $\mathcal{A}^*$ differs from $\mathcal{A}_0$ more substantially, with a positive Hausdorff distance $d_H(\mathcal{A}^*, \mathcal{A}_0)$. In our theoretical examples and illustrative designs of this type, $\mathcal{A}^*$ has strictly larger affine dimension than $\mathcal{A}_0$. Our attention will ultimately be on such cases.

\subsection{Discrete setup: finite support and regime classification}
\label{sec:discrete_setup} 

Before we proceed to our formal results, we describe a discrete setting underlying them.

Assume $X$ is discrete with finite support $\mathcal X=\{x_1,\ldots,x_{|\mathcal X|}\}$ and  an estimator $\widehat F(\cdot\mid x)$ which is used in a plug-in sample objective function (\ref{ex_obective_sample_discrete}) is consistent on $\mathcal X$. Throughout, we assume that function $\phi$ and $\psi$ in the description of exhaustive and coarse structures are continuous.

For each $x\in\mathcal X$, define the finite-sample feasible regime set
$$
\mathcal M_{fs}(x)
:=\Bigl\{m\in\mathcal M:\ \PP\bigl(\phi(\widehat F(Y\mid X=x))\in C_m\bigr)>0\Bigr\},
$$
and the asymptotically relevant regime set
$$
\mathcal M_{as}(x)
:=\Bigl\{m\in\mathcal M:\ \liminf_{n\to\infty}
\PP\bigl(\phi(\widehat F(Y\mid X=x))\in C_m\bigr)>0\Bigr\}.
$$
Denote the set of persistently ambiguous covariate points
$$\mathcal X_0:=\{x\in\mathcal X:\ |\mathcal M_{as}(x)|\geq 2\}=\{x_1,\ldots,x_T\}.$$
In our formal results later $\mathcal{X}_0 \neq \varnothing$ will be exactly the reasons why the Hausdorff distance between $\mathcal{A}^*$ and $\mathcal{A}_0$ will be zero. 

Consequently,  for $x\notin\mathcal X_0$ we will impose asymptotic uniqueness of regime
classification: 
\begin{align} 
\label{eq:nonambig1}M_{as}(x) & =\{m(x)\}, \quad x \in \mathcal{X}\setminus \mathcal{X}_0,
\end{align}

\paragraph{Feasible systems and their solution sets.}
For each regime of indices $m=(m_1,\ldots,m_T)$, $m_t\in\mathcal M_{as}(x_t)$, $t=1,\ldots,T$, we 
define the associated  constraint system:
\begin{align}
\psi(x'a) &\in D^*_{m(x)} \; \text{ for } \; x\in\mathcal X\setminus\mathcal X_0,\label{eq:sys_outside}\\
\psi(x_t'a) &\in D^*_{m_t}, \quad\; \text{ for } \;  t=1,\ldots,T.\label{eq:sys_inside}
\end{align}
Let $A(m)$ be the solution set to \eqref{eq:sys_outside}--\eqref{eq:sys_inside} in $\mathcal A$. Some $A(m)$ may be empty. Let $A_1,\ldots,A_L$ denote the distinct nonempty sets among
$\{A(m)\}$. Thus, each region $A_{\ell}$  corresponds to some resolution
of the boundary regime induced by sampling noise at a finite number of ambiguous 
covariate points in $\mathcal{X}_0$.

For each $\ell= 1,\ldots,L$, and each $t\in\{1,\ldots,T\}$, fix  an index
$m_{\ell,t}\in\mathcal M_{as}(x_t)$ such that $$A_\ell=A(m_{\ell,1},\ldots,m_{\ell,T}).\footnote{Hypothetically, at this stage we can agree that if multiple index vectors potentially generate the same $A_\ell$, then we pick any one as our results do not depend 
on which representative is chosen. As we will see later, such cases will be eliminated by our conditions in the general theorems.}$$

For each $\ell= 1,\ldots,L$, 
define the set of nonredundant ambiguous constraints for $A_\ell$ as any subset
$\mathcal X_{0,\ell}\subseteq\mathcal X_0$ such that removing a constraint
$\psi(x_t'a)\in D_{m_{\ell,t}}$ for $x_t\in\mathcal X_{0,\ell}$ changes $A_\ell$. 

First, note that for $\alpha$ in the DGP that results in $\mathcal{X}_0=\varnothing$, we have  then $L=1$ and $A_1=\mathcal{A}^*$ as only (\ref{eq:sys_outside}) apply. Second, if $\alpha$ in the DGP results in $\mathcal{X}_0 \neq \varnothing$, then little can be said about the nature of $A_1$, \ldots, $A_L$ without further describing the properties of how $D^*_{m_t}$ across different $m_t \in \mathcal{M}_{as}(x_t)$. For our theoretical examples  we will show that in such cases $L \geq 2$ given tha the parameter space $\mathcal{A}$ is large enough and has non-empty relative interior in the normalized space. 

These considerations  motivate us to  impose the following Assumption \ref{assn:assnexitense}.

\begin{assumption}\label{assn:assnexitense} We will assume that $L\geq 1$ when $\mathcal{X}_0 \neq \varnothing$.
\end{assumption}
This assumption is weak requirement on the existence of the combination of indices $m_1,\ldots,m_T$, $m_t \in \mathcal{M}_{as}(x_t)$ for which the sample objective function of optimized when each term takes its highest possible value. Note that our conditions in  general theorems below will ultimately require more from $L$ and $A_1$, \ldots, $A_L$ (e.g. conditions (C9)-(C11)), but  these requirements will be aligned what happens in our theoretical examples 1-3.  


\subsection{Estimators and general theorems}
\label{sec:general_theorems}

Given our discrete setup we consider a population objective in the form of \eqref{ex_obective_population} and its sample version \eqref{ex_obective_sample_discrete} with a continuous score function $s$ that satisfies \eqref{eq:pointwise_separation1}-\eqref{eq:pointwise_separation2}. 

As discussed earlier, the Argmax set for \eqref{ex_obective_population} coincides with $\mathcal{A}^*$. We will denote the set of maximizers   of the sample objective function  \eqref{ex_obective_sample_discrete} as $\widehat{A}$: 
$$\widehat{A} := \text{Argmax}_{a \in \mathcal{A}}\sum_{x\in\mathcal X_n} w(x)
s\!\big(\phi(\widehat{F}(Y\mid X=x)),\psi(x,a)\big) \, \widehat P(X=x).$$
We present our results related to  the properties of $\widehat{A}$ in four different theorems. 

Our first general Theorem~\ref{th:general1} shows,  among other things,  that in our discrete setup the estimator $\widehat{A}$ under a coarse structure does not converge to a single deterministic set if $L>1$. Instead, it converges in distribution to a {random choice} among 
deterministic regions $A_1,\ldots,A_L$.

\begin{theorem}
\label{th:general1}
Assume a discrete  setup in Section \ref{sec:discrete_setup}. Let  Assumption  \ref{assn:assnexitense} hold. Consider the following three conditions: 
\begin{enumerate}
\item[(C1)] For each $t=1,\ldots,T$, it holds that $D^*_{m(x_t)}\supseteq D^*_{m_t} \quad \forall m_t \in \mathcal{M}_{as}(x_t)$.

\item[(C2)] For each $t=1,\ldots,T$, it holds that $ D^*_{m_t} \cap D^*_{m'_t} =\emptyset \quad \forall m_t, m_t' \in \mathcal{M}_{as}(x_t)$.

\item[(C3)]  For each $t$ and each $m_t \in \mathcal{M}_{as}(x_t)$, $t=1,\ldots,T$, 
\begin{align}\phi(\widehat{F}(Y|X=x_t)) \in \mathcal{C}_{m_t} \quad & \Rightarrow  \quad s(\phi(\widehat{F}(Y|X=x_t)), u) > s(\phi(\widehat{F}(Y|X=x_t)), u') \notag  \\ 
 \forall \; u \in \cup_{m \in \mathcal{M}_{as}(x_t)} D^*_{m}, & \quad \forall \; u' \notin \cup_{m \in \mathcal{M}_{as}(x_t)} D^*_{m}. \qquad \label{C3part1}
\end{align}

Also, for any $\ell=1,\ldots, L$,
\begin{equation} 
\lim \inf_{n \to \infty} \PP\left( \left(\cap_{x \in \mathcal{X}\setminus \mathcal{X}_0} (\phi(\widehat{F}(y=1|x)) \in C_{m(x)})\right) \cap \left(\cap_{t=1}^T (\phi(\widehat{F}(y=1|x_t)) \in C_{m_{\ell,t}}\right) \right)>0. \label{C3part2}
\end{equation}

Also, for each $t$ and every pair $\ell \neq \ell'$, $\ell,\ell' \in \{1,\ldots, L\}$, 
{\small\begin{align}
\label{C3part3}
\lim_{n\to\infty}
\PP\Bigg(
\sum_{t=1}^T w(x_t)\,\widehat P(X=x_t)
\Big[
s\!\big(\phi(\widehat F(Y|X=x_t)),u_{\ell,t}\big)
-
s\!\big(\phi(\widehat F(Y|X=x_t)),u_{\ell',t}\big)
\Big]
=0
\Bigg)
=0,
\end{align}}
where 
$u_{\ell,t}\in D^*_{m_{\ell,t}}$ and
$u_{\ell',t}\in D^*_{m_{\ell',t}}$.
(The expression does not depend on the particular choice of
$u_{\ell,t}$ and $u_{\ell',t}$ because
$s(\phi(\widehat{F}(Y|X=x_t)),\cdot)$ is constant on each $D_m^*$ by
\eqref{eq:pointwise_separation1}.)

\end{enumerate}

Then: 
\begin{enumerate}
\item[(a)] Under (C1), $A_{\ell} \subseteq \mathcal{A}^*$, $\ell =1,\ldots, L$.  

\item[(b)] Under (C2), $A_{\ell} \cap A_{\ell}'=\varnothing$ for $\ell \neq \ell'$. 

\item[(c)] Condition \eqref{C3part1} in (C3) implies that 
$$\PP\left(\widehat{A} \in \left\{ \cup_{\ell \in \mathcal{I}} A_{\ell} : \mathcal{I} \subseteq \{1,\ldots, L\}\right\} \right) \to 1 \text{ as } n \to \infty.$$

 Condition \eqref{C3part2} in (C3) when considered together with \eqref{C3part1} additionally implies that 
 for each $\ell=1, \ldots, L$, 
 $$\inf \lim_{n \rightarrow} \PP\left(A_{\ell} \subseteq \widehat{A}  \right) >0 \text{ as } n \to \infty.$$

Condition \eqref{C3part3} in (C3) when considered together with \eqref{C3part1}, \eqref{C3part2} additionally implies that for $\ell \neq \ell'$. 
$$\inf \lim_{n \rightarrow} \PP\left(A_{\ell} \subseteq \widehat{A}, A_{\ell'} \subseteq \widehat{A}  \right) \to 0 \text{ as } n \to \infty.$$
Overall, the implications of all \eqref{C3part1}-\eqref{C3part3} can be summarized as 
\begin{equation} 
\label{convdistrgeneral} \widehat{A} \stackrel{d}{\to} \sum_{\ell=1}^{L} d_{\ell} A_{\ell},
\end{equation}
where $d_{\ell}$ are binary $0/1$ variables such that $\sum_{\ell=1}^{L} d_{\ell} =1$, $d_{\ell} d_{\ell'}=0$ for $\ell \neq \ell'$. 

\end{enumerate}
\end{theorem}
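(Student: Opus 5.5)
Parts (a) and (b) are purely set-theoretic and come directly from the definition of $A(m)$ as the solution set of \eqref{eq:sys_outside}--\eqref{eq:sys_inside}. For (a), take $a\in A_\ell=A(m_{\ell,1},\ldots,m_{\ell,T})$.
\begin{itemize}
\item For $x\notin\mathcal X_0$, constraint \eqref{eq:sys_outside} is exactly the constraint defining $\mathcal A^*$.
\item For $x_t\in\mathcal X_0$ we have $\psi(x_t,a)\in D^*_{m_{\ell,t}}\subseteq D^*_{m(x_t)}$ by (C1).
\end{itemize}
Since $X$ has finite support, ``a.e.\ $x$'' reduces to ``all $x\in\mathcal X$'', so $a\in\mathcal A^*$.

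For (b), take $\ell\neq\ell'$. Because the $A_\ell$ are distinct, their representative index vectors must differ in at least one coordinate $t$, i.e.\ $m_{\ell,t}\neq m_{\ell',t}$. Then any $a\in A_\ell\cap A_{\ell'}$ would satisfy $\psi(x_t,a)\in D^*_{m_{\ell,t}}\cap D^*_{m_{\ell',t}}=\emptyset$ by (C2), which is impossible.

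For (c) I would work on high-probability events. By consistency of $\widehat F$ and \eqref{eq:nonambig1}, the event $E_n$ that $\phi(\widehat F(Y|X=x))\in C_{m(x)}$ for every $x\in\mathcal X\setminus\mathcal X_0$, that $\mathcal X_n=\mathcal X$, and that each $\phi(\widehat F(Y|X=x_t))$ lies in some $C_{m}$ with $m\in\mathcal M_{as}(x_t)$ has probability tending to one. On $E_n$ the sample objective is a finite sum of terms $w(x)\widehat P(X=x)s(\cdot,\psi(x,a))$, and each term can be maximized separately.
\begin{itemize}
\item For $x\notin\mathcal X_0$, \eqref{eq:pointwise_separation1}--\eqref{eq:pointwise_separation2} show the term is maximal exactly on $\{a:\psi(x,a)\in D^*_{m(x)}\}$.
\item For $x_t\in\mathcal X_0$, \eqref{C3part1} together with constancy on each $D^*_m$ shows the term takes finitely many values on $\cup_{m\in\mathcal M_{as}(x_t)}D^*_m$, and all of them strictly exceed its value outside this union.
\end{itemize}

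The key step is to argue that any maximizer $a$ has every index $\psi(x,a)$ in the appropriate sets. The argument: Assumption \ref{assn:assnexitense} provides a point $a^\circ\in A_\ell$ for some $\ell$, and I would compare $a$ with $a^\circ$. If $a$ violated some constraint, the corresponding term would be strictly worse than at $a^\circ$. The obstacle is that a loss in one term could in principle be offset by gains in other terms. I would handle this by using the separability of the constraint system: each $A(m)$ is an intersection over $x$. Concretely, I would show that the maximum of the sum is attained by choosing, for each $t$, a best feasible regime $m_t$ among those whose joint solution set $A(m)$ is nonempty, and that any $a$ lies in exactly one $A(m)$ or outside all of them. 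This step needs care. It uses that the objective value at $a$ depends only on which $D^*_{m}$ contains each $\psi(x_t,a)$, and the $D^*_m$ are disjoint by (C2). Consequently, $\widehat A$ is the union of those $A_\ell$ that attain the maximal total score, which is the first claim of (c).

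For the second claim, on the event in \eqref{C3part2} every ambiguous point is classified as in $m_{\ell,t}$. The score at $x_t$ is then maximized on $D^*_{m_{\ell,t}}$ (by \eqref{eq:pointwise_separation2} applied with the estimated $\phi$), so $A_\ell$ attains the termwise maximum and $A_\ell\subseteq\widehat A$. This event has probability bounded away from zero by \eqref{C3part2}.

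For the third claim, $A_\ell\cup A_{\ell'}\subseteq\widehat A$ forces the two total scores to be equal. The sums over $x\notin\mathcal X_0$ coincide, so the difference reduces to the display in \eqref{C3part3}, whose probability of being zero tends to zero.

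Finally, I would combine the three claims into \eqref{convdistrgeneral}. The set $\widehat A$ takes values in the finite family of unions of the $A_\ell$, with probability tending to one. Pairwise coexistence of two regions has vanishing probability, so $\widehat A$ equals a single $A_\ell$ with probability tending to one. Defining $d_\ell=\mathbf 1(\widehat A=A_\ell)$ then gives the representation, and the limit law is obtained along subsequences by tightness on this finite set.
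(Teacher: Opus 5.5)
Your parts (a) and (b), and the second and third claims of (c), are fine and follow the paper's argument. The gap is in the first claim of (c), exactly at the step you flag as needing care. With non-vanishing probability, the estimated regime profile $\widehat{\mathbf m}$ at the ambiguous points has $A(\widehat{\mathbf m})=\emptyset$. This is the (C11) situation, e.g.\ linearly dependent points in $\mathcal X_0$. In that case no $a$ attains all termwise maxima. You must then rule out that some $a\notin\bigcup_\ell A_\ell$ outscores every point of $\bigcup_\ell A_\ell$ by buying a better profile at the ambiguous points. Such an $a$ might violate a constraint in \eqref{eq:sys_outside}, or have $\psi(x_t,a)\notin\bigcup_{m\in\mathcal M_{as}(x_t)}D^*_m$ at some $x_t$. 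Your justification does not rule this out. The ambiguous terms depend only on which $D^*_m$ contains each $\psi(x_t,a)$, and these sets are disjoint by (C2). That only tells you two points with the same ambiguous profile tie on those terms. It does not tell you that the profile of an $a$ violating an outside constraint is also attained by a point satisfying all of \eqref{eq:sys_outside}. ``Separability'' cannot supply this, because all constraints restrict the same $a$.

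The missing ingredient is the one the paper invokes. Because $\mathcal A_0\neq\emptyset$, emptiness of a system \eqref{eq:sys_outside}--\eqref{eq:sys_inside} can only come from mutual incompatibility of the ambiguous constraints \eqref{eq:sys_inside}. Hence any profile attained at the ambiguous points by some $a$ is also attained by some $a'$ that satisfies \eqref{eq:sys_outside}. This $a'$ ties $a$ on the ambiguous terms and strictly beats it on every violated outside term. So maximizers never trade outside constraints away, and \eqref{C3part1} then keeps them within regimes in $\mathcal M_{as}(x_t)$. In the linear-index examples this is the ray argument $a'=\alpha+\lambda(a-\alpha)$, with $\alpha\in\mathcal A_0$ in the relative interior of $\mathcal A_{out}$. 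It preserves every sign of $x_t'a$ because $x_t'\alpha=0$, and it is the same device used to verify (C5) and (C11).

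For outside violations you could alternatively argue quantitatively. By (C1), \eqref{eq:pointwise_separation1} and \eqref{eq:pointwise_separation2}, the population score $s(\phi(F(Y|X=x_t)),\cdot)$ is maximal and constant on $D^*_{m(x_t)}\supseteq\bigcup_{m\in\mathcal M_{as}(x_t)}D^*_m$. By continuity of $s$, any gain at ambiguous points relative to $\bigcup_\ell A_\ell$ is therefore $o_p(1)$. For finitely valued scores, the loss at a violated outside point is bounded away from zero with probability approaching one.

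Separately, your claim that $E_n$ has probability tending to one does not follow from consistency of $\widehat F$ and \eqref{eq:nonambig1}. $\mathcal M_{as}(x_t)$ is defined through a $\liminf$, so regimes outside it, such as an exact sample tie, need not have vanishing probability. This has to be assumed, as the paper tacitly does; in the examples it follows from the CLT.
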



Conditions (C1) and (C2) are straightforward. Condition \eqref{C3part1} in (C3) ensures that, with probability approaching 1, the sample objective function attains its maximum over one of the sets $A_1,\dots,A_L$. Condition \eqref{C3part2} ensures that each of these sets remains asymptotically relevant. This is the first point at which we require a statement about the joint behavior of $\phi(\widehat F(y\mid x))$ across all $x\in\mathcal X$. Earlier, when classifying $\mathcal X$ into $\mathcal X_0$ and $\mathcal X\setminus\mathcal X_0$ on the basis of $\mathcal M_{as}(x)$, it was enough to study the behavior at each $x$ separately, whereas now we require control of how these objects behave simultaneously across the support. Condition \eqref{C3part3} guarantees asymptotic uniqueness of the maximizing region. We have stated this condition deliberately in a generic form. In many applications, one can replace it with more interpretable sufficient conditions tailored to the structure of the objective function. Our theoretical examples illustrate how such conditions can be verified in settings where the choice probabilities enter linearly, and they make clear the kinds of arguments that can be used more generally to establish this property.

Note that in formulation \eqref{convdistrgeneral} we do not need to invoke the notion of weak convergence from random set theory. In that literature, weak convergence is typically defined for random \emph{closed} sets (see \citet{molchanov2006book}) and is based on the Fell topology, which in particular requires closedness of the sets. The sets $A_1,\ldots,A_\ell$ need not be closed, and taking closures at this stage would unnecessarily complicate the analysis. Instead, we exploit that $\widehat A$ takes values in a finite collection and can therefore be treated as a discrete random element. This renders the use of random set convergence machinery unnecessary at this stage.



The next theorem provides conditions under which the identified set $\mathcal A_0$ lies on the boundary of each $A_\ell$, reflecting that the information lost under the coarse characterization consists of equality-type restrictions (e.g.\ $x'\alpha=0$), which generically define lower-dimensional manifolds.

\begin{theorem} 
\label{th:general2}
Assume a discrete  setup in Section \ref{sec:discrete_setup}. Let  Assumption  \ref{assn:assnexitense} hold. Consider the following three conditions: 
\begin{enumerate}
\item[(C4)] 
For every $\ell$ and every $a\in\mathcal A$ such that
\[
\psi(x'a)\in \overline D^*_{m(x)} \quad \forall x\notin\mathcal X_0,
\qquad
\psi(x_t'a)\in \overline D^*_{m_{\ell,t}} \quad \forall x_t\in\mathcal X_{0,\ell},
\]
and for every $\varepsilon>0$, there exists $a^+_\varepsilon\in\mathcal A$
with $\|a^+_\varepsilon-a\|<\varepsilon$ such that
\[
\psi(x'a^+_\varepsilon)\in D^*_{m(x)} \quad \forall x\notin\mathcal X_0,
\qquad
\psi(x_t'a^+_\varepsilon)\in D^*_{m_{\ell,t}} \quad \forall x_t\in\mathcal X_{0,\ell}.
\]

\item[(C5)]  

For every $\alpha\in\mathcal A_0$, every $\ell$, for every $\varepsilon>0$ there exist
$a^+_\varepsilon,a^-_\varepsilon\in\mathcal A$ with
$\|a^\pm_\varepsilon-\alpha\|<\varepsilon$ such that
\[
\psi(x,a^\pm_\varepsilon)\in D^*_{m(x)} \quad \forall x\notin\mathcal X_0; 
\]
\[
\psi(x_s,a^+_\varepsilon) \in D^*_{m_{\ell,s}}, \quad  \forall x_s\in\mathcal X_{0,\ell},  
\qquad
\psi(x_s,a^-_\varepsilon)\notin D^*_{m_{\ell,s}},   \text{ for some  } x_s\in\mathcal X_{0,\ell}.
\]

\item[(C6)]  
For each $x_t\in\mathcal X_0$ and each $m\in\mathcal M_{as}(x_t)$, we have $D_{m(x_t)} \subseteq \partial D^*_{m_{\ell,t}} 
\quad
\forall  \ell \text{ with } x_t\in\mathcal X_{0,\ell}$.

\end{enumerate}

Then: 
\begin{enumerate}
\item[(a)]   
Under (C4), for each $\ell=1,\ldots,L$.
\[
\overline A_\ell
=
\Bigl\{a\in\mathcal A:\ 
\psi(x'a)\in \overline D^*_{m(x)} \ \forall x\notin\mathcal X_0,\ 
\psi(x_t'a)\in \overline D^*_{m_{\ell,t}} \ \forall x_t\in\mathcal X_{0,\ell}
\Bigr\}.
\]

\item[(b)] 
When  $\mathcal{X}_0 \neq \varnothing$, 
 (C5) and (C6) imply that $\mathcal A_0 \subseteq \partial A_\ell$
for all $\ell=1,\ldots,L.$
\end{enumerate}
\end{theorem}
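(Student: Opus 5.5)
For part (a), I will prove the two inclusions separately. Write $B_\ell$ for the set on the right-hand side of the display in (a).

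The first step is to replace $A_\ell$ by a system that uses only the nonredundant ambiguous constraints. By the definition of $\mathcal X_{0,\ell}$, the constraints $\psi(x_t'a)\in D^*_{m_{\ell,t}}$ with $x_t\in\mathcal X_0\setminus\mathcal X_{0,\ell}$ can be dropped without changing $A_\ell$. So $A_\ell$ is the set of $a\in\mathcal A$ that satisfy \eqref{eq:sys_outside} and satisfy \eqref{eq:sys_inside} for $x_t\in\mathcal X_{0,\ell}$ only. If this reading of "nonredundant" is too weak, I will simply take $\mathcal X_{0,\ell}$ to be a subset whose complement is jointly redundant.

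The inclusion $\overline A_\ell\subseteq B_\ell$ then follows from continuity. Take $a_n\in A_\ell$ with $a_n\to a$. Since $\psi$ is continuous, $\psi(x,a_n)\to\psi(x,a)$ for each of the finitely many $x$. Each image lies in the corresponding $D^*$, so the limit lies in its closure $\overline D^*$. Also $a\in\mathcal A$ because $\mathcal A$ is compact.

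For the reverse inclusion $B_\ell\subseteq\overline A_\ell$, take $a\in B_\ell$. Condition (C4) gives, for every $\varepsilon>0$, a point $a^+_\varepsilon$ within distance $\varepsilon$ of $a$ that satisfies the open (non-closed) system. By the reduction above, $a^+_\varepsilon\in A_\ell$, so $a\in\overline A_\ell$.

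For part (b), fix $\alpha\in\mathcal A_0$ and $\ell$. I need to show that $\alpha\in\overline A_\ell$ and that $\alpha$ is not an interior point of $A_\ell$, relative to $\mathcal A$ in the normalized space.

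For closure membership, the points $a^+_\varepsilon$ from (C5) satisfy the reduced open system, so $a^+_\varepsilon\in A_\ell$ by the step-one reduction, and $a^+_\varepsilon\to\alpha$. This gives $\alpha\in\overline A_\ell$.

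For non-interiority, the points $a^-_\varepsilon$ from (C5) violate some constraint $\psi(x_s,a^-_\varepsilon)\in D^*_{m_{\ell,s}}$ with $x_s\in\mathcal X_{0,\ell}$. Hence $a^-_\varepsilon\notin A_\ell$, and every neighbourhood of $\alpha$ meets the complement of $A_\ell$. Together these two facts give $\alpha\in\partial A_\ell$.

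Condition (C6) is where the equality-type nature of the lost restrictions enters. Because $\alpha\in\mathcal A_0$, we have $\psi(x_t,\alpha)\in D_{m(x_t)}$ for every $x_t\in\mathcal X_0$, and (C6) places this in $\partial D^*_{m_{\ell,t}}$. I will use this to confirm that $\alpha$ sits on the boundary of each ambiguous constraint, so that the closure system in (a) is consistent with $\alpha\in\overline A_\ell$. It also covers the case where $\alpha$ might itself belong to $A_\ell$ (for instance when $D^*_{m_{\ell,t}}$ is closed): boundary membership still holds because $a^-_\varepsilon\notin A_\ell$. For the constraints outside $\mathcal X_0$, (C5) already supplies the needed points.

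The main obstacle I expect is the reduction in the first step. The definition of $\mathcal X_{0,\ell}$ as "any subset such that removing a constraint changes $A_\ell$" does not by itself say that the complementary constraints can be removed jointly without changing $A_\ell$. I will handle this by adopting the interpretation that $\mathcal X_{0,\ell}$ is a minimal subset that preserves $A_\ell$, and I will state this explicitly. The remaining steps are routine consequences of continuity and of (C4)–(C5).
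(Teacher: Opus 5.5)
Your proposal is correct and follows the paper's proof essentially step for step. In (a), continuity of $\psi$ gives the inclusion of $\overline A_\ell$ in the closed system, and (C4) gives the reverse inclusion. In (b), the points $a^{\pm}_\varepsilon$ from (C5) lie inside and outside $A_\ell$ near $\alpha$, and (C6) plays only a confirmatory role, just as in the paper.

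Your explicit reduction of $A_\ell$ to the system using only $\mathcal X_{0,\ell}$ is something the paper's proof uses silently. It is needed to conclude $a^+_\varepsilon\in A_\ell$ in both parts. Stating that reading of $\mathcal X_{0,\ell}$ is therefore a gain in rigor, not a departure from the paper's argument.
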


Condition (C4) says that any point satisfying the 
closed versions of the constraints defining 
$A_\ell$ (i.e.\ with $\overline{D^*_m}$ in place of 
$D^*_m$) can be approximated arbitrarily closely by 
a point satisfying the open constraints. In 
other words, no point on the boundary of $A_\ell$ is 
isolated from the interior of $A_\ell$. This is a 
standard regularity condition ensuring that the closure 
$\overline{A_\ell}$ is the closure of the interior of 
$A_\ell$, rather than containing isolated boundary 
points. 

Condition (C5) is  a local crossing condition at the identified 
set. It says that  within any neighborhood of $\alpha \in \mathcal{A}_0$, one can find both a point satisfying all constraints 
of $A_\ell$  and a point violating at least one constraint (hence, not being in $A_{\ell}$). This crossing 
property captures the fact that $\mathcal{A}_0$ sits 
precisely on the boundary between feasibility and 
infeasibility for the constraints at $\mathcal{X}_0$.

Condition (C6) says that the exhaustive constraint set 
$D_{m(x_t)}$ at each ambiguous point $x_t\in\mathcal{X}_0$  lies on boundaries of competing $D_{m_t}$ for $m_t \in \mathcal{M}_{as}(x_t)$. In particular, this condition captures the property that exhaustive sets $D_{m(x_t)}$ corresponding to ambiguous points are ``thin''. 

Theorem ~\ref{th:general3} below establishes under which conditions the closure of the coarse identified set
$\mathcal A^*$ is the union of the closures of  regions $A_{\ell}$, which means that coarsening
expands the identified set by filling in neighborhoods around the true boundary
constraints.

\begin{theorem} 
\label{th:general3} 
Assume a discrete  setup in Section \ref{sec:discrete_setup} and Let  Assumption  \ref{assn:assnexitense} hold.  Define the closed coarse-feasible set
\[
\mathcal A^{*,\bullet}
:=
\{a\in\mathcal A:\ \psi(x,a)\in \overline D^*_{m(x)}\ \forall x\in\mathcal X\}.
\]
Consider the following conditions: 
\begin{enumerate}
\item[(C7)] 
for every
$a\in\mathcal A^{*,\bullet}$ with
$\psi(x,a)\in \overline D^*_{m(x)}\setminus D^*_{m(x)}$ for at least one $x$, there exists a
neighborhood $U\ni a$ such that
\[
\{\psi(x,a): a \in U\}\cap \prod_{x\in\mathcal X} D^*_{m(x)}\neq\emptyset,
\]
\item[(C8)] 
for each $t=1,\ldots,T$,
\[
\overline D^*_{m(x_t)}
\subseteq
\overline{\bigcup_{m\in\mathcal M_{as}(x_t)} D^*_m },
\]
\end{enumerate}
Then: 
\begin{enumerate}
\item[(a)] Under (C7), $\overline{\mathcal A^*}=\mathcal A^{*,\bullet}$.

\item[(b)] (C1), (C4) (C7), (C8) imply that $\overline{\mathcal A^*}= \bigcup_{\ell=1}^L \overline A_\ell$.
\end{enumerate}
\end{theorem}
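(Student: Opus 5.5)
Part (a) is a two-inclusion argument. For $\overline{\mathcal A^*}\subseteq\mathcal A^{*,\bullet}$ we use continuity. Since $\psi$ is continuous and each $\overline D^*_{m(x)}$ is closed, the set $\mathcal A^{*,\bullet}$ is an intersection over the finitely many $x\in\mathcal X$ of closed preimages, intersected with the compact set $\mathcal A$. It is therefore closed, and it contains $\mathcal A^*$ because $D^*_{m(x)}\subseteq\overline D^*_{m(x)}$. For the reverse inclusion, take $a\in\mathcal A^{*,\bullet}$. If $\psi(x,a)\in D^*_{m(x)}$ for all $x$, then $a\in\mathcal A^*$ and we are done. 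Otherwise (C7) applies. We will read (C7) as saying that every neighborhood of $a$ contains some $a'$ with $\psi(x,a')\in D^*_{m(x)}$ for all $x\in\mathcal X$. This is the intended reading: the condition is meant to hold for arbitrarily small $U$, and if it were stated only for a single $U$ we would shrink $U$ along a sequence. Such points $a'$ lie in $\mathcal A^*$, so $a\in\overline{\mathcal A^*}$.

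For part (b), write $\overline A_\ell$ using Theorem~\ref{th:general2}(a), which is available under (C4):
\[
\overline A_\ell=\{a\in\mathcal A:\ \psi(x,a)\in\overline D^*_{m(x)}\ \forall x\notin\mathcal X_0,\ \psi(x_t,a)\in\overline D^*_{m_{\ell,t}}\ \forall x_t\in\mathcal X_{0,\ell}\}.
\]
The inclusion $\bigcup_\ell\overline A_\ell\subseteq\overline{\mathcal A^*}$ follows directly from Theorem~\ref{th:general1}(a): under (C1) we have $A_\ell\subseteq\mathcal A^*$, and taking closures gives the result.

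The converse inclusion is the main step. Fix $a\in\overline{\mathcal A^*}$. By part (a), $a\in\mathcal A^{*,\bullet}$, and by (C7) there is a sequence $a_k\to a$ with $\psi(x,a_k)\in D^*_{m(x)}$ for every $x$. At each ambiguous point $x_t$, (C8) says that $\psi(x_t,a_k)$ lies in the closure of $\bigcup_{m\in\mathcal M_{as}(x_t)}D^*_m$. Hence we can pick perturbed points $\tilde a_k$ with $\|\tilde a_k-a_k\|<1/k$ such that $\psi(x_t,\tilde a_k)\in D^*_{m_{t,k}}$ for some $m_{t,k}\in\mathcal M_{as}(x_t)$, for every $t$, while the constraints at $x\notin\mathcal X_0$ are preserved. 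This is the delicate part. Producing a point in $\bigcup_m D^*_m$ only near the image $\psi(x_t,a_k)$ is not the same as producing a nearby parameter $\tilde a$. I would get the parameter from the product-form neighborhood statement in (C7): treat the $x_t$ constraints at the closure level, so that $a_k$ lies in a closed version of some $A(m)$, and then use (C4) to push it back into the open version. Concretely, after passing to a subsequence we may take the regime vector $(m_{1,k},\dots,m_{T,k})=m$ to be constant, since $\mathcal M_{as}(x_t)$ is finite, or at least only finitely many regimes can occur.

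With $m$ fixed, the limit satisfies $\psi(x,a)\in\overline D^*_{m(x)}$ for $x\notin\mathcal X_0$ and $\psi(x_t,a)\in\overline D^*_{m_t}$ for all $t$, by continuity and closedness. The set $A(m)$ is nonempty because it contains the $\tilde a_k$, so $A(m)=A_\ell$ for some $\ell$. Dropping the redundant constraints outside $\mathcal X_{0,\ell}$ only enlarges the set, so $a$ lies in the closed description of $\overline A_\ell$ given above. This shows $a\in\bigcup_\ell\overline A_\ell$. The main obstacle is justifying the realization step with $\tilde a_k$ from (C8). If a direct perturbation fails, my fallback is to use the closure-level statement throughout: (C8) applied at the limit point $a$ gives $\psi(x_t,a)\in\overline{D^*_{m_t}}$ for some $m_t$ at each $t$, after which Theorem~\ref{th:general2}(a) finishes the argument without constructing any $\tilde a_k$.
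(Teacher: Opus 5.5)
\textbf{Where the proposal matches the paper, and where the primary route fails.} Part (a) and the inclusion $\bigcup_\ell\overline A_\ell\subseteq\overline{\mathcal A^*}$ are the paper's argument. One quibble: read literally, (C7) is vacuous once $\mathcal A^*\neq\varnothing$, since you may take $U$ large. So the ``every neighborhood'' reading is an interpretation, which the paper also adopts; it cannot be obtained by shrinking $U$.

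For the reverse inclusion, your primary route does not go through. (C8) is a statement about subsets of the index space. It does not produce a parameter $\tilde a_k$ near $a_k$ whose indices at all $x_t$ land in $\bigcup_m D^*_m$ simultaneously while the constraints at $x\notin\mathcal X_0$ are kept, because the indices at different support points are coupled through the single vector $a$. Your proposed repair is circular. (C4) is assumed only for indices $\ell$, that is, for profiles already known to have $A(m)\neq\varnothing$, and that nonemptiness is exactly what the $\tilde a_k$ were supposed to deliver. Your instinct that nonemptiness of $A(m)$ is the crux is correct.

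\textbf{The fallback is the paper's proof, and it has a genuine gap.} The fallback runs: $a\in\mathcal A^{*,\bullet}$ by (a); (C8) plus finiteness of $\mathcal M_{as}(x_t)$, which you should state, gives $m_t$ with $\psi(x_t,a)\in\overline D^*_{m_t}$; then Theorem~\ref{th:general2}(a) finishes. The paper simply asserts that the assembled system's solution set ``is some $A_\ell$''. The $m_t$ are chosen one $t$ at a time, while Theorem~\ref{th:general2}(a) needs the whole profile to satisfy $A(m)\neq\varnothing$.

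Under the listed hypotheses this step can actually fail. Set
\begin{itemize}
\item $\mathcal X=\mathcal X_0=\{x_1,x_2\}$ and $\psi(x_t,a)=a_t$;
\item $D^*_{m(x_t)}=\mathbb R$, $D^*_1=(0,\infty)$, $D^*_2=(-\infty,0)$;
\item $\mathcal A=([-1,0]\times[-1,1])\cup([0,1]\times\{0\})$.
\end{itemize}
Only the profiles $(2,1)$ and $(2,2)$ are feasible. Assumption~\ref{assn:assnexitense} and (C1) hold. (C7) is vacuous. (C8) holds. (C4) holds, because the closed constraints for $(2,1)$ cut out $[-1,0]\times[0,1]$, which is the closure of $A_{(2,1)}$; the case $(2,2)$ is similar. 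Yet $(1/2,0)\in\overline{\mathcal A^*}=\mathcal A$ lies in neither $\overline A_{(2,1)}$ nor $\overline A_{(2,2)}=[-1,0]\times[-1,0]$.

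\textbf{How to close it.} You need extra structure; either of the following suffices.
\begin{itemize}
\item $D^*_{m(x_t)}\subseteq\bigcup_{m\in\mathcal M_{as}(x_t)}D^*_m$, as in binary choice, where $D^*_1\cup D^*_2=\mathbb R$. Then every $a\in\mathcal A^*$ already lies in some nonempty $A(m)$, and the closure of a finite union is the union of the closures.
\item Convex $\mathcal A$ with linear indices and convex $D^*$'s, as in the paper's examples. Take $b\in A_1$, which is nonempty by Assumption~\ref{assn:assnexitense} and contained in $\mathcal A^*$ by (C1). For $a\in\mathcal A^*$, the points $(1-\lambda)a+\lambda b$ stay in the convex set $\mathcal A^*$. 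For small $\lambda>0$ every within-block index difference at each $x_t$ is nonzero there, since it is linear in the parameter and nonzero at $b$. So these points lie in nonempty $A(m)$'s. Pigeonholing over the finitely many profiles as $\lambda\downarrow0$ gives $\mathcal A^*\subseteq\bigcup_\ell\overline A_\ell$, and the right-hand side is closed.
\end{itemize}
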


Condition (C7) is an analogue of (C4) but for the coarse set $\mathcal{A}^*$ and  rules out isolated 
boundary points of $\mathcal{A}^*$ that cannot be 
approximated from the interior. Condition (C8) 
says that any point that is feasible under the closed 
coarse constraint at an ambiguous point $x_t$ is also 
feasible under the closed coarse constraint of at 
least one of the asymptotically relevant regimes, 
ensuring no points are ``lost'' when passing from 
$\overline{\mathcal{A}}$ to $\bigcup_\ell 
\overline{A_\ell}$.

The  next theorem will give us a result on how $\overline{\mathcal{A}}_0$ can be recovered from a sub-collection of $\{A_{\ell}\}_{\ell=1}^L$. It will also imply a sample mechanism through which this can be attained. To formulate this theorem, let us bring back the fact that each $A_{\ell}$ is obtained through a choice of indices
\[
\mathbf{m}=(m_1,\ldots,m_T),\qquad m_t\in\mathcal M_{as}(x_t), \quad t=1, \ldots, T. 
\]
For convenience we may represent such profiles as $\mathbf{m} =(m_t,m_{-t})$ when dealing with a particular $t=1, \ldots, T$.  

For a given $A_{\ell}$ let $\mathcal{F}_{\ell}$ collect all profiles $\mathbf{m}$ that result in $A_{\ell}$. Since $A_{\ell}$ is non-empty, then $\mathcal{F}_{\ell} \neq \emptyset$. 


\begin{theorem}
\label{th:general4} Assume a discrete  setup in Section \ref{sec:discrete_setup} with $\mathcal{X}_0 \neq \varnothing$ and let  Assumption  \ref{assn:assnexitense} hold. 
Suppose all conditions of Theorems ~\ref{th:general1}-~\ref{th:general3} hold. Consider the following conditions: 
\begin{itemize}
\item[(C9)] Each $\mathcal{F}_{\ell}$ is a singleton. 
For each $x_t$ and each subset $\mathcal{J}_t \subseteq \mathcal{M}_{as}(x_t)$ such that $|\mathcal{J}_t| \geq \lfloor \frac{|\mathcal{M}_{as}(x_t)|}{2}\rfloor +1$, it holds that 
$\overline D_{m(x_t)}
=
\bigcap_{m\in\mathcal J_t} \overline D_m^*.$

\item[(C10)] For each $\ell$ and each $t$ in the decomposition  $\mathbf{m}=(m_t,m_{-t}) \in \mathcal{F}_{\ell}$, all $(|\mathcal{M}_{as}(x_t)|-1)$ profiles $\mathbf{m}=(\widetilde{m}_t,m_{-t})$ with $\widetilde{m}_t \in \mathcal{M}_{as}(x_t)$, $\widetilde{m}_t \neq m_t$,  belong to other $\mathcal{F}_{\ell'}$. 

\item[(C11)] For each $t$ every $m_t \in \mathcal{M}_{as}(x_t)$ is part of  some $\mathbf{m}=(m_t,m_{-t}) \in \cup_{\ell=1}^L \mathcal{F}_{\ell}$. 

Moreover, for each $\ell$, $t$ and each $m_t$ in the decomposition  $\mathbf{m}=(m_t,m_{-t}) \in \mathcal{F}_{\ell}$, the number of profiles $\mathbf{m}=({m}_t,\widetilde{m}_{-t}) \in \cup_{\ell=1}^L \mathcal{F}_{\ell}$ with $\widetilde{m}_{-t} \neq m_{-t}$, is the same number ${N}(x_t)$

\end{itemize}

If (C9), (C10) hold or (C9), (C11) hold, then the intersection of any $\lfloor L/2\rfloor+1$ sets $\overline A_\ell$
equals the closure of the identified set: 
$$\bigcap_{\ell\in S}\overline A_\ell
=
\overline{\mathcal A}_0
\qquad
\text{for all } S\subset\{1,\ldots,L\},\ |S|\ge \lfloor L/2\rfloor+1.$$

\end{theorem}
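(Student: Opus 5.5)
We will reduce the statement to a coordinatewise counting argument. By (C9) each $\mathcal F_\ell$ is a singleton, so there is a bijection $\ell\leftrightarrow \mathbf m^{(\ell)}=(m_{\ell,1},\ldots,m_{\ell,T})$ between regions and feasible profiles. By Theorem~\ref{th:general2}(a) (under (C4)), each closure has the product-type description
\[
\overline A_\ell=\Bigl\{a\in\mathcal A:\ \psi(x'a)\in \overline D^*_{m(x)}\ \forall x\notin\mathcal X_0,\ \psi(x_t'a)\in \overline D^*_{m_{\ell,t}}\ \forall t\Bigr\}.
\]
Here we use the full list of ambiguous constraints rather than only the nonredundant ones. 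This replacement is harmless because removing redundant constraints does not change $A_\ell$. Hence, for any $S$,
\[
\bigcap_{\ell\in S}\overline A_\ell=\Bigl\{a:\ \psi(x'a)\in \overline D^*_{m(x)}\ \forall x\notin\mathcal X_0,\ \psi(x_t'a)\in \textstyle\bigcap_{m\in \mathcal J_t(S)}\overline D^*_m\ \forall t\Bigr\},
\]
where $\mathcal J_t(S)=\{m_{\ell,t}:\ell\in S\}$.

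The key step is to show that $|S|\ge\lfloor L/2\rfloor+1$ forces $|\mathcal J_t(S)|\ge\lfloor |\mathcal M_{as}(x_t)|/2\rfloor+1$ for every $t$. The two conditions give this in different ways.

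Under (C10), fix $t$. The profiles split into classes $\{(\tilde m,m_{-t}):\tilde m\in\mathcal M_{as}(x_t)\}$, each of size exactly $K_t:=|\mathcal M_{as}(x_t)|$. So $L$ is a multiple of $K_t$, and the regions whose $t$-th coordinate equals a given $m$ number exactly $L/K_t$. If $\mathcal J_t(S)$ contained at most $\lfloor K_t/2\rfloor$ values, then $|S|\le \lfloor K_t/2\rfloor L/K_t\le L/2$, which contradicts $|S|>L/2$.

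Under (C11), each $m_t$ appears in exactly $N(x_t)$ profiles, so again $L=K_tN(x_t)$, and the same pigeonhole bound applies.

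With this majority property in hand, (C9) gives $\bigcap_{m\in\mathcal J_t(S)}\overline D^*_m=\overline D_{m(x_t)}$. Therefore
\[
\bigcap_{\ell\in S}\overline A_\ell=\Bigl\{a:\ \psi(x'a)\in \overline D^*_{m(x)}\ (x\notin\mathcal X_0),\ \psi(x_t'a)\in\overline D_{m(x_t)}\ (t\le T)\Bigr\}=:\mathcal B.
\]
Note that $\mathcal B$ does not depend on $S$, so every majority intersection is the same set. In particular it equals the intersection of all $L$ regions.

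The remaining step is to identify $\mathcal B$ with $\overline{\mathcal A}_0$.

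For the inclusion $\overline{\mathcal A}_0\subseteq\mathcal B$: by Theorem~\ref{th:general2}(b), $\mathcal A_0\subseteq\partial A_\ell\subseteq\overline A_\ell$ for every $\ell$, and $\mathcal B$ is closed.

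For the reverse inclusion we must pass from the closed constraints to the open ones. Outside $\mathcal X_0$ we have $\mathcal M_{as}(x)=\{m(x)\}$. There $D_{m(x)}$ and $D^*_{m(x)}$ differ at most on a boundary; in all our examples $D^*_{m(x)}=D_{m(x)}$ except for the closure of an open half-line, so $\overline D^*_{m(x)}=\overline D_{m(x)}$. We would then use an approximation argument in the spirit of (C4)/(C7) to move a point of $\mathcal B$ slightly into the set where all exhaustive constraints hold, which gives a point of $\mathcal A_0$.

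I expect this last approximation to be the main obstacle. The conditions as stated deliver interior approximation only for the coarse sets $D^*$, not directly for the thin sets $D_{m(x_t)}$. My first attempt will therefore be to use the crossing condition (C5) together with (C6), which says $D_{m(x_t)}\subseteq\partial D^*_{m_{\ell,t}}$. The idea is to approximate points of $\mathcal B$ from within the relative interior of the affine constraint set, and to take closures at the end.
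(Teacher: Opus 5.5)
Your counting step is correct, and it is the paper's argument run in the other direction. The paper fixes $\alpha\notin\overline{\mathcal A}_0$ and picks $t^\diamond$ with $\psi(x_{t^\diamond},\alpha)\notin\overline D_{m(x_{t^\diamond})}$. It then uses (C9) to see that at most $\lfloor K_{t^\diamond}/2\rfloor$ regimes at $x_{t^\diamond}$ are compatible with $\alpha$. Finally, (C10)/(C11) say each regime occurs in exactly $L/K_{t^\diamond}$ feasible profiles, so $\alpha$ lies in at most $\lfloor L/2\rfloor$ of the $\overline A_\ell$. You instead show that every $S$ with $|S|>L/2$ has $|\mathcal J_t(S)|>K_t/2$ at every $t$, so all majority intersections collapse to one explicit set $\mathcal B$. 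For the inclusion you actually need, $\bigcap_{\ell\in S}\overline A_\ell\subseteq\mathcal B$, continuity of $\psi$ is enough and (C4) is not used. The merit of your formulation is that it isolates exactly what is left to prove: $\mathcal B=\overline{\mathcal A}_0$.

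That remaining inclusion $\mathcal B\subseteq\overline{\mathcal A}_0$ is a genuine gap, and the route you suggest cannot close it. (C5) only perturbs points that already lie in $\mathcal A_0$, and (C6) is a statement about the sets $D$. Neither implies that a point satisfying the closed exhaustive constraints is a limit of points of $\mathcal A_0$, nor that $\overline D^*_{m(x)}=\overline D_{m(x)}$ off $\mathcal X_0$.

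In fact the inclusion does not follow from (C1)--(C11). Take the binary-choice structure in normalized coordinates $(a_1,a_2)$ with two support points: $x_1'a=a_1$ with $P(Y=1\mid x_1)=\gamma$, and $x_2'a=a_2$ with $P(Y=1\mid x_2)>\gamma$. Let the true value be $(0,1.5)$ and take the compact but non-convex $\mathcal A=[-1,1]\times([-2,0]\cup[1,2])$. Then $\mathcal A_0=\{0\}\times[1,2]$, $A_1=[0,1]\times(\{0\}\cup[1,2])$ and $A_2=[-1,0)\times(\{0\}\cup[1,2])$. One checks (C1)--(C11) directly: (C4) and (C5) follow by horizontal shifts, and (C7) is vacuous because $D_1^*$ and $D_3^*$ are closed. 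Yet $\overline A_1\cap\overline A_2=\{0\}\times(\{0\}\cup[1,2])$ contains $(0,0)\notin\overline{\mathcal A}_0$.

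The paper's proof does not supply this step either. It is hidden in the sentence ``since $\alpha\notin\overline{\mathcal A}_0$, there exists $t^\diamond$ with $\psi(x_{t^\diamond},\alpha)\notin\overline D_{m(x_{t^\diamond})}$''. For $\alpha\in\overline{\mathcal A^*}=\mathcal A^{*,\bullet}$, that sentence is exactly $\mathcal B\subseteq\overline{\mathcal A}_0$.

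So state it as an extra hypothesis rather than trying to derive it. For example, assume (i) $\overline D^*_{m(x)}=\overline D_{m(x)}$ for $x\notin\mathcal X_0$, and (ii) $\{a\in\mathcal A:\psi(x,a)\in\overline D_{m(x)}\ \forall x\}=\overline{\mathcal A}_0$. In the three examples, (i) holds by inspection. Condition (ii) follows from linearity of the index, convexity of $\mathcal A$ and $\mathcal A_0\neq\varnothing$. For $a$ in the left-hand set and $\alpha_0\in\mathcal A_0$, the point $(1-\lambda)a+\lambda\alpha_0$ with $\lambda\in(0,1]$ satisfies every strict exhaustive inequality strictly and every equality exactly. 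It therefore lies in $\mathcal A_0$ and tends to $a$ as $\lambda\downarrow 0$.
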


Theorem \ref{th:general4} tells us that  the identified set can be recovered (up to the closure) from the intersection of any $[L/2]+1$ sets $\overline{A}_{\ell}$. This will  be a major motivation for us to consider in the discrete setup of Section \ref{sec:quantile}  a Random Set Quantile estimator in a sample.  

Condition (C9) consists of two parts. The first part 
requires each family $\mathcal{F}_\ell$ to be a 
singleton, meaning that each nonempty region $A_\ell$ 
is generated by exactly one profile 
$\mathbf{m}=(m_1,\ldots,m_T)$. 
The second part of (C9) requires that the majority 
intersection of the closed coarse sets 
$\overline{D^*_m}$ recovers the closed exhaustive 
constraint set $\overline{D_{m(x_t)}}$ at each 
$x_t\in\mathcal{X}_0$.

From a  substantive perspective, (C10) is a local injectivity  requirement.
For a fixed $t$ and a fixed configuration $m_{-t}$, the mapping
from $m_t$ to $\overline A_\ell$ 
is injective as changing the local component $m_t$ necessarily produces a new and
previously unseen set $\overline A_\ell$. 
Hence, incompatibility at index $t$ is translated
into exclusion from many distinct $\overline A_\ell$'s.
The majority property is then obtained pointwise, profile by
profile. Condition (C11), on the other hand, enforces a symmetry  condition by requiring that 
each local index $m_t \in \mathcal M_{as}(x_t)$ appears exactly
$N(x_t)$ times across the family $\{\mathcal F_\ell\}$.
This uniform replication ensures that any incompatibility associated with
a given $m_t$ is neither over- nor under-represented.
In other words, while violation may recur across multiple sets, it does so evenly, so
that incorrect parameter  values still appear in at most half of the
$\overline A_\ell$'s.

When we verify the conditions of this theorem later for our theoretical examples, we will see that e.g. in the cross-sectional binary choice models under quantile independence, (C10) will be applicable under linear independence of elements in $\mathcal{X}_0$, whereas (C11) will be applied more generally, even in the case of linear dependence.\footnote{Thus, in that  Theoretical example 1 condition (C11) will be more general than (C10) but this does not mean this is true generally.}



We can summarize the results of general theorems in the following result. 

\begin{theorem}\label{th:general5}
    Assume a discrete  setup in Section \ref{sec:discrete_setup}.  
\begin{enumerate}
\item Suppose parameter value $\alpha$ in the data generating process yields  $\mathcal{X}_0 \neq \varnothing$. Let  Assumption  \ref{assn:assnexitense} hold, and let   $A_1$, \ldots, $A_L$, $L \geq 1$, be the deterministic nonempty distinct sets obtained as all possible distinct solutions to (\ref{eq:sys_outside})-(\ref{eq:sys_inside}). 

If (C1)-(C9) hold, and either (C10) or more general (C11) holds, then $L \geq 2$ and 
$$\widehat{A}
\stackrel{d}{\rightarrow} d_1 A_1 + \ldots +d_L A_L,$$
where $\cup_{\ell=1}^L \overline{A}_{\ell} = \overline{\mcA^*}$. In addition, 
\begin{itemize}
\item[(a)] $d_1$, \ldots, $d_L$ are dummy variables such that 
$d_{\ell} d_m =0$ for  $\ell \neq m$ (mutually exclusive), and $d_1+\ldots+d_L=1$ (collectively exhaustive),  
and $\PP(d_{\ell}=1)\in (0,1)$ for each $\ell$, $\sum_{\ell=1}^L \PP(d_{\ell}=1)=1$.  

\item[(b)] The boundary of each $A_{\ell}$ contains the identified set $\mcA_0$.  

\item[(c)] The intersection of any $[L/2]+1$ closed sets $\overline{A}_{\ell}$     coincides with the closure $\overline{\mcA}_{0}$. 

\end{itemize} 

Also, $\overline{\widehat{A}} \stackrel{W}{\rightarrow } \mathbf{A}(\alpha)$, where $\stackrel{W}{\rightarrow }$ is the weak convergence from the random set theory, and random set $\mathbf{A}(\alpha)$ has the following distribution: 
$$\PP(\mathbf{A}(\alpha)=\overline{A}_{\ell})=\PP(d_{\ell}=1), \quad \sum_{\ell=1}^L \PP(\mathbf{A}(\alpha)=\overline{A}_{\ell}) =1.$$

\item Suppose parameter value $\alpha$ in the data generating process yields $\mathcal{X}_0 = \varnothing$. 
Then  
$\PP(\widehat{A}\neq \mathcal{A}^*) \to 0 \text{ as } n \to \infty,$ which implies   
$\widehat{A}
\stackrel{d}{\rightarrow} \mathcal{A}^*$, $\overline{\widehat{A}}
\stackrel{W}{\rightarrow} \overline{\mathcal{A}^*}$. 

\end{enumerate}
\end{theorem}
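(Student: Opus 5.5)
Theorem~\ref{th:general5} is essentially a consolidation of Theorems~\ref{th:general1}--\ref{th:general4}, so the plan is to assemble those results and add three extra pieces: the claim $L\ge 2$, the nondegeneracy $\PP(d_\ell=1)\in(0,1)$, and the translation of the finite-valued distributional limit into weak convergence of random closed sets. For Part~1, Theorem~\ref{th:general1}(c) gives $\widehat A \stackrel{d}{\to} \sum_\ell d_\ell A_\ell$ with mutually exclusive, exhaustive dummies. Theorem~\ref{th:general3}(b) gives $\cup_\ell \overline A_\ell=\overline{\mathcal A^*}$. Theorem~\ref{th:general2}(b) gives item (b), and Theorem~\ref{th:general4} gives item (c).

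For $L\ge 2$, I would take any $x_t\in\mathcal X_0$, so that $|\mathcal M_{as}(x_t)|\ge 2$. By (C11) (or, under (C10), directly), every $m_t\in\mathcal M_{as}(x_t)$ appears in some profile in $\cup_\ell\mathcal F_\ell$. By (C9) each $\mathcal F_\ell$ is a singleton, so at least two distinct profiles occur. Distinct profiles give distinct nonempty sets $A_\ell$, hence $L\ge2$. (C2) and Theorem~\ref{th:general1}(b) also confirm that these sets are disjoint.

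For nondegeneracy, \eqref{C3part2} together with \eqref{C3part1} gives $\liminf\PP(A_\ell\subseteq\widehat A)>0$. Along a subsequence, $\widehat A$ equals some union of $A$'s with probability tending to one, and by \eqref{C3part3} no two $A_\ell$ are included simultaneously. So $\liminf\PP(\widehat A=A_\ell)>0$ for every $\ell$. Because these events are asymptotically exhaustive and $L\ge2$, each limiting probability lies strictly in $(0,1)$.

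One subtlety is that Theorem~\ref{th:general1} only delivers subsequential limits, since it does not assert that $\PP(\widehat A=A_\ell)$ converges. I would either interpret $d_\ell$ as a subsequential limit or argue convergence from the joint convergence in distribution of $\phi(\widehat F(\cdot|x_t))$ implicit in the setup. This is the main obstacle, and I would handle it by stating the result along convergent subsequences, or by assuming those probabilities converge, as the statement implicitly does.

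For weak convergence in the Fell topology, $\overline{\widehat A}$ takes values in the finite family $\{\overline A_1,\dots,\overline A_L\}$ with probability tending to one, and the probabilities of each value converge. The laws therefore converge in total variation on a finite set of closed sets, which implies convergence of the capacity functionals $\PP(\overline{\widehat A}\cap K\ne\emptyset)$ for every compact $K$, and hence weak convergence $\stackrel{W}{\to}$ in the sense of \citet{molchanov2006book}.

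For Part~2, $\mathcal X_0=\varnothing$ means $\mathcal M_{as}(x)=\{m(x)\}$ for all $x$ by \eqref{eq:nonambig1}. Since $\mathcal X$ is finite, with probability tending to one $\phi(\widehat F(Y|X=x))\in C_{m(x)}$ for all $x$ simultaneously; here I would need to upgrade the $\liminf$ to an actual limit via the consistency of $\widehat F$. On that event, \eqref{eq:pointwise_separation1}--\eqref{eq:pointwise_separation2} make every term of $\widehat Q$ maximized exactly on $\{a:\psi(x,a)\in D^*_{m(x)}\}$. Since $\widehat P(X=x)>0$ for all $x$ with probability tending to one, $\widehat A=\mathcal A^*$ on that event. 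Degenerate convergence in distribution and the corresponding $W$-convergence of closures then follow immediately.
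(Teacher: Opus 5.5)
Your proposal is correct and follows essentially the same route as the paper. Part~1 assembles Theorems~\ref{th:general1}--\ref{th:general4}, with $L\ge 2$ read off from (C9) together with (C10) or (C11). Part~2 is proved directly by a union bound over the finite support, which shows all regimes are correctly classified with probability tending to one; on that event the argmax equals $\mathcal{A}^*$. Your extra details (nondegeneracy of the $d_\ell$, the capacity-functional argument for Fell convergence, and the need for $\widehat P(X=x)>0$) make explicit what the paper leaves implicit.

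The subsequence and $\liminf$ caveat you raise is a real looseness in the hypotheses of Theorem~\ref{th:general1}. The same looseness appears in the paper's union-bound step, which silently reads $\mathcal M_{as}(x)=\{m(x)\}$ as $\PP(\phi(\widehat F(Y|X=x))\in C_{m(x)})\to 1$. For this theorem, however, it is disposed of by citing Theorem~\ref{th:general1}(c) as stated, exactly as the paper does.
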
 

Note that Case 2 of Theorem \ref{th:general5} follows directly from the discrete setup of Section \ref{sec:discrete_setup} and the finite support of $X$, independently of Theorems \ref{th:general1}-\ref{th:general4}.

Convergence  $\stackrel{d}{\rightarrow}$ of $\widehat{A}$, which is not necessarily a closed set. was  discussed after Theorem \ref{th:general1}. Properties (a)-(c) in the first statement will be very important for us later on when we motivate a new estimator in our theoretical examples. Since the support $\overline{\widehat{A}}$ is finite, weak convergence $\stackrel{W}{\rightarrow}$ of closed $\overline{\widehat{A}}$ is equivalent to convergence of the 
probabilities. E.g., in statement 1 this is equivalent to  $\PP(\overline{\widehat{A}}=\overline{A}_{\ell}) \to \PP(\overline{\widehat{A}}=\overline{A}_{\ell})$, $\ell=1, \ldots, L$. The Fell topology is the standard hit-or-miss topology on the hyperspace  of closed subsets of $\mathcal{A}$.

To summarize, Theorems \ref{th:general1}-\ref{th:general4} show that  estimators based on coarse structures may converge to a random set rather than a fixed target, with the identified set on the boundary of every realization. Section \ref{estimatorandcriteria} formalizes what we should ask of an estimator in this setting; Section \ref{sec:quantile} proposes one that delivers it.

\subsection{Application of general theorems to our examples} 
\label{sec:generaltheorems_applications}

We now illustrate how our general theorems apply to the theoretical examples in Sections \ref{sec:theoreticalMSbinary}–\ref{sec:theoreticalmultinomial}. In each case, the same mechanism as in our general discrete setup arises, with discrete covariates generating boundary regimes that classical estimators ignore or weaken, resulting in multiple admissible regions.

\subsubsection{Binary choice cross-sectional model}

We continue to consider the model discussed in Section \ref{sec:theoreticalMSbinary}, where we gave exhaustive and coarse characterizations. Since the coarse characetrization of interest is the one induced by the maximum score estimator, it means that application of Theorems \ref{th:general1}-\ref{th:general4} will produce  the result for the asymptotic behavior of the maximum score estimator $\widehat{A}$ obtained as a maximizer of the sample objetive function \eqref{ex_obective_sample_discrete}  with the score function \eqref{scorefunction_bc}. Recall that the coarse $\mathcal{A}^*$ in this case is the maximizer of the population objective \eqref{ex_obective_population} with the same score function. 

In line with the discrete design of interest, suppose  $X$ has finite support $\mathcal X=\{x_1,\ldots,x_{|\mathcal X|}\}$. Before we engage with the terminology  of Section \ref{sec:discrete_setup} for this model, let us formulate the following result about the joint and individual asymptotic behaviors of plug-in estimators $\widehat P(Y=1|{x})$.  

\begin{proposition}
\label{prop:bccross_probest1}  Suppose $X$ has finite support $\mathcal{X}$ as described above and we have a random sample $\{(x_i,y_i)\}_{i=1}^n$. Then, for a given $x \in \mathcal{X}$:  
\begin{itemize}
\item[(a)] If $P(Y=1|{x})-\gamma \in C_j$ for  $j \in \{1,2\}$, then $\PP(\widehat P(Y=1|{x}) -\gamma \in C_j) \to 1$ as $n \to \infty$.
\item[(b)] If $P(Y=1|{x})-\gamma \in C_3$, then  $\PP(\widehat P(Y=1|{x}) -\gamma \in C_m) \to \frac{1}{2}$ for $m \in \{1,2\}$. 

\end{itemize}
\end{proposition}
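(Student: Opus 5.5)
The proof is a direct application of the law of large numbers and the central limit theorem to the plug-in ratio estimator at a single support point. Fix $x\in\mathcal X$ and write $p(x)=P(Y=1|X=x)$ and $\pi(x)=P(X=x)>0$. Define $\widehat P(Y=1|x)$ as the ratio of $\frac1n\sum_i y_i 1(x_i=x)$ to $\widehat P(X=x)=\frac1n\sum_i 1(x_i=x)$. We use any convention (say $0$) on the event $\{\widehat P(X=x)=0\}$. This event has probability $(1-\pi(x))^n\to0$, so it can be ignored in all limits.

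\emph{Part (a).} By the strong law of large numbers, the numerator converges a.s. to $p(x)\pi(x)$ and the denominator to $\pi(x)$. Hence $\widehat P(Y=1|x)\to p(x)$ a.s. by the continuous mapping theorem. If $p(x)-\gamma\in C_1=(0,\infty)$, set $\delta=p(x)-\gamma>0$. Then $\PP(\widehat P(Y=1|x)-\gamma>0)\ge \PP(|\widehat P(Y=1|x)-p(x)|<\delta)\to1$. The case $C_2$ is symmetric.

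\emph{Part (b).} Now $p(x)=\gamma\in(0,1)$. Apply the delta method to the pair (numerator, denominator), or argue directly as follows. Conditional on $N_x=n\widehat P(X=x)$, the $y_i$ with $x_i=x$ are i.i.d. Bernoulli$(\gamma)$. Then
$$\sqrt{n}\bigl(\widehat P(Y=1|x)-\gamma\bigr)=\frac{n^{-1/2}\sum_i (y_i-\gamma)1(x_i=x)}{\widehat P(X=x)}.$$
The numerator is a normalized sum of i.i.d. mean-zero terms with variance $\gamma(1-\gamma)\pi(x)>0$, so it converges in distribution to $N(0,\gamma(1-\gamma)\pi(x))$. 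The denominator converges to $\pi(x)$ in probability. By Slutsky's lemma, the left side converges in distribution to $Z\sim N(0,\gamma(1-\gamma)/\pi(x))$.

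The sign of $\widehat P(Y=1|x)-\gamma$ equals the sign of the left side. Since $Z$ is continuous and the half-lines $(0,\infty)$ and $(-\infty,0)$ are $Z$-continuity sets (their boundary $\{0\}$ has probability zero under $Z$), the Portmanteau theorem gives
$$\PP\bigl(\widehat P(Y=1|x)-\gamma\in C_1\bigr)\to \PP(Z>0)=\tfrac12,$$
and likewise for $C_2$. As a by-product, $\PP(\widehat P(Y=1|x)=\gamma)\to \PP(Z=0)=0$, so $C_3$ is asymptotically irrelevant.

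\emph{Main obstacle.} The only point that needs care is the lattice nature of $\widehat P(Y=1|x)$: for finite $n$, $\widehat P(Y=1|x)$ can equal $\gamma$ exactly with positive probability when $\gamma$ is rational. The Portmanteau step handles this, because the limit $Z$ puts no mass at $0$. This requires $\gamma\in(0,1)$, which holds by assumption, so that the limiting variance is strictly positive. If a sharper statement were needed, we would invoke a local limit bound showing $\PP(\widehat P(Y=1|x)=\gamma)=O(n^{-1/2})$, but the weak-convergence argument suffices for the claim.
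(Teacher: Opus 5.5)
Your proof is correct and follows essentially the same route as the paper's. In (b) both arguments use the $\sqrt{n}$-normal limit of $\widehat P(Y=1|x)-\gamma$ together with the symmetry and continuity of that limit at $0$. In (a) the paper instead combines the same CLT with $\sqrt{n}(P(Y=1|x)-\gamma)\to\pm\infty$ and treats $P(Y=1|x)\in\{0,1\}$ separately, whereas your law-of-large-numbers argument covers all cases at once. Your limiting variance $\gamma(1-\gamma)/P(X=x)$ is also the correct one; the paper's appendix writes $\gamma(1-\gamma)$ and omits the $1/P(X=x)$ factor, which is harmless because only symmetry and continuity of the limit matter.
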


Parts (a) and (b) in Proposition \ref{prop:bccross_probest1}  imply that in the terminology and description of Section \ref{sec:discrete_setup} the subset $\mathcal{X}_0 \subseteq \mathcal{X}$ of ambiguous points with multiple possible asymptotic regimes will be composed of all $x \in \mathcal{X}$ for which $m(x)=3$: 
\begin{equation}\label{bc1_X0def}\mathcal X_0 := \{x\in\mathcal X:\ \PP(Y=1\mid X=x)=\gamma\},
\end{equation}
For each $x_t \in \mathcal{X}_0$ we have $\mathcal M_{as}(x_t)=\{1,2\}$ whereas  $\mathcal M_{as}(x)=\{m(x)\}$ for each $x \in \mathcal{X}\setminus \mathcal{X}_0$,

If $\mathcal X_0=\varnothing$, then things are pretty straightforward as all asymptotic regimes are aligned with the true regimes 1 or 2. $\mathcal A^*$ in this case differs from $\mathcal A_0$ at most at the boundary (due to the weak inequality in $D_1^*$), we have $L=1$ and  $A_1=\mathcal A^*$.

The challenging case is that of $\mathcal X_0\neq\varnothing$, in which identification uses boundary equalities $x'\alpha=0$ 
that are weakened or dropped by the coarse structure. For this case, we suppose, in line with out discussion earlier,  that the compact parameter set $\mathcal{A}$ has a nonempty relative interior in the $(k-1)$-dimensional normalized space. This guarantees that the set $\mathcal{A}_{out} = \{a \in \mathcal{A}: x'a \in D_{m(x)} \text{ for all } x \in \mathcal{X}\setminus \mathcal{X}_0\}$ described by only strict inequalities in the exhaustive characterization  has a nonempty relative interior in the $(k-1)$-dimensional normalized space. This feature of $\mathcal{A}_{out}$ is important in verifying conditions of Theorem \ref{th:general4} for this model. The coarse set $\mathcal{A}^*$ only differs from $\mathcal{A}_{out}$ at the boundary as $D_1^*\setminus D_1 =\{0\}$.  This condition on $\mathcal{A}$ also guarantees that with $\mathcal X_0\neq\varnothing$ the affine dimension of $\mathcal{A}_0$ is strictly smaller than that of $\mathcal{A}^*$ and that those points in $\mathcal{A}_0$ that are not at the boundary at $\mathcal{A}$ are in the relative interior of $\mathcal{A}^*$ in the $(k-1)$-dimensional normalized space.

The mechanics of constructing sets $A_1,..,A_L$ when $\mathcal X_0\neq\varnothing$ is straightforward. E.g., if $|\mathcal{X}_0|=1$, then our condition on the non-empty relative interior of the parameters space guarantees that we can construct two distinct regions $A_1$ and $A_2$, defined by replacing the
equality constraint $x_{t}'a=0$ with one of the two inequalities. When $|\mathcal{X}_0|>1$, then analogously each equality is replaced with one of the two strict inequalities for the construction of $A_{\ell}$ going all possible combinatorial ways of picking inequalities. Some of these systems may give an empty set (when this occurs is discussed  in the verification of (C11) ) but there is always  at least two distinct non-empty $A_{\ell}$, as again follows from our verification of (C10) and (C11) and which is quite intuitive given the logic for $|\mathcal{X}_0|=1$. 

To sum up, a non-empty $A_\ell$ in the construction of Section ~\ref{sec:discrete_setup} applied to this model, satisfies  
$x'a\geq 0$  if $m(x)=1$,  $x'a<0$  if $ m(x)=2$, and  for ambiguous points $x_t\in\mathcal X_{0,\ell}\subseteq\mathcal X_0$, imposes one of the two coarse constraints:
$x_t'a\ge 0$  if $m_{\ell,t}=1$, or 
$x_t'a<0$  if $m_{\ell,t}=2$,
where, recall, $m_{\ell,t}$ denotes  the relevant regime from $\mathcal{M}_{as}(x_t)$ that participates in the construction of $A_\ell$. 

  Proposition \ref{prop:bccross_probest2} below helps us to verify (C3). 

\begin{proposition} \label{prop:bccross_probest2}
 Suppose $X$ has finite support $\mathcal{X}$ as described above and we have a random sample $\{(x_i,y_i)\}_{i=1}^n$. Let $\mathcal{X}_0=\{x_1,\ldots,x_T\}$ for $\mathcal{X}_0$ defined in \eqref{bc1_X0def}. Then  
 \begin{itemize}
\item[(a)] $\sqrt{n}(\widehat P(Y=1|{x}_1)-P(Y=1|{x}_1), \ldots, \widehat P(Y=1|{x}_T)-P(Y=1|x_T))' \stackrel{d}{\to} \mathcal{N}(0, \Sigma)$ for a positive definite $\Sigma$. 
 
\item[(b)] for any $(m_1,\ldots, m_T) \in \{1,2\}^T$.
$$\PP\left(\left(\cap_{x \in \mathcal{X}\setminus \mathcal{X}_0} (\widehat P(Y=1|{x}) -\gamma \in C_{m(x)}) \right) \cap \left(\cap_{t=1}^T (\widehat P(Y=1|{x}_t) -\gamma \in C_{m_t}) \right) \right) \to \frac{1}{2^T} $$ 
 \end{itemize}
\end{proposition}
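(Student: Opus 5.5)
We prove (a) with a multivariate central limit theorem and the delta method, and then derive (b) from (a) together with consistency at the non-ambiguous points.

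For (a), write $\widehat P(Y=1|x)=\widehat\mu(x)/\widehat P(X=x)$ with $\widehat\mu(x)=n^{-1}\sum_i y_i 1(x_i=x)$. The vector $\bigl(1(x_i=x_t)y_i,\,1(x_i=x_t)\bigr)_{t=1}^T$ is i.i.d. with finite second moments. The multivariate CLT therefore gives joint asymptotic normality of $\bigl(\widehat\mu(x_t),\widehat P(X=x_t)\bigr)_{t}$. The map $(\mu,p)\mapsto \mu/p$ is smooth at $p_t=\PP(X=x_t)>0$, so the delta method yields
\[
\sqrt n\bigl(\widehat P(Y=1|x_t)-P(Y=1|x_t)\bigr)_{t=1}^T \stackrel{d}{\to} \mathcal N(0,\Sigma).
\]
Computing $\Sigma$ directly, or via the standard representation $\widehat P(Y=1|x_t)-P(Y=1|x_t)=\widehat P(X=x_t)^{-1}n^{-1}\sum_i 1(x_i=x_t)\bigl(y_i-P(Y=1|x_t)\bigr)$, shows that the influence functions for different $t$ have disjoint supports in $x_i$. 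Hence the cross covariances vanish and
\[
\Sigma=\mathrm{diag}\bigl(\gamma(1-\gamma)/\PP(X=x_t)\bigr)_{t=1}^T,
\]
using $P(Y=1|x_t)=\gamma$ on $\mathcal X_0$. Positive definiteness follows because $\gamma\in(0,1)$ and each $\PP(X=x_t)>0$. I expect this step to be the only one requiring care: verifying that the off-diagonal terms vanish exactly, which relies on $1(x_i=x_t)1(x_i=x_s)=0$ for $t\neq s$.

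For (b), two ingredients are combined. First, by Proposition \ref{prop:bccross_probest1}(a), for each of the finitely many $x\in\mathcal X\setminus\mathcal X_0$ we have $\PP(\widehat P(Y=1|x)-\gamma\in C_{m(x)})\to1$, so the intersection of these events has probability tending to one. It therefore suffices to show that the ambiguous-point events have limiting probability $2^{-T}$, because $|\PP(E\cap F)-\PP(F)|\le \PP(E^c)\to0$.

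Second, on $\mathcal X_0$ the event $\widehat P(Y=1|x_t)-\gamma\in C_1$ equals $\{Z_{n,t}>0\}$, where $Z_{n,t}=\sqrt n(\widehat P(Y=1|x_t)-\gamma)$; likewise $C_2$ corresponds to $\{Z_{n,t}<0\}$. The orthant $\prod_t\{z_t\gtrless0\}$ has boundary of $\mathcal N(0,\Sigma)$-measure zero, so the portmanteau theorem gives convergence of the joint probability to the Gaussian orthant probability. Since $\Sigma$ is diagonal, the limit components are independent and symmetric, and that probability equals $\prod_t \tfrac12=2^{-T}$.

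Finally, the event $\{Z_{n,t}=0\}$, i.e. membership in $C_3$, has asymptotically vanishing probability by the same portmanteau argument. It therefore does not affect the limit, which also confirms $\mathcal M_{as}(x_t)=\{1,2\}$.
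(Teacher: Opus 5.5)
Your proof is correct, and it differs from the paper's in the step that actually matters. The paper's proof of (a) invokes the multivariate CLT and Slutsky without identifying $\Sigma$. For (b), it applies the CLT jointly to all support points with $P(Y=1|x)\in(0,1)$ and treats points with $P(Y=1|x)\in\{0,1\}$ as deterministic. It then appeals to the ``radial symmetry'' of the Gaussian limit to obtain $2^{-T}$.

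You depart from this in two places. First, you dispose of all non-ambiguous points at once through Proposition~\ref{prop:bccross_probest1}(a) and the bound $|\PP(E\cap F)-\PP(F)|\le \PP(E^c)$. This needs no joint normality outside $\mathcal X_0$ and no separate handling of degenerate probabilities. Second, you compute $\Sigma=\mathrm{diag}\bigl(\gamma(1-\gamma)/\PP(X=x_t)\bigr)$ explicitly, using that the influence terms $1(x_i=x_t)(y_i-\gamma)$ have disjoint supports across $t$. This also verifies the positive-definiteness claim in (a) directly rather than asserting it.

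The second point is what really delivers $2^{-T}$. Central symmetry of $\mathcal N(0,\Sigma)$ only equates each orthant with its reflection through the origin. For $T\ge 2$ with correlated coordinates the orthant probabilities are not all equal: for $T=2$, $\PP(Z_1>0,Z_2>0)=\tfrac14+\arcsin(\rho)/(2\pi)$. So your independence argument supplies the justification that the paper's appeal to radial symmetry leaves implicit. The paper's version is shorter, but as written it relies on the diagonal structure without stating it.
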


Verification of conditions (C1)-(C9) and then (C10) or (C11) is given in the Appendix. From that verification it is clear that (C10) implies (C11) via coordinate-wise pairing of feasible profiles. (C10) holds when elements in $\mathcal{X}_0$ are linearly independent, whereas (C11) holds generally, in particular under linear dependence in $\mathcal{X}_0$.




The following general result, presented in Theoretical Example 1, builds on Theorems \ref{th:general1}–\ref{th:general4} as they apply to this model, while also establishing a new result concerning the equal probabilities of deterministic sets arising in the distributional limit of the maximum score estimator.

\begin{theorem}
\label{th:MSgeneral2} 
Consider a cross-sectional binary choice model under assumptions discussed in Section \ref{sec:theoreticalMSbinary} and within the discrete setup given in Section \ref{sec:discrete_setup}. Suppose that the parameter space has a relative interior in the $(k-1)$-dimensional normalized space. 

1. If the parameter value $\alpha$ in the DGP results in  $\mathcal{X}_0 \neq \varnothing$, then there are $L \geq 2$ distinct nonempty  deterministic sets $A_1, \ldots, A_L$ constructed in the way described earlier in this section (aligned with considering all possible nonempty solutions sets to \eqref{eq:sys_outside}-\eqref{eq:sys_inside}) and the maximum score estimator $\widehat{A}$ satisfies part 1 of Theorem \ref{th:general5} with these sets $A_1$, ..., $A_L$  appearing in the weak limits. Moreover, $\PP(\widehat{A}=A_{\ell}) \to \frac{1}{L}$, $\ell=1,\ldots, L$. 

2. If the parameter value $\alpha$ in the DGP results in  $\mathcal{X}_0 = \varnothing$, then $d_H(\mathcal{A}_0, \mathcal{A}^*)=0$, $L=1$ with $A_1=\mathcal{A}^*$ and ${\widehat{A}} \stackrel{d}{\rightarrow} \mathcal{A}^*$, $\overline{\widehat{A}} \stackrel{W}{\rightarrow} \overline{\mathcal{A}}_0.$
\end{theorem}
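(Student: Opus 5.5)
The plan is to verify the hypotheses of Theorem~\ref{th:general5} for the maximum score score function \eqref{scorefunction_bc}, and then add a direct computation of the limiting probabilities. I start with Part 2 ($\mathcal X_0=\varnothing$). By Proposition~\ref{prop:bccross_probest1}(a) and finiteness of $\mathcal X$, with probability approaching one every $\widehat P(Y=1|x)-\gamma$ has the same sign as $P(Y=1|x)-\gamma$, and $\mathcal X_n=\mathcal X$. On that event the sample objective is maximized exactly on $\mathcal A^*$, so $\PP(\widehat A\neq\mathcal A^*)\to0$. Since no $x$ has $m(x)=3$, $\mathcal A_0$ and $\mathcal A^*$ differ only through $x'a>0$ versus $x'a\ge0$. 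Under the relative-interior assumption, every boundary point with $x'a=0$ can be approached by points with $x'a>0$ while keeping the other (strict) constraints. This gives $\overline{\mathcal A^*}=\overline{\mathcal A}_0$, so $d_H=0$ and the weak limit statement follows from Case 2 of Theorem~\ref{th:general5}.

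For Part 1, I verify (C1)--(C9) and (C11) with $\mathcal M_{as}(x_t)=\{1,2\}$ from Proposition~\ref{prop:bccross_probest1}(b).
\begin{itemize}
\item (C1): $D_3^*=\mathbb R$ contains both $D_1^*$ and $D_2^*$.
\item (C2): $[0,\infty)\cap(-\infty,0)=\varnothing$.
\item (C3), part \eqref{C3part1}: the score is $|\widehat p-\gamma|\,1(u\in D^*_{m})$ with $|\widehat p-\gamma|>0$ a.s. in the limit.
\item (C3), part \eqref{C3part2}: this is Proposition~\ref{prop:bccross_probest2}(b), giving limit $2^{-T}>0$.
\item (C3), part \eqref{C3part3}: the difference equals $\sum_t \widehat P(X=x_t)\,|\widehat P(Y=1|x_t)-\gamma|\,(\pm1\text{ or }0)$, which is not identically zero when $\ell\neq\ell'$. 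By Proposition~\ref{prop:bccross_probest2}(a) the vector $\sqrt n(\widehat p_t-\gamma)$ has a nondegenerate Gaussian limit, so this linear combination of absolute values vanishes with probability tending to zero.
\item (C4), (C5), (C7): these are polyhedral facts. Closures of sets defined by finitely many strict and weak linear inequalities, intersected with a set that has relative interior, are obtained by weakening the inequalities. Points of $\mathcal A_0$ satisfy $x_t'a=0$, so perturbing along a direction in which $x_t'a$ changes sign gives $a^\pm_\varepsilon$.
\item (C6): $\{0\}\subseteq\partial[0,\infty)=\partial(-\infty,0)$.
\item (C8): $\overline{D_3^*}=\mathbb R=\overline{D_1^*\cup D_2^*}$.
\item (C9): any majority of $\{1,2\}$ is all of it, and $[0,\infty)\cap(-\infty,0]=\{0\}=\overline D_3$. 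Singleton $\mathcal F_\ell$ holds because the profile can be read off $A_\ell$: on the relative interior of $A_\ell$ the sign of $x_t'a$ is fixed.
\end{itemize}

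The main obstacle is (C11) together with the equal-probability claim $\PP(\widehat A=A_\ell)\to1/L$ when $\mathcal X_0$ is linearly dependent, so that some sign profiles are infeasible. My approach is a sign-flip symmetry. The feasible profiles are those $\mathbf m\in\{1,2\}^T$ for which the cone $\{a:\operatorname{sgn}(x_t'a)\text{ matches }\mathbf m\}$ meets the relative interior of $\mathcal A_{out}$ near $\mathcal A_0$. Fix $\alpha_0\in\mathcal A_0$ in the relative interior of $\mathcal A_{out}$. Then $x_t'(\alpha_0\pm\delta)=\pm x_t'\delta$ for small $\delta$, so feasibility reduces to realizability of sign vectors of the linear forms $\delta\mapsto x_t'\delta$. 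This set is symmetric under $\mathbf m\mapsto-\mathbf m$ and does not depend on which sets $A_\ell$ far from $\mathcal A_0$ are considered, which yields the counting property (C11).

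For the probabilities, on the high-probability event from (C3), $\widehat A=A_\ell$ exactly when the sign vector of $(\widehat p_t-\gamma)_t$ equals the profile of $A_\ell$, adjusted by the tie-breaking in \eqref{C3part3}. That is, the maximizer picks the feasible profile maximizing $\sum_t\widehat P(X=x_t)|\widehat p_t-\gamma|\,1(\text{sign agrees})$. The limiting law of $Z=\sqrt n(\widehat p_t-\gamma)_t$ is $\mathcal N(0,\Sigma)$ with $\Sigma$ diagonal, since cell estimates are asymptotically independent across distinct support points. The argmax over feasible profiles is then a function of $Z$. I would show each feasible profile wins with equal probability by exhibiting measure-preserving maps of $Z$ that permute the feasible profiles transitively: independent coordinatewise sign flips $Z_t\mapsto-Z_t$ preserve the law and act on profiles. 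Linear dependence is where I expect the real difficulty, since coordinate flips need not preserve feasibility. If transitivity fails, I would instead try to prove directly that, conditionally on $|Z|$, the winning profile is uniform over the feasible ones, using the fact that the score only depends on $|Z_t|$ and the sign pattern.
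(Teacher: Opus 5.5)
\textbf{Where the proposal holds.} Your treatment of Part~2 and your checks of (C1)--(C9) are sound. So is your reduction of the limit law: with probability approaching one, $\widehat A=A(\mathbf s)$ for the feasible sign vector $\mathbf s\in\mathcal F$ maximizing $\sum_t s_tV_t$. Here $V_t=\sqrt n\,\widehat P(X=x_t)(\widehat P(Y=1|x_t)-\gamma)$, and these converge jointly to independent $\mathcal N(0,\gamma(1-\gamma)P(X=x_t))$ variables. If $\mathcal X_0$ is linearly independent, then $\mathcal F=\{\pm1\}^T$, the winner is simply the orthant of $V$, and $1/L=2^{-T}$ follows. This is the paper's (C10) subcase.

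\textbf{The gap.} The dependent case is where your argument stops, and it cannot be closed: there the claim $\PP(\widehat A=A_\ell)\to 1/L$ is false. Two things go wrong.

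First, $\mathcal F$ need not be flip-symmetric. Reflecting through $\alpha_0$ turns $x_t'a\ge 0$ into $x_t'a\le 0$, not $x_t'a<0$. Since $\mathrm{sgn}(0)=1$, the all-ones profile always contains $\mathcal A_0$, while the all-twos profile can be empty. For example, take $k=3$, normalization $a_2=1$, $\alpha=(1,1,1)'$, and $\mathcal X=\mathcal X_0=\{(1,-1,0)',(0,-1,1)',(-1,2,-1)'\}$. With $\delta_j=a_j-1$ the three indices are $\delta_1,\delta_3,-\delta_1-\delta_3$. Every profile except ``all negative'' is feasible, so $L=7$ and (C11) fails. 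The all-ones region is $\{\alpha\}$, and it is selected iff all $V_t>0$. Hence $\PP(\widehat A=\{\alpha\})\to 1/8\neq 1/7$.

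Second, even when $\mathcal F$ is symmetric the conclusion fails. Add $x^c=(1,2,-1)'=2x^b-x^a$ to case (1A) of Illustrative Design~1. Then $L=6$. Each of the two infeasible orthants is reassigned to the neighbour obtained by flipping the coordinate with the smallest $|V_t|$. Each limit therefore equals $\tfrac18(1+\pi_t)$ for some $t$, with each $t$ occurring twice, where $\pi_t=\PP(|V_t|=\min_s|V_s|)$. These are unequal unless the three cell probabilities $P(X=x_t)$ coincide.

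\textbf{Why your remedies cannot work.} Conditionally on $|Z|$, the sign pattern is uniform on $\{\pm1\}^T$, and every feasible pattern wins on its own orthant. So conditional winning probabilities are integer multiples of $2^{-T}$, and they cannot all equal $1/L$ unless $L$ is a power of two. Likewise, the coordinate sign flips that preserve $\mathcal F$ form a $2$-group, whose orbits have power-of-two size. They therefore cannot act transitively on $\mathcal F$ when $L\in\{6,7\}$.

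\textbf{Comparison with the paper.} The paper closes this step with essentially your fallback. It asserts that, conditionally on an infeasible sample pattern, each of the $L$ regions is chosen with probability $1/L$ ``by joint symmetry.'' In the first example, only the three profiles with exactly two negative signs can ever be chosen from the infeasible orthant, so that assertion fails too.

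\textbf{What you do prove.} Your argument establishes the equal-probability claim under linear independence of $\mathcal X_0$. In general it gives the explicit limit $\PP(\widehat A=A_\ell)\to\PP(\mathbf s^\ell=\arg\max_{\mathbf s\in\mathcal F}\sum_t s_tV_t)\ge 2^{-T}$, where $\mathbf s^\ell$ is the profile generating $A_\ell$.
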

The majority of the first statement in  this theorem follows immediately from Theorem \ref{th:general5} and our verification of all their conditions 
( whether (C10) or more general (C11) is applied depends on linear dependence of elements in $\mathcal{X}_0$). 
The new statement here is on equal asymptotic probabilities of sets $A_{\ell}$. Hence, the proof of this part in the appendix is focused on showing this fact.  

The second statement is quite straightforward as part (b) of Proposition  \ref{prop:bccross_probest1} implies that $\PP(\widehat{A}=\mathcal{A}^*) = \PP( \cap_{x \in \mathcal{X}} (\widehat{P}(Y=1|x) - \gamma \in C_{m(x)}) ) \to 1$. Additionally, the only difference between $\mathcal{A}^*$ and $\mathcal{A}_0$ in the case $\mathcal{X}_0=\varnothing$ is for $x$ with $m(x)=1$ appearing in $\mathcal{A}_0$ with the strict inequality $x'\alpha>0$ and appearing in $\mathcal{A}^*$ with the weak inequality $x'\alpha \geq 0$ (thus, $\mathcal{A}^*$ and $\mathcal{A}_0$ may only differ at the boundary).

We now give a concrete application of  our theorems in Illustrative Design 1. 

\vskip 0.05in 

\noindent \textbf{Illustrative Design 1 (continued)} In line with our discussion previously, we continue to assume that the parameter set $\mathcal{A}$ is the normalized 3-dimensional space (effectively making it 2-dimensional) has a 2-dimensional non-empty interior and is covex and large.   Denote the points in the covariate support as $x^a=(1,0,1)'$, $x^b=(1,1,0)'$. 

(1A): $\alpha_{1}+\alpha_{2} = 0$, $\alpha_{1}+1 =0$ in the DGP. Then $\mathcal{A}_0=\{(-1,1,1)\}$,  $\mathcal X_0=\{x^a,x^b\}$, $\mathcal M_{as}(x^a)=\mathcal M_{as}(x^b)=\{1,2\}$. There are four possible inequality selections $m=(m_1,m_2)\in\{1,2\}^2$.
The corresponding regions that partition the coarse $\mathcal{A}^*$ are
\begin{align*}
A_{(1,1)}&=\{a\in\mathcal A:\ a_1+a_2\ge 0,\ a_1+1\ge 0\}, \quad 
A_{(1,2)}=\{a\in\mathcal A:\ a_1+a_2\ge 0,\ a_1+1< 0\},\\
A_{(2,1)}&=\{a\in\mathcal A:\ a_1+a_2< 0,\ a_1+1\ge 0\}, \quad 
A_{(2,2)}=\{a\in\mathcal A:\ a_1+a_2< 0,\ a_1+1< 0\}.
\end{align*}
Since $\mathcal A$ can be assumed to be large enough to intersect all four wedges, then $L=4$ and $\{A_\ell\}_{\ell=1}^4$ are exactly these sets. With $A_{(k,j)}$, $k,j=1,2$, defined in this way, the maximum score estimator $\widehat{A}$ asymptotically behaves as 
$$\widehat{A} \stackrel{d}{\rightarrow} \sum_{k,j=1}^2 d_{k,j} A_{(k,j)},$$ 
where $d_{(k,j)}$, $k,j=1,2$, are binary variables such that $\sum_{k,j=1}^2 d_{k,j}=1$,  $d_{k,j} \cdot d_{k_1,j_1} =0$ for any two different dummies. This is a direct implication of Theorem  \ref{th:general5}. In addition  
$P(d_{(k,j)}=1)=0.25$ for each $(k,j)$, as is implied by $\PP(\widehat{P}(Y=1|x^b) -0.5 \in D^*_{m_1}, \widehat{P}(Y=1|x^b) -0.5 \in D^*_{m_2}) \to 0.25$. Finally, note that the intersection of any three $\overline{A}_{k,j}$ out of these four sets gives us  $\overline{A}_0.$

(1B): $\alpha_{1}+\alpha_{2} \neq 0$, $\alpha_{1}+1 =0$ in the DGP. In this case,  
$\mathcal{X}_0=\{x^b\}$, $\mathcal M_{as}(x^b)=\{1,2\}$. As for the other support point, $\mathcal M_{as}(x^a)=\{m(x^a)\}$ with $m(x^a)=1$ if $ \alpha_{0,2}>1$ and $m(x^a)=2$ if $ \alpha_{0,2}<1$. 
We can write $\mathcal A_0
=\{(-1,1,a_2)\in\mathcal A: \, a_2-1\in D_{m(x^a)}\}$. We obtain that $L=2$ and the two sets sets that partition the coarse $\mathcal{A}^*=\{(a_1,1,a_2)\in\mathcal A: \, a_2-1\in D^*_{m(x^a)}\}$ by  
\[
A_1^{(B)}=
\{a\in\mathcal A:\ a_1+a_2\in D_{m(x^a)},\ a_1+1\ge 0\},
\quad
A_2^{(B)}=
\{a\in\mathcal A:\ a_1+a_2\in D_{m(x^a)},\ a_1+1< 0\} 
\]
(again, both are non-empty as long as $\mathcal A$ is not too small which we have assumed).

With $A_1^{(B)}$, $A_2^{(B)}$ defined in this way, the MS estimator $\widehat{A}$ asymptotically behaves as 
$$\widehat{A}_{ms} \stackrel{d}{\rightarrow} d_1A_1^{(B)}+d_2A_2^{(B)},$$
where $d_i$, $i=1,2$, are binary variables such that $d_1+d_2=1$, $d_1 \cdot d_2=0$. In addition, 
$P(d_1=1)=P(d_2=1)=0.5$, as is implied by  $\PP(\widehat{P}(Y=1|x^b) -0.5 \in D^*_1) =1- \PP(\widehat{P}(Y=1|x^b) -0.5 \in D^*_2)\to 0.5$. Finally, note that $\overline{A}_1^{(B)} \cap \overline{A}_2^{(B)}= \overline{\mathcal A_0}$.

(1C):  $\alpha_{1}+\alpha_{2} = 0$, $\alpha_{1}+1 \neq 0$ in the DGP. 
Here $\alpha_{2}=-\alpha_{1}$ and $\alpha_{1}\neq -1$. Then
$\mathcal X_0=\{x^a\}$, $\mathcal M_{as}(x^a)=\{1,2\}$.
As for  $x^b$, $\mathcal M_{as}(x^b)=\{m(x^b)\}$, where $m(x^b)=1$ if $\alpha_{1}>-1$ and $m(x^b)=2$ if $\alpha_{1}<-1$. 

The identified set can be written as $\mathcal A_0=\{(a_1,1,-a_1)\in\mathcal A:\ a_1+1\in D_{m(x^b)}\}$ (it is one-dimensional). We have $L=2$ as different selection regimes result in partitioning the coarse $\mathcal{A}^*=\{(a_1,1,a_2)\in\mathcal A:\ a_1+1\in D_{m(x^b)}\}$ by the following two regions in our collection: 
$$A_1^{(C)}=
\{a\in\mathcal A:\ a_1+1\in D_{m(x^b)},\ a_1+a_2\ge 0\},
\quad
A_2^{(C)}=
\{a\in\mathcal A:\ a_1+1\in D_{m(x^b)},\ a_1+a_2< 0\},$$
which are non-empty if $\mathcal{A}$ is not too small. The MS estimator $\widehat{A}$ asymptotically behaves as 
$$\widehat{A} \stackrel{d}{\rightarrow} d_1A_1^{(C)}+d_2A_2^{(C)},$$
where $d_i$, $i=1,2$, are binary variables such that $d_1+d_2=1$, $d_1d_2=0$. Additionally, 
$P(d_1=1)=P(d_2=1)=0.5$ as in this case $\PP(\widehat{P}(Y=1|x^a) -0.5 \in D^*_1) =1- \PP(\widehat{P}(Y=1|x^a) -0.5 \in D^*_2)\to 0.5$. Finally, note that $\overline{A}_1^{(C)}\cap\overline{A}_2^{(C)}=
\{(a_1,1,-a_1)\in\mathcal A: a_1+1 \in \overline{D}_{m(x^b)}\}=\overline{\mathcal{A}}_0.$

(1D): $\alpha_{1}+\alpha_{2} \neq 0$, $\alpha_{1}+1 \neq 0$ in the DGP. In this case, $\mathcal{X}_0=\varnothing$, $\mathcal M_{as}(x^j)=\{m(x^j)\}$, $j \in \{a,b\}$. Hence, $L=1$ and 
$A_1=\mathcal{A}^*=\{a\in\mathcal A:\ a_1+a_2 \in D_{m(x^a)},\ a_1+1 \in D_{m(x^b)}\}.$ 
It also holds in this case $\overline{A}_1=\overline{\mathcal{A}}_0 =\{a\in\mathcal A:\ a_1+a_2 \in \overline{D}_{m(x^a)},\ a_1+1 \in \overline{D}_{m(x^b)}\}.$

Geometric interpretations of this design for all cases (1A)-(1D) are given in Figure \ref{fig:ID1_regions}, where we plot the sets described here in their projection on the 2-dimensional space of non-normalized components. 

\begin{figure}[htbp]
\centering
\begin{tikzpicture}[font=\small, line join=round, line cap=round, >=stealth]

\def\xmin{-3}\def\xmax{3}
\def\ymin{-3}\def\ymax{3}

\coordinate (DL) at (\xmin,\ymax); 
\coordinate (DR) at (\xmax,\ymin); 
\coordinate (VL1) at (-1,\ymin);   
\coordinate (VL2) at (-1,\ymax);   

\coordinate (Vcap) at (-1,1);

\tikzset{
  Aset/.style={draw, thick},
  Astar/.style={draw, very thick},
  sep/.style={draw, thick, dashed},
  A0/.style={draw, very thick},
  regionfill/.style={fill=black, fill opacity=0.08, draw=none},
  label/.style={inner sep=1pt, fill=white, fill opacity=0.9, text opacity=1}
}

\newcommand{\drawAxesAndBox}[1]{%
  \draw[->] (\xmin-0.3,0) -- (\xmax+0.5,0) node[below right] {$a_1$};
  \draw[->] (0,\ymin-0.3) -- (0,\ymax+0.5) node[above left] {$a_2$};
  \draw[Aset] (\xmin,\ymin) rectangle (\xmax,\ymax);
  \node[anchor=west] at (\xmin,\ymax+0.45) {\textbf{#1}};
}


\begin{scope}[shift={(0,0)}]
  \drawAxesAndBox{Case (1A)}

  \begin{scope}

    \path[regionfill] (\xmin,\ymin) rectangle (\xmax,\ymax);

    \draw[sep] (DL) -- (DR) node[pos=0.15, above left, label] {}; 
    \draw[sep] (VL1) -- (VL2) node[pos=0.85, above, label] {}; %

    \filldraw[black] (Vcap) circle (2.2pt);
    \node[label, anchor=west] at (-0.85,1.08) {$\mathcal A_0$};
  \end{scope}

  \node[label] at (1.2,1.8) {$A_1$};
  \node[label] at (-2.0,2.5) {$A_2$};
  \node[label] at (1,-1.6) {$A_3$};
  \node[label] at (-2.0,-1.6) {$A_4$};

  \node[label] at (2.0,2.55) {}; 
\end{scope}


\begin{scope}[shift={(9.4,0)}]
  \drawAxesAndBox{Case (1B)}

  \begin{scope}
    \clip (\xmin,\ymin) rectangle (\xmax,\ymax);

    \path[regionfill] (-3,3) -- (3,-3) -- (3,3) -- cycle;

    \draw[A0] (-1,1) -- (-1,\ymax) node[pos=0.55, right, label] {$\mathcal A_0$};

    \draw[sep, line width=1.1pt] (-1.07,1) -- (-1.07,\ymax)
      node[pos=0.9, above right, label] {}; 
  \end{scope}

  \node[label] at (2,1.7) {$A_1$};
  \node[label] at (-2.1,2.5) {$A_2$};
\end{scope}


\begin{scope}[shift={(0,-7.2)}]
  \drawAxesAndBox{Case (1C)}

  \begin{scope}
    \clip (\xmin,\ymin) rectangle (\xmax,\ymax);

    \path[regionfill] (-1,\ymin) -- (\xmax,\ymin) -- (\xmax,\ymax) -- (-1,\ymax) -- cycle;

    \draw[sep] (VL1) -- (VL2) node[pos=0.85, above, label] {}; 

    \draw[A0] (-1,1) -- (3,-3) node[pos=0.45, below left, label] {$\mathcal A_0$};

    \draw[sep, line width=1.1pt] (-1,1.09) -- (3,-2.91)
      node[pos=0.25, above left, label] {}; 
  \end{scope}

  \node[label] at (1.6,1.4) {$A_1$};
  \node[label] at (-0.5,-2.5) {$A_2$};

\end{scope}


\begin{scope}[shift={(9.4,-7.2)}]
  \drawAxesAndBox{Case (1D)}

  \begin{scope}
    \clip (\xmin,\ymin) rectangle (\xmax,\ymax);

    \path[regionfill]
      (-1,1) -- (-1,\ymax) -- (\xmax,\ymax) -- (\xmax,\ymin) -- cycle;


    \draw[sep] (DL) -- (DR) node[pos=0.15, above left, label] {$a_1+a_2=0$};
    \draw[sep] (VL1) -- (VL2) node[pos=0.85, above, label] {$a_1+1=0$};
  \end{scope}

  \node[label] at (2.0,1.95) {$A_1$};
  \node[label] at (2.0,1.35) {$\mathcal A_0$};
\end{scope}

\end{tikzpicture}

\caption{Illustrative Design 1: all the sets here are projections in the 2-dimensional space of non-normalized components.  The shaded region in each plot is the coarse population maximizer set $\mathcal A^*$.
The partition $\{A_\ell\}$ of $\mathcal A^*$ is indicated by the dashed boundary line(s).
The identified set $\mathcal A_0$ is indicated by a thick segment (Cases (1B)–(1C)),
a point (Case (1A)), and coincides with $\mathcal A^*$ (up to the boundary) in Case (1D). Case (1D) takes $\alpha_{1}+\alpha_{2}>0$ (also in case (1B)) and $\alpha_{1}+1>0$ (also in case (1C)).}
\label{fig:ID1_regions}
\end{figure}

\subsubsection{Application to Theoretical Example 2 (static panel binary choice)} 

Denote $\Delta x_i = x_{i2}-x_{i1}$, $\Delta P(Y=1|x_i) = P(Y_{i2}=1|x_i)-P(Y_{i1}=1|x_i)$, $\Delta \widehat{P}(Y=1|x_i) = \widehat{P}(Y_{i2}=1|x_i)-\widehat{P}(Y_{i1}=1|x_i)$. In line with the discrete design of interest, suppose  $X$ has finite support $\mathcal X=\{x_1,\ldots,x_{|\mathcal X|}\}$. Analogously to cross-sectional binary choice  case, we assume that the parameter space $\mathcal{A}$ has relative interior in the normalized space. In Proposition \ref{prop:bcpanel_probest1} we  formulate the individual asymptotic behaviors of plug-in estimators $\widehat P(Y=1|{x})$.  

\begin{proposition}
\label{prop:bcpanel_probest1}  Suppose $X$ has finite support $\mathcal{X}$ as described above and we have a random sample $\{(x_i,y_i)\}_{i=1}^n$. Then, for a given $x \in \mathcal{X}$:   
\begin{itemize}
\item[(a)] If $\Delta P(Y=1|{x}) \in C_j$ for  $j \in \{1,2\}$, then $\PP(\Delta \widehat P(Y=1|{x}) \in C_j) \to 1$ as $n \to \infty$.
\item[(b)] If $\Delta P(Y=1|{x}) \in C_3$, then  $\PP(\Delta  \widehat P(Y=1|{x}) \in C_m) \to \frac{1}{2}$ for $m \in \{1,2\}$. 
\end{itemize}
\end{proposition}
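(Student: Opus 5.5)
The argument runs parallel to Proposition \ref{prop:bccross_probest1}, with the scalar frequency replaced by a difference of two frequencies computed on the same cross-sectional units. Fix $\mathbf x\in\mathcal X$ with $p(\mathbf x):=\PP(\mathbf X=\mathbf x)>0$ and write
\[
\Delta\widehat P(Y=1|\mathbf x)=\frac{\frac1n\sum_{i=1}^n (y_{i2}-y_{i1})\mathbf 1(\mathbf x_i=\mathbf x)}{\frac1n\sum_{i=1}^n \mathbf 1(\mathbf x_i=\mathbf x)} .
\]
The summands $(y_{i2}-y_{i1})\mathbf 1(\mathbf x_i=\mathbf x)$ are i.i.d., bounded and take values in $\{-1,0,1\}$.

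\textbf{Part (a).} By the strong law of large numbers the numerator converges a.s.\ to $\Delta P(Y=1|\mathbf x)\,p(\mathbf x)$ and the denominator to $p(\mathbf x)>0$. Hence $\Delta\widehat P(Y=1|\mathbf x)\to\Delta P(Y=1|\mathbf x)$ almost surely, by the continuous mapping theorem. If $\Delta P(Y=1|\mathbf x)\in C_1=(0,\infty)$ or $C_2=(-\infty,0)$, then, since these sets are open, $\PP(\Delta\widehat P(Y=1|\mathbf x)\in C_j)\to1$.

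\textbf{Part (b).} Suppose $\Delta P(Y=1|\mathbf x)=0$. Apply the multivariate CLT to the pair (numerator, denominator) and then the delta method for the ratio. This gives
\[
\sqrt n\,\Delta\widehat P(Y=1|\mathbf x)\stackrel{d}{\to}\mathcal N\big(0,\sigma^2(\mathbf x)\big),\qquad \sigma^2(\mathbf x)=\frac{\mathrm{Var}(Y_2-Y_1\mid\mathbf X=\mathbf x)}{p(\mathbf x)} .
\]
Because the limit is symmetric and continuous, the portmanteau theorem yields $\PP(\sqrt n\,\Delta\widehat P>0)\to1/2$ and $\PP(\sqrt n\,\Delta\widehat P<0)\to1/2$. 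These are exactly the events $\Delta\widehat P\in C_1$ and $\Delta\widehat P\in C_2$, so it only remains to check that the limit is non-degenerate, i.e.\ $\sigma^2(\mathbf x)>0$.

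\textbf{Main obstacle: showing $\sigma^2(\mathbf x)>0$.} Conditional on $\mathbf x$, $\Delta\widehat P(Y=1|\mathbf x)$ is a lattice variable, so I must make sure the limit puts no mass at $0$. Suppose $\mathrm{Var}(Y_2-Y_1|\mathbf x)=0$. Given $\Delta P=0$, this forces $Y_2=Y_1$ almost surely conditional on $\mathbf x$, i.e.\ $P(Y_1=1,Y_2=0|\mathbf x)=P(Y_1=0,Y_2=1|\mathbf x)=0$. I will rule this out using the support condition (ii) of Theoretical Example 2. Condition on $c$ and $\mathbf x$. The event $Y_1=1,Y_2=0$ requires $\epsilon_1>-c-x_1'\alpha$ and $\epsilon_2\le -c-x_2'\alpha$. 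Each of these has positive probability because the support of $\epsilon_s$ is $\mathbb R$.

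The difficulty is that the \emph{joint} probability need not be positive without a further assumption on the joint law of $(\epsilon_1,\epsilon_2)$; for example, perfectly dependent errors with $x_1'\alpha=x_2'\alpha$ would break the argument. My plan is therefore twofold. First, I will invoke the (implicit) requirement that $(\epsilon_{i1},\epsilon_{i2})$ have joint support $\mathbb R^2$, or at least that $P(Y_{i1}\neq Y_{i2}|\mathbf x)>0$, which is standard in \citet{manski1987}, where only switchers are informative. Second, under this requirement I conclude $\mathrm{Var}(Y_2-Y_1|\mathbf x)=P(Y_1\neq Y_2|\mathbf x)>0$ (using $\Delta P=0$), which completes part (b).

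A finite-sample atom at $0$ is harmless. Indeed, $\PP(\Delta\widehat P=0)=\PP(\sqrt n\,\Delta\widehat P\in\{0\})\to0$, again by the portmanteau theorem, since $\{0\}$ is a null set for the non-degenerate normal limit.
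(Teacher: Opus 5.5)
Your proof is correct and follows the paper's route. The paper applies the CLT to the difference of the two conditional frequencies and uses the symmetry of the normal limit to get the $1/2$--$1/2$ split in (b); for (a), your consistency-plus-openness argument is a slightly leaner substitute for the paper's CLT-based bound.

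Your concern about non-degeneracy is well founded and is the one substantive point. The paper simply asserts $\sigma^2_x=\mathrm{Var}(Y_2-Y_1\mid X=x)/P(X=x)>0$ ``under the model assumptions,'' but assumptions (i)--(ii) restrict only the marginals of $\epsilon_{i1}$ and $\epsilon_{i2}$. Your example with $\epsilon_{i1}=\epsilon_{i2}$ and $(x_2-x_1)'\alpha=0$ yields $\Delta\widehat P(Y=1\mid\mathbf x)\equiv 0$, which contradicts (b). So the extra condition $P(Y_1\neq Y_2\mid\mathbf x)>0$ (e.g.\ full joint support of $(\epsilon_{i1},\epsilon_{i2})$ given $(\mathbf X_i,c_i)$) is genuinely needed, and the paper's proof relies on it implicitly.
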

Analogously to Theoretical Example 1, on the basis of results of Proposition \ref{prop:bcpanel_probest1} we define 
\begin{equation} 
\label{X0panel}\mathcal{X}_0 = \{x \in \mathcal{X}: \Delta P(Y=1|{x}) \in C_3\},
\end{equation}
and construct sets $A_1$, .., $A_L$ in the same way as described there and aligned with approach given in Section \ref{sec:discrete_setup} as all distinct non-empty solutions to systems \eqref{eq:sys_outside}-\eqref{eq:sys_inside} .  Analogous arguments can also be used to establish that when $\mathcal{X}_0 \neq \varnothing$, then we have at least two such setsimplying $L \geq 2$.  
Proposition \ref{prop:bcpanel_probest2} below is needed to verify (C3). 

\begin{proposition} \label{prop:bcpanel_probest2}
 Suppose $X$ has finite support $\mathcal{X}$ as described above and we have a random sample $\{(x_i,y_i)\}_{i=1}^n$. Let $\mathcal{X}_0=\{x_1,\ldots,x_T\}$ for $\mathcal{X}_0$ defined in \eqref{X0panel}. Then 
 \begin{itemize}
 \item[(a)] $\sqrt{n}(\Delta \widehat P(Y=1|{x}_1)-\Delta P(Y=1|{x}_1), \ldots, \Delta \widehat P(Y=1|{x}_T)-\Delta P(Y=1|x_T))' \stackrel{d}{\to} \mathcal{N}(0, \Sigma)$ for some positive definite $\Sigma$. 
 \item[(b)]
For any $(m_1,\ldots, m_T) \in \{1,2\}^T$.
$$\PP\left(\left(\cap_{x \in \mathcal{X}\setminus \mathcal{X}_0} (\Delta \widehat P(Y=1|{x})  \in C_{m(x)}) \right) \cap \left(\cap_{t=1}^T (\Delta \widehat P(Y=1|{x}_t) \in C_{m_t}) \right) \right) \to \frac{1}{2^T} $$   
\end{itemize}
\end{proposition}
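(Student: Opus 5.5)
The proof follows the template of Proposition~\ref{prop:bccross_probest2}. I would reduce everything to cell-frequency estimators on the finite support and apply the multivariate CLT and the delta method. For each $\mathbf x\in\mathcal X$, write
\[
\Delta\widehat P(Y=1|\mathbf x)=\frac{\frac1n\sum_i (y_{i2}-y_{i1})\mathbf 1(\mathbf x_i=\mathbf x)}{\frac1n\sum_i \mathbf 1(\mathbf x_i=\mathbf x)}.
\]
Stack over $\mathbf x\in\mathcal X$ the vector of sample means of $\big((y_{i2}-y_{i1})\mathbf 1(\mathbf x_i=\mathbf x),\ \mathbf 1(\mathbf x_i=\mathbf x)\big)$. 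These are i.i.d.\ bounded random vectors, so the multivariate CLT applies. The map to the ratios is smooth at the population values because $P(\mathbf X=\mathbf x)>0$ on the finite support. The delta method then gives joint asymptotic normality of $\sqrt n\big(\Delta\widehat P(Y=1|\mathbf x)-\Delta P(Y=1|\mathbf x)\big)_{\mathbf x\in\mathcal X}$.

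The key computation is the covariance. Cross-cell covariances vanish asymptotically because $\mathbf 1(\mathbf x_i=\mathbf x)\mathbf 1(\mathbf x_i=\mathbf x')=0$ for $\mathbf x\ne\mathbf x'$. The standard ratio linearization then yields a diagonal limiting matrix with entries
\[
\Sigma_{tt}=\frac{\operatorname{Var}(Y_2-Y_1\mid \mathbf X=\mathbf x_t)}{P(\mathbf X=\mathbf x_t)}.
\]
This establishes part (a), with $\Sigma$ restricted to $\mathcal X_0$.

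\emph{Main obstacle: positive definiteness of $\Sigma$.} Positive definiteness requires $\operatorname{Var}(Y_2-Y_1\mid\mathbf x_t)>0$. I would show this using the model assumptions. The support of $\epsilon_{is}$ given $(\mathbf X_i,c_i)$ is $\mathbb R$, which gives $P(Y_{is}=1|\mathbf x)\in(0,1)$. I would strengthen this to $P(Y_2-Y_1=1\mid\mathbf x)>0$ by conditioning on $c_i$ and using full support of $\epsilon_{i2}$ jointly with $\epsilon_{i1}$. If joint full support is not implied, I would note that at $\mathbf x_t\in\mathcal X_0$ we have $P(Y_2=1|\mathbf x_t)=P(Y_1=1|\mathbf x_t)\in(0,1)$. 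If $Y_2-Y_1$ were degenerate, it would equal $0$ a.s., giving $Y_1=Y_2$ a.s. I would rule this out using the conditional independence/full-support structure, and otherwise simply assume the nondegeneracy, as is implicit in the proposition's claim.

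For part (b), independence across $x\notin\mathcal X_0$ is not needed. For each $x\notin\mathcal X_0$, Proposition~\ref{prop:bcpanel_probest1}(a) gives $\PP(\Delta\widehat P(Y=1|x)\in C_{m(x)})\to1$, so the intersection over these finitely many events has probability tending to one. It therefore suffices to show
\[
\PP\big(\cap_{t}\{\Delta\widehat P(Y=1|\mathbf x_t)\in C_{m_t}\}\big)\to 2^{-T}.
\]
Since $\Delta P(Y=1|\mathbf x_t)=0$ for $t\le T$, the event $\{\Delta\widehat P\in C_1\}$ equals $\{\sqrt n\,\Delta\widehat P>0\}$, and similarly for $C_2$. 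The limit vector $Z\sim\mathcal N(0,\Sigma)$ has a diagonal positive definite $\Sigma$, so its orthant boundaries are null sets. The portmanteau theorem then gives convergence to $\PP\big(\operatorname{sign}(Z_t)\text{ prescribed }\forall t\big)=\prod_t\frac12=2^{-T}$, using independence of the coordinates of $Z$ and symmetry. Finally, the event $\Delta\widehat P=0$ (regime $C_3$) has probability $\to0$ by the same continuity argument, so it does not disturb the limit. Combining the two pieces by $\PP(A\cap B)\ge \PP(A)-\PP(B^c)$ completes the argument.
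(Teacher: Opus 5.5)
Your argument is correct, given the nondegeneracy you flag. It has the same skeleton as the paper's proof: a joint CLT with Slutsky for the cell ratios (your delta method), the points outside $\mathcal X_0$ landing in $C_{m(x)}$ with probability tending to one, and a Gaussian orthant computation at $\mathcal X_0$.

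The one substantive difference is in that last step, and it favors you. The paper appeals only to the ``radial symmetry'' of $\mathcal N(0,\Sigma)$. Symmetry about the origin does not by itself give orthant probabilities $2^{-T}$: for $T=2$ with correlation $\rho$, the positive quadrant has probability $\frac14+\frac{\arcsin\rho}{2\pi}$. What is actually needed is that $\Sigma$ be diagonal. That is exactly your observation that the summands $\mathbf 1(\mathbf x_i=\mathbf x_t)\bigl(y_{i2}-y_{i1}-\Delta P(Y=1|\mathbf x_t)\bigr)$ have disjoint supports across $t$. Independence of the limiting coordinates, together with symmetry of each, then gives $2^{-T}$. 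Your route therefore makes explicit the ingredient the paper's proof leaves implicit.

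On your ``main obstacle'': at $\mathbf x_t\in\mathcal X_0$ one has $\mathrm{Var}(Y_2-Y_1\mid\mathbf x_t)=\PP(Y_1\neq Y_2\mid\mathbf x_t)$. Positive definiteness is therefore exactly $\PP(Y_1\neq Y_2\mid\mathbf x_t)>0$ for each $t$.

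The derivation you sketch from the model will not go through. Assumptions (i)--(ii) restrict only the marginal laws of $\epsilon_{i1}$ and $\epsilon_{i2}$ given $(\mathbf X_i,c_i)$; there is no conditional independence or joint full-support condition. Taking $\epsilon_{i1}=\epsilon_{i2}$ satisfies both. Then $(x_{t2}-x_{t1})'\alpha=0$ forces $Y_{i1}=Y_{i2}$, so $\Delta\widehat P(Y=1|\mathbf x_t)\equiv 0$ and both (a) and (b) fail.

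The paper simply asserts $\mathrm{Var}(Y_2-Y_1\mid X=x)>0$ ``under the model assumptions.'' Your fallback of maintaining this nondegeneracy as an explicit assumption is the right resolution, not a weakness of your argument relative to the paper's.
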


We are not going to verify conditions (C1)-(C11) of general Theorems \ref{th:general1}-\ref{th:general4}  as such verification  is completely analogous to the cross-sectional case.  (C10)  holds when all first differences $\Delta x_{1}, \ldots, \Delta x_T$ for elements $x_1,...,x_T$ in $\mathcal{X}_0$  are linearly independent.  Condition (C11) , once again, is applicable generally (and more general than (C10)), in particular, in the case of linearly dependent $\Delta x_{1}, \ldots, \Delta x_T$. In fact, such linear dependence  occurs in our Illustrative Design 2, as shown below. Thus, in the case of this design we apply (C11) in Theorem \ref{th:general4}. 

We can also show for this model results analogous to those in Theorem \ref{th:MSgeneral2}, including the asymptotic equivalence of probabilities assigned to distinct deterministic sets in the limiting distribution of the conditional maximum score estimator. Since this equal probabilities property is subsequently used to establish the consistency of our new Random Set Quantile estimator for this model, a more detailed discussion of how it is obtained is provided in the Appendix, following the proof of Theorem \ref{th:estimatorafterQRSE}.

\vskip 0.05in 

\noindent \textbf{Illustrative Design 2 (continued)}
 Let us start by considering the following case: $\alpha_0 \in \{0,1\}$. then $\mathcal{A}_0=\{\alpha_0\}$ and $\mathcal{A}^*=(0,+\infty)\cap \mathcal{A}$ when $\alpha_0=1$, and  $\mathcal{A}^*=(-\infty,1)\cap \mathcal{A}$ when $\alpha_0=0$. Suppose the parameter space $\mathcal{A}$ is a large interval containing both 0 and 1. 

For concreteness, take $\alpha_0=1$, Then $\mathcal{A}^*=[0,+\infty) \cap \mathcal{A}$, $\mathcal{X}_0=\{x^a, x^b\}$, where  $x^a=((0,1)',(1,0)')'$, $x^b=((0,1)',(1,0)')'$, and $\mathcal M_{as}(x^a)=\mathcal M_{as}(x^b)=\{1,2\}$. Thus, there are four possible selections of inequalities with  $m=(m_1,m_2)\in\{1,2\}^2$. However, selections $m_1=1$, $m_2=2$ and $m_1=2$, $m_2=1$ produce empty sets due to contradictory inequalities $a-1 \geq 0$, $a-1<0$. Thus, despite having four possible selections of inequalities there are only two non-empty sets $A_1$, $A_2$ partitioning $\mathcal{A}^*$: $A_1=[0,1)$, $A_2=[1,+\infty) \cap \mathcal{A}$. This also explains why we cannot apply (C10) in this setting and have to apply (C11). Note that $\overline{A}_1 \cap \overline{A}_2 = \overline{A}_0=\{1\}$. 

The case of $\alpha_0=0$ is analogous to the case $\alpha_0=1$. 

If  e.g. $\alpha_0 \in (0,1)$ in the DGP, then $\mathcal{A}^*=\mathcal{A}_0=(0,1)$, $L=1$, and $A_1$ coincides with $\mathcal{A}^*$.

\subsubsection{Multinomial choice and block maximum score estimation} 

In line with our discussion of this case before, we take it as given that $0<P(Y=k| \mathbf{x})<1$ a.e. Let $K=k_1+...+k_J$ be the total number of covariates across all the indices. As one parameter value in each index is conventionally normalized in the semiparametric literature, the effective dimension of $\mathbf{X}$ is $K-J$. 
The support $\mathcal{X}$ is taken to be $\{\mathbf{x}_1,\ldots,\mathbf{x}_{|\mathcal X|}\}$. Analogously to binary choice  cases, we assume that the parameter space $\mathcal{A}$ has relative interior in the normalized space. We start by formulating the individual asymptotic behaviors of plug-in estimators $\widehat{\mathcal{P}}(\mathbf{x}) := \phi(\widehat{F}(Y|\mathbf{x}))= (\widehat P(Y=1|\mathbf{x}), \ldots, \widehat{P}(Y=J|\mathbf{x}))^{\top}$ in Proposition \ref{prop:multinom_probest1}.  

\begin{proposition}
\label{prop:multinom_probest1}  Suppose $\mathbf{X}$ has finite support $\mathcal{X}$ as described above and we have a random sample $\{(\mathbf{x}_i,y_i)\}_{i=1}^n$. Then, for a given $\mathbf{x} \in \mathcal{X}$:   
\begin{itemize}

\item[(a)] If $P(Y=k_1|\mathbf{x}) >\ldots > P(Y=k_J|\mathbf{x})$ for a permutation $(k_1,\ldots,k_J)$ of $(1,\ldots, J)$, then $\PP(\widehat{P}(Y=k_1|\mathbf{x}) >\ldots > \widehat{P}(Y=k_J|\mathbf{x})) \to 1$ as $n \to \infty$.

\item[(b)] If $P(Y=k_1|\mathbf{x}) \kappa_1 \ldots \kappa_{j-1}  P(Y=k_J|\mathbf{x})$ for a permutation $(k_1,\ldots,k_J)$ of $(1,\ldots, J)$, where $\kappa_j \in \{>,=\}$, then   
{\small\begin{multline*}\PP\left( \bigcap_{\kappa_j \text{ is } ">"} \left(\widehat{P}(Y=k_jj\mathbf{x}) >\widehat{P}(Y=k_{j+1}|\mathbf{x})\right),  \bigcap_{\kappa_j \text{ is } "="} \left(\widehat{P}(Y=k_j|\mathbf{x}) \, \mathrm{str}(\kappa_j) \,  \widehat{P}(Y=k_{j+1}|\mathbf{x})\right)\right) \\
\to \frac{1}{2^{\Upsilon(\mathbf{x})}},
\end{multline*}}
where  $\mathrm{str}(\kappa_j) \in \{>,<\}$, $\Upsilon(\mathbf{x})=\sum_j \mathbf{1}(\kappa_j \text{ is } "=") $. 
\end{itemize}
\end{proposition}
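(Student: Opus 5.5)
Both parts will follow from the multivariate central limit theorem for the multinomial frequencies at the fixed support point $\mathbf{x}$. Write $\widehat{\mathcal P}(\mathbf x)=(\widehat P(Y=1|\mathbf x),\ldots,\widehat P(Y=J|\mathbf x))^\top$, where $\widehat P(Y=j|\mathbf x)=\frac{n^{-1}\sum_i \mathbf 1(y_i=j,\mathbf x_i=\mathbf x)}{n^{-1}\sum_i\mathbf 1(\mathbf x_i=\mathbf x)}$. Since $P(\mathbf X=\mathbf x)>0$, the denominator converges almost surely to $P(\mathbf X=\mathbf x)$. By the strong law of large numbers and the continuous mapping theorem, $\widehat{\mathcal P}(\mathbf x)\to\mathcal P(\mathbf x)$ almost surely.

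Part (a) is then immediate. The event $\{\widehat P(Y=k_1|\mathbf x)>\ldots>\widehat P(Y=k_J|\mathbf x)\}$ is an open set in $\mathbb R^J$ containing $\mathcal P(\mathbf x)$, since all the population inequalities are strict. Consistency therefore gives probability tending to one.

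For part (b), I apply the delta method to the ratio of sample means. This gives
\[
\sqrt n\bigl(\widehat{\mathcal P}(\mathbf x)-\mathcal P(\mathbf x)\bigr)\stackrel{d}{\to}\mathcal N\bigl(0,\Sigma(\mathbf x)\bigr),\qquad \Sigma(\mathbf x)=\frac{\mathrm{diag}(\mathcal P(\mathbf x))-\mathcal P(\mathbf x)\mathcal P(\mathbf x)^\top}{P(\mathbf X=\mathbf x)}.
\]
I split the pairs $(k_j,k_{j+1})$ into two groups.

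\emph{Strict pairs} ($\kappa_j$ is $>$). Each event $\widehat P(Y=k_j|\mathbf x)>\widehat P(Y=k_{j+1}|\mathbf x)$ has probability tending to one by consistency. The intersection of the target event with these events therefore changes its probability by $o(1)$, and it suffices to study the tie pairs.

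\emph{Tie pairs} ($\kappa_j$ is $=$). Here the population difference is zero. The event that the difference has the required strict sign is equivalent to $\sqrt n$ times the difference having that sign. This is the event that the vector $Z_n=\bigl(\sqrt n(\widehat P(Y=k_j|\mathbf x)-\widehat P(Y=k_{j+1}|\mathbf x))\bigr)_{j:\kappa_j \text{ is } =}$ lies in an open orthant $O$. By the CLT, $Z_n\stackrel{d}{\to}Z\sim\mathcal N(0,V)$ with $V=B\Sigma B^\top$, where $B$ is the matrix of differences $e_{k_j}-e_{k_{j+1}}$. The boundary of $O$ lies in a finite union of coordinate hyperplanes. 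If $V$ is nonsingular these hyperplanes are $Z$-null sets, and the portmanteau theorem gives $\PP(Z_n\in O)\to\PP(Z\in O)$.

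To check that $V$ is nonsingular, note that $\Sigma$ restricted to differences is positive definite. Indeed, $c^\top(\mathrm{diag}(p)-pp^\top)c=\mathrm{Var}_p(c_Y)$ is zero only for constant $c$, and this requires all $P(Y=j|\mathbf x)\in(0,1)$, which holds by assumption. The rows $e_{k_j}-e_{k_{j+1}}$ are linearly independent, since they are consecutive differences along a permutation. So $V$ is positive definite.

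The main obstacle is the value of $\PP(Z\in O)$. The claimed limit $2^{-\Upsilon(\mathbf x)}$ requires the orthant probability of $\mathcal N(0,V)$ to equal $2^{-\Upsilon}$, which holds for every orthant only when $V$ is diagonal. When the ties are in disjoint blocks of size two, I would compute directly that $\mathrm{Cov}(\widehat p_a-\widehat p_b,\widehat p_c-\widehat p_d)\propto -p_ap_c+p_ap_d+p_bp_c-p_bp_d=-(p_a-p_b)(p_c-p_d)=0$, using $p_a=p_b$. This gives independence of the Gaussian components and the product $2^{-\Upsilon}$. For chained ties within one block ($p_{k_j}=p_{k_{j+1}}=p_{k_{j+2}}$), adjacent differences have covariance $-p/P(\mathbf X=\mathbf x)\neq0$, so orthant probabilities are not uniform. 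I would handle this case by using exchangeability: within a tie block of size $b$ the Gaussian vector is exchangeable, so each strict ordering of the block has probability $1/b!$. I would then state (b) per ordering of each block. Equivalently, the formula $2^{-\Upsilon}$ holds exactly when all tie blocks have size at most two, and I would note this qualification. Sign symmetry $Z\stackrel{d}{=}-Z$ gives the $\frac12$ values in every case.
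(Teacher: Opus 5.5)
Your proof is correct, and in part (b) it is more careful than the paper's own argument. Part (a) is essentially the paper's argument. You use almost sure consistency and the openness of the strict-ordering event, while the paper uses $\widehat P(Y=k|\mathbf x)-P(Y=k|\mathbf x)=O_p(n^{-1/2})$.

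In part (b) the paper also reduces to the Gaussian limit of the tied differences. It then asserts that ``radial symmetry of the joint limiting normal with a positive definite matrix'' gives $2^{-\Upsilon(\mathbf x)}$ for every sign pattern. Central symmetry $Z\stackrel{d}{=}-Z$ only gives $\PP(Z\in O)=\PP(Z\in -O)$ and the marginal values $1/2$. All $2^{\Upsilon}$ orthants are equally likely only when the tied differences are uncorrelated, and that is exactly the step you verify. Your identity $\mathrm{Cov}=-(p_a-p_b)(p_c-p_d)/P(\mathbf X=\mathbf x)$ shows that a tied difference is uncorrelated with any difference on a disjoint pair. So the stated limit holds when every tie block has size at most two.

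Your counterexample is also right. If $p_{k_1}=p_{k_2}=p_{k_3}$, the adjacent differences have correlation $-1/2$. The pattern $(>,>)$ then has limit $\frac14+\arcsin(-\frac12)/(2\pi)=\frac16$, and $(>,<)$ has limit $\frac13$, not $\frac14$. Part (b) as written is therefore false whenever two consecutive $\kappa_j$ are ``$=$''. Your replacement is the correct statement: by exchangeability within each block, and independence across blocks (which your covariance identity also gives), each complete within-block ordering has limit probability $\prod_r 1/|B_r|!$.

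This corrected form is also what the later arguments actually need, since it still makes every element of $\mathcal M_{as}(\mathbf x_t)$ asymptotically equally likely. That agrees with the paper's own count of $|\mathcal M_{as}(\mathbf x_t)|$ as a product of factorials in the verification of (C9). By contrast, the normalisation $2^{-\sum_t\Upsilon(\mathbf x_t)}$ in Proposition~\ref{prop:multinom_probest2}(b) and the claim $|\mathcal M_{as}(\mathbf x_t)|=2^{\Upsilon(\mathbf x_t)}$ in the consistency discussion for Theoretical example~3 inherit the same error.
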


On the basis of results of Proposition \ref{prop:multinom_probest1} we define 
\begin{equation} 
\label{X0_multinomial} \mathcal{X}_0 = \{\mathbf{x} \in \mathcal{X}: P(Y=k|\mathbf{x}) = P(Y=j|\mathbf{x}) \text{ for some } k \neq j\}.
\end{equation}
Using the ordered partitioned block terminology of Section \ref{sec:theoreticalmultinomial}, we can equivalently define it as 
$\mathcal{X}_0 = \{\mathbf{x} \in \mathcal{X}: R(\mathbf{x})<J\},$
where $R(\mathbf{x})$ was the number of ordered tie blocks for $\mathbf{x}$.  

For $\mathbf{x}_t\in\mathcal X_0$ ,  $\mathfrak B(\phi({F}(y|\mathbf{x}_t)))$ has at least one block of size $\ge 2$ and $m(\mathbf{x}_t)$ is such that 
$$
D_{m(\mathbf{x}_t)}=
\bigcap_{j<k}
\begin{cases}
\widetilde D_{jk;>}, &   \text{ if } P(Y=j|\mathbf{x}_t)>P(Y=k|\mathbf{x}_t),\\
\widetilde D_{jk;<}, &  \text{ if } P(Y=j|\mathbf{x}_t)<P(Y=k|\mathbf{x}_t),\\
\widetilde D_{jk;=}, & \text{ if } P(Y=j|\mathbf{x}_t)=P(Y=k|\mathbf{x}_t)
\end{cases}.
$$
The collection $ M_{as}(\mathbf{x}_t)$ consists of all $m_t$ such that $D_{m_t}$ replaces each $\widetilde D_{jk;=}$ in this definition with either $\widetilde D_{jk;>}$ or $\widetilde D_{jk;<}$
in such a way that the resulting intersection remains non-empty (i.e., the induced pairwise orderings are mutually consistent).
Equivalently, each regime $m_t \in\mathcal M_{as}(\mathbf{x}_t)$ corresponds to a refinement of $\mathfrak B(\phi({F}(y|\mathbf{x}_t)))$
that breaks all the within-block ties into a strict complete ordering within that block, while preserving the
between-block ordering.

We can construct sets $A_1$, .., $A_L$ by solving \eqref{eq:sys_outside}-\eqref{eq:sys_inside} with various  profiles $(m_1,\ldots, m_T)$ for $m_t \in \mathcal{M}_{as}(\mathbf{x}_t)$.   Arguments analogous to the binary choice case can be used to establish that when $\mathcal{X}_0 \neq \varnothing$ for $\mathcal{X}_0$ defined in \eqref{X0_multinomial}, then we have at least two such sets. Hence, $L \geq 2$. 
We next have Proposition \ref{prop:multinom_probest2} which helps us to verify (C3).

\begin{proposition} \label{prop:multinom_probest2}
 Suppose $\mathbf{x}$ has finite support $\mathcal{X}$ as described above and we have a random sample $\{(x_i,y_i)\}_{i=1}^n$. Then 
 \begin{itemize}
 \item[(a)] $\sqrt{n}(\widehat{\mathcal{P}}(\mathbf{x}_1)-\mathcal{P}(\mathbf{x}_1), \ldots, \widehat{\mathcal{P}}(\mathbf{x}_T)-\mathcal{P}(\mathbf{x}_T))' \stackrel{d}{\to} \mathcal{N}(0, \Sigma)$ for some positive definite $\Sigma$. 
 \item[(b)]
For any $(m_1,\ldots, m_T) \in \times_{t=1}^T \mathcal{M}_{as}(\mathbf{x}_t)$.
$$\PP\left(\left(\cap_{x \in \mathcal{X}\setminus \mathcal{X}_0} (\widehat{\mathcal{P}}(\mathbf{x})  \in D_{m(\mathbf{x})}) \right) \cap \left(\cap_{t=1}^T (\widehat{\mathcal{P}}(\mathbf{x}_t) \in D_{m_t}) \right) \right) \to \frac{1}{2^{\sum_{t=1}^T \Upsilon(\mathbf{x}_t)}}.$$   
\end{itemize}
\end{proposition}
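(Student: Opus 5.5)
The plan is to treat the $T$ ambiguous support points as separate multinomial cells. First establish the joint CLT in (a). Then reduce the event in (b) to an orthant event for a centered Gaussian vector, using Proposition \ref{prop:multinom_probest1} for the non-ambiguous points. Throughout, a regime $m_t\in\mathcal M_{as}(\mathbf x_t)$ is encoded as in Proposition \ref{prop:multinom_probest1}(b): every equality $\kappa_j$ in the weak chain ordering $P(Y=k_1|\mathbf x_t)\,\kappa_1\cdots\kappa_{J-1}\,P(Y=k_J|\mathbf x_t)$ is replaced by a strict relation $\mathrm{str}(\kappa_j)\in\{>,<\}$. This gives $\Upsilon(\mathbf x_t)$ adjacent sign choices per point.

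For (a), write $\widehat P(Y=j|\mathbf x)=\bigl(n^{-1}\sum_i \mathbf 1(y_i=j,\mathbf x_i=\mathbf x)\bigr)/\widehat P(\mathbf X=\mathbf x)$. Apply the multivariate CLT to the vector of cell frequencies $\bigl(\mathbf 1(y_i=j,\mathbf x_i=\mathbf x_t)\bigr)_{j,t}$ and $\bigl(\mathbf 1(\mathbf x_i=\mathbf x_t)\bigr)_t$, and then apply the delta method to the ratio map. Because the cells $\{\mathbf x_i=\mathbf x_t\}$ are disjoint, the limiting covariance is block diagonal across $t$, with block
\[
\Sigma_t=\bigl(\mathrm{diag}(\mathcal P(\mathbf x_t))-\mathcal P(\mathbf x_t)\mathcal P(\mathbf x_t)'\bigr)/P(\mathbf X=\mathbf x_t).
\]
The entries of $\mathcal P(\mathbf x_t)$ are all in $(0,1)$. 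I will verify definiteness on the coordinates that enter (b), namely the adjacent tie differences, which is the sense in which positive definiteness is used below.

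For (b), first use Proposition \ref{prop:multinom_probest1}(a) for every $\mathbf x\notin\mathcal X_0$, and for every strict comparison between blocks at $\mathbf x_t\in\mathcal X_0$. Each of these events has probability tending to one, so by a union bound the target probability equals, up to $o(1)$, the probability that the signs of the adjacent tie differences $\Delta_{t,j}=\widehat P(Y=k_j|\mathbf x_t)-\widehat P(Y=k_{j+1}|\mathbf x_t)$, over the $\kappa_j$ that are ``$=$'', match the prescribed $\mathrm{str}(\kappa_j)$. The population value of each such difference is zero. Hence $\sqrt n\,\Delta_{t,j}$ converges by (a) to a linear image $Z=(Z_{t,j})$ of the Gaussian limit. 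Each coordinate has positive variance $(2p)/P(\mathbf X=\mathbf x_t)$, where $p$ is the common tied probability. The orthant boundaries then carry zero limiting mass, so by the portmanteau theorem the probability converges to $\PP\bigl(\mathrm{sign}(Z_{t,j})=\mathrm{str}(\kappa_j)\ \forall t,j\bigr)$. Block diagonality makes the $Z_{t,\cdot}$ independent across $t$, so this orthant probability factorizes over $\mathbf x_t$.

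The remaining and main step is to show that each factor equals $2^{-\Upsilon(\mathbf x_t)}$. Within one tie block, the map $\widehat P(Y=k_j|\mathbf x_t)\leftrightarrow \widehat P(Y=k_{j+1}|\mathbf x_t)$ swapping two tied alternatives leaves the limiting law invariant, because $\Sigma_t$ restricted to a tie block is exchangeable. Differences belonging to different tie blocks are handled by the same argument applied block by block. When a block has size two this exchangeability gives $Z\stackrel{d}{=}-Z$ on that coordinate, and hence exactly $1/2$. For longer chains I would try first to show that the adjacent differences inside a block behave like independent symmetric signs under the prescribed encoding. This amounts to computing the covariance of consecutive $Z_{t,j}$ inside a block and checking the orthant probabilities against $2^{-(|B|-1)}$. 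I expect this to be the delicate point. Consecutive differences share a coordinate, and their covariance is $-p/P(\mathbf X=\mathbf x_t)\neq 0$. With three tied alternatives the exchangeable law makes each of the six strict orders equally likely, so the orthant $\{Z_{t,1}>0,Z_{t,2}>0\}$ has probability $1/6$, not $1/4$. So with the adjacent-sign encoding the stated $2^{-\Upsilon}$ follows only when every tie block has size two, and longer chains would need a different argument.
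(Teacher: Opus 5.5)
Your computations are right, and together they disprove the statement in the generality claimed. State the conclusion as a counterexample rather than as a case that ``would need a different argument.''

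Take $J=3$ and a point $\mathbf x_t\in\mathcal X_0$ at which all three indices coincide, so $P(Y=j|\mathbf x_t)=1/3$ for every $j$ and $\Upsilon(\mathbf x_t)=2$. By the paper's own definition, and by its factorial count in the verification of (C9), $\mathcal M_{as}(\mathbf x_t)$ consists of the $3!=6$ strict orderings of the block. $\Sigma_t$ is invariant under permutations of coordinates with equal probabilities, and the limiting differences have no atoms. Hence each of the six events $\{\widehat{\mathcal P}(\mathbf x_t)\in D_{m_t}\}$ has limit $1/6$. The claimed value $1/4$ is impossible, since six disjoint events would then have total mass $3/2$. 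Your adjacent-sign encoding is not in bijection with $\mathcal M_{as}(\mathbf x_t)$ when a block has three or more elements, but the claim fails under it too: the four patterns have limits $1/6,1/3,1/3,1/6$.

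Part (a) is also false as written. Since $\mathbf 1'\widehat{\mathcal P}(\mathbf x_t)\equiv 1$, your block satisfies $\Sigma_t\mathbf 1=0$, so $\Sigma$ has rank $T(J-1)$. Restricting definiteness to the tie differences is the right repair, and you should say explicitly that it corrects the statement.

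The paper's proof of (b) appeals only to radial symmetry of the Gaussian limit. But $Z\stackrel{d}{=}-Z$ merely equates each orthant with its reflection. In the binary and panel examples the $2^{-T}$ limits hold because $\Sigma$ is diagonal across cells. Inside a single multinomial cell, consecutive tie differences have correlation $-1/2$, as you found, and that step breaks.

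Your transposition argument is the correct replacement:
\begin{itemize}
\item permuting tied alternatives within a block leaves $\Sigma_t$ invariant;
\item differences from distinct blocks are uncorrelated;
\item distinct cells are independent.
\end{itemize}
So the limit in (b) is $\prod_{t=1}^T\prod_r\bigl(|B_r(\mathbf x_t)|!\bigr)^{-1}$. Because $b!\ge 2^{b-1}$ with equality iff $b\le 2$, this equals $2^{-\sum_t\Upsilon(\mathbf x_t)}$ exactly when every tie block has at most two elements. That is precisely the case your proof establishes.

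The property used later in the paper, that all profiles in $\times_t\mathcal M_{as}(\mathbf x_t)$ are asymptotically equally likely, survives. Only the constant is wrong, together with the count $|\mathcal M_{as}(\mathbf x_t)|=2^{\Upsilon(\mathbf x_t)}$ asserted in the appendix.
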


Verification of conditions of our theorems \ref{th:general1}-\ref{th:general4} for this case is in the Appendix. If $\mathcal{X}_0$ defined in \eqref{X0_multinomial} is empty, then things are straightforward as $L=1$ and our objective function is asymptotically equivalent to  the original Manski (1975) objective function, and $A_1=\mathcal{A}_0$. Thus, the main case we focus on in verifying conditions is when $\mathcal{X}_0 \neq \varnothing$ and consequently $L \geq 2$. 

Recall that in the binary choice case, (C11) was more general than (C10). Namely, while (C11) was true generally, (C10) required linear independence of the elements of $\mathcal{X}_0$. An analogous relationship holds in the present setting, as discussed in the Appendix. Specifically, (C11) continues to hold generally, whereas (C10) holds under a condition that parallels linear independence in the binary choice case. Linear independence itself is no longer the appropriate criterion, since ties may involve vectors associated with multiple alternatives. In its place, we impose a dimensionality reduction condition tied to each tie event. In the binary choice case this condition would have been equivalent to linear independence of elements in $\mathcal{X}_0$, Appendix contains the details of this dimensionality reduction condition as well as the application of general theorems to Illustrative Design 3.  

For this model we can show results analogous to those in Theorem \ref{th:MSgeneral2}, including the asymptotic equivalence of probabilities assigned to distinct deterministic sets in the limiting distribution of the block maximum score estimator. 
Jus as for the panel data binary choice model, a more detailed discussion of how the asymptotically equal probabilities are  obtained is provided in the Appendix, following the proof of Theorem \ref{th:estimatorafterQRSE}. 


\section{Criteria used to compare estimators}\label{estimatorandcriteria}


We now turn to what we consider desirable properties for estimators of the identified set.

\begin{definition}[Consistency]
\label{def:sharpness}
We say that $\widehat{{A}}$ is \emph{consistent} at a given parameter value $\alpha$ in the DGP if $d_H\!\left(\overline{\widehat{{A}}},\, \overline{\mathcal{A}}_0\right) \stackrel{p}{\rightarrow} 0$,
where $d_H(\cdot,\cdot)$ denotes the Hausdorff distance on the normalized parameter space $\mathcal{A}$.
\end{definition}

As seen in the theoretical examples above, neither $\mathcal{A}_0$ nor $\widehat{A}$ need be closed, and hence not necessarily compact. Since the Hausdorff metric is a proper metric only on the space of nonempty compact sets in finite-dimensional Euclidean spaces, we take closures of both sets in Definition~\ref{def:sharpness}. Importantly, this entails no loss of content as two sets with identical closures are indistinguishable under $d_H$, so consistency is inherently defined only up to boundary points. This formulation also aligns with the notion of convergence in probability for random sets in the random set literature (Definition 6.19 in \citeasnoun{molchanov2006book}).\footnote{Note that \cite{beresteanumolinari2008} used an analogous definition of consistency, though in that setting all sets were compact by assumption, rendering the closure operation unnecessary.}

As is clear from both the Theoretical examples and the discussion in Section \ref{sec:generaltheorems_applications}, the same set estimator (say, maximum score in the Theoretical example 1) may fail to be consistent for some  values of the DGP parameter $\alpha$  making consistency a \textit{pointwise} rather than uniform property. Namely, as follows from Theorem  \ref{th:MSgeneral2} in Theoretical Example 1, the maximum score estimator will be consistent at $\alpha$ in DGP that give $\mathcal{X}_0=\varnothing$ and inconsistent at  $\alpha$ in DGP that give $\mathcal{X}_0 \neq \varnothing$. Analogous conclusions apply to theoretical examples 2 and 3. 

When a set estimator $\widehat{{A}}$ fails to be 
consistent at a given DGP parameter $\alpha$,  we will maintain 
the weaker assumption that 
$\overline{\widehat{{A}}}$ converges weakly to a limiting random set $\mathbf{A}(\alpha)$ in the sense of the weak convergence in the random set theory. Weak convergence of random sets, as discussed in \citeasnoun{molchanov2006book}, 
is characterized by the convergence of Choquet capacities: the sequence 
$T^{\overline{\widehat{A}}}(\mathcal{K})$ converges to 
$T^{\mathbf{A}(\alpha)}(\mathcal{K})$ for all compact sets 
$\mathcal{K}$ in the Fell topology on the compact ${\mathcal{A}}$,\footnote{The 
Choquet capacity of a random set $\mathcal{B}$ is defined as 
$T^{\mathcal{B}}(\mathcal{K}) = \mathbb{P}(\mathcal{B} \cap \mathcal{K} \neq 
\emptyset)$ for all compact $K \subseteq \mathcal{A}$.} The assumption of such weak convergence is reasonable as this is exactly what we have generically under conditions in our general theorems \ref{th:general1}-\ref{th:general4}, as summarized in Theorem \ref{th:general5}.

To motivate our second criterion we will be working with, recall that  the original Hodges estimator (e.g., see \cite{hodges})\footnote{For important work on properties of the original Hodges estimator, see for example \citeasnoun{LeebPots2005},\citeasnoun{LeebPots2006},\citeasnoun{LeebPots2008}.} illustrated  the role of continuity of the estimator's limit  when comparing its properties to MLE. We aim to use a similar principle here, though now within the partial identification paradigm, to evaluate another aspect of the behavior of the set estimators in our settings. Namely,  we are concerned
with potential dependence of weak limit of set estimators   $\overline{\widehat{A}}$ on the underlying parameter $\alpha$ of the DGP as  the limit may change discontinuously analogously to the behavior of the identified set in our illustrative designs above.\footnote{E.g. in Illustrative Design 2, the 
identified set $\mcA_0$ depends on the DGP parameter $\alpha$ as follows:  
$$\mcA_0= \left(
(-\infty,-1) \cdot \mathbf{1}(\alpha<-1) + 
\{-1\}\cdot \mathbf{1}(\alpha=-1) + 
(-1,0) \cdot \mathbf{1}(\alpha\in (-1,0)) + 
\{0\} \cdot \mathbf{1}(\alpha=0)  + 
(0, +\infty) \cdot \mathbf{1}(\alpha>0)\right) \cap \mathcal{A}.$$
The set-values map from $\alpha$ to $\mathcal{A}_0$ is lower semicontinuous but not upper semicontinuous.}

The discontinuity of the map $\alpha \mapsto \mathcal{A}_0(\alpha)$ that motivates our local robustness criterion below has a structural parallel in the moment inequality literature, where the identified set 
can also vary discontinuously with the underlying DGP. This is a central challenge in that literature.  E.g., \citet{kaidomolinaristoye} construct confidence sets that are valid uniformly over DGPs precisely to accommodate such abrupt changes in the identified set. The phenomenon is also present in our Illustrative Designs as in Design 1, case (1A) yields a singleton identified set, whereas cases (1B)--(1D) yield sets of positive dimension, with discontinuous transitions across configurations of $\alpha$.

The mechanism driving this discontinuity is, however, different in the two settings. In moment inequality models, discontinuities arise when inequality restrictions move from slack to binding.  In our setting, the discontinuity instead arises from particular realizations of $X$ which lead to the identified set having a dimensional collapse.

A further distinction concerns the implications for estimation. In moment inequality models, standard criterion-based estimators remain well behaved in the sense that the argmax set converges to the identified set (\citet{cht}), so the main challenge is inference. In our setting, by contrast, the estimator itself can fail with the argmax of the sample maximum score objective converging to a random set rather than to the identified set, with the identified set lying on the boundary of each realization. This estimation feature, driven by the coarse characterization induced by the maximum score objective, motivates the RSQ estimator in Section \ref{sec:quantile}.

Returning to the notion of robustness in our setiing, drawing on the classical local alternatives framework of \citeasnoun{ibragimov}, 
we study weak convergence of $\overline{\widehat{A}}$ under sequences of 
locally perturbed parameters $\alpha_n(h) \to \alpha$ (e.g.$\alpha_n(h) =\alpha+h/\sqrt{n}$). In that setting, 
local robustness requires that the limiting distribution under such perturbations 
varies continuously in $h$. Extending this idea to partial identification requires care. First, 
$\overline{\widehat{A}}$ is set-valued, so its asymptotics are described by 
weak convergence of random closed sets in the sense of 
\citeasnoun{molchanov2006book}. Second, the identified set 
$\mathcal{A}_0(\alpha)$ itself may change discontinuously in $\alpha$, so 
requiring the drifting limit to coincide with 
$\overline{\mathcal{A}_0(\alpha)}$ would be too strong. Instead, we compare 
the asymptotic law under local perturbations to that obtained under nearby 
fixed parameters approached along the same direction.

Suppose that for each fixed $h$, the estimator $\overline{\widehat A}$
converges weakly under $P_{\alpha_n(h)}$ to a random closed set $\mathbf A_h$,
and that for each fixed parameter value $a$ near $\alpha$,
$\overline{\widehat A}$ converges weakly under $P_a$ to a random closed set
$\mathbf A(a)$. These assumptions are supported by the estimators' behavior in Theorem \ref{th:general5} and Theoretical examples 1-3.

\begin{definition}[Local robustness]
\label{def:localrobustness}
The estimator $\widehat A$ is \emph{locally robust at $\alpha$} with respect to the set of directions 
$\mathcal{H}(\alpha)$ if for every
$\varepsilon>0$,
\[
\sup_{h \in \mathcal{H}(\alpha), \|h\|>0} 
\;\limsup_{a \in \mathcal{E}(h)}
\PP\!\left(d_H(\mathbf A_h,\mathbf A(a))>\varepsilon\right)=0, 
\]
where 
$\mathcal{E}(h):=\left\{a: a \to \alpha, \; \; (a-\alpha)/\|a-\alpha\| \to h/\|h\|\right\}.$
\end{definition}

Since these limits may be random, the Hausdorff distance is interpreted as a
distance between coupled realizations of random closed sets defined on a common
probability space. This is natural in our setting because 
the local limits can be constructed from a common Gaussian vector.

Thus, local robustness requires that the asymptotic law under 
$n$- and $h$-dependent local perturbations agrees with the law obtained under nearby 
fixed parameters approached along the same direction. It can be easily shown that  Definition \ref{def:localrobustness} implies the continuity of the law of the local limit: 
\[
\mathcal{L}(\mathbf A_h)\Rightarrow \mathcal{L}(\mathbf A(\alpha))
\quad \text{as } \|h\|\downarrow 0, \; \; h \in \mathcal{H}(\alpha).
\]

This notion of local robustness is closely related to the literature emphasizing the role of
continuity of asymptotic distributions for robust inference, particularly the
work of  \citeasnoun{andrews-boundary} and \citeasnoun{andrews-guggenberger-ema}. A key insight of that literature is that
discontinuities in limiting distributions can lead to failures of standard
asymptotic approximations. Our definition extends this idea to set-valued
estimators but instead of requiring continuity of a scalar or vector  distribution, we
require stability of the limiting random set.

The combined concepts of consistency
and robustness allow us to analyze properties of set estimators.
In particular, if an estimator is
consistent at a given $\alpha \in \mcA,$
it may not be necessarily robust with respect
to a certain range of 
sequences $\alpha_n(h)$.
At the same time, an estimator which is robust at a given point
of the parameter space may not necessarily be consistent because robustness is the property of continuity
of the limiting distribution. In cases where an estimator only converges weakly  and not in 
probability, it cannot converge to a fixed (e.g. closure of the identified) set. 

Equipped with these two criteria, we are now able to evaluate the existing classical estimators and propose a new one. As is clear from our discussion in Section \ref{sec:generaltheorems_applications}, the maximum score estimators in Theoretical examples 1-3  fail  to be consistent at $\alpha$ the result in $  \mathcal{X}_0 \neq \varnothing  $, even though in Section \ref{sec:classic} it will be shown that they satisfy  local robustness. No estimator in the current literature satisfies both conditions simultaneously. Section \ref{sec:quantile} fills this gap.


\section{Random Set Quantile Estimator}\label{sec:quantile} 

Our earlier analysis in Sections \ref{sec:general_theorems} and \ref{sec:generaltheorems_applications} left us with the findings that in discrete setting  estimators may have fluctuating asymptotic behavior whenever the parameter $\alpha$ in DGP results in $\mathcal{X}_0 \neq \varnothing$ rendering such estimators inconsistent at those $\alpha$ in the DGP under which  the  Hausdorff distance between $\mathcal{A}_0$ and $\mathcal{A}^*$ is strictly positive (which we saw to be the case in our theoretical examples).  Theorems \ref{th:general1}-\ref{th:general4} summarized in Theorem \ref{th:general5} also gave us quite a clear structure to the weak limit of these  estimators in cases of inconsistency.   That structure is behind this section's proposal of a novel approach to constructing  consistent and robust  estimators, 
 
 Our approach introduces a new class of estimators grounded in the concept of random set theory, a framework that has been integral to Econometrics since the pioneering work of \citeasnoun{beresteanumolinari2008}. Unlike previous applications of random set theory in econometrics, which centered on the Aumann expectation (\citeasnoun{beresteanumolinari2008}, \citeasnoun{beresteanumolchanovmolinari2011}, \citeasnoun{BERESTEANU201217}), our approach exploits the quantile of a random set. The motivation is obvious from Theorem \ref{th:general5}  which shows that the maximum score estimator converges to a random set whose majority intersection recovers $\overline{A}_0$ and whose deterministic components in the limit have the same asymptotic probability. The quantile then is precisely the tool that extracts this majority intersection from data.  This quantile approach  can serve as a basis for constructing consistent estimators for identified sets not just in our discrete setups, but potentially in other partially identified  models as well.


First, to give some insights on what our estimation approach will deliver, let's look at Illustrative Design 2 with compact $\mathcal{A}$ large enough to contain $[0,1]$ in its interior. When   $\alpha=1$ is the parameter value in the DGP, then as illustrated in Section \ref{sec:generaltheorems_applications}, 
$$\widehat{A} \stackrel{d}{\to} d \cdot [0,1)  + (1-d) \cdot [1,+\infty) \cap \mathcal{A},$$
where $d$ is a Bernoulli random variable with parameter $\frac12$. For the closure we have  $P(\overline{\widehat{A}} = [1,+\infty) \cap \mathcal{A} ) \to \frac{1}{2}$, $P(\overline{\widehat{A}} =  \cdot [0,1] \cap \mathcal{A}) \to \frac{1}{2}$. If we look at the weak limit of $\overline{\widehat{A}}$, which is a random set, we can see that $1$ is \textit{the only point} in the parameter space that happens to be \textit{in the closure of realizations} of this random set  in at least $100(1/2+\Delta) \%$ cases for $\Delta>0$. Indeed, $1$ belongs to the boundary of both $ [1,+\infty) \cap \mathcal{A} $ and $[0,1] $  and no other point simultaneously belongs to the boundary of both sets. More generally, the idea of our new estimation method relies on the properties of the distribution limit summarized in Theorems \ref{th:general5} and Theorem \ref{th:MSgeneral2} for the maximum score specifically. 

An elegant way to utilize that property is through  the notion of the random set quantile. Namely, for estimator $\widehat{A}$ whose asymptotioc behavior is described by Theorem \ref{th:general5} (either in (a) or (b))  we define our new estimator as the random set $\tau$-th quantile of the closure of ${\widehat{A}}$:
\begin{equation}\label{quantile:estimator}
\widehat{A}_{RSQ,\tau}=q_{\tau} \left(
\overline{\widehat{A}}\right). 
\end{equation}
Following the definition of the $\tau$-th quantile of a random set in \citeasnoun{molchanov2006book} (p.176), 
\begin{equation}\label{quantile:estimator2}
q_{\tau} (
\overline{\widehat{A}}) = \left\{p \left(\cdot;\,\overline{\widehat{A}}\right) \geq \tau \right\}
\end{equation}
for {\it coverage function } $p(u\,;\,\overline{\widehat{A}}) \equiv \PP( u \in
\overline{\widehat{A}}).$ Note that this notion from \citeasnoun{molchanov2006book} also applies to vector settings.\footnote{\citet{Molchanov2017}, p.36 defines quantiles of a random closed set via a family $\mathfrak{M}$ of compact subsets, allowing for a range of notions depending on this choice. In particular, different selections of $\mathfrak{M}$ (e.g., balls or more general sets) yield quantiles that reflect geometric or neighborhood-based features of the random set. In this paper, we adopt the canonical specialization $\mathfrak{M}$ as the family of singletons. This choice yields the finest possible resolution and is particularly well suited to our setting.}  

An important consideration lies in the selection of suitable quantile indices $\tau$ that would ensure consistency of this  estimator. To foreshadow our formal result, Theorem \ref{th:general5} plus equal asymptotic probabilities of $A_1, \ldots, A_L$ will imply the choice of $\tau$-th quantile of $\overline{\widehat{A}}$ to be for $\tau \in (1/2, 1)$. To demonstrate why it works,  take once again Illustrative Design 2 with the parameter value $\alpha=1$ in the DGP. In this scenario, the $\tau$-th quantile of the \textit{closure} $\overline{\widehat{A}}$ with $\tau \in (1/2,1)$, reduces to only $\{1\}$ for sufficiently large $n$, thereby converging in probability to the identified set.\footnote{It's worth noting that for $\tau<1/2$, the $\tau$-th quantile of $\overline{\widehat{A}}$  is $[0,+\infty) \cap \mathcal{A}$ for sufficiently large $n$. This set if $\overline{A^*}$. On the other hand, the $1/2$-th quantile of $\overline{\widehat{A}}$ for large enough $n$ can, with additional effort, be demonstrated to be the two-element set $\{0,1\}$, failing to asymptotically recover the identified set.} 
Case 2 in Theorem \ref{th:general5} recovers $\overline{\mathcal{A}}_0$ for any $\tau \in (0,1)$. In particular, it does so for $\tau \in (1/2,1)$.

Given that the concept of a random set quantile may be unfamiliar to econometricians, it's helpful to draw parallels to voting rules for additional clarity. Let's explore this by examining the median of a random set within the framework of a simple discrete model. Suppose that we can generate infinitely many random samples $\{(x_i,y_i\}_{i=1}^n$ of a fixed size $n$. Each sample casts votes for any number of elements of $\mathcal{A}$ which maximize the objective \eqref{ex_obective_sample_discrete}. Essentially, a given sample votes for  its respective estimate $\widehat{A}$. After the completion of set $\widehat{A}$ to its closure $\overline{\widehat{A}}$,  majority winners are selected -- namely, those elements in $\mathcal{A}$ that are voted for by at least 50\% of the samples. The collection of those majority winners would give us $q_{.5} (
\overline{\widehat{A}})$. For any arbitrary index $\tau \in (0,1)$, the quantile $q_{\tau}(\overline{\widehat{A}})$ comprises those elements from $\overline{\mathcal{A}}$ that adhere to the 
"quota rule" with a threshold of $\tau$. 


\subsection{Consistency of the random set quantile estimator in Theoretical examples 1-3}
We now formulate our result regarding the asymptotic behavior of the random quantile set estimator in the settings of interest. Note that Theorem \ref{th:general5} immediately implies that $\overline{\mathcal{A}_0}\subseteq q_{\tau}(\overline{\widehat{A}})$ but to show that exact equality we will need additional properties in line with those stated in Theorem \ref{th:MSgeneral2}. 




\begin{theorem}
\label{th:estimatorafterQRSE} Consider an estimator $\widehat{A}$ that satisfies the conditions of Theorem \ref{th:general5}. In addition, suppose that (i) in Case 1 of Theorem \ref{th:general5}, $\PP(d_\ell)=\frac{1}{L}$, $\ell=1, \ldots, L$, (ii) in Case 2 of Theorem \ref{th:general5}, $d_H(\overline{\mathcal{A}^*},\overline{\mathcal{A}_0})=0$. 


Let $\widehat{A}_{RSQ,\tau}$ be the $\tau$-th random set quantile of $\overline{\widehat{A}}$ as defined in (\ref{quantile:estimator}). Then for $\tau>1/2$, 
$$d_H(\widehat{A}_{RSQ,\tau},\overline{\mcA_0}) \stackrel{p}{\rightarrow} 0.$$

Hence, $\widehat{A}_{RSQ,\tau}$ is
consistent for $\mathcal{A}_0$ at any parameter $\alpha \in \mathcal{A}$ in DGP. 
\end{theorem}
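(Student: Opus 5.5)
The plan is to work directly with the coverage function $p_n(u)=\PP(u\in\overline{\widehat A})$ and its limit, splitting according to the two cases of Theorem~\ref{th:general5}. Because $\overline{\widehat A}$ is a deterministic function of the sample, $\widehat A_{RSQ,\tau}=\{u\in\mathcal A: p_n(u)\ge\tau\}$ is a deterministic closed set for each $n$. (It is closed because $u\mapsto p_n(u)$ is upper semicontinuous for a random closed set.) Hence convergence in probability reduces to the deterministic statement $d_H(\widehat A_{RSQ,\tau},\overline{\mathcal A_0})\to 0$.

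\textbf{Case 1.} By Theorem~\ref{th:general5}, $\PP(\overline{\widehat A}=\overline A_\ell)\to 1/L$ for each $\ell$. Indeed, part (c) of Theorem~\ref{th:general1} gives that, with probability tending to one, $\widehat A$ is one of $A_1,\ldots,A_L$, and assumption (i) fixes the limit weights. Write $\pi_{n,\ell}=\PP(\overline{\widehat A}=\overline A_\ell)$ and $r_n=1-\sum_\ell\pi_{n,\ell}\to0$. Then for every $u$,
\[
\Bigl|p_n(u)-\tfrac{1}{L}\#\{\ell:u\in\overline A_\ell\}\Bigr|\le \max_\ell|\pi_{n,\ell}-1/L|+r_n=:\delta_n\to0,
\]
uniformly in $u$.

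Now fix $\tau\in(1/2,1)$. The count $N(u)=\#\{\ell:u\in\overline A_\ell\}$ takes integer values, so $N(u)/L\ge\tau$ forces $N(u)\ge\lfloor L/2\rfloor+1$. Also $\overline{\mathcal A_0}\subseteq\overline A_\ell$ for every $\ell$, by part (b) of Theorem~\ref{th:general5}, so points of $\overline{\mathcal A_0}$ have $N=L$. For large $n$, choose $\delta_n<\min\{1-\tau,\ \min_{j:\,j/L<\tau}(\tau-j/L)\}$, which is a positive gap because only finitely many values $j/L$ occur. With this choice:
\begin{itemize}
\item[$\bullet$] every $u$ with $N(u)=L$ satisfies $p_n(u)\ge1-\delta_n>\tau$;
\item[$\bullet$] every $u$ with $N(u)/L<\tau$ satisfies $p_n(u)<\tau$.
\end{itemize}
Therefore
\[
\overline{\mathcal A_0}\subseteq\widehat A_{RSQ,\tau}\subseteq\{u:N(u)\ge\lfloor L/2\rfloor+1\}.
\]
A point counted by at least $\lfloor L/2\rfloor+1$ of the sets lies in $\bigcap_{\ell\in S}\overline A_\ell$ for some majority $S$. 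By Theorem~\ref{th:general4} (equivalently, part (c) of Theorem~\ref{th:general5}), each such intersection equals $\overline{\mathcal A_0}$. Hence $\widehat A_{RSQ,\tau}=\overline{\mathcal A_0}$ exactly for all large $n$, and the Hausdorff distance is eventually zero.

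\textbf{Case 2.} Here $\PP(\widehat A=\mathcal A^*)\to1$, so $p_n(u)\to\mathbf 1\{u\in\overline{\mathcal A^*}\}$ uniformly: the error is at most $\PP(\widehat A\neq\mathcal A^*)$. By the same gap argument, for large $n$ we get $\widehat A_{RSQ,\tau}=\overline{\mathcal A^*}$ for any $\tau\in(0,1)$. Since both sets are compact, assumption (ii) gives $\overline{\mathcal A^*}=\overline{\mathcal A_0}$.

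The main obstacle is the uniform-in-$u$ approximation of $p_n$ in Case 1. Pointwise convergence of probabilities is not enough on its own, because we need the level set of $p_n$ to match the level set of the limiting count. This is handled by the finiteness of the collection $\{\overline A_\ell\}$: the realizations of $\overline{\widehat A}$ outside this collection carry total mass $r_n\to0$, so the error bound holds simultaneously for all $u$. One must also place $\tau$ strictly inside $(1/2,1)$ so that the finite set of possible levels $j/L$ leaves a positive gap on each side of $\tau$. This is also exactly why $\tau=1/2$ can fail when $L$ is even.
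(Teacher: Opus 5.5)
Your proof is correct and follows essentially the same route as the paper: decompose the coverage function over the finite collection $\{\overline A_\ell\}$ plus a remainder bounded by $\PP(\overline{\widehat A}\notin\{\overline A_1,\ldots,\overline A_L\})$, identify the limit $N(u)/L$, and threshold using $\overline{\mathcal A_0}\subseteq\overline A_\ell$ together with the majority-intersection property of Theorem~\ref{th:general5}(c); when $\mathcal X_0=\varnothing$ you use the trivial indicator limit, as the paper does. Two things you add are refinements rather than a different route: your explicit uniform-in-$u$ error bound, and your remark that the infeasible $\widehat A_{RSQ,\tau}$ is deterministic, so convergence in probability is just eventual equality. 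Together these make rigorous the step the paper states only through pointwise convergence of $p(u;\overline{\widehat A})$. One small correction: the bound should read $\sum_\ell|\pi_{n,\ell}-1/L|+r_n$ rather than $\max_\ell|\pi_{n,\ell}-1/L|+r_n$, which is harmless since both vanish.
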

The key additional conditions in Theorem \ref{th:estimatorafterQRSE} are, first, equal asymptotic probabilities of $A_{\ell}$ in the distribution limit of $\widehat{A}$ in Case 1, and (ii) the coarse $D^*_m$ resulting in $\mathcal{A}^*$ and $\mathcal{A}_0$ being different at most at the boundary. In Section \ref{sec:generaltheorems_applications} these additional conditions were shown to be true hold for the maximum score estimator in the cross-sectional binary choice model (see Theorem \ref{th:MSgeneral2}). Even though formal results were not formulated for the other two theoretical examples, analogous results hold for them too with model-specific propositions replacing 
Proposition~\ref{prop:bccross_probest2} to establish those results, as discussed in more detail in the Appendix after the proof of Theorem \ref{th:estimatorafterQRSE}. Thus, we conclude that Theorem \ref{th:estimatorafterQRSE} is applicable to all our theoretical examples. 

The result of  Theorem \ref{th:estimatorafterQRSE}  of $\widehat{A}_{RSQ,\tau}$ being 
consistent on the entire parameter space implies that there is no longer a need to differentiate between various
``parameter regimes" $\alpha$ in DGP when considering maximum score estimators in  Theoretical examples 1-3.

Even though Theorem \ref{th:estimatorafterQRSE} shows that consistency holds for \emph{any} $\tau \in (1/2, 1)$,   the choice of 
$\tau$ can be important in finite samples. as larger $\tau$ may produces a smaller tighter set, whereas a smaller $\tau$ close to $1/2$ may produce a larger set. We do not develop a theory for an optimal finite-sample choice of $\tau$. Instead, we recommend that researchers report RSQ estimates over a range of 
$\tau$ values, as this can provide insight into the estimator's behavior across the admissible range.

\subsection{Robustness of the random set quantile estimator in Theoretical examples 1-3}

Our result of the consistency of the random set quantile estimator was general for any estimation approach that has asymptotic behavior described by Theorem \ref{th:general5} with some additional conditions. 
For the analysis of robustness, we will focus on specific cases of our theoretical examples. 

We start with the cross-sectional binary choice model  and the random set quantile estimator built on the maximum  score estimator $\widehat{A}$. The case of $\alpha$ in DGP that results in $\mathcal{X}_0=\varnothing$ is quite straightforward from the robustness perspective. A more challenging case is when $\mathcal{X}_0 \neq \emptyset$, as this is when the identified set will change discontinuously.  

We will take  rate $1/\sqrt{n}$ in $\alpha_n(h)=\alpha+h/\sqrt{n}$ as thia 
is the natural choice given that the estimated conditional 
probabilities $\widehat{P}(Y=1|x_t)$ at ambiguous points 
$x_t\in\mathcal{X}_0$ converge at exactly this rate. 
The drift therefore shifts the signal 
$P_{\alpha_n(h)}(Y=1|x_t)-\tau$ by an amount of the 
same order as the estimation noise, placing the local 
perturbation precisely at the boundary between 
tie-resolving and tie-preserving. Slower drifts would 
overwhelm the sampling noise and trivially resolve all ties whereas 
faster drifts would be swamped by the sampling noise.  This rate also connects to \citet{andrews-guggenberger-ema}, 
whose drifting sequences are calibrated to the 
convergence rate of the relevant estimator. 
In our setting tha convergence rate for the sample proportion $\widehat{P}(Y=1|x_t)$ is $\sqrt{n}$. 

Before formulating the local robustness result, we establish the following lemma. 

\begin{lemma}
\label{lemmaLdriftasymp}
Consider a cross-sectional binary choice model under assumptions discussed in Section \ref{sec:theoreticalMSbinary} and within the discrete setup given in Section \ref{sec:discrete_setup}. Suppose that the parameter space has a relative interior in the $(k-1)$-dimensional normalized space.  

Additionally, assume that for any $x \in \mathcal{X}$
the conditional c.d.f.
$F_{u\mid X}(\cdot\mid x)$ is differentiable at $0$, with $\left.\frac{\partial}{\partial v}F_{u|X}(v|x)\right|_{v=0}
=
f_{u|X}(0|x)\in(0,\infty)$.

Then under $\alpha_n(h)=\alpha+h/\sqrt{n}$,
$$P_{\alpha_n(h)}(Y=1|x_t)-\gamma
=f_{u| X}(0\mid x_t)\frac{x_t'h}{\sqrt{n}}+o(n^{-1/2})$$
for any $x_t \in \mathcal{X}_0$ with $\mathcal{X}_0=\{x \in \mathcal{X}: P(Y=1|x)=\gamma\}$ induced by this $\alpha$. 

\end{lemma}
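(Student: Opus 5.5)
The plan is to compute the conditional choice probability directly from the model and then take a first-order Taylor expansion of the error c.d.f. at zero. Under the DGP with parameter $a$, the model \eqref{maxscoredgp} gives $Y=1(X'a-u\geq 0)$, so
\[
P_a(Y=1\mid x)=\PP(u\le x'a\mid X=x)=F_{u\mid X}(x'a\mid x).
\]
Here we keep the conditional distribution of $u$ given $X$ fixed along the drifting sequence; only the index parameter moves. In particular $P_{\alpha_n(h)}(Y=1\mid x_t)=F_{u\mid X}(x_t'\alpha+x_t'h/\sqrt n\mid x_t)$.

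Next I would use the definition of $\mathcal X_0$. For $x_t\in\mathcal X_0$ we have $P_\alpha(Y=1\mid x_t)=\gamma$, which means $F_{u\mid X}(x_t'\alpha\mid x_t)=\gamma$. The quantile restriction $\mathcal Q_\gamma(u\mid X=x_t)=0$ together with the strict monotonicity in \eqref{maxscoredgpc} identifies the index exactly. If $x_t'\alpha<0$, then $F_{u\mid X}(x_t'\alpha\mid x_t)\ge\gamma$ would contradict $0$ being the $\gamma$-quantile, i.e. the infimum of $\{v: F(v)\ge\gamma\}$. If $x_t'\alpha>0$, strict increase to the right of $0$ gives $F_{u\mid X}(x_t'\alpha\mid x_t)>F_{u\mid X}(0\mid x_t)\ge\gamma$. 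Hence $x_t'\alpha=0$. Differentiability at $0$ makes $F_{u\mid X}(\cdot\mid x_t)$ continuous there, so $F_{u\mid X}(0\mid x_t)=\gamma$ exactly. This step, pinning down $x_t'\alpha=0$ and $F(0\mid x_t)=\gamma$ from the quantile normalization, is the only point requiring care.

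Then $P_{\alpha_n(h)}(Y=1\mid x_t)-\gamma=F_{u\mid X}(x_t'h/\sqrt n\mid x_t)-F_{u\mid X}(0\mid x_t)$. Differentiability at $0$ gives $F(v\mid x_t)-F(0\mid x_t)=f_{u\mid X}(0\mid x_t)v+o(|v|)$ as $v\to0$. Substituting $v_n=x_t'h/\sqrt n\to0$ for fixed $h$, we have $o(|v_n|)=o(n^{-1/2})$, with the case $x_t'h=0$ trivially included. This yields the displayed expansion for each of the finitely many $x_t\in\mathcal X_0$.
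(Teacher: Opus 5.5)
Your proof is correct and follows the paper's argument: write $P_a(Y=1\mid x_t)=F_{u\mid X}(x_t'a\mid x_t)$, use $x_t'\alpha=0$ to get $F_{u\mid X}(0\mid x_t)=\gamma$, and Taylor-expand at $0$ with $v_n=x_t'h/\sqrt n$. The one difference is that you derive $x_t'\alpha=0$ explicitly from the quantile restriction and the strict monotonicity, whereas the paper just cites the exhaustive characterization; also, once $x_t'\alpha=0$, the identity $F_{u\mid X}(0\mid x_t)=F_{u\mid X}(x_t'\alpha\mid x_t)=\gamma$ is immediate, so your appeal to continuity at $0$ is harmless but unnecessary.
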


\begin{theorem}[Local robustness of RSQ estimator for Theoretical example 1] \label{th:QRSEdrift}
Consider a cross-sectional binary choice model under assumptions discussed in Section \ref{sec:theoreticalMSbinary} and within the discrete setup given in Section \ref{sec:discrete_setup}. Suppose that the parameter space has a relative interior in the $(k-1)$-dimensional normalized space and suppose the additional condition on the c.d.f. differentiability from Lemma \ref{lemmaLdriftasymp}. 

Let $\alpha$ be the parameter value in the DGP.  Consider sequence of the DGP parameters $\alpha_n(h) = \alpha + h/\sqrt{n}$, where $h \in \mathbb{R}^k$ is chosen in such a way that $\alpha_n(h) \in \mathcal{A}$. 
Then the random set quantile $\widehat A_{RSQ,\tau}$ of the maximum score estimator  is locally
robust at $\alpha$ in the sense of Definition \ref{def:localrobustness} with respect to 
$$\mathcal H(\alpha)=\{h: x_t'h \neq 0 \quad \forall x_t \in \mathcal{X}_0\}$$ 
(and , of course, $h$ need to comply with normalization constraints to ensure that $\alpha_n(h)$ belong to the normalized space, which we take  as being true).
\end{theorem}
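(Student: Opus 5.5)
The plan is to compute both limiting objects in Definition~\ref{def:localrobustness} explicitly, namely the drifting limit $\mathbf A_h$ and the fixed-parameter limit $\mathbf A(a)$ for $a\in\mathcal E(h)$, and then compare them. Both are deterministic sets assembled from the regions $A_1,\ldots,A_L$ attached to $\alpha$ in Theorem~\ref{th:MSgeneral2}. Since $\mathcal X$ is finite, I can fix a neighborhood of $\alpha$ on which $\operatorname{sign}(x'a)=\operatorname{sign}(x'\alpha)$ for every $x\notin\mathcal X_0$. Under the drift $\alpha_n(h)$ the regimes $m(x)$ for $x\notin\mathcal X_0$ are then unchanged with probability approaching one, by Proposition~\ref{prop:bccross_probest1}(a) applied along the sequence. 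So all the action is at the ambiguous points $x_t\in\mathcal X_0$.

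\textbf{Step 1 (drifting limit).} By Lemma~\ref{lemmaLdriftasymp} and the CLT in Proposition~\ref{prop:bccross_probest2}(a), which holds along the drift by a triangular-array argument, I get
\[
\sqrt n\bigl(\widehat P(Y=1|x_t)-\gamma\bigr)_{t\le T}\ \stackrel{d}{\to}\ \mathcal N\bigl((f_{u|X}(0|x_t)\,x_t'h)_{t\le T},\Sigma\bigr)
\]
under $P_{\alpha_n(h)}$. The cells are distinct support points, so $\Sigma$ is diagonal. Repeating the argument of Theorems~\ref{th:general1} and~\ref{th:general5} with these shifted Gaussian limits gives
\[
\widehat A\stackrel{d}{\to}\sum_\ell d_\ell(h)A_\ell,\qquad \PP(d_\ell(h)=1)=p_\ell(h):=\prod_{t}\Phi\!\bigl(\pm f_{u|X}(0|x_t)\,x_t'h/\sigma_t\bigr),
\]
where the sign is $+$ if $m_{\ell,t}=1$ and $-$ otherwise. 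Profiles whose $A(m)$ is empty are handled by conditioning, as in the (C11) verification. The coverage function of $\overline{\widehat A}$ therefore converges pointwise to $\pi_h(u)=\sum_{\ell:\,u\in\overline A_\ell}p_\ell(h)$. Hence $\mathbf A_h=\{u:\pi_h(u)\ge\tau\}$, and it is deterministic. The finitely many values of $\pi_h$ need to be kept away from $\tau$ (or the boundary case handled separately) to pass from convergence of coverage functions to convergence of the level set.

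\textbf{Step 2 (fixed nearby parameters).} For $a\in\mathcal E(h)$ close to $\alpha$ I use $h\in\mathcal H(\alpha)$: then $\operatorname{sign}(x_t'a)=\operatorname{sign}(x_t'h)\neq0$ for all $t$. So $\mathcal X_0(a)=\varnothing$, and Case~2 of Theorem~\ref{th:MSgeneral2} together with Theorem~\ref{th:estimatorafterQRSE} give $\mathbf A(a)=\overline{\mathcal A_0(a)}=\overline A_{\ell^*(h)}$. Here $\ell^*(h)$ is the profile with $m_{\ell^*,t}=1$ iff $x_t'h>0$. This limit depends only on the direction $h/\|h\|$, so the $\limsup$ over $\mathcal E(h)$ is attained trivially.

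\textbf{Step 3 (comparison; the main obstacle).} It remains to show $d_H(\mathbf A_h,\overline A_{\ell^*(h)})\le\varepsilon$ uniformly over $h\in\mathcal H(\alpha)$. First, $p_{\ell^*(h)}(h)$ is the largest of the $p_\ell(h)$, since every factor exceeds $1/2$. Second, by (C9)--(C11) any $u\in\overline A_{\ell^*}\setminus\overline{\mathcal A}_0$ is excluded from at least half of the regions. The difficulty is that for a fixed $h$ with small norm all $p_\ell(h)\approx 2^{-T}$, so $\mathbf A_h\approx\overline{\mathcal A}_0$ rather than $\overline A_{\ell^*}$. I would therefore first try to make the comparison along rays: use the scale-invariance of $\mathcal E(h)$ in $\|h\|$ and show that $\pi_{ch}\to\mathbf 1\{u\in\overline A_{\ell^*(h)}\}$ as $c\to\infty$, using the product form of $p_\ell$. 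This yields $\mathbf A_{ch}=\overline A_{\ell^*(h)}$ once $p_{\ell^*}(ch)\ge\tau$. I would then check that this matches the intended reading of the definition. If it does not, the fallback is to establish robustness on the subset of $\mathcal H(\alpha)$ with $\min_t f_{u|X}(0|x_t)|x_t'h|/\sigma_t$ large enough that $p_{\ell^*}(h)\ge\tau$. The monotonicity of $\pi_h$ in each $x_t'h$ is the tool in both routes.
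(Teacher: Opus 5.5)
Your Step~3 cannot be closed, and the obstruction you found is real. Under the $\sqrt n$ drift, and read with Definition~\ref{def:localrobustness}, the statement fails for every $h\in\mathcal H(\alpha)$ of small norm. Both of your fallbacks therefore prove a different result, not the theorem.

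The paper never meets your obstacle because it computes $\mathbf A_h$ differently from your Step~1. Its argument runs as follows.
\begin{enumerate}
\item $P_{\alpha_n(h)}(Y=1|x_t)-\gamma=f_{u|X}(0|x_t)\,x_t'h/\sqrt n+o(n^{-1/2})\neq 0$, so $\mathcal X_0(\alpha_n(h))=\varnothing$ for large $n$.
\item It then invokes (C3) and the fixed-DGP reasoning of Case~2 of Theorem~\ref{th:general5}. This gives that under $\PP_{\alpha_n(h)}$ the sample objective concentrates on a single region $A_{\ell(h)}$ with probability approaching one.
\item Hence $\mathbf A_h=\overline{A_{\ell(h)}}$ for every $h\in\mathcal H(\alpha)$.
\end{enumerate}
The second step is invalid. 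Case~2 needs the population gaps bounded away from zero. Here the gap is of the same order as the sampling error, as the paper itself says when motivating the rate. Your shifted limit $\mathcal N\bigl((f_{u|X}(0|x_t)x_t'h)_t,\Sigma\bigr)$ is the correct one, with $\Sigma$ diagonal and entries $\sigma_t^2=\gamma(1-\gamma)/\PP(X=x_t)$. The sample sign at $x_t$ matches $\mathrm{sgn}(x_t'h)$ only with limiting probability $\Phi\bigl(f_{u|X}(0|x_t)|x_t'h|/\sigma_t\bigr)<1$.

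\textbf{A concrete counterexample.} Take $T=1$, for example case (1B) of Illustrative Design~1, so that $L=2$ and no profile is empty. Write $\pi:=\Phi\bigl(f_{u|X}(0|x_1)|x_1'h|/\sigma_1\bigr)$. The limiting coverage is $\pi$ on $\overline{A_{\ell^*(h)}}\setminus\overline{\mathcal A}_0$, is $1-\pi<1/2$ on the other region minus $\overline{\mathcal A}_0$, and is $1$ on $\overline{\mathcal A}_0$. Whenever $\pi<\tau$, which holds for all $h\in\mathcal H(\alpha)$ of small norm, you get $\mathbf A_h=\overline{\mathcal A}_0$. Your Step~2, meanwhile, gives $\mathbf A(a)=\overline{A_{\ell^*(h)}}$ for all $a\in\mathcal E(h)$ near $\alpha$. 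For small $\epsilon>0$ the point $\alpha+\epsilon h$ lies in $A_{\ell^*(h)}$ and is at distance at least $\epsilon|x_1'h|/\|x_1\|$ from $\overline{\mathcal A}_0\subseteq\{b:x_1'b=0\}$. So for small $\varepsilon$ the probability in Definition~\ref{def:localrobustness} equals one, and the supremum over $h$ is not zero.

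\textbf{What can actually be proved.} There are two options.
\begin{enumerate}
\item Your restricted version: robustness for those $h$ with $\prod_t\Phi\bigl(f_{u|X}(0|x_t)|x_t'h|/\sigma_t\bigr)>\tau$, which covers every ray far enough out. When some profiles are empty, your product formula is only a lower bound for $\PP(d_{\ell^*(h)}(h)=1)$, but that is all this argument needs.
\item The paper's intended conclusion under a slower drift $\alpha+hr_n$ with $r_n\to0$ and $\sqrt n\,r_n\to\infty$. Then your Step~1 gives sample signs equal to $\mathrm{sgn}(x_t'h)$ with probability tending to one. The region $A_{\ell^*(h)}$ is nonempty because it contains $\alpha+\epsilon h$. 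So $\mathbf A_h=\overline{A_{\ell^*(h)}}$ for every $h\in\mathcal H(\alpha)$, and your Step~2 completes the argument.
\end{enumerate}
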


Theorem~\ref{th:QRSEdrift} excludes directions 
$h\notin\mathcal{H}(\alpha)$, i.e.\ those for which $x_t'h=0$ 
for some $x_t\in\mathcal{X}_0$. Let us explain why. For $h\notin\mathcal{H}(\alpha)$, the tie at $x_t$ 
persists at first order along $\alpha_n(h)$, and the sample 
maximum score objective fluctuates among the regions 
$\{\overline{A}_\ell:\ell\in\mathcal{L}(h)\}$, where 
$\mathcal{L}(h)\subseteq\{1,\ldots,L\}$ collects those indices 
consistent with the partially resolved signs at 
$\{x_t\in\mathcal{X}_0: x_t'h\neq 0\}$. 
Te RSQ weak limit $\mathbf{A}_h^{RSQ}$ 
is still deterministic as the quantile operation for $\tau \in (1/2,1)$ extracts the 
majority intersection of $
\{\overline{A_\ell}: \ell\in\mathcal{L}(h)\}$. However, for a nearby fixed parameter $a \in \mathcal{E}(h)$, the 
directional condition $(a-\alpha)/\|a-\alpha\|\to h/\|h\|$ only 
requires the direction of approach to converge to $h/\|h\|$ 
asymptotically, and does not force $a$ to lie exactly on the 
ray through $h$. Consequently, one may construct sequences of such 
$a$ that include a smaller-order perturbation 
breaking the tie at $x_t\in \{x_s\in\mathcal{X}_0: x_s'h= 0\}$, while still 
satisfying the directional condition. Along such sequences, 
$\mathcal{X}_0(a)=\varnothing$ and $\mathbf{A}^{RSQ}(a)=
\overline{A_{\ell(a)}}$ for some $\ell(a)$ determined by the 
sign of $x_t'(a-\alpha)$ at $x_t \in \{x_s\in\mathcal{X}_0: x_s'h= 0\}$, which differs 
from $\mathbf{A}_h^{RSQ}$ as the majority intersection will be a a subset of 
$\overline{A_{\ell(a)}}$. 
Hence local robustness fails for $h\notin\mathcal{H}(\alpha)$, 
and the restriction to $\mathcal{H}(\alpha)$ in 
Theorem~\ref{th:QRSEdrift} is tight.

The local robustness argument extends to Theoretical examples~2 
and~3 with only model-specific adaptations. In each case the 
key ingredient is a first-order expansion of the relevant 
probability differential under $\alpha_n(h)$, which pins down 
the sign of the differential and hence the resolved regime, 
provided $h$ lies in the appropriate set $\mathcal{H}(\alpha)$.

For the static panel binary choice model of 
Section~\ref{sec:theoreticalMSpanel}, the relevant differential 
is $\Delta P_{\alpha_n(h)}(Y=1| \mathbf x_t)=P_{\alpha_n(h)}(Y_2=1| \mathbf x_t)
-P_{\alpha_n(h)}(Y_1=1| \mathbf x_t)$. Under the assumption that 
$F_{\epsilon|\mathbf X}(\cdot| \mathbf x)$ is differentiable at~$0$ with 
$\frac{dF_{\epsilon|\mathbf X}(v|x)}{dv}\big\vert_{v=0} =:f_{\epsilon|\mathbf X}(0|\mathbf x)\in(0,\infty)$, the same Taylor expansion 
as in Lemma~\ref{lemmaLdriftasymp} gives 
$$\Delta P_{\alpha_n(h)}(Y=1|  \mathbf x_t)
=f_{\epsilon|\mathbf X}(0|\mathbf x_t)\,
\frac{(x_{t2}-x_{t1})'h}{\sqrt{n}}+o(n^{-1/2}) $$
for each $\mathbf x_t\in\mathcal{X}_0$. The sign is therefore 
$\mathrm{sgn}((x_{t2}-x_{t1})'h)$, and the natural set of 
tie-resolving directions is 
$\mathcal{H}(\alpha)=\{h:(x_{t2}-x_{t1})'h\neq 0\ 
\forall \mathbf x_t\in\mathcal{X}_0\}$. With this adaptation, the 
proof of Theorem~\ref{th:QRSEdrift} applies straightforwardly, replacing 
$x_t'h$ with $(x_{t2}-x_{t1})'h$ throughout, and the RSQ 
estimator built on the conditional maximum score is locally 
robust at every $\alpha\in\mathcal{A}$ with respect to 
$\mathcal{H}(\alpha)$. 

For the multinomial choice model of 
Section~\ref{sec:theoreticalmultinomial}, the relevant 
differentials are $P_{\alpha_n(h)}(Y=j|\mathbf{x}_t)-
P_{\alpha_n(h)}(Y=k|\mathbf{x}_t)$ for each tied pair 
$(j,k)$ at each $\mathbf{x}_t\in\mathcal{X}_0$. Under the 
assumption that the c.d.f.\ of $\varepsilon_{ij}-\varepsilon_{ik}$  conditional on $\mathbf{x}$
is differentiable at~$0$ with $\frac{dF_{\varepsilon_j-\varepsilon_k|\mathbf{x}}(v|\mathbf{x})}{dv}\big\vert_{v=0} =:f_{jk}(0|\mathbf{x})\in(0,\infty)$, an analogous expansion gives:
$$P_{\alpha_n(h)}(Y=j|\mathbf{x}_t)-
P_{\alpha_n(h)}(Y=k|\mathbf{x}_t)
=f_{jk}(0|\mathbf{x}_t)\,
\frac{(x_{t,j}-x_{t,k})'h}{\sqrt{n}}+o(n^{-1/2}).$$
The 
tie-resolving set is therefore:
$$\mathcal{H}(\alpha)=\{h\in\mathbb{R}^K:
(x_{t,j}-x_{t,k})'h\neq 0\text{ for all }\mathbf{x}_t
\in\mathcal{X}_0\text{ and all tied pairs }(j,k)\text{ at }
\mathbf{x}_t\},$$
and the RSQ estimator built on the block maximum score is 
locally robust at every $\alpha\in\mathcal{A}$ with respect 
to $\mathcal{H}(\alpha)$. The proof again follows 
Theorem~\ref{th:QRSEdrift} rather straightforwardly, replacing $x_t'h$ with 
$(x_{t,j}-x_{t,k})'h$ for each tied pair and invoking 
conditions \eqref{C3part1}--\eqref{C3part3} as verified for 
the block maximum score in the Appendix.

\subsection{Feasible version of the random set quantile estimator}
\label{sec:feasible}

The estimator $\widehat{A}_{RSQ,\tau}$ as defined in 
\eqref{quantile:estimator} is infeasible because the coverage 
function $p(u;\overline{\widehat{A}})=\PP(u\in\overline{\widehat{A}})$ 
is a population object depending on the unknown distribution of 
$(Y,X)$. To construct a feasible version, we approximate the 
coverage function using an $m$-out-of-$n$ bootstrap.

We describe this procedure for an RSQ estimator based on a  general classical estimator in Section  \ref{sec:general_theorems}. Given the original data sample $\{(y_i,x_i)\}_{i=1}^n$, draw 
with replacement $m$ observations to obtain bootstrap samples 
$(y^*_1,x^*_1),\ldots,(y^*_m,x^*_m)$ with $m\to\infty$ and 
$m=o(n)$. For each bootstrap replication $b=1,\ldots,B$, compute 
the estimate on the $b$-th bootstrap sample as  
$$\widehat{A}^{(b)} := \mathrm{Argmax}_{a\in\mathcal{A}}
\sum_{x\in\mathcal{X}_{m,b}} w(x)\,
s\!\big(\phi(\widehat{F}^{(b)}(Y| X=x)),\psi(x,a)\big)
\,\widehat{P}^{(b)}(X=x),$$
where $\mathcal{X}_{m,b}$ is the sample support of the $b$-th 
bootstrap sample, $\widehat{F}^{(b)}(\cdot| X=x)$ and 
$\widehat{P}^{(b)}(X=x)$ are the corresponding plug-in estimators 
of $F(Y| X=x)$ and $P(X=x)$.

Using the closures $\overline{\widehat{A}^{(1)}},\ldots,
\overline{\widehat{A}^{(B)}}$ of the bootstrap maximum score 
estimates, construct the bootstrap approximation to the coverage 
function:
\begin{equation}\label{eq:bootstrap_coverage}
\widehat{p}(u) := \frac{1}{B}\sum_{b=1}^B 
\mathbf{1}\!\left\{u\in\overline{\widehat{A}^{(b)}}\right\}.
\end{equation}
The feasible random set quantile estimator is then defined as 
\begin{equation}\label{eq:feasible_RSQ}
\widehat{A}^{boot}_{RSQ,\tau} := 
\left\{u\in\mathcal{A}: \widehat{p}(u) \geq \tau\right\}, \quad \tau \in \left(\frac{1}{2},1\right).
\end{equation}
We now establish the consistency and robustness of the feasible RSQ estimator in Theoretical example 1. 

\begin{theorem}[Consistency of the feasible RSQ estimator]
\label{th:bootstrap_consistency}
Consider an i.i.d. sample $\{(y_i,x_i)\}_{i=1}^n$ from the 
cross-sectional binary choice model under the assumptions of 
Section~\ref{sec:theoreticalMSbinary} and the discrete setup of 
Section~\ref{sec:discrete_setup}.  Suppose that the parameter space has a relative interior in the $(k-1)$-dimensional normalized space.  

Let 
$\{(y^*_i,x^*_i)\}_{i=1}^m$ be i.i.d.\ bootstrap replicates 
drawn with replacement from $\{(y_i,x_i)\}_{i=1}^n$. Suppose 
that $m/n+1/m=o(1)$ as $n\to\infty$. Then for $\tau\in(1/2,1)$:
$$d_H\!\left(\widehat{A}^{boot}_{RSQ,\tau},\,
\overline{\mathcal{A}_0}\right)\stackrel{p}{\longrightarrow}0.$$
\end{theorem}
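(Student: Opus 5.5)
The plan is to show that, conditionally on the data, the bootstrap coverage function $\widehat p(u)$ approximates the coverage function implied by the limit law from Theorem \ref{th:MSgeneral2}. Once that holds, the quantile step in \eqref{eq:feasible_RSQ} reproduces the majority-intersection argument used for Theorem \ref{th:estimatorafterQRSE}.

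\textbf{Step 1: the bootstrap sees the same finite collection of sets.} Because $X$ has finite support, every bootstrap maximizer $\widehat A^{(b)}$ is determined by the sign pattern of $\widehat P^{(b)}(Y=1|x)-\gamma$ over $x\in\mathcal X_{m,b}$. I would condition on the event that $\mathcal X_{m,b}=\mathcal X$ and that $\widehat P^{(b)}(X=x)>0$ for all $x$; this has probability tending to one because $m\to\infty$.

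For $x\notin\mathcal X_0$, two facts combine. The original-sample estimate $\widehat P(Y=1|x)$ lies within $O_p(n^{-1/2})$ of a value bounded away from $\gamma$. The bootstrap fluctuation is $O_p(m^{-1/2})$ conditionally. Hence the bootstrap sign equals the true regime $m(x)$ with conditional probability tending to one. Consequently, with conditional probability tending to one, $\overline{\widehat A^{(b)}}$ belongs to the finite family $\{\overline{A_\ell}\}$, via the same argument as in Theorem \ref{th:general1}(c) and the verification of (C1)--(C3).

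\textbf{Step 2: the conditional sign distribution at tied points.} This is the main obstacle. For $x_t\in\mathcal X_0$, write
$$\sqrt m(\widehat P^{(b)}(Y=1|x_t)-\gamma)=\sqrt m(\widehat P^{(b)}-\widehat P)+\sqrt{m/n}\,\sqrt n(\widehat P-\gamma).$$
The second term is $O_p(\sqrt{m/n})=o_p(1)$, because $\sqrt n(\widehat P-\gamma)=O_p(1)$ by Proposition \ref{prop:bccross_probest2}(a). This is exactly where $m=o(n)$ is used. The first term converges conditionally in distribution to $\mathcal N(0,\Sigma)$ in probability, by a conditional CLT for multinomial resampling (a Lindeberg argument on cell counts).

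Since the Gaussian limit is centred and nondegenerate, and $\Sigma$ is positive definite, each orthant has positive limiting probability. The conditional probability of the sign profile $(m_1,\dots,m_T)$ therefore converges in probability to the same limit as in Proposition \ref{prop:bccross_probest2}(b). The key point is that this limit does not depend on the noise realized in the original sample.

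Combining this with Theorem \ref{th:MSgeneral2}, I expect to obtain
$$\PP^*(\overline{\widehat A^{(b)}}=\overline{A_\ell})\stackrel{p}{\to}1/L$$
for each $\ell$. The delicate part is controlling the joint convergence across all of $\mathcal X$ uniformly in probability. I would try a Pólya-type argument: orthant probabilities are continuity sets of the Gaussian limit, so conditional weak convergence transfers to them directly.

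\textbf{Step 3: from probabilities to the set estimate.} Let $B\to\infty$, or take $B=B_n\to\infty$ and apply Hoeffding's inequality over the finitely many sets. Then
$$\widehat p(u)=\sum_\ell \mathbf 1\{u\in\overline{A_\ell}\}\,\widehat\pi_\ell+r_n,\qquad \max_\ell|\widehat\pi_\ell-1/L|=o_p(1),$$
with $\sup_u|r_n|=o_p(1)$. Fix $\tau\in(1/2,1)$. With probability tending to one, the set $\{\widehat p\ge\tau\}$ equals $\{u:\#\{\ell:u\in\overline{A_\ell}\}\ge\lceil\tau L\rceil\}$, because the coverage counts are integer multiples of $1/L$ and are therefore separated from the threshold $\tau$.

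When $\tau$ is not a multiple of $1/L$, this is a union of intersections of at least $\lfloor L/2\rfloor+1$ of the closed sets $\overline{A_\ell}$. By Theorem \ref{th:general4}, each such intersection equals $\overline{\mathcal A_0}$, so the Hausdorff distance to $\overline{\mathcal A_0}$ is zero with probability tending to one. If $\tau L$ is an integer, a small slack argument handles the boundary: replace $\lceil\tau L\rceil$ by the nearby value, which still exceeds $L/2$.

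When $\mathcal X_0=\varnothing$, Step 1 alone gives $\overline{\widehat A^{(b)}}=\overline{\mathcal A^*}=\overline{\mathcal A_0}$ with conditional probability tending to one, by Theorem \ref{th:MSgeneral2} part 2, and the result follows.
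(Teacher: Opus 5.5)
Your proposal follows the paper's proof essentially step for step: $m=o(n)$ makes the original-sample centering $\sqrt{m/n}\,\sqrt n(\widehat P-\gamma)$ negligible, so the conditional CLT (the paper cites Bickel--Freedman) gives equal $2^{-T}$ sign-profile probabilities at the tied points; the bootstrap region probabilities then inherit the limits $1/L$ from Theorem \ref{th:MSgeneral2}; and thresholding at $\tau>1/2$ uses the majority-intersection property of Theorem \ref{th:general4}. Your finite-family decomposition of $\widehat p$ handles uniformity in $u$ slightly more cleanly than the paper's pointwise argument, and your slack step for integer $\tau L$ is unnecessary, since each $u$ lies in all $L$ closures or in at most $\lfloor L/2\rfloor$ of them, so $p_\infty(u)\in\{1\}\cup[0,\tfrac12]$ (the bound off $\overline{\mathcal A_0}$ in fact only needs equal limiting mass within each flip pair $\mathbf m\leftrightarrow\widetilde{\mathbf m}$, which the symmetry $Z\mapsto -Z$ of the Gaussian limit provides).
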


\begin{theorem}[Local robustness of the feasible RSQ estimator]
\label{th:bootstrap_robustness}
Under the same conditions as Theorem \ref{th:bootstrap_consistency} 
and the additional c.d.f. differentiability assumption of Lemma \ref{lemmaLdriftasymp}, when $\alpha_n(h)=\alpha+h/\sqrt{n}$, we have that  
$\widehat{A}^{boot}_{RSQ,\tau}$ is locally robust at every 
$\alpha\in\mathcal{A}$ with respect to the same $\mathcal{H}(\alpha)$ as defined in 
Theorem \ref{th:QRSEdrift}. 
\end{theorem}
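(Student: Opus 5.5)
Throughout I fix a direction $h\in\mathcal H(\alpha)$. The plan is to show that, under the drifting sequence $P_{\alpha_n(h)}$ and under each nearby fixed parameter $a\in\mathcal E(h)$, the feasible estimator $\widehat A^{boot}_{RSQ,\tau}$ converges in probability to the same deterministic closed set. That set is the closure $\overline{A_{\ell(h)}}$ of the region selected by the sign pattern $\sigma_t(h)=\mathrm{sgn}(x_t'h)$, $x_t\in\mathcal X_0$. If both limits are the same deterministic set, $\mathbf A_h$ and $\mathbf A(a)$ coincide (eventually along $\mathcal E(h)$), and Definition \ref{def:localrobustness} holds trivially.

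\textbf{Nearby fixed parameters.} Take $a\in\mathcal E(h)$ close to $\alpha$. Write $a-\alpha=\|a-\alpha\|(h/\|h\|+o(1))$. Because $x_t'h\neq 0$ for every $x_t\in\mathcal X_0$ and $\mathcal X$ is finite, we get $\mathrm{sgn}(x_t'(a-\alpha))=\sigma_t(h)$ for $a$ close enough to $\alpha$. We also get $\mathrm{sgn}(x'a)=\mathrm{sgn}(x'\alpha)$ for $x\notin\mathcal X_0$. Hence $\mathcal X_0(a)=\varnothing$, and the DGP at $a$ falls under Case 2 of Theorem \ref{th:MSgeneral2}. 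The coarse set $\mathcal A^*(a)$ equals $A_{\ell(h)}$, the region in which each ambiguous equality is replaced by the inequality with sign $\sigma_t(h)$. Theorem \ref{th:bootstrap_consistency} applied at $a$ then gives $\widehat A^{boot}_{RSQ,\tau}\to\overline{\mathcal A_0(a)}=\overline{A_{\ell(h)}}$ in probability. This uses that (C4) makes closures of the strict and weak versions agree. So $\mathbf A(a)=\overline{A_{\ell(h)}}$ is degenerate.

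\textbf{Drifting sequence.} Under $P_{\alpha_n(h)}$, Lemma \ref{lemmaLdriftasymp} gives $P_{\alpha_n(h)}(Y=1|x_t)-\gamma=f(0|x_t)x_t'h/\sqrt n+o(n^{-1/2})$. Each bootstrap sample of size $m$ has its own estimation error of order $m^{-1/2}$ around $\widehat P(Y=1|x_t)$. The original-sample error $\widehat P-P_{\alpha_n(h)}$ is $O_p(n^{-1/2})$, so the bootstrap centre is $\gamma+O_p(n^{-1/2})$. I would compare this centre with the bootstrap fluctuations, which are of order $m^{-1/2}\gg n^{-1/2}$. This is the main obstacle: the $\sqrt n$ drift is swamped at the bootstrap scale, so each bootstrap tie resolves with conditional probability tending to $1/2$. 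The bootstrap coverage would then reproduce the Case 1 picture. That yields the majority intersection $\overline{\mathcal A_0(\alpha)}$, not $\overline{A_{\ell(h)}}$.

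To resolve this, I would first examine the joint law more carefully. Conditionally on the data, the bootstrap sign at $x_t$ is governed by $\sqrt m(\widehat P^{*}-\widehat P)+\sqrt m(\widehat P-\gamma)$. The second term is $\sqrt{m/n}\,O_p(1)\to 0$, so the conditional sign probabilities converge to $1/2$ independently across $t$, by Proposition \ref{prop:bccross_probest2}. Consequently $\widehat p(u)\to 2^{-T}\#\{\ell:u\in\overline{A_\ell}\}$-type frequencies, and the feasible limit under drift is $\overline{\mathcal A_0(\alpha)}$.

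On the fixed-$a$ side I would then redo the argument at the matching scale rather than fixing $a$ first. The definition takes $\limsup$ over $a\in\mathcal E(h)$ after the $n$-limit. At fixed $a$ with $\|a-\alpha\|$ small but fixed, the drift $x_t'(a-\alpha)$ eventually dominates $m^{-1/2}$, so the limit is $\overline{A_{\ell(h)}}$. The two limits then differ, and Definition \ref{def:localrobustness} as stated would fail. So the step I must settle first is which drift scale makes the statement hold.

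My plan is to check whether the intended drift for the bootstrap version should be read at rate $m^{-1/2}$, i.e.\ whether Lemma \ref{lemmaLdriftasymp} applied with $m$ in place of $n$ governs the resampled estimator. If it does, the signs at $x_t$ are pinned by $\sigma_t(h)$ plus a Gaussian shift $f(0|x_t)x_t'h$, and the coverage limit is a continuous function of $h$. Under the model $\alpha_n(h)=\alpha+h/\sqrt n$, I would instead show that both sides are computed as deterministic limits of coverage functions and argue as in Theorem \ref{th:QRSEdrift}.

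The paper asserts that the infeasible estimator is robust there, and the feasible coverage converges to the infeasible one by the $m$-out-of-$n$ consistency used in Theorem \ref{th:bootstrap_consistency}. With that convergence, the feasible limits $\mathbf A_h$ and $\mathbf A(a)$ equal the infeasible ones, and robustness transfers directly. I expect this transfer, including reconciling the $\sqrt n$ versus $\sqrt m$ scales, to be the core difficulty.
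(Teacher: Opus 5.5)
Your fixed-$a$ step is correct and matches the paper's. For $a\in\mathcal E(h)$ near $\alpha$ you have $\mathcal X_0(a)=\varnothing$, and the gap $|P_a(Y=1|x_t)-\gamma|$ is a fixed constant that dominates the $O_p(m^{-1/2})$ resampling noise. Hence $\widehat A^{boot}_{RSQ,\tau}\to\overline{\mathcal A_0(a)}=\overline{A_{\ell(h)}}$.

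Your drifting computation is also correct, and it is exactly where the paper's own proof breaks. Under $\PP_{\alpha_n(h)}$ you have $\sqrt m\bigl(\widehat P(Y=1|x_t)-\gamma\bigr)=\sqrt{m/n}\,O_p(1)\to0$. So every sign pattern on $\mathcal X_0$ has bootstrap probability tending to $2^{-T}$, and $\widehat p(u)\to p_\infty(u)=|\{\ell:u\in\overline A_\ell\}|/L$, the same function as at $\alpha$. As in the proof of Theorem~\ref{th:bootstrap_consistency}, its $\tau$-superlevel set for $\tau\in(1/2,1)$ is $\overline{\mathcal A_0(\alpha)}$. The paper displays this same $p_\infty$ under $\PP_{\alpha_n(h)}$ and then asserts that thresholding gives $\overline{\mathcal A_0(\alpha_n(h))}=\overline{A_{\ell(h)}}$. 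That conclusion contradicts the displayed limit, and it amounts to applying a fixed-DGP consistency result along a drifting sequence.

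The gap in your proposal is the final paragraph, and it cannot be filled. Theorem~\ref{th:bootstrap_consistency} holds pointwise in the DGP. Along $\alpha_n(h)$ with $m=o(n)$, the resample sees the drift only through $\sqrt{m/n}\,f_{u|X}(0|x_t)\,x_t'h\to0$. So the feasible coverage tracks the coverage at $\alpha$, not the infeasible coverage under the drift; this is precisely what you computed two paragraphs earlier, and the transfer contradicts it.

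Re-reading the drift at rate $h/\sqrt m$ changes the theorem and still does not help. The bootstrap sign probabilities become $\Phi\bigl(f_{u|X}(0|x_t)\,x_t'h/\sigma_t\bigr)$ with $\sigma_t^2=\gamma(1-\gamma)/\PP(X=x_t)$. These are near $1/2$ for small $\|h\|$, so the $\tau$-level set is again $\overline{\mathcal A_0(\alpha)}$ rather than $\overline{A_{\ell(h)}}$.

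Taken together, your two computations are a counterexample to the statement under Definition~\ref{def:localrobustness} as written. In Illustrative Design~1, case (1B), $\mathbf A_h=\overline{A^{(B)}_1}\cap\overline{A^{(B)}_2}=\overline{\mathcal A_0(\alpha)}$, a segment, for every $h\in\mathcal H(\alpha)$. By contrast, $\mathbf A(a)=\overline{A^{(B)}_{\ell(h)}}$ is two-dimensional for all $a\in\mathcal E(h)$ near $\alpha$. So $d_H(\mathbf A_h,\mathbf A(a))$ stays bounded away from zero and the supremum in the definition equals $1$, not $0$.

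What your argument does establish is a weaker regularity property: $\mathbf A_h=\overline{\mathcal A_0(\alpha)}=\mathbf A(\alpha)$ for every $h$. This gives the continuity of $h\mapsto\mathcal L(\mathbf A_h)$ noted after Definition~\ref{def:localrobustness}. You should state that, rather than try to force the drifting and fixed-$a$ limits to coincide.
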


The key mechanism driving robustness of the feasible estimator 
is the interplay between the rates $n^{-1/2}$ and $m^{-1/2}$. 
Under the drifting sequence $\alpha_n(h)$, the probability gap 
$P_{\alpha_n(h)}(Y=1|x_t)-\gamma=O(n^{-1/2})$ at $x_t\in\mathcal{X}_0$ 
is of the same order as the original sample fluctuation, so the 
bootstrap with rate $m^{-1/2}\gg n^{-1/2}$ randomizes over the 
sign at $x_t$ and reproduces the correct coverage function. Under a 
fixed nearby parameter $a\in\mathcal{E}(h)$, the gap 
$P_a(Y=1|x_t)-\gamma$ is a fixed positive constant, so the 
bootstrap fluctuation $O_p(m^{-1/2})$ is asymptotically negligible  
relative to it, and the bootstrap concentrates on the single 
region $\overline{A_{\ell(h)}}$ determined by the sign of 
$x_t'h$. In both cases the feasible RSQ estimator recovers 
$\overline{A_{\ell(h)}}$, establishing robustness.

Thus, condition $m=o(n)$ ensures the rate separation in the sense that 
 $m^{-1/2}\gg n^{-1/2}$ under 
$\PP_{\alpha_n(h)}$ but $m^{-1/2}\ll |P_a(Y=1|x_t)-\gamma|$ 
under $\PP_a$. It also illustrates why  the $m$-out-of-$n$ bootstrap is the 
natural tool for this problem rather than the standard $n$-out-of-$n$ 
bootstrap, which would not achieve the correct rate separation.

For Theoretical examples 2 and 3 consistency and robustness of the feasible RSQ estimators based on their respective conditional maximum score and block maximum score estimators can be shown to be consistent and robust in an analogous way.

The RSQ estimator is now fully characterized, It is consistent and locally robust in both its infeasible and feasible forms, for all three classes of models. What remains is to understand how it compares to the maximum score estimator it is built upon. The comparison turns out to be clean as the two estimators share the same local robustness properties, and differ only in consistency. The next section establishes this formally.

\section{Comparison of RSQ estimator to Maximum Score}\label{sec:classic}



Section \ref{sec:quantile} implies that the RSQ estimator is consistent and locally robust across all three theoretical examples. We now establish that the maximum score estimator shares the robustness property  (but, of course,  not consistency).

Below we establish that the maximum score estimator is locally robust  with respect to the tie-breaking directions with respect to the same $\mathcal{H}(\alpha)$ as the RSQ estimator  at any $\alpha$ in the DGP. A powerful takeaway from  this is  the result that it is locally robust even at those $\alpha$ in the DGP where it is inconsistent. 

For simplicity, we start with the maximum score estimator in Theoretical example 1. 

\begin{theorem}[Local robustness of maximum score estimator] \label{th:MSdrift}
Consider a cross-sectional binary choice model under assumptions discussed in Section \ref{sec:theoreticalMSbinary} and within the discrete setup given in Section \ref{sec:discrete_setup}. Suppose that the parameter space has a relative interior in the $(k-1)$-dimensional normalized space. Also suppose the additional condition on the density from Lemma \ref{lemmaLdriftasymp}. 

Let $\alpha$ be the parameter value in the DGP. Consider sequence of the DGP parameters $\alpha_n(h) = \alpha + h/\sqrt{n}$, where $h \in \mathbb{R}^k$ is chosen in such a way that $\alpha_n(h) \in \mathcal{A}$. 
Then the maximum score estimator $\widehat A$ is locally
robust at $\alpha$ in the sense of Definition \ref{def:localrobustness} applied to the closure $\overline{\widehat{A}}$ and with respect to 
$$\mathcal H(\alpha)=\{h: x_t'h \neq 0 \quad \forall x_t \in \mathcal{X}_0\}$$ 
(and, of course, $h$ need to comply with normalization constraints to ensure that $\alpha_n(h)$ belong to the normalized space -- we take this property as given).
\end{theorem}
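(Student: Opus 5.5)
The plan is to compute the two limits in Definition \ref{def:localrobustness}, the drifting limit $\mathbf A_h$ and the fixed-parameter limit $\mathbf A(a)$, and show that both are the same deterministic region $\overline{A_{\ell(h)}}$. This region is determined by the sign pattern $(\mathrm{sgn}(x_t'h))_{x_t\in\mathcal X_0}$. Fix $h\in\mathcal H(\alpha)$. For $x\notin\mathcal X_0$ the gap $P_\alpha(Y=1|x)-\gamma$ is bounded away from zero. By continuity of $F_{u|X}$ it stays bounded away from zero under $\alpha_n(h)$ and under any $a$ near $\alpha$. So at those points the estimated regime equals $m(x)$ with probability approaching one, exactly as in Proposition \ref{prop:bccross_probest1}(a), which extends to triangular arrays because $\widehat P(Y=1|x)-P_{\alpha_n}(Y=1|x)=O_p(n^{-1/2})$ uniformly along the sequence. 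Only the points $x_t\in\mathcal X_0$ matter.

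\textbf{Drifting limit.} By Lemma \ref{lemmaLdriftasymp},
\[
\sqrt n\bigl(\widehat P(Y=1|x_t)-\gamma\bigr)=f_{u|X}(0|x_t)\,x_t'h+\sqrt n\bigl(\widehat P(Y=1|x_t)-P_{\alpha_n(h)}(Y=1|x_t)\bigr)+o(1).
\]
A triangular-array (Lindeberg) version of Proposition \ref{prop:bccross_probest2}(a) gives $\sqrt n(\widehat P(Y=1|x_t)-\gamma)_{t\le T}\stackrel{d}{\to}Z_h\sim\mathcal N(\mu_h,\Sigma)$ with $\mu_{h,t}=f_{u|X}(0|x_t)x_t'h$. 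Since $\Sigma$ is positive definite, $Z_h$ has no atoms on coordinate hyperplanes, and the continuous mapping theorem applies to the sign vector. The argument of Theorem \ref{th:general1}(c) then carries over. Conditions \eqref{C3part1}--\eqref{C3part3} hold verbatim, since they only use the score structure \eqref{scorefunction_bc} and nondegeneracy of the limit. Therefore $\overline{\widehat A}\stackrel{W}{\to}\mathbf A_h:=\overline{A_{\ell(\mathrm{sgn} Z_h)}}$, a random set. Each $\overline{A_\ell}$ has probability $\PP(\mathrm{sgn}Z_h=\mathbf m_\ell)$, with profiles giving empty $A(\mathbf m)$ absorbed as in the verification of (C11). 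Here $\mathbf A_h$ is realized on a common probability space via $Z_h=\mu_h+\Sigma^{1/2}\xi$ with a common $\xi\sim\mathcal N(0,I)$.

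\textbf{Fixed nearby parameter.} For $a\in\mathcal E(h)$ we have $P_a(Y=1|x_t)-\gamma=f_{u|X}(0|x_t)x_t'(a-\alpha)+o(\|a-\alpha\|)$. Since $(a-\alpha)/\|a-\alpha\|\to h/\|h\|$ and $x_t'h\neq0$, for $a$ close enough to $\alpha$ the sign equals $\mathrm{sgn}(x_t'h)$ and is nonzero. Hence $\mathcal X_0(a)=\varnothing$, and the other points keep their regimes. Case 2 of Theorem \ref{th:MSgeneral2} gives $\mathbf A(a)=\overline{\mathcal A^*(a)}$, which is deterministic. We then show $\overline{\mathcal A^*(a)}\to\overline{A_{\ell(h)}}$ in $d_H$, where $\ell(h)$ is the region with $m_{\ell,t}=1$ iff $x_t'h>0$. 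Nonemptiness of this region follows from $\alpha+\varepsilon h\in A_{\ell(h)}$ for small $\varepsilon$, using the relative interior of $\mathcal A$. The constraints defining $\mathcal A^*(a)$ and $A_{\ell(h)}$ are identical linear inequalities $x'a'\gtrless0$, because regimes are determined by signs at $\alpha$ and at $a$ alike. So in fact $\mathcal A^*(a)=A_{\ell(h)}$ exactly for $a$ near $\alpha$ in $\mathcal E(h)$. Convergence in $d_H$ is then trivial.

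\textbf{Main obstacle.} The drifting limit $\mathbf A_h$ is random, mixing over $\mathrm{sgn}(Z_h)$, whereas $\mathbf A(a)$ is the single region $\overline{A_{\ell(h)}}$. So $\PP(d_H(\mathbf A_h,\mathbf A(a))>\varepsilon)$ equals $\PP(\mathrm{sgn}Z_h\neq\mathrm{sgn}\,\mu_h)$ whenever the regions are $d_H$-separated. This probability is not zero for fixed $h$; it vanishes only as $\|h\|\to\infty$. I would first try to read the supremum over $\|h\|>0$ together with the scaling $\mu_{th}=t\mu_h$ and argue along rays. If that fails, I would reinterpret the paper's intended claim as comparing $\mathbf A_h$ with the limit of $\mathbf A(\alpha_n(h))$ under the same coupling, i.e.\ the direction-wise law. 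In that reading the key step reduces to showing that the laws of $\mathbf A_h$ vary continuously in $h$ on $\mathcal H(\alpha)$ and concentrate on $\overline{A_{\ell(h)}}$ as $\|h\|$ grows along the direction. I expect this reconciliation of the random drifting limit with the deterministic fixed-parameter limit to be the crux, and I would settle it before writing the rest.
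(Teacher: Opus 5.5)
You do not reach a proof, but the obstacle you isolate is real. With Definition \ref{def:localrobustness} read literally, it is fatal: your two limit computations are correct, and together they refute the statement whenever $\mathcal X_0\neq\varnothing$. The case $\mathcal X_0=\varnothing$ is trivial.

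Under $\PP_{\alpha_n(h)}$ the signs at $\mathcal X_0$ do not concentrate. Write $s_{\ell,t}=1$ if $m_{\ell,t}=1$ and $s_{\ell,t}=-1$ otherwise. Then the drifting limit is $\mathbf A_h=\overline{A_{\ell^*(Z_h)}}$ with $\ell^*(z)=\arg\max_{\ell}\sum_t \PP(X=x_t)\,z_t\,s_{\ell,t}$. When $\mathrm{sgn}\,Z_h$ is an infeasible profile, the selected region depends on magnitudes. The symmetric reallocation used in the (C11) verification does not carry over when $\mu_h\neq0$, so your $\PP(\mathrm{sgn}Z_h\neq\mathrm{sgn}\,\mu_h)$ should read $\PP(\ell^*(Z_h)\neq\ell(h))$.

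Since $\mathbf A(a)=\overline{A_{\ell(h)}}$ for $a\in\mathcal E(h)$ near $\alpha$, for small $\varepsilon$ the inner $\limsup$ equals $\PP(\ell^*(Z_h)\neq\ell(h))$. This is strictly positive for every $h$. As $\|h\|\downarrow0$ along a ray it tends to $\PP(\ell^*(Z_0)\neq\ell(h))\geq 1/2$; this follows from $Z_0\stackrel{d}{=}-Z_0$ and the fact that feasible profiles are closed under flips.

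Concretely, take Illustrative Design 1, case (1B), and let $\sigma^2=\gamma(1-\gamma)/\PP(X=x^b)$. Then $\mathbf A_h=\overline{A^{(B)}_1}$ with probability $\Phi\bigl(f_{u|X}(0|x^b)\,x^{b\prime}h/\sigma\bigr)$ and $\overline{A^{(B)}_2}$ otherwise. Meanwhile $\mathbf A(a)=\overline{A^{(B)}_1}$ for $a\in\mathcal E(h)$ whenever $x^{b\prime}h>0$. So the supremum in the definition is $1/2$, not $0$. Because that supremum ranges over all $\|h\|>0$, reading it along rays cannot help. The statement as written is false.

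You are not missing an idea from the paper. Its proof gets the deterministic limit $\mathbf A_h=\overline{A_{\ell(h)}}$ in two steps. It notes that $\mathcal X_0(\alpha_n(h))=\varnothing$ for large $n$, and then invokes Theorem \ref{th:MSgeneral2} part 2 ``applied at $\alpha_n(h)$''. That step is invalid. Part 2 is a fixed-DGP result resting on Proposition \ref{prop:bccross_probest1}(a), which needs $\sqrt n\,|P(Y=1|x)-\gamma|\to\infty$. Along the drift, $\sqrt n\,(P_{\alpha_n(h)}(Y=1|x_t)-\gamma)\to f_{u|X}(0|x_t)\,x_t'h$ stays bounded. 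This is exactly your $Z_h$ calculation, and it matches the paper's own remark that the drift is of the same order as the estimation noise. The paper explains that robustness fails for $h\notin\mathcal H(\alpha)$ because $\mathbf A_h$ is random while $\mathbf A(a)$ is deterministic. Once $\mathbf A_h$ is computed correctly, that explanation applies verbatim to $h\in\mathcal H(\alpha)$.

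What your computation does prove is a local-to-fixed bridge of the Andrews--Guggenberger type, and your closing suggestion should be turned into this:
(i) $h\mapsto\mathcal L(\mathbf A_h)$ is continuous in $h$. This holds because $\mu_h$ is linear in $h$ and $\ell^*$ is locally constant off finitely many hyperplanes, which $Z_h$ charges with probability zero.
(ii) $\mathcal L(\mathbf A_h)\to\mathcal L(\mathbf A(\alpha))$ as $h\to0$, because $\mathbf A(\alpha)=\overline{A_{\ell^*(Z_0)}}$.
(iii) For $h\in\mathcal H(\alpha)$, $\PP(\mathbf A_{th}\neq\overline{A_{\ell(h)}})\to0$ as $t\to\infty$. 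The profile $\mathrm{sgn}\,\mu_h=\mathbf m^{(h)}$ is feasible by your $\alpha+\varepsilon h\in A_{\ell(h)}$ argument, and $t\mu_h$ eventually dominates the Gaussian noise. So the drifting limits connect to $\mathbf A(a)$ for $a\in\mathcal E(h)$.

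That is the version of local robustness that actually holds, and it is the one you should state and prove.
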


The maximum score estimator 
$\widehat{A}$ and the RSQ estimator 
$\widehat{A}_{RSQ,\tau}$ are both locally robust 
for the same reason: any $h\in\mathcal{H}(\alpha)$ 
resolves all ties in $\mathcal{X}_0$, making both 
the drifting limit $\mathbf{A}_h$ and the 
fixed-parameter limit $\mathbf{A}(a)$ concentrate 
on the same region $\overline{A_{\ell(h)}}$. 

Both theorems \ref{th:MSdrift} and \ref{th:QRSEdrift}  restrict 
attention to $h\in\mathcal{H}(\alpha)$, i.e.\ directions 
that resolve all ties in $\mathcal{X}_0$. For the RSQ estimator we have already discussed why 
for  directions 
$h\notin\mathcal{H}(\alpha)$ (meaning $x_t'h=0$ for 
some $x_t\in\mathcal{X}_0$), local robustness fails. 
For $h\notin\mathcal{H}(\alpha)$ local robustness will also fail for the maximum score estimator, but for a different reason. When 
$x_t'h=0$, the tie at $x_t$ persists along 
$\alpha_n(h)$ for all $n$ since 
$P_{\alpha_n(h)}(Y=1|x_t)-\tau=f_{u|X}(0|x_t)x_t'h/
\sqrt{n}=0$. Hence $\mathcal{X}_0(\alpha_n(h))\ni x_t$ 
for all $n$, and by Theorem~\ref{th:MSgeneral2} part~1 
the drifting limit $\mathbf{A}_h$ of the maximum score estimator is a genuinely random 
set, supported on the subcollection $\{A_\ell:\ell\in
\mathcal{L}(h)\}$ of regions consistent with the 
partially resolved signs at 
$\{x_s\in\mathcal{X}_0:x_s'h\neq 0\}$. By contrast, 
for any fixed $a\in\mathcal{E}(h)$ sufficiently close 
to $\alpha$, the directional condition 
$(a-\alpha)/\|a-\alpha\|\to h/\|h\|$ only requires 
the direction to converge to $h/\|h\|$ asymptotically 
and does not force $x_t'(a-\alpha)=0$. For any fixed 
$a$ with all $x_t'(a-\alpha)$  nonzero, we have  
$\mathcal{X}_0(a)=\varnothing$ and $\mathbf{A}(a)=
\overline{A_{\ell(a)}}$ is deterministic. Since 
$\mathbf{A}_h$ is random while $\mathbf{A}(a)$ is 
deterministic, $\PP(d_H(\mathbf{A}_h,\mathbf{A}(a))=0)
<1$ and robustness fails. Thus, for maximum score robustness fails for $h \notin \mathcal{H}(\alpha)$ because its drifting limit 
$\mathbf{A}_h$ is random while $\mathbf{A}(a)$ is 
deterministic. This is qualitatively different from the RSQ estimator,  where robustness failed for $h \notin \mathcal{H}(\alpha)$  because its drifting limit $\mathbf{A}_h^{RSQ}$ was strictly smaller  than 
$\mathbf{A}^{RSQ}(a)$, with the gap 
reflecting residual ambiguity at those covariate values in $\mathcal{X}_0$ that the 
drift direction $h$ leaves unresolved.

In a nutshell, the 
distinction between the the maximum score and RSQ  estimators is  
not local robustness  but consistency. The RSQ estimator  
$\widehat{A}_{RSQ,\tau}$ is always consistent whereas the maximum score $\widehat{A}$ fails to be consistent  at $\alpha$ for which 
$\mathcal{X}_0\neq\varnothing$.

The local robustness of maximum score estimators in Theoretical examples 2 and 3 can be established in an analogous way with a suitable choice of $\mathcal{H}(\alpha)$, as discussed previously in the context of the random set quantile estimator.

Let us look at the implications of Theorem \ref{th:MSdrift}  for Illustrative Design 2. When $\alpha \notin \{0,1\}$, the conditional maximum score is consistent and drifting does not impact its limit. 
When $\alpha \in \{0,1\}$, the conditional maximum score is inconsistent for $\mathcal{A}_0$ at those $\alpha$. If,  for concreteness, $\alpha=1$, then  the maximum score estimator converges in distribution to $d \cdot [0,1) +(1-d)[1,+\infty)\cap \mathcal{A} $ with dummy variable $d$ taking values 0 and 1 with equal probabilities. 

When $h > 0$ in the drifting $\alpha_n(h)=\alpha+\frac{h}{\sqrt{n}}$, then the maximum score estimator puts a point mass of 1 on the set $[1,+\infty) \cap \mathcal{A}$. When $h<0$, then the maximum score estimator puts a point mass of 1 on the set $[0,1)$.  Thus, as $h$ varies in $\mathcal{H}(\alpha)$,  the regions visit all possible deterministic sets in the distribution limit of the maximum score estimator. 

Thus, as $h$ varies from $0$ 
to all the directions in $\mathcal{H}(\alpha)$,  the distribution of the limit random set varies from equal randomization between $[0,1)$
and $[1,+\infty)\cap \mathcal{A}$ (when $h=0$) to selecting
a fixed set $[1,+\infty)\cap \mathcal{A}$ or $[0,1)$. Thus, this
choice of the drifting sequence bridges these  cases. As a result, even though the maximum score estimator is not consistent at that point, it is locally robust.

The theoretical analysis is now complete. The RSQ estimator dominates the maximum score estimator on our criteria as it adds consistency at no cost to robustness. The following section further illustrates these ersults in finite samples in a small scale simulation study and  Section \ref{application} brings this conclusion to data.

\section{Monte Carlo simulations and practical guide}
\label{sec:monte_carlo}

In this section, we first conduct a Monte Carlo simulation study to evaluate the finite-sample performance of the feasible Random Set Quantile (RSQ) estimator and compare it against the standard maximum score estimator. The experimental setup adapts is built in Illustrative Design 1(A) and 1(B) introduced in Section \ref{sec:illustrativedesigns}. 

We then present a guide to practitioners on the choice of quantile indices in the finite sample and what to expect in practice. 

\subsection{Monte Carlo}

\noindent \textbf{Setting.} We consider a binary choice model in the Illustrative Design 1: $Y_i = \mathbf{1}\{\alpha_0 + X_{1i} + \alpha_2 X_{2i} - u_i \geq 0\}, $ where $Med(u_i|X_{1i}.X_{2i})=0$, and the covariates $(X_{1i}, X_{2i})$ are drawn from a discrete support $\{(0,1), (1,0)\}$ with probabilities:
$P((X_1,X_2) = (0,1)) = q > 0.5$, $P((X_1,X_2) = (1,0)) = 1-q$.

In our simulations, we set $q = 0.7$ and take $u_i \sim \mathcal{N}(0,1)$ (we will assume that the researcher only knows/is confident about the median independence restriction). 


We analyze the following two settings: (i) Illustrative Design 1(A): $\alpha_0 = -1$, $\alpha_2 = 1$ (case of point identification); (ii) special case within Illustrative Design 1(B): $\alpha_0 = -1$, $\alpha_2 = 0.5$ (identified set has dimension 1).

\vskip 0.1in 

\noindent \textbf{Estimation procedure} We evaluate the estimators over four sample sizes: $n \in \{10, 100, 1000, 10000\}$. The utility indices  $a_0+1$ (obtained when $(X_1,X_2)=(1,0)$) and $a_0+a_2$ (obtained when $(X_1,X_2)=(0,1)$ naturally split the parameter space into 9 cells, which are depicted as $S_1$, \ldots, $S_9$ in Figure \ref{fig:coverage_design1}: $S_1$ is obtained when $<;<$ (both indices are negative), $S_2$ is when $=;<$ (first index is zero, the second one is negative), $S_3$ is when $>;<$, $S_4$ is when $>;=$, $S_5$ is when $>;>$, $S_6$ is when $=;>$, $S_7$ is when $<;>$, $S_8$ is when $<;=$, $S_9$ is when $=;=$. 

In Illustrative Design 1A we have $\mathcal{A}_0$ associated with $S_9$, $\mathcal{A}^*$ being the  parameter space (associated with the union of all these 9  cells), The maximum score estimator asymptotically fluctuates among four sets $A_{(k,j)}$, $k,j=1,2,$ introduced in an earlier discussion of this design in Section \ref{sec:generaltheorems_applications}. Set $A_{(1,1)}$ is associated with $S_4 \cup S_5 \cup S_6 \cup S_9$, Set $A_{(1,1)}$ is associated with $S_4 \cup S_5 \cup S_6 \cup S_9$ and is closed. Set $A_{(2,1)}$ is associated with $S_2 \cup S_3$ with its closure being associated with $S_2 \cup S_3 \cup S_4 \cup S_9$. Set $A_{(1,2)}$ is associated with $S_7 \cup S_8$ with its closure being associated with $S_6 \cup S_7 \cup S_8 \cup S_9$. Set $A_{(2,2)}$ is associated with $S_1$ with its closure being associated with $S_1 \cup S_2 \cup S_8 \cup S_9$.  

In Illustrative Design 1B with our choice of simulation parameters we have $\mathcal{A}_0$ associated with $S_2$ and $\mathcal{A}^*$ associated with $S_1\cup S_2 \cup S_3$. The maximum score estimator asymptotically fluctuates among two sets $A_1^{(B)}$ and $A_2^{(B)}$ (notation used in our earlier discussion of this design in Section \ref{sec:generaltheorems_applications}. Set  $A_1^{(B)}$ is associated with $S_2\cup S_3$ and its closure is associated with  $S_2\cup S_3 \cup \cup S_4 \cup S_9$. Set  $A_2^{(B)}$ is associated with $S_1$ and its closure is associated with  $S_1 \cup S_2\cup S_8 \cup S_9$
. 
We compare the following estimators. 
\begin{enumerate}
    \item \textit{Standard Maximum Score estimator.} We apply Manski's maximum score estimator to the full Monte Carlo sample. 
    
    In both our designs in each sample the maximum score objective function is optimized at one of $A_{(k,j)}$, $k,j=1,2$, or their various unions but these four sets are the smallest sets which can be maximizers in the sample. 

    In our case of Illustrative Design 1B, as mentioned earlier, according to our theory only $A_1^{(B)}=A_{(2,1)}$ and $A_2^{(B)}=A_{(2,2)}$  will survive asymptotically.

    \item \textit{Random Set Quantile (RSQ) estimator:} We construct the empirical coverage function using an $m$-out-of-$n$ bootstrap procedure. The RSQ estimator evaluates the bootstrap probability that the maximum-score closure contains each set $j$. The estimated set is formed by the cells whose bootstrap probability exceeds a strict majority (i.e., $0.5 + \delta$). Thus, the feasible RSQ estimator is represented as a subset of the finite 9-cell partition.To test robustness, we evaluate the estimator using four subsample size rules: $m \sim n^{0.1}, n^{0.25}, n^{0.5}$, and $n^{0.75}$.

    According to our theory,  the RSQ estimator asymptotically recovers $S_9$  in Illustrative Design 1A and asymptotically recovers $S_2 \cup S_9$ (associated with the closure of the identified set) in our case of Illustrative Design 1B. 
\end{enumerate}

One of  estimators' performance metrics is the \textit{identified-set containment frequency}.

\noindent \textbf{Simulation results.} 
\begin{figure}[h]
    \centering
     \includegraphics[width=0.7\textwidth]{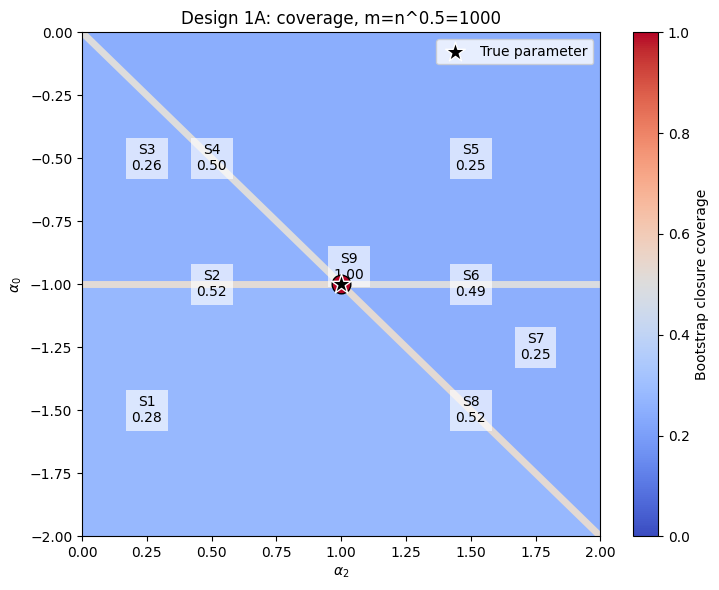}
    \caption{Empirical coverage function for Illustrative Design 1A (point identification case).
For each parameter-space cell, the figure reports the bootstrap probability
that the closure of the bootstrap maximum-score estimator contains that cell.
The computation uses a very large simulated sample
($n=1,000,000$) to approximate the population coverage function.}
    \label{fig:coverage_design1}
\end{figure}

\begin{figure}[h]
    \centering
     \includegraphics[width=0.7\textwidth]{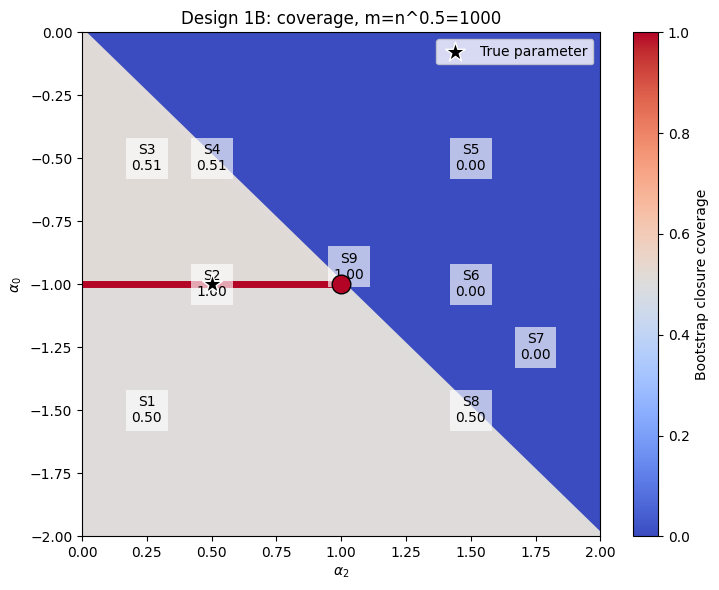}
    \caption{Empirical coverage function for Design 1B ($\alpha_0 = -1, \alpha_2 = 0.5$: partial identification with 1-dimensional identified set). The plot visualizes the bootstrap probability that the closure of the maximum-score selected set contains each parameter space cell for a large simulated sample ($n = 1,000,000$).}
    \label{fig:coverage_design2}
\end{figure}

Figures \ref{fig:coverage_design1} and \ref{fig:coverage_design2}
display the empirical coverage function over the nine-cell partition for
Illustrative Designs 1A and 1B, respectively. The figures are computed using a very large
Monte Carlo sample ($n=10{,}000{,}000$), so the empirical conditional probabilities are already
very close to their population counterparts. Despite the large sample size, the
behavior of the feasible RSQ estimator is driven primarily by the bootstrap
subsample size $m$. In the figures, we deliberately take $m$ relatively small
compared to $n$ in order to illustrate the type of finite-sample behavior that
may realistically arise in practice.

In Illustrative Design 1A, the true identified set corresponds to the singleton
cell $S_9$, which receives empirical coverage essentially equal to one.
Several neighboring cells receive coverage slightly above $0.5$, with the
largest nuisance coverage approximately equal to $0.52$. Consequently, for any
quantile level $\tau>0.52$, the feasible RSQ estimator selects only the true
cell $S_9$. Thus, the estimator successfully recovers the point-identified set
over a broad range of quantile levels.

The reason why neighboring nuisance cells still receive nontrivial coverage
probabilities, and therefore why the feasible RSQ estimator produces strict
supersets of $S_9$ for $\tau\in(0.5,0.52]$, is the variability induced by the
$m$-out-of-$n$ bootstrap itself. Although the original Monte Carlo sample is
extremely large, the bootstrap resamples involve substantially smaller effective
sample sizes, so neighboring maximum-score regions continue to appear with
positive probability in bootstrap draws.

In Illustrative Design 1B, the target set for the RSQ estimator is $S_2\cup S_9$, corresponding to
the closure of the identified set. 
The empirical coverage of $S_2 \cup S_9$ is  equal to one.  All nuisance cells receive substantially
smaller coverage probabilities, approximately up to  $0.51$. Hence, for any
\(
\tau\in(0.51,1),
\) 
the feasible RSQ estimator correctly recovers the closure of the identified
set.


Tables \ref{tab:identified_set_containment_design1} and
\ref{tab:identified_set_containment_design2} report empirical identified-set
containment frequencies for Illustrative Designs 1A and 1B, respectively.
For each estimator, the reported containment frequency is the proportion of
Monte Carlo replications in which the estimated set contains the true
identified set $\mathcal A_0$ (equivalently, contains all cells associated
with $\mathcal A_0$ in the nine-cell partition).

The results reveal substantial differences between the standard maximum score
estimator and the proposed RSQ procedure. In both designs, the standard maximum
score estimator behaves according to predictions from our asymptotic theory, Namely. As the
sample size increases, the containment probability of the standard maximum
score estimator converges to $1/4$ in  Design 1A\footnote{The maximum score estimator asymptotically fluctuates among four sets $A_{j,k}$, $j,k=1,2$, with equal probabilities $1/4$ and it is only one of those sets that contains the identified set corresponding to $S_9$.} and converges to $1/2$ in Design 1B\footnote{The maximum score estimator asymptotically fluctuates between two sets with equal probabilities $1/2$ and it is only one of those sets that contains the identified set corresponding to $S_2$.} 

By contrast, the RSQ estimator exhibits strong containment performance across
both designs. In both the point-identified and partially identified settings,
the containment frequency approaches $1$ as the sample size increases.
Moreover, the results are remarkably stable across all considered bootstrap
subsample rules, ranging from $m\sim n^{0.1}$ to $m\sim n^{0.75}$.

Overall, the simulations demonstrate that the standard maximum score estimator
does not reliably recover the identified set or its closure either in finite samples or asymptotically, whereas the
proposed RSQ procedure delivers highly accurate containment performance and
appears relatively insensitive to the precise choice of the bootstrap
subsample scaling parameter $m$ within the range considered here.

\begin{table}[htbp]
\centering
\begin{tabular}{lcccc}
\hline
 & \multicolumn{4}{c}{$n$} \\
Estimator / subsample rule & 10 & 100 & 1000 & 10000 \\
\hline
Maximum score & 0.365 & 0.280 & 0.275 & 0.240 \\
RSQ, $m\sim n^{0.1}$ & 1.000 & 1.000 & 1.000 & 1.000 \\
RSQ, $m\sim n^{0.25}$ & 1.000 & 1.000 & 1.000 & 1.000 \\
RSQ, $m\sim n^{0.5}$ & 1.000 & 1.000 & 1.000 & 1.000 \\
RSQ, $m\sim n^{0.75}$ & 1.000 & 1.000 & 1.000 & 1.000 \\
\hline
\end{tabular}
\caption{Identified-set containment frequencies for Illustrative Design 1A ($\alpha_0=-1, \alpha_2=1$).}
\label{tab:identified_set_containment_design1}
\end{table}

\begin{table}[h]
\centering
\begin{tabular}{lcccc}
\hline
 & \multicolumn{4}{c}{$n$} \\
Estimator / subsample rule & 10 & 100 & 1000 & 10000 \\
\hline
Maximum score & 0.520 & 0.575 & 0.530 & 0.515 \\
RSQ, $m\sim n^{0.1}$ & 0.620 & 0.745 & 0.905 & 0.990 \\
RSQ, $m\sim n^{0.25}$ & 0.630 & 0.915 & 1.000 & 1.000 \\
RSQ, $m\sim n^{0.5}$ & 0.635 & 0.970 & 1.000 & 1.000 \\
RSQ, $m\sim n^{0.75}$ & 0.670 & 0.995 & 1.000 & 1.000 \\
\hline
\end{tabular}
\caption{Identified-set containment frequencies for Illustrative Design 1B ($\alpha_0=-1, \alpha_2=.5$).}
\label{tab:identified_set_containment_design2}
\end{table}

\subsection{Practical guide}

The Monte Carlo results provide a useful benchmark for 
practical implementation of the feasible RSQ estimator. 
In our designs with a very small discrete support, the 
bootstrap random set stabilized sharply and the 
informative quantile region is easy to identify. In 
empirical applications, however, the support is typically 
richer and the bootstrap random set correspondingly more 
complex. In practice, we recommend trying two or three values of 
$m$ spanning the admissible range and checking that 
conclusions are stable across them. %
One  practical lower bound is that each bootstrap resample 
should contain enough observations to estimate 
$\widehat{P}(Y=1|x)$ at every support point.

The theory guarantees consistency for any 
$\tau\in(1/2,1)$, but the right approach in a finite 
sample is not to fix $\tau$ in advance. Instead, compute 
$\widehat{A}_{RSQ,\tau}$ over a fine grid of $\tau$ 
values and examine the resulting sequence. Because the 
family is nested ($\widehat{A}_{RSQ,\tau_2}\subseteq 
\widehat{A}_{RSQ,\tau_1}$ for $\tau_2>\tau_1$) it 
contracts monotonically as $\tau$ increases, doing so in 
discrete jumps separated by flat intervals over which the 
estimated set is unchanged. We call these flat intervals 
\emph{plateaus}. Each plateau corresponds to a set of 
parameter restrictions that survive in at least a 
$\tau$-fraction of bootstrap draws, and the width of the 
plateau interval measures how robustly those restrictions 
are present in the bootstrap distribution. Lower plateaus 
yield conservative sets close to the full-sample maximum 
score estimate; higher plateaus provide sharper nested 
refinements. We recommend reporting the full plateau 
sequence rather than a single estimate, as this gives a 
transparent picture of which restrictions are robust 
across what fraction of bootstrap draws.  

It is also natural that the $m$-out-of-$n$ bootstrap RSQ estimator  becomes empty  for $\tau$ near 1. The emptiness for $\tau$ near 1 is driven by the fact that each bootstrap draw
resamples all observations and,  as a result, even support points whose full-sample functional values $\phi(\widehat{F}(y|x))$ are comfortably in $D_{m(x)}$ (in the Theoretical Example 1 with the median restriction this would mean their choice probabilities are comfortably away from $1/2$) may occasionally
be resampled in a way that puts it in a different $D_{\widetilde{m}}$ (in the Theoretical Example 1 this means reversing the corresponding sample
inequality). These occasional perturbations accumulate across bootstrap draws. This should 
not be interpreted as evidence of point identification.

\section{Empirical application}\label{application}
The UK General Election in 2019 marked a significant development for the Labour Party as it faced a decline in its constituency victories. With a total of 650 constituencies, Labour secured only 202 seats during this electoral contest, a historic low both in terms of numerical count and proportion since the year 1935. Various media analyses pointed towards the aftermath of the Brexit referendum as a contributing factor to Labour's electoral setbacks. A telling example is a headline from The Guardian that succinctly captured the sentiment: "It was Brexit, not left-wing policies, that lost Labour this election."\footnote{\tiny \url{https://www.theguardian.com/politics/2020/jun/18/key-points-from-review-of-2019-labour-election-defeat}}

The central question for our analysis is whether the 2016 Brexit referendum Leave vote did indeed reduce  Labour's probability of retaining constituencies it had held, and whether this effect operated differently in Labour-held versus non-Labour-held areas. This question is well suited to our methodology. First, the semiparametric binary choice framework imposes minimal distributional assumptions on voter preferences. Second, the discrete support of Brexit-related covariates generates exactly the partial identification structure our theory analyzes, making the RSQ estimator the appropriate tool.



Our units of observation are UK constituencies and the outcome variable $Y_i$ is an indicator for the same 
party winning a given constituency $i$ in both the 2015 and 2019 General Elections. This formulation focuses on seat retention rather than vote share, which is the economically relevant margin for a party's parliamentary presence.  We construct five covariates: (1) $X_{i1}$ is the  indicator for a constituency voting 
``Leave'' in the Brexit referendum; (2) $X_{i2}$ is the  indicator for the 2015 General Election 
result being within a 5\% margin, capturing fiercely 
contested constituencies prior to Brexit; (3) 
$X_{i3}$ is an ordered variable (0, 1, 2) for mean 
constituency income growth from 2015 to 2019, taking 
value 0 for negative growth (8.15\% of constituencies), 
1 for positive but below-median growth (41.84\%), and 
2 for above-median growth (50\%); (4) $X_{i4}$ is the indicator for Labour winning the 
constituency in 2015; (5) $X_{i5}$ is the interaction between $X_{i1}$ and $X_{i4}$, 
capturing the differential effect of the Leave vote 
on constituencies held by Labour in 2015.

Table~\ref{table:App1} summarizes these variables, and Table~\ref{Tab2} shows the joint distribution of Labour victories in 2015 and Leave votes. Of 650 constituencies, 519 (79.8\%) retained the same winning party across elections. Of the 232 constituencies Labour held in 2015, 149 (64.2\%) also voted Leave. This configuration that sits at the heart of Labour's 2019 difficulties and at the heart of our analysis.

\begin{table}[!ht]
\caption{Summary statistics.}
\label{table:App1}
\vskip 0.05in 
\par
\begin{center}
\begin{tabular}{lcccc}
\hline\hline
Variable & Mean & St. dev. & Min & Max \\ \hline
Same party wins constituency in GE 2019 & 0.7985 &   0.4015 & 0 & 1 \\ 
Indicator for ``Leave'' vote &     0.6246  &   0.4846    &      0    &      1  \\ 
Indicator if the GE 2015 was within 5\% margin & 0.0877    & 0.2831 &          0  &        1 \\ 
2015-to-2019 mean income growth category & 1.4184 &   0.6380      &    0   &       2 \\
Indicator if Labour won in 2015 & 0.3569 & 0.4795 & 0 & 1 \\

Indicator if Labour won in 2015 $\times$ Indicator for ``Leave'' vote & 0.2292 &     0.4207 &          0   &       1
\\ \hline
\end{tabular} 
\end{center}
\end{table}

\begin{table}[!ht]
\begin{center}
\caption{Joint counts of Labour winning in 2015 and 
Brexit vote across constituencies.}
\label{Tab2}
\vskip 0.05in
\begin{tabular}{lrr}
\hline\hline
& Labour won 2015 & Labour did not win 2015\\
\hline
Voted ``Leave'' & 149 & 257 \\
Voted ``Remain'' & 83 & 161 \\
\hline
\end{tabular}
\end{center}
\end{table}

We estimate the semiparametric binary choice model as in Theoretical Example 1, with the 
median restriction $Q_{1/2}(u_i|X_i)=0$ and covariate vector 
${X}_i=(1,X_{i1},\ldots,X_{i5})'$ and normalize 
the coefficient on $X_{i2}$ to $-1$, reflecting the 
natural prior that a close 2015 election is 
negatively associated with retaining a seat in 2019. 
The normalized parameter vector is 
${\alpha}=(\alpha_0,\alpha_1,-1,\alpha_3,
\alpha_4,\alpha_5)'$. 

The sample support of $X_i$ contains 23 distinct values. 
Of these, 4 have $\widehat{P}(Y=1|X=x)<0.5$, 17 have 
$\widehat{P}(Y=1|X=x)>0.5$, and 2 have 
$\widehat{P}(Y=1|X=x)=0.5$ exactly. These two support 
points are unambiguously in $\mathcal{X}_0$ at the sample 
level and the maximum score estimator places no restriction on 
$x'\alpha$ at these points. However, the 
population $\mathcal{X}_0$  may be 
larger. Support points whose estimated probabilities are 
close to but not exactly $0.5$ in the full sample may 
belong to the population $\mathcal{X}_0$ but fail to land 
exactly on $0.5$ due to sampling variation. The histogram 
of estimated choice probabilities in 
Figure~\ref{fig:Empiriccs_preliminary} shows that several 
support points have $\widehat{P}$ in the range close to 0.6  
and these are plausible candidates for population membership 
in $\mathcal{X}_0$. The $m$-out-of-$n$ bootstrap handles 
this naturally as by resampling observations of size $m$, 
support points near the $0.5$ boundary cross it in many 
bootstrap draws even when they do not cross it in the full 
sample, allowing the bootstrap coverage function to capture 
the identifying information these near-boundary points 
carry. The two support points where $\widehat{P}=0.5$ 
exactly generate the dominant randomization across 
$A_1,\ldots,A_L$ in the bootstrap, but the near-boundary 
points contribute to the progressive sharpening visible 
across the plateau sequence, and their influence is 
precisely why the RSQ estimate delivers restrictions that 
go beyond what the full-sample maximum score system implies.

\noindent \textbf{Maximum score estimate.} The maximum score estimate $\widehat{A}$ is the set of all 
$\alpha=(\alpha_0,\alpha_1,-1,\alpha_3,\alpha_4,\alpha_5)'$ 
satisfying 17 weak inequality constraints (from support 
points where $\widehat{P}(Y=1|x)>0.5$) and 4 strict 
inequality constraints (from those where 
$\widehat{P}(Y=1|x)<0.5$). This system is feasible and 
the resulting set has a nonempty interior in the normalized 
five-dimensional space, confirming partial identification. For the full set of inequalities see the Appendix. 

From the structure of $\widehat{A}$ we can already establish 
several directional conclusions. The constraints imply 
$\alpha_1\geq 0$, that is, the Leave vote is weakly positive for 
non-Labour constituencies, consistent with Leave areas 
outside Labour's base tending to consolidate behind the 
Conservatives. The constraints also imply 
$\alpha_1+\alpha_5<0$, that is, the combined effect of the Leave 
vote in Labour-held constituencies is strictly negative, 
so the interaction term $\alpha_5$ is sufficiently negative 
to overwhelm $\alpha_1$. This confirms that the Leave vote 
damaged Labour's prospects specifically where Labour was 
defending seats. We can make even more of the interpretation of $\widehat{A}$ if we work with the latent 
utility indices: $U^*_{00}=\alpha_0-X_2+\alpha_3 X_3$ (Remain, non-Labour 2015); $U^*_{01}=\alpha_0-X_2+\alpha_3 X_3+\alpha_4$ (Remain, Labour 2015); $U^*_{10}=\alpha_0+\alpha_1-X_2+\alpha_3 X_3$ (Leave, non-Labour 2015); $U^*_{11}=\alpha_0+\alpha_1-X_2+\alpha_3 X_3
+\alpha_4+\alpha_5$ (Leave, Labour 2015).  The fact that $\widehat{{A}}\subseteq
[0,1.5]\times[0,+\infty)\times[-0.5,0.5]\times
[0,+\infty)\times(-\infty,0].$ implies  
$U^*_{01} \geq U^*_{00}, \quad U^*_{10} \geq U^*_{00}, \quad U^*_{01} \geq U^*_{11}.$
Since the system of inequalities for $\widehat{A}$ additionally implies that  $\alpha_4+\alpha_5 <0$ and $\alpha_0+\alpha_1+\alpha_4+\alpha_5 \geq 0$, we also get 
$U^*_{11} < U^*_{10}, \quad U^*_{11} \geq U^*_{00}.$
The sign of $\alpha_4-\alpha_1$ is undetermined in maximum score meaning we cannot rank  $U^*_{01}$ and $U^*_{10}$.  The  estimator thus documents the direction of the main effect but  cannot resolve the full electoral ordering.

We next examine whether random set quantile estimation gives sharper empirical conclusions.

\noindent \textbf{Random set quantile estimate.} 
We implement the feasible RSQ estimator using $m=550$ and $B=5,000$ bootstrap draws. We implement the feasible RSQ estimator using $B=5{,}000$ 
bootstrap draws and $m=550$. The choice of $m$ is guided 
 by, first,  
$n=650$ being small by the standards of applications where 
asymptotic rate separation between $m^{-1/2}$ and 
$n^{-1/2}$ is easily achieved, and, second,  by the identifying variation in our application 
coming substantially from support points whose estimated 
probabilities are close to but not exactly $0.5$. As 
discussed above, these near-boundary points are plausible 
members of the population $\mathcal{X}_0$ and their 
contribution to the RSQ estimate depends on the bootstrap 
randomizing their sign reliably across draws. This choice of $m$ ensures that 
near-boundary points cross the $0.5$ threshold in a 
meaningful fraction of bootstrap draws.

As $\tau$ 
increases from $1/2$ toward 1, the RSQ estimate contracts 
monotonically by construction. This 
contraction occurs in discrete steps, producing a sequence 
of stable plateaus (intervals of $\tau$ over which the 
estimated set is constant). These plateaus are  
empirically informative with each representing  a set of 
parameter restrictions that survive at least a 
$\tau$-fraction of bootstrap draws. Higher plateaus 
corresponding to more stringent 
restrictions. In our application the plateau structure is quite robust  to the choice of $m$ with other $m$ resulting only in minor changes in the exact starting and end points of the corresponding $\tau$-intervals. This indicates that the plateau structure is not driven by a particular bootstrap tuning choice.

Table~\ref{tab:plateaus} summarizes the three main 
plateaus. The RSQ estimate is empty for $\tau \geq 0.87$, reflecting 
the finite-sample bootstrap variability discussed in 
Section~\ref{sec:monte_carlo}. 

\begin{table}[!ht]
\begin{center}
\caption{RSQ estimate plateaus. $\widehat{A}$ denotes the 
full-sample maximum score estimate.}
\label{tab:plateaus}
\vskip 0.05in
\begin{tabular}{llc}
\hline\hline
$\tau$ range & Additional restrictions beyond $\widehat{A}$ & RSQ affine dimension 
 \\ \hline
$(0.50,\; 0.62)$ & $\alpha_0 \leq 1$, \; 
$\alpha_0 + \alpha_1 + 2\alpha_3 + \alpha_4 + \alpha_5 
\leq 1$  & 5\\
$[0.62,\; 0.69)$ & Above, plus $\alpha_0 + 2\alpha_3 = 1$,
\; $\alpha_1+\alpha_4+\alpha_5\leq 0$  & 4 \\
$(0.71,\; 0.84]$ & $\alpha_0 = 1$, \; $\alpha_3 = 0$, \; 
$\alpha_1 + \alpha_4 + \alpha_5 = 0$  & 2\\ 
\hline
\end{tabular}
\end{center}
\end{table}


The first plateau for $\tau_1\in(0.5,0.62)$ adds two inequality restrictions beyond the maximum score region. Both restrictions arise from the two support points in $\mathcal{X}_0$, which are the same points the maximum score estimator ignores. 

The second plateau for $\tau\in[0.62,0.69)$  strengthens the first in two ways. It adds the equality $\alpha_0+2\alpha_3=1$
 and the inequality $\alpha_1+\alpha_4+\alpha_5\leq 0$. The inequality is already substantively important as it says that the net effect of the Leave vote on Labour-held constituencies is weakly negative, meaning that the strongly negative interaction term $\alpha_5$  
is large enough in absolute value to at least cancel the combined positive contributions of the general Leave effect $\alpha_1$ and the Labour incumbency advantage $\alpha_4$.  Labour-Leave constituencies were therefore no better placed than the baseline in terms of seat retention, and possibly worse. 


The third plateau for $\tau\in(0.71,0.839]$ sharpens the second plateau's inequality to an equality, simultaneously pinning down $\alpha_0 = 1$
$\alpha_3 = 0$, and $\alpha_1 + \alpha_4 + \alpha_5 = 0$ in the maximum score set. The progression from the second to the third plateau is the key inferential step since  what was established as a weakly negative net effect in the second plateau is now identified as exactly zero. The Labour incumbency advantage and the general Leave-area effect together  were precisely canceled by the negative interaction. Labour-Leave constituencies were pushed exactly to the decision boundary, a 50\% probability of changing hands, rather than being merely disadvantaged relative to their prior safe-seat status. This distinction between a negative effect and an exact cancellation is economically meaningful as it says that the Brexit backlash  erased Labour's defensive advantage without additional margin, leaving these seats as knife-edge contests rather than inevitable losses. 

We report the third plateau as our robust empirical core for the following reasons. Its $\tau$
-interval is the widest of the three plateaus and it fully incorporates the randomization of the signs of inequalities corresponding to the two support points where 
$\widehat{P}(Y=1|x)=0.5$ (that is, the radnomization of $\alpha_0\geq (\leq ) 1$,\;
$\alpha_0+\alpha_1+2\alpha_3+\alpha_4+\alpha_5\geq (\leq) 1$; inequalities $\leq $ are weak due to us considering the closure). The restriction $\alpha_0 = 1$
 pins down the baseline utility of retaining a constituency for a non-Leave, non-Labour, non-close-margin seat with negative income growth. The restriction $\alpha_3 = 0$
indicates that income growth over the 2015--2019 period had no differential effect on seat retention conditional on the other covariates, consistent with the view that the Brexit signal dominated economic conditions as an electoral driver during this period. The restriction $\alpha_1 + \alpha_4 + \alpha_5 = 0$
has been  discussed above.

 To make the progression concrete, recall the latent utility indices. At the second plateau, the inequality $\alpha_1 + \alpha_4 + \alpha_5 \leq 0$
implies $U^*_{11} \leq U^*_{00}$
which means that Labour-Leave constituencies were no better placed than the non-Labour Remain baseline, and possibly worse. The maximum score estimate had already established some orderings, so by the second plateau we know Labour-Leave constituencies were at the bottom of the retention ordering relative to all other types. What remained open was whether they sat strictly below the baseline or exactly at it. The third plateau resolves this since $\alpha_1 + \alpha_4 + \alpha_5 = 0$
 implies $U^*_{11} = U^*_{00}$ meaning that Labour-Leave constituencies are at the same retention probability as the non-Labour Remain baseline. Thus, Labour-Leave  seats were not driven below the baseline into near-certain loss, but rather stripped of their incumbency protection and left as genuine $50-50$ contests. 

 This finding sharpens and qualifies the Guardian narrative that Brexit lost Labour the election. The RSQ estimate does not support the view that the Leave vote uniformly damaged Labour. Rather, the damage was targeted and it cancelled Labour's incumbency advantage in the specific constituencies where Labour held seats and the local majority had voted Leave, a configuration affecting 149 of Labour's 232 held seats. For non-Labour Leave constituencies, the Leave vote if anything supported continuity. The mechanism was not broad electoral punishment of Labour but a precise erosion of the defensive advantage that should have protected Labour's existing seats in Leave areas. 

 What the RSQ estimator cannot resolve, even at the third plateau, is the ranking between $U^*_{01}$
(Remain Labour) and $U^*_{10}$ as 
the sign of $\alpha_1 - \alpha_4$
 remains undetermined. In others words, we cannot say semiparametrically whether a Remain-voting Labour seat was more or less likely to be retained than a Leave-voting non-Labour seat. The data do not contain sufficient variation to resolve this comparison without distributional assumptions  and the RSQ estimator correctly reports it as the boundary of what is identified.
 

\noindent \textbf{Comparison with Probit.} Table~\ref{table:Probit} reports Probit estimates as a benchmark. Since the ideal maximum score inequality system is feasible, standard results imply Probit estimates lie within $\widehat{A}$ (after  normalisation), which is confirmed. The Probit point estimate gives $\widehat{\alpha}_1 + \widehat{\alpha}_4 + \widehat{\alpha}_5 = -0.231$, and a Wald test of $H_0: \alpha_1 + \alpha_4 + \alpha_5 = 0$
fails to reject at conventional levels (
p-value is $0.132$). This is consistent with our RSQ finding. The Probit coefficient on income growth $\widehat{\alpha}_3 = 0.015$ with the  standard error $0.098$ is similarly consistent with the RSQ restriction $\alpha_3=0$ at the third plateau. 

{\small\begin{table}[!ht]
\caption{Estimates from the Probit model}
\label{table:Probit}
\begin{center}
\begin{tabular}{lc}
\hline
Variable & Probit    \\ \hline
Indicator for ``Leave" vote  & 0.6997    \\ 
\quad & (0.1516)   \\ 
Indicator for GE 2015  being within 5\% margin & -0.8948     \\ 
\quad & (0.1820)  \\ 
2015-to-2019 mean income growth category & 0.0145      \\ 
\quad & (0.0981)  \\ 
Indicator if Labour won in 2015 & 1.3352     \\ 
\quad & (0.3007)  \\ 
Indicator if Labour won in 2015 $\times$ Indicator for ``Leave" vote & -2.2655     \\ 
\quad & (0.0911)  \\ 
constant & 0.6486      \\ 
\quad & (0.1608)  \\ 
\hline
\end{tabular}\\
{\small Robust standard errors in  parentheses.} 
\end{center}
\end{table}}

\section{Conclusion}
\label{sec:conclusion}
In this paper we study semiparametric discrete choice models when covariates are discrete, violating the assumption of the continuity of distribution of regressors required to establish point identification. While the parameters of the model are generally only partially identified, depending on the true value of the parameters of the data generating process, the identified set can be a singleton or a non-singleton set. 

As the main contribution in the paper, we  propose a novel class of estimators based on the concept of a quantile of a random set. It uses the random set output by existing estimators, such as the maximum score estimator, and produces an estimator which is both consistent and robust on the entire parameter space. We illustrate the performance of our  estimator and compare it with existing estimators first through a small scale simulation study and then by analyzing the impact of the Brexit referendum vote on outcomes of the 2019 UK General Elections. Our  estimator both performs better and provides more meaningful results than the alternatives. We also show that our framework extends to other important settings including the semiparametric multinomial  choice model  as well as static  panel data models with discrete outcomes.

Our results suggest that care is required when applying standard semiparametric estimators in discrete environments, and that methods explicitly designed for partial identification can yield substantial gains. More broadly, the paper highlights the usefulness of random set methods in econometrics and suggests that similar approaches may be fruitful in other models with set-valued limits.

Several directions for future research remain. First, extending the framework to allow for high-dimensional covariates  may broaden its empirical applicability. Second, developing inference procedures that fully exploit the structure of random set quantiles is an important next step. Finally, applying these methods in substantive empirical contexts may shed further light on the practical importance of the issues we document.

{\normalsize 
\bibliographystyle{econometrica}
\bibliography{overallkkn}
}

\appendix

\section{Appendix}

\subsection{Illustrative design 3 (Multinomial choice under independence of unobservables)}

Consider the model with three choices 
$$X_1'a_1=a_{11}+X_{11}a_{12}, \; \; 
X_2'a_2=a_{21}+X_{21}a_{22}, \; \; 
X_3'a_3=a_{31}+X_{31} a_{32}.$$

Suppose the support of $\mathbf X=(X_{11},X_{21},X_{31})'$ is $\{ (1,0)', (0,1)', (1,1)'\}\times \{0,1\}$ with 6 points in total. In the parameter space, we normalize $a_{12}$, $a_{22}$, $a_{32}$ as $a_{12}=-1$, $a_{22}=1$, 
$a_{32}=-1$ and treat $a_{11}$, $a_{21}$, $a_{31}$ as the only unknown parameters. We take parameter space $\mcA$ to be large and have a nonempty 3-dimensional relative interior in the normalized space.

If e.g. the true parameter values in the DGP are $\alpha_{11}=1$, $\alpha_{21}=0$, $\alpha_{31}=0.5$ (with $\alpha_{12}=-1$, $\alpha_{22}=1$, 
$\alpha_{32}=-1$ complying with the normalization), then  
 the observed choice probabilities satisfy 
$$P(Y=1|\mathbf X=(1,0,0)')=P(Y=2|\mathbf X=(1,0,0)')<P(Y=3|\mathbf X=(1,0,0)'),$$
$$P(Y=1|\mathbf X=(1,0,1)')=P(Y=2|\mathbf X=(1,0,1)')>P(Y=3|\mathbf X=(1,0,1)'),$$
$$P(Y=1|\mathbf X=(0,1, X_{31})')=P(Y=2|\mathbf X=(0,1,x_{31})')>P(Y=3|\mathbf X=(0,1,x_{31})'), \quad x_{31} \in \{0,1\},$$
$$P(Y=2|\mathbf X=(1,1,1)')>P(Y=1|\mathbf X=(1,1,1)')>P(Y=3|\mathbf X=(1,1,1)'),$$
$$P(Y=2|\mathbf X=(1,1,0)')>P(Y=3|\mathbf X=(1,1,0)')>P(Y=1|\mathbf X=(1,1,0)'),$$
The identified set  in the normalized parameter space is the intersection of $\mathcal{A}$ with a set of $(a_{11},-1,a_{21},1,a_{31},-1)^\top$ described by the following relations:  
$$a_{11}-1=a_{21}<a_{31}, \quad a_{11}-1=a_{21}>a_{31}-1,$$
$$a_{11}=a_{21}+1>a_{31}, \quad \alpha_{11}=\alpha_{21}+1>\alpha_{31}-1, \quad \alpha_{21}+1>\alpha_{11}-1>\alpha_{31}-1, \quad \alpha_{21}+1>\alpha_{31}>\alpha_{11}-1$$
with the equalities and inequalities shortened to the following irreducible form: 
$$\mathcal{A}_0 = \left\{(a_{11},-1,a_{21},1,a_{31},-1)^\top \in \mathcal{A}: a_{21}<a_{31}, \quad a_{11}=a_{21}+1>a_{31},  \quad \alpha_{31}>\alpha_{11}-1\right\},$$
which clearly describes a set of affine dimension 2. 

In general, we can describe the identified set either in terms of observed probabilities using the exhaustive characterization or, equivalently, in terms of the true parameter values  $\alpha_{11}$, $\alpha_{21}$, $\alpha_{31}$ in the DGP analogously to what we did in Illustrative Design 1. 

If for $\alpha$ in the DGP  all the support points give three distinct utility indices  (that is,  $x_k'\alpha_{k} \neq x_j'\alpha_{j}$, $x_k'\alpha_{k} \neq x_h'\alpha_{h}$,  $x_j'\alpha_{j} \neq x_h'\alpha_{h}$ for $k,j,h \in \{1,2,3\}$ all different), then the identified set will be described by strict inequalities and will have affine dimension 3. If $x_k'\alpha_{k} = x_j'\alpha_{j}$, $k\neq j$, for some  support points, then the affine dimension of the identified set is strictly less than 3. E.g., in the example described above the dimension of the affine set will be 2 (even though there are two support points at which equal indices occur, the second case does not provide information that is linearly independent of the one given by the first support point). In case $\alpha_{11}=\alpha_{21}=\alpha_{31}$ we will have the identified set of affine dimension 1. In this particular design there will be no cases of point identification (affine dimension 0) due to simplicity of this design and only having intercepts as unknown parameters -- however, point identified would have been possible for some parameters in DGP if we introduced another non-trivial covariate into our index structure. 

\subsection{Proofs in Section \ref{sec:main_estimators_coarse}}

\textbf{Proof of Theorem \ref{th:general1}.} 
(a) Condition (C1) guarantees that the system \eqref{eq:sys_outside}--\eqref{eq:sys_inside} for a non-empty $A_{\ell}$ implies the system for $\mathcal{A}^*$. Hence, $A_{\ell} \subseteq \mathcal{A}^*$.

(b) Finally, if $\ell\neq\ell'$, then for some $t$ we have $m_{\ell. t}\neq m_{\ell',t}$, hence
$D^*_{m_{\ell, t}}\cap D^*_{m_{\ell', t}}=\emptyset$ by Condition (C2). This implies $A_\ell\cap A_{\ell'}=\emptyset$.

(c) Because $\mathcal X$ is finite and for each $x$ the set $\mathcal M_{fs}(x)$ is finite,
there are only finitely many distinct regime-selection profiles in finite samples. 

By (\ref{eq:nonambig1}), for each $x\in\mathcal{X} \setminus \mathcal X_0$ the selected regime converges in  probability to the
true regime $m(x)$. Thus all non-ambiguous constraints converge to the fixed collection
\eqref{eq:sys_outside}. At the ambiguous points $x_t\in\mathcal X_0$, the selected regime can be
any element of $\mathcal M_{as}(x_t)$ with positive probability along subsequences by definition
of $\mathcal M_{as}(x_t)$. 

Condition \eqref{C3part1}  guarantees that the only  asymptotically relevant regimes under which each term in our objective function \eqref{ex_obective_sample_discrete} is maximized are the ones that give one of the sets in $\{A_{\ell}\}_{\ell=1}^L$. Suppose we have an  asymptotic regime $(m_1,\ldots, m_T)$, $m_t \in  \mathcal{M}_{as}(x_t)$, under which  the solution set to \eqref{eq:sys_outside}--\eqref{eq:sys_inside} is empty. Given the non-empty identified set, such emptiness can only come from incompatibility of constraints in \eqref{eq:sys_inside} with each other. Thus, for such regimes we cannot achieve the maximum value of each term. Condition \eqref{C3part1} in (C3), however, guarantees that in this case the maximization  will be looking for another regime within systems \eqref{eq:sys_outside}--\eqref{eq:sys_inside},  but will not be looking for regimes outside $\mathcal{M}_{as}(x_t)$ for a given $x_t$, $t=1,\ldots, T$. This means that only $A_1$,...,$A_L$ are possible asymptotically with a positive probability.

Therefore, asymptotically the only possible limiting solution sets are
non-empty sets generated by \eqref{eq:sys_outside} combined with \eqref{eq:sys_inside} for some choice
$m_t\in\mathcal M_{as}(x_t)$.  Hence $\sum_{\ell=1}^L d_{\ell} \geq 1$ (having $\sum_{\ell=1}^L d_{\ell} > 1$ would mean that asymptotically the maximizing region is not necessarily unique). 

Condition \eqref{C3part2}  guarantees that each set $A_{\ell}$ is relevant asymptotically (that is, it occurs with a positive probability) as it says that the event \begin{equation*} 
 \left(\cap_{x \in \mathcal{X}\setminus \mathcal{X}_0} (\phi(\widehat{F}(y=1|x)) \in C_{m(x)})\right) \cap \left(\cap_{t=1}^T (\phi(\widehat{F}(y=1|x_t)) \in C_{m_{\ell,t}}\right) 
\end{equation*}
occurs with a  positive probability asymptotically and it is exactly under this event that each term's score is maximized as implied by the properties \eqref{eq:pointwise_separation1}-\eqref{eq:pointwise_separation2} for  the score. 

Condition \eqref{C3part3} is formulated as a condition on elements of $\mathcal{X}_0$ but since  outside $\mathcal{X}_0$ the objective contributions are common across all feasible regions with
probability approaching, then it is  equivalently interpreted as applicable to the difference of the two values of the overall objective function -- one at a profile by $A_{\ell}$ and  the other at a profile by a different $A_{\ell'}$. Taking this into account, it then becomes  a statement of asymptotic uniqueness of a maximizing region, which automatically gives $\sum_{\ell=1}^L d_{\ell} = 1$, $d_{\ell} d_{\ell'}=0$ for $\ell\neq \ell'$. $\blacksquare$

\vskip 0.05in 

\noindent \textbf{Proof of Theorem \ref{th:general2}}
By continuity of $\psi$, any limit of points in $A_\ell$ satisfies the
corresponding closed constraints. Hence,
\[
\overline A_\ell
\subseteq
\Bigl\{a\in\mathcal A:\ 
\psi(x,a)\in \overline D^*_{m(x)} \ \forall x\notin\mathcal X_0,\ 
\psi(x_t,a)\in \overline D^*_{m_{\ell,t}} \ \forall x_t\in\mathcal X_{0,\ell}
\Bigr\}.
\]

For the reverse inclusion, let $a$ satisfy all closed constraints.
Condition (C4) implies that every neighborhood of $a$ intersects $A_\ell$, and therefore
$a\in\overline A_\ell$.

(b) 
Fix $\alpha\in\mathcal A_0$ and $\ell$.
For all $x\notin\mathcal X_0$, $\psi(x,\alpha)\in D_{m(x)}$ by the definition
of $\mathcal A_0$ and by the discrete setup in Section \ref{sec:discrete_setup}.
For $x_t\in\mathcal X_{0,\ell}$, condition (C6) implies $\psi(x_t,\alpha)\in \partial D^*_{m_{\ell,t}}$.

By (C5), every neighborhood of $\alpha$ contains a point in
$A_\ell$ and a point outside $A_\ell$, while preserving all non-ambiguous
constraints \eqref{eq:sys_inside}. Hence $\alpha\in\partial A_\ell$.
Since $\alpha$ was arbitrary, $\mathcal A_0\subseteq\partial A_\ell$.
$\blacksquare$

\medskip

\textbf{Proof of Theorem \ref{th:general3}.}
(a) By continuity of $\psi$, $\overline{\mathcal A^*}\subseteq \mathcal A^{*,\bullet}$.
Conversely, take $a\in\mathcal A^{*,\bullet}$. If all $\psi(x,a)\in D^*_{m(x)}$, then
$a\in\mathcal A^*\subseteq \overline{\mathcal A^*}$. Otherwise (C7) implies that every
neighborhood of $a$ contains a point satisfying all strict coarse constraints, hence in
$\mathcal A^*$, so $a\in\overline{\mathcal A^*}$.

(b) First, (C1) and continuity of $\psi$ imply that $\bigcup_{\ell=1}^L \overline A_\ell\subseteq \overline{\mathcal A^*}$. For the reverse inclusion, take $a\in\overline{\mathcal A^*}=\mathcal A^{*\diamond}$ (by (a)).  

 For $x\notin\mathcal X_0$, we have $\psi(x,a)\in \overline{ D}^*_{m(x)}$. For $x_t\in\mathcal X_0$, by (C8) 
there exists $m_t\in\mathcal M_{as}(x_t)$ such that
$\psi(x_t,a)\in \overline D^*_{m_t}$.

Collecting these indices yields a system analogous to \eqref{eq:sys_outside}--\eqref{eq:sys_inside} (whose
solution set is some $A_\ell$) but only with the respective closures replacing $D^*_{m(x)}$ and $D_{m_t}^*$.  Hence, utilizing (C4) and applying   Theorem~\ref{th:general2}(a) we obtain
$a\in\overline A_\ell$. Hence $a\in\bigcup_{\ell=1}^L \overline A_\ell$.
$\blacksquare$

\textbf{Proof of Theorem \ref{th:general4},}
 By Theorem~\ref{th:general2}, $\overline{\mathcal A}_0\subseteq \overline A_\ell$
for every $\ell$, hence $\overline{\mathcal A}_0$ belongs to any intersection of
the $\overline A_\ell$. We want to show that no $\alpha\notin\overline{\mathcal A}_0$ can belong to the intersection
of $\lfloor L/2\rfloor+1$ such sets.

Fix $\alpha\notin\overline{\mathcal A}_0$.
If $\alpha\notin\overline{\mathcal A}^*$, then $\alpha\notin\overline A_\ell$
for all $\ell$ and the claim is trivial.
Thus suppose $\alpha\in\overline{\mathcal A}^*$. Since $\alpha\notin\overline{\mathcal A}_0$, then there exists at least one $t^\diamond \in\{1,\ldots,T\}$ such that
\[
u^\diamond:=\psi(x_{t^\diamond},\alpha)\notin \overline D_{m(x_{t^\diamond})}.
\]
Since $\alpha\in\overline{\mathcal A}^*$, we have $u^\diamond \in \overline D^*_{m(x_{t^\diamond})}\setminus \overline D_{m(x_{t^\diamond})}$. 

By  condition (C9), 
\begin{equation}
\label{impliedbyC9} \bigl|
\{\, \widetilde{m}_{t^\diamond} \in \mathcal{M}_{as}(x_t):\; u^\diamond \in \overline D^{*}_{\widetilde{m}_{t^\diamond}} \,\}
\bigr|
\;\le\;
\Bigl\lfloor \tfrac{1}{2}
\bigl| \mathcal{M}_{as}(x_t)\bigr|
\Bigr\rfloor .
\end{equation}

Suppose both (C9), (C10) hold. 
They imply that each profile $(\widetilde{m}_{t^\diamond},m_{-t^\diamond})$ results in a different set in $\{A_{\ell}\}$ with $|\mathcal{M}_{as}(x_t)|$ in total. Thus, for a give $m_{-t}$, our $\alpha$ belongs to at most half of the sets in $\{A_{\ell}\}$ obtained with a fixed $m_{-t^{\diamond}}$. 

If we now fix a different $m'_{-t^{\diamond}}$, we may be in a situation when $u^\diamond$ is compatible with this new $m_{-t^{\diamond}}'$ (that is, $u^\diamond \in \overline D^{*}_{\widetilde{m}_{t^\diamond}}$ for some $\widetilde{m}_{t^\diamond} \in \mathcal{M}_{as}(x_t)$ and $(\widetilde{m}_{t^\diamond}, m'_{-t^{\diamond}}) \in \cup_{\ell=1}^L  \mathcal{F}_{\ell}$),   then analogously to above  this will imply that $\alpha$ belongs to at most half of the sets in $\{A_{\ell}\}$ obtained with this new fixed $m_{-t^{\diamond}}'$ (which are also all different from what we considered with a  previous $m_{-t^{\diamond}}$). Alternatively, for a different $m_{-t^{\diamond}}'$, we may again end up in a situation when such $u^\diamond $ is not compatible with such $m_{-t^{\diamond}}'$ thus resulting in  empty solution sets. 

In any case, irrelevant of which one of these two situations arises, when we sum up over all the profiles $m_{-t^{\diamond}}$, we can conclude that $\alpha$ can belong to most $\lfloor\frac{L}{2} \rfloor$ sets in $\overline{A}_{\ell}$. 

Suppose now that (C9), (C11) hold. then any index in the subset $\mathcal{M}_{as}(x_t; u^{\diamond}):=\{ {m}_{t^\diamond} \in \mathcal{M}_{as}(x_t):\; u^\diamond \in \overline D^{*}_{{m}_{t^\diamond}} \}$ produces exactly $N(x_t)$ unique sets in $A_{\ell}$. However, each  index in $\mathcal{M}_{as}(x_t)\setminus \mathcal{M}_{as}(x_t; u^{\diamond})$ also produces exactly $N(x_t)$ unique sets in $\{A_{\ell}\}$ different from what was produced before. Overall, we are talking about $N(x_t)|\mathcal{M}_{as}(x_t)|$ unique sets making up the collection $\{A_{\ell}\}$. By \eqref{impliedbyC9}, $u^{\diamond})$ belongs to at most half of them. 
 $\blacksquare$

\vskip 0.1in 

\textbf{Proof of Theorem \ref{th:general5}.} For Case 1, when $\mathcal{X}_0 \neq \varnothing$, the fact that $L \geq 2$ is implied by (C10) or a more general condition (C11). The rest follows directly from Theorem \ref{th:general1}-\ref{th:general4}. 

Case 2 is established directly from the discrete setup of Section~\ref{sec:discrete_setup} 
and the structure of the sample objective function ~\eqref{ex_obective_sample_discrete}, independently of 
Theorems~\ref{th:general1}--\ref{th:general4} and conditions (C1)--(C11).

When $\mathcal{X}_0 = \emptyset$, the asymptotically relevant regime set satisfies 
$\mathcal{M}_{as}(x) = \{m(x)\}$ for every $x \in \mathcal{X}$.  
Since $\mathcal{X}$ is finite with $|\mathcal{X}|$ elements, a union bound gives
\begin{equation*}
    P\!\left(\bigcap_{x \in \mathcal{X}} 
    \left\{\phi\!\left(\hat{F}(Y \mid X = x)\right) \in C_{m(x)}\right\}\right) 
    \geq 1 - \sum_{x \in \mathcal{X}} 
    P\!\left(\phi\!\left(\hat{F}(Y \mid X = x)\right) \notin C_{m(x)}\right) \to 1.
\end{equation*}
On the event $\bigcap_{x \in \mathcal{X}} \{\phi(\hat{F}(Y \mid X = x)) \in C_{m(x)}\}$, 
which occurs with probability approaching $1$, each term in the sample objective 
function is maximized precisely by those $a$ satisfying 
$\psi(x, a) \in D^*_{m(x)}$, by properties~\eqref{eq:pointwise_separation1}--\eqref{eq:pointwise_separation2} of the 
score function. Since this holds simultaneously for all $x \in \mathcal{X}$, the argmax 
set of the sample objective equals
\begin{equation*}
    \mathcal{A}^* = \left\{a \in A : \psi(x, a) \in D^*_{m(x)} 
    \text{ for all } x \in \mathcal{X}\right\}. 
\end{equation*}
Therefore $P(\widehat{A} = \mathcal{A}^*) \to 1$, 
which gives both $\widehat{A} \xrightarrow{p} \mathcal{A}^*$ and 
$\overline{\widehat{A}} \xrightarrow{W} \overline{\mathcal{A}^*}$. 

Note that continuity of $\phi$ and $\psi$ is not required for Case 2. $\blacksquare$

\vskip 0.1in 

\textbf{Proof of Proposition \ref{prop:bccross_probest1}.} (a) This follows from the univariate asymptotic behavior $\sqrt{n}(\widehat P(Y=1|{x})-P(Y=1|{x})) \stackrel{d}{\to} \mathcal{N}(0, P(Y=1|{x})(1-P(Y=1|{x})))$ if $P(Y=1|x) \in (0,1)$, or from the fact that $\widehat P(Y=1|{x})=1$ a.s. if $P(Y=1|x) =1$, or from the fact that $\widehat P(Y=1|{x})=0$ a.s. if $P(Y=1|x) =0$. 

Suppose $P(Y=1|{x})-\gamma \in C_1$ and $P(Y=1|x) \in (0,1)$. Then 
\begin{multline*}\PP(\widehat P(Y=1|{x}) -\gamma \in C_1) = \PP(\widehat P(Y=1|{x}) -P(Y=1|{x})+P(Y=1|{x})-\gamma \in C_1) \\
\PP(\sqrt{n}(\widehat P(Y=1|{x}) -P(Y=1|{x}))+\sqrt{n}(P(Y=1|{x})-\gamma) \in C_1)
\end{multline*}
Since $\sqrt{n}(P(Y=1|{x})-\gamma) \to +\infty$, we obtain that 
$$\inf\lim_{n \to \infty}\PP(\widehat P(Y=1|{x}) -\gamma \in C_1) \geq 1-\Phi\left(-\frac{\Delta}{\sqrt{P(Y=1|{x})(1-P(Y=1|{x}))}} \right),$$
where $\Phi$ is the standard normal c.d.f. and $\Delta$ is an arbitrarily large number. The ability to take $\Delta$ arbitrarily large and the natural bound on probability allow us to conclude that $\PP(\widehat P(Y=1|{x}) -\gamma \in C_1) $ has a limit and this limit is 1. 

IF $P(Y=1|x) =1$, then $\widehat P(Y=1|{x})=1$ a.s.and the conclusion $\PP(\widehat P(Y=1|{x}) -\gamma \in C_1)\to 1$ is straightforward. 

For $C_2$ the proof is analogous. 

(c) $P(Y=1|{x})-\gamma \in C_3$ means  $P(Y=1|{x})=\gamma$ and by the univariate asymptotic behavior,  $\sqrt{n}(\widehat P(Y=1|{x})-\gamma) \stackrel{d}{\to} \mathcal{N}(0, \gamma(1-\gamma))$.  The symmetry and continuity of the limiting variable guarantee that $\PP(\widehat P(Y=1|{x}) -\gamma>0) \to 1/2$,  $\PP(\widehat P(Y=1|{x}) -\gamma <0) \to 1/2$ as $n \to \infty.$ $\blacksquare$

\vskip 0.05in 

\textbf{Proof Of Proposition \ref{prop:bccross_probest2}.} (a) This follows from the functional form of plug-in estimators  $\widehat P(Y=1|{x})$ given in Section \ref{sec:theoreticalMSbinary} and the application of multivariate CLT and Slutsky to the vector of plug-in estimators for the components in $\mathcal{X}_0$.  

(b) This follows from the application of multivariate CLT and SLutsky to the vector of plug-in choice probabilities for all $x \in \mathcal{X}$ with $P(Y=1|x) \in \{0,1]$ (this is analogous to part (a)). 
The radial symmetry of the multivariate normal distribution implies teh result. Such $x \in \mathcal{X}$ that have $P(Y=1|x) \in \{0,1]$ are incorporated as for them $\widehat{P}(Y=1|x) -\gamma \in C_{m(x)}$ a.s. in a finite sample. 
 $\blacksquare$

\vskip 0.1in 

\noindent \textbf{Verification of conditions (C1)-(C11) for Theoretical example 1}

(C1)  holds because for each $x_t\in\mathcal X_0$, $m(x_t)=3$ and $\mathcal M_{as}(x_t)=\{1,2\}$.
Moreover, $D_3^*=\mathbb R \supseteq D_1^*=[0,\infty)$, $D_3^*=\mathbb R \supseteq D_2^*=(-\infty,0)$.

(C2) holds  because $D_1^* \cap D_2^* = D_1^* \cap D_2  =(-\infty,0) \cap [0,+\infty) = \varnothing.$ 

(C3) Condition \eqref{C3part1} holds as well because for any $x_t\in \mathcal{X}_0$ we have   $|\widehat{P}(y=1 | x_t)-0.5| >0$ when $ \widehat{P}(y=1|x)-\gamma \in C_1 \cup C_2$ (and, thus, $\widehat{P}(y=1|x)-\gamma \neq 0$). Condition \eqref{C3part2} is implied by a more general statement (b) in Proposition   \ref{prop:bccross_probest2}.

Let us show Condition \eqref{C3part3}. Fix $\ell\neq\ell'$ and define the difference in sample objective values
between the regions $A_\ell$ and $A_{\ell'}$ as 
$$\widehat\Delta_{\ell,\ell'}
:=
\sum_{t=1}^T w(x_t)\,\widehat P(X=x_t)\,
\Bigl[
s(\widehat \phi(F(y|x_t)), u_{\ell,t})
-
s(\widehat \phi(F(y|x_t)), u_{\ell',t})
\Bigr],$$
where $u_{\ell,t} \in D^*_{m_{\ell,t}}$, $u_{\ell',t} \in D^*_{m_{\ell',t}}$ (this difference does not depend on the values of $u_{\ell,t}$, $u_{\ell',t}$ in those region due to \eqref{eq:pointwise_separation1}. By construction, for any $a\in A_\ell$ and $a'\in A_{\ell'}$,
$$\sup_{a\in A_\ell}\widehat Q(a)-\sup_{a'\in A_{\ell'}}\widehat Q(a')
=
\widehat\Delta_{\ell,\ell'}+o_p(1).$$
Here we use the fact that  outside $\mathcal X_0$ the objective contributions are common
across all feasible regions with probability approaching one (implied by part (a) of Proposition \ref{prop:bccross_probest1}). 

We have that for any $m_{\ell_t}$, $m_{\ell',t} $, 
$$s(\widehat \phi(F(y|x_t)), u_{\ell,t})
-
s(\widehat \phi(F(y|x_t)), u_{\ell',t}) = v(m_{\ell_t},m_{\ell',t},t) \cdot \phi(\widehat F(y|x_t)),$$
where $v(m_{\ell_t},m_{\ell',t},t) =1$ if $m_{\ell,t}=1, m_{\ell't}=2$, and $v(m_{\ell_t},m_{\ell',t},t) =-1$ if $m_{\ell,t}=2, m_{\ell't}=1$, and $v(m_{\ell_t},m_{\ell',t},t) =0$ if $m_{\ell,t}=m_{\ell't}$. 

Using part (a) of Proposition \ref{prop:bccross_probest2} and the multivariate delta method,
$\sqrt n\,\widehat\Delta_{\ell,\ell'}
\Rightarrow
N\!\bigl(
0,
\sigma^2_{\ell,\ell'}
\bigr),$ 
where the asymptotic variance is $\sigma^2_{\ell,\ell'}
=
\mathbf g_{\ell,\ell'}'\Sigma\,\mathbf g_{\ell,\ell'}$, and $\mathbf g_{\ell,\ell'}$ stacks 
$$\mathbf g_{\ell,\ell'}
=
\Bigl(
w(x_1)P(X=x_1) v(m_{\ell,1},m_{\ell',1},1),
\ldots,
w(x_T)P(X=x_T) v(m_{\ell,T},m_{\ell',T},T)). 
\Bigr).$$
Since for $\ell\neq\ell'$ there exists at least one $t$ such that
$m_{\ell,t}\neq m_{\ell',t}$, then 
$\mathbf g_{\ell,\ell'}\neq 0$ and therefore
$\sigma^2_{\ell,\ell'}>0$. Thus, the limiting distribution of
$\sqrt n\,\widehat\Delta_{\ell,\ell'}$
is nondegenerate and absolutely continuous. This implies \eqref{C3part3}.

(C4) Fix $\ell$ and a respective $A_{\ell}$. Suppose $a\in\mathcal A$ satisfies the \emph{closed} constraints:
\[
\psi(x,a)\in \overline D^*_{m(x)} \quad \forall x\in\mathcal X\setminus\mathcal X_0,
\qquad
\psi(x_s,a)\in \overline D^*_{m_{\ell,s}} \quad \forall x_s\in\mathcal X_{0,\ell}.
\]
The constraints in the definition of $A_{\ell}$ are analogous to these ones but do not use the closure operation. Recall that in our model $D_1^*=\overline D^*_1=[0,\infty)$, $\overline D^*_2=(-\infty,0] \supset D^*_2 = (-\infty,0)$. 

Pick any $b\in A_\ell$ and define $a_\lambda=(1-\lambda)a+\lambda b$ for $\lambda\in(0,1)$.
Then, for each fixed $x$, the map $\lambda\mapsto x'a_\lambda$ is continuous and satisfies:
(i) if $x\notin\mathcal X_0$ and $m(x)=1$, then $x'b \ge 0$ and $x'a\ge 0$, hence $x'a_\lambda \geq 0$ for all $\lambda>0$; (ii) if $x\notin\mathcal X_0$ and $m(x)=2$, then $x'b<0$ and $x'a\le 0$, hence $x'a_\lambda<0$ for all $\lambda>0$; (iii) if $x_t\in\mathcal X_{0,\ell}$ and $m_{\ell,t}=1$, then $x_t'a_\lambda\ge 0$ for all $\lambda\in (0,1)$; 
(iv) if $x_t\in\mathcal X_{0,\ell}$ and $m_{\ell,t}=2$, then $x_s'b<0$ and $x_t'a\le 0$, hence $x_t'a_\lambda<0$ for all $\lambda>0$. Choosing  $\lambda>0$ small enough that $\|a_\lambda-a\|<\varepsilon$.
Then $a_\lambda$ satisfies all the  constraints without closures defining $A_\ell$.

(C5) Fix any $\alpha\in\mathcal A_0$, any $\ell=1,\ldots,L$ (and the respective $A_{\ell}$), and any $x_s\in\mathcal X_{0,\ell}$.
By definition of $\mathcal A_0$, we have $x_s'\alpha=0$ for each $x_t\in\mathcal X_0$ and
strict inequalities for all $x\in\mathcal X\setminus\mathcal X_0$:
$$x'\alpha>0 \ \text{ if } m(x)=1,\qquad x'\alpha<0 \ \text{ if } m(x)=2.$$
Since $\mathcal X\setminus\mathcal X_0$ is finite, there exists $\delta>0$ such that at any $b$ such that 
$\|b-\alpha\|<\delta$ all these strict inequalities keep their strict sign. That is,  $x'b\in D_{m(x)}$ for all $x\notin\mathcal X_0$.

Now pick any $b_\ell\in A_\ell$.
By definition of $A_\ell$, $b_\ell$ satisfies
$$x'b_\ell\in D^*_{m(x)}\ \ \forall x\notin\mathcal X_0,
\qquad
x_t'b_\ell\in D^*_{m_{\ell,t}}\ \ \forall x_t\in\mathcal X_{0,\ell}.$$
Moreover, for $x_t\in\mathcal X_{0,\ell}$ the constraint is either $x_t'a\ge 0$ (if $m_{\ell,s}=1$)
or $x_t'a<0$ (if $m_{\ell,t}=2$).
Since $\alpha\in\mathcal A_0$ implies $x_t'\alpha=0$ for every $x_t\in\mathcal X_0$,
the line segment $a_{\lambda}=(1-\lambda)\alpha+\lambda b_\ell$ satisfies  $x_t'\bigl((1-\lambda)\alpha+\lambda b_\ell\bigr)=\lambda\, x_t'b_\ell$. Hence, for every $x_t\in\mathcal X_{0,\ell}$, the sign of $x_t'a_{\lambda}$ along this segment
is determined by $x_t'b_\ell$, and therefore the coarse constraint for that $x_t$ is satisfied for all $\lambda>0$.
(If the coarse constraint is $x_t'a\ge 0$, then $\lambda x_t'b_\ell\ge 0$; if it is $x_t'a<0$, then
$\lambda x_t'b_\ell<0$ as $x_t'b_\ell<0$ due to  $b_\ell\in A_\ell$.) Take $a^+_\varepsilon:=(1-\lambda)\alpha+\lambda b_\ell$ where  $\lambda>0$ is chosen small enough to ensure that $a^+_\varepsilon$ is in our chosen small neighborhoof of $\alpha$. Then, as explained above,  $x'a^+_\varepsilon\in D^*_{m(x)}$ for all $x\notin\mathcal X_0$,
and by the argument above, $x_t'a^+_\varepsilon\in D^*_{m_{\ell,s}}\; \forall x_t\in\mathcal X_{0,\ell}$. Such $a^+_\varepsilon$ satisfies (C5). 

Pick any single $x^\star\in\mathcal X_{0,\ell}$.
Because $\alpha\in\mathcal A_0$ implies $x^{\star\prime}\alpha=0$, it suffices to perturb $\alpha$ so that
$x^{\star\prime}a^-_\varepsilon$ moves to the ``wrong'' side of the half-line defining $D^*_{m_{\ell,\star}}$. To do that, choose any direction $v$ such that $x^{\star\prime}v\neq 0$ (note that $x^\star\in\mathcal X_{0,\ell}$ implies that $x^\star \neq 0$) set $a^-_\varepsilon:=\alpha-\eta v$ for a very small  $\eta$. A sufficiently small  choice of $\eta$ guarantees $x'a^-_\varepsilon\in D_{m(x)}$ for all $x\notin\mathcal X_0$ (here we also use the fact that $\alpha$ comes from the exhaustive characterization and, hence, satisfies \textit{strict} inequalities for all $x\notin\mathcal X_0$). 
Also, 
$x^{\star\prime}a^-_\varepsilon=-\eta\, x^{\star\prime}v.$ 
Choose the sign of $v$ so that this sign violates the coarse constraint at $x^\star$.  
Namely, if 
$m_{\ell,\star}=1$ and, hence, $D^*_{m_{\ell,\star}}=[0,\infty)$, then pick $x^{\star\prime}v>0$. This will guarantee $x^{\star\prime}a^{-}_{\varepsilon}<0$. If, on the other hand, 
$m_{\ell,\star}=2$ and, hence, $D^*_{m_{\ell,\star}}=(-\infty,0)$, then pick $x^{\star\prime}v<0$ which will guarantee $x^{\star\prime}a^-_\varepsilon>0$. Such a choice will ensure that $x^{\star\prime}a^-_\varepsilon\notin D^*_{m_{\ell,\star}}$ for this $x^\star\in\mathcal X_{0,\ell}$.

(C6) For $x_s\in\mathcal X_0$, the exhaustive restriction is $x_s'a\in D_3=\{0\}$.
Moreover, $\partial D_1^*=\partial[0,\infty)=\{0\}$, $\partial D_2^*=\partial(-\infty,0)=\{0\}$. Hence $\{0\}\subseteq \partial D^*_{m_{\ell,t}}$ for either $m_{\ell,t}\in\{1,2\}$. 

(C7)  Take $a \in A^{*,\bullet}$ such that $\psi(x,a)=x'a \in \overline{D}^*_{m(x)}\setminus {D}^*_{m(x)}$ for some $x$. Necessary, such $x$ has $m(x)=2$ (since $D_1^*$ and $D_3^*$ are closed sets) and satisfies $x'a=0$ and $P(Y=1|x)-\gamma<0$. 

By the definition of $\mathcal{A}^*$, any $b \in \mathcal{A}^*$ satisfies $x'b <0$ for all $x$ with $m(x)=2$, and $x'b \geq 0$ for $m(x)=1$, and no restriction for $m(x)=3$. For $\lambda\in(0,1)$ define $a_\lambda:=(1-\lambda)a+\lambda b$. Then $a_\lambda\to a$ as $\lambda\downarrow 0$, and for all sufficiently small $\lambda$ we have $a_\lambda\in\mathcal A^*$.  Take $U$ to be any ball around $a$ small enough to contain $a_\lambda$ for some sufficiently small $\lambda$.
Then $U$ contains a point $a_\lambda$ such that $\psi(x,a_\lambda)\in D^*_{m(x)}$ for all $x$, which is exactly (C7).

(C8) needs to only be verified for  $m(x)=3$ (it is obvious for other cases):  
$\overline{D}^*_3 =\mathbb{R} \subseteq \overline{{{D}^*_1} \cup {{D}^*_2}} =\overline{{[0,+\infty)]} \cup {(-\infty),0)}}=\mathbb{R}$.

(C9) consists of two parts. The second part verifies that majority intersection recovers $\overline{D}_{m(x_t)}$ for $x_t \in \mathcal{X}_0$. 
For binary choice, this means  $m(x_t)=3$ and $ \mathcal{M}_{as}(x_t)=\{1,2\}$, so the only $\mathcal{J}_t$  we have to consider is $\mathcal{J}_t=\mathcal{M}_{as}(x_t)$. Thus, this part of (C9) reduces to
$$\overline{D}_3=\{0\}= (-\infty,0]\cap [0,+\infty) = \overline{D}_2 \cap \overline{D}_1.$$

The first part of (C9) requires each $\mathcal{F}_{\ell}$ is singleton. 
For each $x_t\in\mathcal X_0$ the two coarse constraints are
$H_t(1)=\{a:\ x_t'a\ge 0\}$ (when $m_t=1$) or $H_t(2)=\{a:\ x_t'a<0\}$ (when $m_t=2$),  
which are disjoint. Hence, if $m\neq m'$, there exists $t$ such that $m_t\neq m'_t$, implying that $A(m)\subseteq H_t(m_t)$, $A(m')\subseteq H_t(m'_t)$, 
with potentially one or both sets being empty. If both are nonempty, then necessarily $A(m)\cap A(m')=\varnothing$. Thus, no two distinct profiles can generate the same nonempty region, and each $\mathcal F_\ell$ is a singleton.

(C10) Let us show that \textit{(C10) holds when the vectors $\{x_1,\ldots,x_T\}$ in $\mathcal{X}_0$ are linearly independent in $\mathbb{R}^k$}. 
Fix $t\in\{1,\ldots,T\}$ and a profile $m=(m_t,m_{-t})$ such that $A(m)\neq\varnothing$. Since (C5) and (C6)  imply that $\mathcal{A}_0$ (where inequalities for $m(a)=1$ and $m(x)=2$ cases are strict) is at the boundary of $A(m)$, we can pick $a^\circ\in A(m)$ in such a way that will satisfy 
$$x'a^\circ\in D_{m(x)} \quad \forall x\in\mathcal X\setminus\mathcal X_0,
\qquad
x_q'a^\circ\in D^*_{m_q} \quad \forall q\neq t.$$ 
(Thus, taking this point close to $\mathcal{A}_0$ ensures that for cases $m(x)=1$ we have strict inequality $x'a>0$ -- hence, this is why we are writing $x'a^\circ\in D_{m(x)}$  and not just $x'a^\circ\in D^*_{m(x)}$.)

The linear independence of $\{x_1,\ldots,x_T\}$ implies that for each $t$ there exists a direction
$v_t\in\mathbb R^k$ such that
$$x_q'v_t=0 \ \ \forall q\neq t,
\qquad
x_t'v_t\neq 0.$$
For any $\varepsilon\in\mathbb R$ define $a_\varepsilon:=a^\circ+\varepsilon v_t$. Then for all $q\neq t$, we have $x_q'a_\varepsilon=x_q'a^\circ$,
so $a_\varepsilon$ continues to satisfy for all $q\neq t$ the same constraints as were used in the construction of $A(m)$ (thus, $a_\varepsilon$ is compliant with the same $m_{-t}$ profile).  

As $a^\circ$ was chosen to satisfy strict inequalities for $m(x)=1$, we conclude that $a^\circ\in \mathcal A_{\mathrm{out}}$. Recall that $\mathcal A_{\mathrm{out}}$ is defined by finitely many
strict inequalities for $x \in \mathcal{X}\setminus \mathcal{X}_0$. Hence, $a^\circ$ is a relative interior point of $\mathcal A_{\mathrm{out}}$.  Therefore, there exists $\bar\varepsilon>0$
such that $a_\varepsilon\in\mathcal A_{\mathrm{out}}$ for all $|\varepsilon|<\bar\varepsilon$.
Finally, because $x_t'a_\varepsilon=x_t'a^\circ+\varepsilon\, x_t'v_t$ and $x_t'v_t\neq 0$, we can choose
$\varepsilon_+,\varepsilon_- \in (0,\bar\varepsilon)$ such that
$$x_t'a_{\varepsilon_+}\ge 0,
\qquad
x_t'a_{-\varepsilon_-}<0.$$
Therefore $a_{\varepsilon_+}\in A(m_t=1,m_{-t})$ and $a_{-\varepsilon_-}\in A(m_t=2,m_{-t})$, so both profiles
$(1,m_{-t})$ and $(2,m_{-t})$ yield nonempty solution sets. This verifies (C10) for the linear independence case.

When the regressors in $\mathcal{X}_0$ 
 are linearly dependent, the local flip feasibility required by (C10) may fail. In that case, the weaker global balance condition (C11) will hold, as shown below. 

 (C11) This case allows for the linear dependence of regressors in $\mathcal X_0$. 
First of all, as discussed above, we can take  a point   $\alpha \in \mathcal{A}_0$ which is in the relative  of $A_{\mathrm{out}}$, so there exists
$\delta>0$ such that such that the ball $B(\alpha,\delta)\subseteq A_{\mathrm{out}}$.

For a profile $m=(m_1,\ldots,m_T)\in\{1,2\}^T$, let
$$A(m):=A_{\mathrm{out}}\cap \bigcap_{t=1}^T H_t(m_t),
\qquad
H_t(1)=\{a:\ x_t'a\ge 0\},\quad H_t(2)=\{a:\ x_t'a<0\}$$
and let $\mathcal F:=\{m\in\{1,2\}^T:\ A(m)\neq\emptyset\}$.  Note that $L=|\mathcal F|$.

Fix any $m\in\mathcal F$ and choose $a\in A(m)$. Since $\alpha$ is interior to $A_{\mathrm{out}}$,
there exists $\lambda\in(0,1)$ such that $a_\lambda:=\alpha+\lambda(a-\alpha)\in B(\alpha,\delta)\subseteq A_{\mathrm{out}}$. Moreover, because $x_t'\alpha=0$ for all $x_t\in\mathcal X_0$, we have
$x_t'a_\lambda=\lambda x_t'a$, so $a_\lambda$ satisfies the same  inequalities as $a$ and hence
$a_\lambda\in A(m)$. 

Now define the ``flipped'' profile $\widetilde m$ by $\widetilde m_t=3-m_t$ for all $t$, and consider the reflection $\widetilde a_\lambda=2\alpha-a_\lambda=\alpha-\lambda(a-\alpha_0)$. Since $B(\alpha,\delta)$ is symmetric around $\alpha$, we have $\widetilde a_\lambda\in B(\alpha,\delta)\subseteq A_{\mathrm{out}}$ for small enough $\lambda$.
Also, for each $t$, we have $x_t'\widetilde a_\lambda=-x_t'a_\lambda$, so $\widetilde a_\lambda$ satisfies exactly the flipped inequalities for all $x_t \in \mathcal{X}_0$, i.e.\ $\widetilde a_\lambda\in A(\widetilde m)$.
Thus, $m\in\mathcal F$ implies $\widetilde m\in\mathcal F$.

To sum up, feasible profiles come in pairs $(m,\widetilde m)$. In each pair and for each $t$, one profile has $m_t=1$
and the other has $m_t=2$. Therefore, for every $t$,
$$\#\{m\in\mathcal F:\ m_t=1\}=\#\{m\in\mathcal F:\ m_t=2\}=\frac{L}{2}=:N(x_t),$$
which verifies (C11).
$\blacksquare$
 \vskip 0.05in 

\noindent \textbf{Proof of Theorem   \ref{th:MSgeneral2}.} 
We only need to establish that $\PP(\widehat{A}=A_{\ell} \to \frac{1}{L}$, $\ell=1,\ldots,L$, in part 1  as everything else follows from the established results in the straightforward way, as discussed in the main text. 

Recall that by Proposition~\ref{prop:bccross_probest2}(b), if $\mathcal{X}_0$ is denoted as $\mathcal{X}_0=\{\ x_1,\ldots, x_T\} $, then for any 
profile $(m_1,\ldots,m_T)\in\{1,2\}^T$, we have  
$\PP\!\left(\bigcap_{t=1}^T\bigl\{
\widehat{P}(Y=1\mid x_t)-\tau\in C_{m_t}\bigr\}\right)
\to\frac{1}{2^T}.$ 
Also recall that each region 
$A_\ell$ is associated with a unique profile 
$\mathbf{m}=(m_1,\ldots, m_T) \in\{1,2\}^T$ by condition (C9), but we need to 
determine $\PP(d_\ell=1)$ under (C10) and (C11) separately.

\smallskip
\noindent\textit{Subcase (C10).} Under (C10), all $2^T$ profiles yield nonempty solution sets 
(as shown in Section~\ref{sec:generaltheorems_applications}), so $L=2^T$ and each $A_\ell$ corresponds to exactly one 
profile $\mathbf{m}\in\{1,2\}^T$. By Proposition~\ref{prop:bccross_probest2}(b)
and (C9), 
$\PP(d_\ell=1)=\frac{1}{2^T}\quad\forall\,\ell=1,\ldots,L.$ 



\smallskip
\noindent\textit{Subcase (C11).} Under (C11), feasible profiles $\mathcal{F}\subseteq\{1,2\}^T$ 
come in flip pairs $(\mathbf{m},\widetilde{\mathbf{m}})$ with 
$\widetilde{m}_t=3-m_t$ for all $t$, so $L=|\mathcal{F}|$ is 
even. Empty profiles $\mathbf{m}^*\notin\mathcal{F}$ account 
for $2^T-L$ of the $2^T$ sign patterns.

 By 
Proposition~\ref{prop:bccross_probest2}(b), every sign 
pattern $(m_1,\ldots,m_T)\in\{1,2\}^T$ occurs with 
probability approaching $1/2^T$, whether the profile is 
feasible or not. When the sample sign pattern falls on a 
feasible profile $\mathbf{m} \in\mathcal{F}$, the sample 
objective concentrates on its respective $A_\ell$ (constructed  as in \eqref{eq:sys_outside}-\eqref{eq:sys_inside}) by conditions 
\eqref{C3part1}--\eqref{C3part3}, contributing $1/2^T$ 
to $\PP(d_\ell=1)$.

When the sample sign pattern falls on an empty profile 
$\mathbf{m}^*\notin\mathcal{F}$, 
conditions 
\eqref{C3part1}--\eqref{C3part3} guarantee that for all 
sufficiently large $n$ the sample objective  concentrates 
on a unique nonempty region among $A_1,\ldots, A_L$ with probability approaching~1.

We claim each nonempty 
$A_\ell$ is selected with conditional probability approaching $1/L$ asymptotically. 
To see this, note that by 
Proposition~\ref{prop:bccross_probest2}(a), the joint 
distribution of $\sqrt{n}(\widehat{P}(Y=1|x_t)-\tau)_{t=1}^T$ 
is asymptotically $\mathcal{N}(0,\Sigma)$ with positive 
definite $\Sigma$. The selection among $A_1, \ldots, A_L$ is determined by the 
vector $(\widehat{\Delta}_{\ell,\ell+1})_{\ell=1, \ldots, L-1}$ 
of sample objective differences across  pairs $(A_{\ell}, A_{\ell+1})$ of 
nonempty regions (an arbitrary pair $\widehat{\Delta}_{\ell,\ell'}$ can be expressed through them). By 
Proposition~\ref{prop:bccross_probest2}(a) and the 
multivariate delta method, this vector is jointly 
asymptotically Gaussian:
$\sqrt{n}\,(\widehat{\Delta}_{\ell,\ell+1})_{\ell=1, \ldots, L-1}
\stackrel{d}{\longrightarrow}
\mathcal{N}(0,V),$ 
where $V$ is the joint asymptotic covariance matrix with strictly positive diagonal elements,  
which is evident from the same argument as  the one we used in the verification of  of \eqref{C3part3} for this theoretical example. 
Pooling all the differences we can analogously show that $\sqrt{n}\,(\widehat{\Delta}_{\ell,\ell'})_{\ell\neq \ell'}
\stackrel{d}{\longrightarrow}
\mathcal{N}(0,V_{all})$
 with $V_{all}$ having incomplete rank  even  though all its diagonal elements are strictly positive (by the same argument the one we used in the verification of  of \eqref{C3part3}). Incomplete rank follows from the linear relationships $\widehat{\Delta}_{\ell,\ell'}
=\widehat{\Delta}_{\ell,\ell''}+\widehat{\Delta}_{\ell'',\ell'}$. 
This 
joint symmetry with positive individual variances in the asymptotic distribution implies that the probability of selecting 
any particular nonempty region $A_\ell$ is asymptotically the same for 
all $\ell$, and since the probabilities sum to~1 and there 
are $L$ nonempty regions, each is selected asymptotically with conditional 
probability $1/L$. Combining feasible and empty profile contributions, we obtain that 
$$\PP(d_\ell=1)=\frac{1}{2^T}+\frac{2^T-L}{2^T}
\cdot\frac{1}{L}=\frac{1}{2^T}\cdot\frac{2^T}{L}
=\frac{1}{L}. \quad \blacksquare$$

 \vskip 0.05in  
\noindent \textbf{Proof of Proposition  \ref{prop:bcpanel_probest1}.} For each $x\in\mathcal{X}$, the pair $(\widehat{P}(Y_1=1\mid x),\widehat{P}(Y_2=1\mid x))$
is a sample average of i.i.d.\ bounded random vectors, so by the multivariate
central limit theorem,
\[
  \sqrt{n}\bigl(\widehat{P}(Y_s=1| x)-P(Y_s=1|x)\bigr)_{s=1,2}
  \xrightarrow{d} \mathcal{N}(0,\Sigma_x)
\]
for a finite covariance matrix $\Sigma_x$.  The continuous mapping
$(\widehat{P}(Y_2=1| x),\widehat{P}(Y_1=1| x))\mapsto\Delta\widehat{P}(Y=1| x)$
is linear, so 
\[
  \sqrt{n}\bigl(\Delta\widehat{P}(Y=1| x)-\Delta P(Y=1| x)\bigr)
  \xrightarrow{d}\mathcal{N}(0,\sigma^2_x),
\]
where $\sigma^2_x = \mathrm{Var}(Y_2-Y_1| X=x)/P(X=x) > 0$ under the model
assumptions.

Part (b), in addition, uses the symmetry of the limiting normal distribution.

\noindent \textbf{Proof of Proposition  \ref{prop:bcpanel_probest2}.} (a) Stacking  $\Delta \widehat{P}(Y=1|x_t)-\Delta {P}(Y=1|x_t)$ over $t=1,\ldots,T$ and applying Slutsky's theorem to handle the estimated
denominators $\widehat{P}(X=x_t)$, the joint vector satisfies
$\sqrt{n}\bigl(\Delta\widehat{P}(Y=1| x_t)-\Delta P(Y=1| x_t)\bigr)_{t=1}^T
  \xrightarrow{d}\mathcal{N}(0,\Sigma),$
where $\Sigma$ is positive definite.

\smallskip 

\noindent (b) follows from the fact that $\PP\left(\cap_{x\in\mathcal{X}\setminus\mathcal{X}_0}
  \bigl\{\Delta\widehat{P}(Y=1| x)\in C_{m(x)}\bigr\}\right) \to 1$,
and the radial symmetry of the multivariate normal distribution $\mathcal{N}(0,\Sigma)$ around 0 applied to points in $\mathcal{X}_0$. $\blacksquare$

\vskip 0.05in 

\noindent \textbf{Proof of Proposition \ref{prop:multinom_probest1}.}
(a) For a given $x\in\mathcal{X}$, if $P(Y=k_1|x)>\ldots>P(Y=k_J| x)$
for some permutation $(k_1,\ldots,k_J)$, then each gap
$P(Y=k_j| x)-P(Y=k_{j+1}| x)>0$ is bounded away from zero.
Since $\widehat{P}(Y=k| x)-P(Y=k| x)=O_p(n^{-1/2})$ for each $k$
by the univariate CLT, all estimated probabilities maintain the same strict
ordering with probability approaching~$1$.

\noindent (b)  For a given $x\in\mathcal{X}$, suppose
$P(Y=k_1| x)\,\kappa_1\,\cdots\,\kappa_{J-1}\,P(Y=k_J| x)$ where
$\kappa_j\in\{>,=\}$. The strict inequalities are handled exactly as in
part~(a). For each equality $\kappa_j$ equal to ``$=$'', the corresponding
gap $P(Y=k_j| x)-P(Y=k_{j+1}| x)=0$, so
$\sqrt{n}\,(\widehat{P}(Y=k_j| x)-\widehat{P}(Y=k_{j+1}| x))$
converges in distribution to a nondegenerate centered normal by the
multivariate CLT applied to the joint vector of estimated choice
probabilities. Radial symmetry of the limiting normal around the origin
implies that each of the two strict resolutions of that tie occurs with
probability approaching $\tfrac{1}{2}$. Since the $\Upsilon(x)$ ties
involve the same vector of plug-in estimators, the joint probability of
any particular resolution of all ties simultaneously follows from the
radial symmetry of the joint limiting normal with a positive definite matrix, giving the stated limit
$1/2^{\Upsilon(x)}$. The events for different tied pairs within the same
$x$ are collected through this joint symmetry and nondegeneracy. $\blacksquare$

\noindent \textbf{Proof of Proposition \ref{prop:multinom_probest2}.}
(a) Stacking $\widehat{\mathcal{P}}(x_t) - \mathcal{P}(x_t)$ over $t=1,\ldots,T$
and applying Slutsky's theorem to handle the estimated denominators
$\widehat{P}(X=x_t)$, the joint vector satisfies $ \sqrt{n}\bigl(\widehat{\mathcal{P}}(x_t) - \mathcal{P}(x_t)\bigr)_{t=1}^T
  \xrightarrow{d} \mathcal{N}(0,\Sigma)$ for some positive-definite $\Sigma$.

\noindent (b) follows from the fact that
$\PP \left(\bigcap_{x\in\mathcal{X}\setminus\mathcal{X}_0}
  \bigl\{\widehat{\mathcal{P}}(x)\in D_{m(x)}\bigr\}\right)\to 1,$
and the radial symmetry of the multivariate normal distribution
$\mathcal{N}(0,\Sigma)$ around the origin applied to points in
$\mathcal{X}_0$, exactly as in Proposition~3.10(b) but now with
$2^{\sum_{t=1}^T \Upsilon(x_t)}$ equally likely sign resolutions
across all tied pairs at all ambiguous support points. $\blacksquare$

\subsection{Verification of conditions (C1)-(C11) for multinomial choice. Application of general theorems to Illustrative design 3,}

\noindent \textbf{Verification of (C1)-(C11).}

(C1)  For any $\mathbf{x}_t \in \mathcal{X}_0$,  any coarse set $D^*_{m_t}$ with $m_t \in  M_{as}(\mathbf{x}_t)$ coincides with $D_{m_t}$ as it imposes a strict complete ordering within each tie block while preserving ordering between blocks. It  is contained in the tie-block coarse set $D^*_{m(\mathbf{x}_t)}$, which replaces each tie with $\mathbb{R}$ (``anything can happen instead of this tie''). $D^*_{m(\mathbf{x}_t)}$ obviously  contains each complete ordering within a tie-block as a special case: 
\(
D^*_{m_t}\subset  D^*_{m(\mathbf{x}_t)}\quad\forall \, m_t\in\mathcal M_{as}(\mathbf{x}_t).
\)

(C2) 
Under the block estimator, disjointness of $\{D^*_{m_t}\}_{m_t\in\mathcal M_{as}(\mathbf{x}_t)}$ follows from the fact that two distinct $m_t $ and $m_t'$ from $\mathcal M_{as}(\mathbf{x}_t)$ will impose  strict tie-break regimes within tie-blocks resolving all the ties. Hence, $D^*_{m_t}=D_{m_t}$ and $D^*_{m_t'}=D_{m_t'}$ and $m_t$, $m_t'$ will impose different strict orderings within at least one true tie block in $\mathfrak{B}(\phi(F(Y|\mathbf{x}_t)))$. Hence, $D_{m_t} \cap D_{m_t'}=\varnothing$. 

(C3) Take any $\mathbf{x}_t \in \mathcal{X}_0$ and suppose we have that $\phi(\widehat{F}(Y|X=\mathbf{x}_t)) \in \mathcal{C}_{m_t}$ for some $m_t \in \mathcal{M}_{as}(\mathbf{x}_t)$. From the definition of $C_{m_t}$ this means that estimated choice probabilities are strictly ordered \textit{between} different blocks determined by the population  $\mathfrak B(\phi(F(y|\mathbf{x}_t))$. 
Any $u \in D_{m_t}$ will respect strict inequalities between blocks determined by the population $\mathfrak B(\phi(F(y|\mathbf{x}_t))$ and strictly refine the ordering within each tie block.

Note that from Propositions \ref{prop:multinom_probest1} and \ref{prop:multinom_probest2},  we can conclude that for any $m_t \in  \mathcal{M}_{as}(\mathbf{x}_t)$, and any $u \in \cup_{m \in \mathcal{M}_{as}(\mathbf x_t)} D_m^* $, 
$$s^{\mathrm{blk}}(\phi(\widehat F(Y| X=\mathbf{x}_t)), u) \stackrel{p}{\to} s^{\mathrm{blk}}(\phi( F(Y| X=\mathbf{x}_t)), u),$$
with the \textit{limit not depending on which particular $u \in \cup_{m \in \mathcal{M}_{as}(\mathbf x_t)} D_m^* $ is used}. 

Take any $u' \notin \bigcup_{m \in M^{as}(x_t)} D_m^*$.
Then the ordering induced by $u$ must violate at least one between-block inequality determined by the population $\mathfrak B(\phi(F(y|\mathbf{x}_t))$, and 
$$s^{\mathrm{blk}}(\phi(\widehat F(Y| X=\mathbf{x}_t)), u') \stackrel{p}{\to} s^{\mathrm{blk}}(\phi( F(Y| X=\mathbf{x}_t)), u')$$
depends on which particular $D_{\widetilde{m}}^* $ for $\widetilde{m} \notin  \mathcal{M}_{as}(\mathbf x_t)$ this $u'$ belongs to. Since $u'$ violates at least one between-block inequality determined by the population $\mathfrak B(\phi(F(y|\mathbf{x}_t))$, 
there exist $j \in B_{r_1}(\phi(F(y|\mathbf{x}_t))$ and $k \in B_{r_2}(\phi(F(y|\mathbf{x}_t))$ with $r_1<r_2$ such that $u'_k > u'_j$. 
Hence there exists an adjacent inverted pair in the ordering induced by $u'$,
with an element from a lower-probability block ranked above one from a higher-probability block. Swapping such an adjacent inverted pair strictly increases the block score
$s^{\mathrm{blk}}(\phi( F(Y| X=\mathbf{x}_t)),u')$, since the higher estimated probability is moved
to a better rank and $W$ is strictly increasing.
Iterating finitely many such swaps eliminates all between-block inversions and yields
$\tilde u \in \bigcup_{m \in M_{as}(x_t)} D_m^*$ such that
\[
s^{\mathrm{blk}}(\phi( F(Y| X=\mathbf x_t)),\tilde u)
>
s_{\mathrm{blk}}(\phi(\hat F(Y| X=\mathbf x_t)),u').
\]
This establishes \eqref{C3part1}.

Condition \eqref{C3part2} immediately follows from part (b) of Proposition \ref{prop:multinom_probest2} as any $A_{\ell}$ is obtained based on some profile $(m_1,\ldots,m_T)$ with $m_t \in \mathcal{M}_{as}(x_t)$, $t=1,\ldots, T$, and Proposition \ref{prop:multinom_probest2} guarantees that 
\[
P\Big(
\bigcap_{\mathbf x \in X \setminus X_0} \{\phi(\hat F(Y| \mathbf X=\mathbf x)) \in C_{m(\mathbf x)}\}
\;\cap\;
\bigcap_{t=1}^T \{\phi(\hat F(Y\mid \mathbf X=\mathbf x_t)) \in C_{m_{\ell,t}}\}
\Big)
\to 2^{-\sum_{t=1}^T \Upsilon(\mathbf x_t)} > 0.
\]

Let us show Condition \eqref{C3part3}. 
Fix $\ell\neq\ell'$ and define the difference in sample objective values
between the regions $A_\ell$ and $A_{\ell'}$ as 
$$\widehat\Delta_{\ell,\ell'}
:=
\sum_{t=1}^T \widehat P(\mathbf X=\mathbf x_t)\,
\Bigl[
s^{\mathrm{blk}}(\phi(\widehat F(y|\mathbf x_t)), u_{\ell,t})
-
s^{\mathrm{blk}}( \phi(\widehat F(y|\mathbf x_t)), u_{\ell',t})
\Bigr],$$
where $u_{\ell,t} \in D^*_{m_{\ell,t}}=D_{m_{\ell,t}}$, $u_{\ell',t} \in D^*_{m_{\ell',t}}=D_{m_{\ell',t}}$ (this difference does not depend on the values of $u_{\ell,t}$, $u_{\ell',t}$ in those region due to \eqref{eq:pointwise_separation1}. By construction, for any $a\in A_\ell$ and $a'\in A_{\ell'}$,
$$\sup_{a\in A_\ell}\widehat Q(a)-\sup_{a'\in A_{\ell'}}\widehat Q(a')
=
\widehat\Delta_{\ell,\ell'}+o_p(1).$$
Here we use the fact that  outside $\mathcal X_0$ the objective contributions are common
across all feasible regions with probability approaching one (implied by part (a) of Proposition \ref{prop:multinom_probest1}). 

Fix $t\in\{1,\dots,T\}$.
Since $m_{\ell,t},m_{\ell',t}\in \mathcal{M}_{as}(\mathbf x_t)$, both $u_{\ell,t}$ and $u_{\ell',t}$ preserve the same
strict ordering between distinct population tie blocks in
$\mathfrak B(\phi(F(Y| \mathbf X=\mathbf x_t)))$ and differ only in the strict ordering imposed within at least one
population tie block.
Hence, for each $j=1,\dots,J$, the quantity $\sum_{k\neq j} 1\big((u_{\ell,t})_j>(u_{\ell,t})_k\big)$
is simply the rank score assigned to alternative $j$ under the strict ordering determined by
$u_{\ell,t}$, and similarly for $u_{\ell',t}$.
Therefore,
\[
s^{\mathrm{blk}}\big(\phi(\widehat F(Y| \mathbf X=\mathbf x_t)),u_{\ell,t}\big)
=
\sum_{j=1}^J \widehat P(Y=j| \mathbf X=\mathbf x_t)\,
W\!\left(\sum_{k\neq j}1\big((u_{\ell,t})_j>(u_{\ell,t})_k\big)\right),
\]
and analogously
\[
s^{\mathrm{blk}}\big(\phi(\widehat F(Y| \mathbf X=\mathbf x_t)),u_{\ell',t}\big)
=
\sum_{j=1}^J \widehat P(Y=j|\mathbf X=\mathbf x_t)\,
W\!\left(\sum_{k\neq j}1\big((u_{\ell',t})_j>(u_{\ell',t})_k\big)\right).
\]
Subtracting gives
\[
s^{\mathrm{blk}}\big(\phi(\widehat F(Y| \mathbf X=\mathbf x_t)),u_{\ell,t}\big)
-
s^{\mathrm{blk}}\big(\phi(\widehat F(Y| \mathbf X=\mathbf x_t)),u_{\ell',t}\big)
=
v(m_{\ell,t},m_{\ell',t},t)\cdot \phi(\hat F(Y| \mathbf X=\mathbf x_t)),
\]
where the $j$th component of $v(m_{\ell,t},m_{\ell',t},t)$ is
\[
W\!\left(\sum_{k\neq j}1\big((u_{\ell,t})_j>(u_{\ell,t})_k\big)\right)
-
W\!\left(\sum_{k\neq j}1\big((u_{\ell',t})_j>(u_{\ell',t})_k\big)\right).
\]

Now note that if two alternatives $j$ and $k$ belong to the same population tie block at $x_t$,
then $P(Y=j| \mathbf X=\mathbf x_t)=P(Y=k| \mathbf X=\mathbf x_t).$

Since $m_{\ell,t}$ and $m_{\ell',t}$ differ only by permutations within such tie blocks, the two
vectors of coefficients induced by $u_{\ell,t}$ and $u_{\ell',t}$ are merely permutations of the same
collection of $W$-values within each tie block.
Therefore, $v(m_{\ell,t},m_{\ell',t},t)\cdot \phi(F(Y| \mathbf X=\mathbf x_t))=0.$
Hence
\begin{multline*}
s^{\mathrm{blk}} \big(\phi(\widehat F(Y|\mathbf X=\mathbf x_t)),u_{\ell,t}\big)
-
s^{\mathrm{blk}} \big(\phi(\widehat F(Y| \mathbf X=\mathbf x_t)),u_{\ell',t}\big)
=
v(m_{\ell,t},m_{\ell',t},t)\cdot \\
\big(
\phi(\widehat F(Y| \mathbf X=\mathbf x_t))-\phi(F(Y| \mathbf X=\mathbf x_t))
\big).
\end{multline*}

Substituting into $\widehat{\Delta}_{\ell,\ell'}$ yields
\[
\sqrt n\,\widehat{\Delta}_{\ell,\ell'}
=
\sum_{t=1}^T
\widehat P(\mathbf X=\mathbf x_t)\,
v(m_{\ell,t},m_{\ell',t},t)\cdot
\sqrt n\big(
\phi(\widehat F(Y| \mathbf X=\mathbf x_t))-\phi(F(Y| \mathbf X=\mathbf x_t))
\big).
\]
Since $\ell\neq \ell'$, there exists at least one $t$ such that
$m_{\ell,t}\neq m_{\ell',t}$.
For such $t$, the two strict refinements assign different ranks to at least one alternative in
some tie block. Since $W$ is strictly increasing, this implies that
\[
v(m_{\ell,t},m_{\ell',t},t)\neq 0.
\]
Therefore the stacked linear form in the vector
\(\widehat{g}_{\ell, \ell'}=
\Big(
\widehat P(\mathbf X=\mathbf x_1) v(m_{\ell,1},m_{\ell',1},1),
\ldots,
\widehat P(\mathbf X=\mathbf x_T) v(m_{\ell,T},m_{\ell',T},T)
\Big)^{\top}
\)
is non-zero. Its probability limit 
\[{g}_{\ell, \ell'}=
\Big(
 P(\mathbf X=\mathbf x_1) v(m_{\ell,1},m_{\ell',1},1),
\ldots,
 P(\mathbf X=\mathbf x_T) v(m_{\ell,T},m_{\ell',T},T)
\Big)^{\top}
\]
is non-zero as well. By Proposition \ref{prop:multinom_probest2}(a),
\[
\sqrt n\Big(
\phi(\widehat F(Y| \mathbf X=\mathbf x_1))-\phi(F(Y| \mathbf X=\mathbf x_1)),
\dots,
\phi(\widehat F(Y| \mathbf X=\mathbf x_T))-\phi(F(Y| \mathbf X=\mathbf x_T))
\Big)
\stackrel{d}{\to} \mathcal{N}(0,\Sigma),
\]
with $\Sigma$ positive definite. From here , 
\[
\sqrt n\,\widehat{\Delta}_{\ell,\ell'}
\stackrel{d}{\rightarrow} \mathcal N(0,g_{\ell,\ell'}^{\top} \Sigma g_{\ell,\ell'}),
\]
where $g_{\ell,\ell'}^{\top} \Sigma g_{\ell,\ell'}>0$,
because $g_{\ell,\ell'}$ is non-zero vector and $\Sigma$ is positive definite.

Since the limiting distribution of
$\sqrt n\,\widehat\Delta_{\ell,\ell'}$
is nondegenerate and absolutely continuous,  this implies \eqref{C3part3}.  

(C4)  Fix $\ell$ and suppose $a\in\mathcal A$ satisfies the closed constraints:
\[
\psi(\mathbf{x},a)\in \overline D_{m(\mathbf{x})}\ \ \forall \mathbf{x}\in\mathcal X\setminus\mathcal X_0,
\qquad
\psi(\mathbf{x}_t,a)\in \overline D^*_{m_{\ell,t}}\ \ \forall \mathbf{x}_t\in\mathcal X_{0,\ell}.
\]
Since $A_{\ell}$ we consider is non-empty, pick any $b\in A_\ell$  and define
$a_\lambda=(1-\lambda)a+\lambda b$. 
Since $b$ satisfies the open (strict inequalities) constraints defining $A_{\ell}$ and $a$ satisfies analogous constraints but with some of them weak, we conclude that  $a_{\lambda}$ satisfies all strict inequality constraints defining $A_{\ell}$. We can choose $\lambda$ to make $a_{\lambda}$ arbitrarily close to $a$. 

(C5) Let $\alpha\in\mathcal A_0$. For each $x\in\mathcal X\setminus\mathcal X_0$, all exhaustive inequalities
are strict and (because $\mathcal X\setminus\mathcal X_0$ is finite) remain satisfied under all
sufficiently small perturbations of $\alpha$.

Fix $\ell$ and consider any $\mathbf{x}_t\in\mathcal X_{0,\ell}$ at which $A_\ell$ imposes additional strict
tie-break constraints (relative to the block coarse set). Take any $b \in A_{\ell}$, then  $a^+_{\varepsilon} = (1-\lambda)\alpha + \lambda b\in D^*_{m_{\ell,t}}=D_{m_{\ell,t}}$ for small enough $\lambda$ for all $\mathbf{x}_t \in \mathcal{X}_{0,\ell}$,  and of course satisfies all the constraints for $\mathbf{x}\in \mathcal{X}\setminus \mathcal{X}_0$. 

Since $\alpha$ satisfies the corresponding
sharp equalities within the true tie blocks, there exist arbitrarily small perturbations that move
$\psi_j(\mathbf{x},a)-\psi_k(\mathbf{x},a)$ for some $(j,k)$ tied in the population  to either side of zero  while preserving all strict
 constraints for $\mathbf{x} \in \mathcal{X}\setminus \mathcal{X}_0$. In particular, we can choose a small move of $\psi_j(\mathbf{x},a)-\psi_k(\mathbf{x},a)$ to the side that is inconsistent with the definition of $A_{\ell}$. This will produce a  perturbation 
 violating $A_\ell$ at $\mathbf{x}_t$. 

(C6) obviously holds. For $\ell$ and $\mathbf{x}_t\in\mathcal X_0$,
The sharp identified restrictions at $\mathbf{x}_t$ include equalities within at least one
population tie block. For illustrational convenience, consider the case of just one tie block with elements $j_1$,...,$j_d$ in it: that is, $\psi_{j_1}(\mathbf{x}_t,a)=...=\psi_{j_d}(\mathbf{x}_t,a)$ in that block. Without a loss of generality  this equality is resolved in $D_{m_{\ell,t}}=D_{m_{\ell,t}}^*$ as  $\psi_{j_1}(\mathbf{x}_t,a)>...>\psi_{j_d}(\mathbf{x}_t,a)$ holding
all between-block inequalities fixed. Since the set of equalities of $\psi_{j_1}(\mathbf{x}_t,a)=...=\psi_{j_d}(\mathbf{x}_t,a)$ is clearly at the  boundary of the polyhedron $\psi_{j_1}(\mathbf{x}_t,a)>...>\psi_{j_d}(\mathbf{x}_t,a)$ (each point with equalities will contain points with strict inequalities as well as points with such strict inequalities violated in its arbitrary neighborhood) and all the between-block inequalities are fixed, this guarantees that $D_{m(\mathbf{x}_t)} \subseteq \partial D^*_{m_{\ell,t}} $.

(C7) The difference between $ D^*_{m(\mathbf{x})}$ and $\overline D^*_{m(\mathbf{x})}$ for any $\mathbf{x} \in \mathcal{X}$ is that in the latter the inequalities between tie-blocks are now weak  while in the former they are strict. 

Take  $a\in\mathcal A^{*,\bullet}$ such that $\psi(\mathbf{x},a)\in\overline D^*_{m(\mathbf{x})}\setminus D^*_{m(\mathbf{x})}$ for some $x \in \mathcal{X}$. This means  
that at least one between-block inequality holds with equality:
$\min_{j\in B_r(\mathbf{x})}\psi_j(\mathbf{x},a)=\max_{k\in B_{r'}(\mathbf{x})}\psi_k(\mathbf{x},a)$ for some $r<r'$.
Since $\mathcal A^*$ is nonempty, pick $b\in\mathcal A^*$ so that all between-block gaps are strict at
$b$. Then $a_\lambda=(1-\lambda)a+\lambda b$ satisfies the strict between-block inequalities for all
sufficiently small $\lambda>0$, and $a_\lambda\to a$. Hence any neighborhood of $a$ intersects
$\{a:\psi(\mathbf{x},a)\in D^*_{m(\mathbf{x})}\ \forall \mathbf{x}\}$, proving (C7).

(C8) Fix $\mathbf{x}_t\in\mathcal X_0$. Recall that the coarse set $D^*_{m(\mathbf{x}_t)}$ is defined by the between-block ordering induced
by the population partition $\mathfrak B(\phi(F(Y|\mathbf{x}_t))$ (within a tie-block anything is allowed). Every regime $m\in\mathcal M_{as}(\mathbf{x}_t)$ corresponds to 
 the same between-block ordering with additional within-block strict refinements. The union $\bigcup_{m\in\mathcal M_{as}(\mathbf{x}_t)} D^*_m$ will preserve the between-block ordering and within tie-block allow any strict refinement but will exclude any ties. Once the closure operation is applied, the ties (all or some of them) will be included as well and  between-block ordering will be allowed to be weak , making $\overline{\bigcup_{m\in\mathcal M_{as}(\mathbf{x}_t)} D^*_m}$ the set describing the weak between-block ordering induced
by the population partition $\mathfrak B(\phi(F(Y|\mathbf{x}_t))$. 

As $\mathbf{x}_t \in \mathcal{X}_0$, then  by construction $\overline D^*_{m(\mathbf{x}_t)}$ will do exactly the same in the sense that it will also describe the set describing the weak between-block ordering and allow anything within any tie block induced
by the population partition $\mathfrak B(\phi(F(Y|\mathbf{x}_t))$. Thus, in this model, $\overline D^*_{m(\mathbf{x}_t)}
=
\overline{\bigcup_{m\in\mathcal M_{as}(\mathbf{x}_t)} D^*_m}$, which implies (C8). 
 
(C9) The first part of (C9) requires each $\mathcal{F}_{\ell}$ is singleton. Suppose $m\neq m'$ for different profile $m=(m_1,...,m_T)$ and $m'=(m_1',...,m_T')$ with $m_t, m_t' \in \mathcal{M}_{as}(\mathbf{x}_t)$, $t=1,\ldots,T$. Since $m\neq m'$, there exists $t$ such that $m_t\neq m'_t$, implying that for some pair $j,k$ from the same tie block ($j\neq k)$ their  tie  is resolved in $A(m)$ as $\{\psi_j>\psi_k\}$ whereas the same tie is resolved in  $A(m')$ as $\{\psi_j<\psi_k\}$. 
Potentially one or both sets being empty. If both are nonempty, then necessarily $A(m)\cap A(m')=\emptyset$. Thus, no two distinct profiles can generate the same nonempty region, and each $\mathcal F_\ell$ is a singleton.

For each $\mathbf{x}_t\in\mathcal X_0$, the size of $\mathcal M_{as}(\mathbf{x}_t)$ is the product of factorials of the number of ties within each block induced by the population partition $\mathfrak B(\phi(F(Y|\mathbf{x}_t))$. E.g. in the case of just one tie block $B_r(\mathbf{x}_t)$ with at least two elements  the size of  $\mathcal M_{as}(\mathbf{x}_t)$ is $|B_r(\mathbf{x}_t)|!$. Take any $j,k \in B_r(\mathbf{x}_t) $ for $j\neq k$. If in the resolution of ties we want to keep  $\psi_j(\mathbf{x}_t,a)>\psi_k(\mathbf{x}_t,a)$, then this would leave us with  ($(|B_r(\mathbf{x}_t)|!/2$) strict orderings. Thus,  each subset $\mathcal J_t\subseteq\mathcal M_{as}(\mathbf{x}_t)$ such that
$|\mathcal J_t|\ge \lfloor |\mathcal M_{as}(\mathbf{x}_t)|/2\rfloor+1$, will necessarily include cases of \textit{both}  $\psi_j(\mathbf{x}_t,a)>\psi_k(\mathbf{x}_t,a)$ and $\psi_j(\mathbf{x}_t,a)<\psi_k(\mathbf{x}_t,a)$ for any such $j,k$, $j \neq k$. The intersection of closures then impossible will contain only $\psi_j(\mathbf{x}_t,a)=\psi_k(\mathbf{x}_t,a)$. This is  true for any pair $j,k$ from a tie block.

(C10) This condition means that 
for each $\ell$ and each $t$, for the profile $m=(m_t,m_{-t})\in\mathcal F_\ell$,
every alternative way to resolve tie-breaks  within tie-blocks for $\mathbf{x}_t$ denoted as $\,\widetilde m_t\in\mathcal M_{as}(\mathbf{x}_t)$ with $\widetilde m_t\neq m_t$, 
generates a different nonempty region: $A(\widetilde m_t,m_{-t})\neq \emptyset
\quad\text{and}\quad
A(\widetilde m_t,m_{-t})\neq A(m_t,m_{-t})$.
Thus changing the tie-breaking at $\mathbf{x}_t$ locally while holding the tie-breaks for other $\mathbf{x}_{\tau} \in \mathcal{X}_0$, $\tau \neq t$, fixed produces a distinct feasible region. 

Recall the binary choice case.  (C10) there was satisfied under linear independence of components in $\mathcal{X}_0$. It would be hard to formulate a condition in terms of linear independence here as different $\mathbf{x}_t$ may tie same or different options, etc. Instead, we will formulate it in terms of the relevance of tie constraints for the determination of the identified set $\mathcal{A}_0$.

\textit{Suppose for each $\mathbf{x}_t \in \mathcal{X}_0$, removing any single tie constraint $\psi_j(\mathbf{x}_t,a)=\psi_k(\mathbf{x}_t,a)$  for $k\neq j$ within any tie-block $B_r(\mathbf{x}_t)$ obtained in the  population-based partition $\mathfrak B(\phi(F(y|\mathbf{x}_t))$ results in the affine dimension of the set obtained using all the other exhaustive characterization constraints (denote this set as $\mathcal{A}_0(\mathbf{x}_t; j,k)$)  being strictly greater than the affine dimension of $\mathcal{A}_0$ ($\mathcal{A}_0$ would be obtained if we put back that tie constraint: $\mathcal{A}_0=\mathcal{A}_0 \cap \{a: \psi_j(\mathbf{x}_t,a)=\psi_k(\mathbf{x}_t,a)\}$.)} Then (C10) holds.  

Indeed, by the larger affine dimension of $\mathcal{A}_0(\mathbf{x}_t; j,k)$, we can find vector $v_{\ell} \in \mathbf{R}^K$ such that  its subvectors $v_{\ell  j}$ and $v_{\ell k}$ corresponding to option $j,k$ are such that 
$\mathbf{x}_{t,j}'v_{\ell  j} = \mathbf{x}_{t,k}'v_{\ell  } \neq 0,$ 
whereas for any other pair $j_1,j_2$ with a tie in $\mathfrak B(\phi(F(y|\mathbf{x}_s))$ for some $\mathbf{x}_s \in \mathcal{X}_0$, we have 
$\mathbf{x}_{s,j_1}'v_{\ell  j_1} - \mathbf{x}_{s,j_2}'v_{\ell  j_2} =0.$

Now take any $a^\circ \in A(m)$ for some $m$ and consider $a_{\varepsilon}=a^\circ+\varepsilon v$. Then for our pair $j,k$ whose exclusion for $\mathbf{x}_t$ determined $\mathcal{A}_0(\mathbf{x}_t; j,k)$ we have 
\begin{equation} \label{condC10mult_help1}\mathbf{x}_{t,j}'a_{\varepsilon,  j} - \mathbf{x}_{t,k}'a_{\varepsilon,  k} \neq  0.
\end{equation}

For any other pair $j_1,j_2$ with a tie in $\mathfrak B(\phi(F(y|\mathbf{x}_s))$ for some $\mathbf{x}_s \in \mathcal{X}_0$  we have 
\begin{equation} \label{condC10mult_help2} {\mathbf{x}}'a_{\varepsilon,  j_1} - \mathbf{x}_{s,j_2}'a_{\varepsilon,  j_2} =  0.
\end{equation}

Note that $a^\circ$ satisfies all strict inequalities between blocks. Since it is  finitely many of such strict inequalities,  $a^\circ$ is in the interior of this set described by all such inequalities and $a_{\varepsilon}$ will be too for $\varepsilon$ small enough in the absolute value. Conditions \eqref{condC10mult_help2} imply that $a_{\varepsilon}$ will satisfy that same strict tie-breaking in $A(m)$ as the original $a^\circ$ for all pairs $j_1,j_2$ mentioned above that tie in the population for some $x_s$ but are different from our leading pair $j,k$. Finally, given \eqref{condC10mult_help1}  we can choose
$\varepsilon_+,\varepsilon_->0$ small enough and such that
$$\mathbf{x}_{t,j}'a_{\varepsilon_+,j}-\mathbf{x}_{t,k}'a_{\varepsilon_+,k} > 0,
\qquad
\mathbf{x}_{t,j}'a_{-\varepsilon_-,j}-\mathbf{x}_{t,k}'a_{-\varepsilon_-,k}<0.$$
Then for sufficiently small $\varepsilon_+>0$, $\varepsilon_->0$ we have two profiles $a_{\varepsilon_+}$, $\varepsilon_->0$ belonging to  $A(m)$ and $A(m')$ where the difference between profiles $m$ and $m'$ is only in the direction in which the tie-break between $k$ and $j$ at $\mathbf{x}_t$ is resolved with all the other  tie resolutions the same across both profiles and, of course, with all the between block inequalities preserved. From this case one can easily obtain that the same will happen when $m$ and $m'$ differ in  tie resolutions for $\mathbf{x}_t$ when there are several ties for this $\mathbf{x}_t$, not just one, as we can apply our proof here sequentially in that situation. 

This confirms (C10) under the conditions of relevance described above.

(C11) In the binary choice model this case handled the cases of linear dependence of $\mathcal{X}_0$ elements. Given that in our multinomial case we instead formulated a relevance condition as that of dimemnsionality reduction by each tie, we will see   case (C11) as the one handling violations of such relevance.   

Let $\mathcal F:=\{m:\ A(m)\neq\emptyset\}$. Fix any $m\in\mathcal F$ and choose $a\in A(m)$. Take $\alpha_0 \in \mathcal{A}_0$ which is interior to the set described by the strict between block inequalities (thus, any $\alpha_0 \in \mathcal{A}_0$ which is not on the boundary induced by the parameter space $\mathcal{A}$). There exists $\lambda\in(0,1)$ such that $a_\lambda:=\alpha_0+\lambda(a-\alpha_0)\in B(\alpha_0,\delta)$ is in the set described by the strict between block inequalities. Moreover, because at $\alpha_0$ all the relevant ties  for all $\mathbf{x}_t\in\mathcal X_0$ hold, we have that $a_\lambda$ satisfies the same  strict inequalities as $a$ and hence $a_\lambda\in A(m)$ for small enough $\lambda$.

Now consider the reflection $\widetilde a_\lambda=2\alpha_0-a_\lambda=\alpha_0-\lambda(a-\alpha_0)$. For sufficiently small $\lambda$, the reflection $\widetilde a_\lambda$ satisfies all the strict between block inequalities. However, for each $t$ the tie-breaking rule in each tie block induced by the population $\mathfrak B(\phi(F(y|\mathbf{x}_t)))$ is flipped. E.g. if in the profile $m$ the tie was broken as $j_1>...>j_d$ in a tie block, it will now be broken as $j_d>...>j_1$. All these flipped tie breakings correspond to another profile which we can denote as $fl{m}$. 

To sum up, feasible profiles come in pairs $(m,fl(m))$. Therefore, for every $t$,
\[
\#\{m\in \mathcal{F}:  \text{ given } m_t \text{ for } \mathbf{x}_t  \}=\#\{m\in\mathcal F:\ fl(m_t) \text{ for } \}=\frac{L}{\prod_{r=1}^{R(\mathbf{x}_t)}|B_r(\mathbf{x}_t)|!}=:N(\mathbf{x}_t),
\]
which verifies (C11).

\vskip 0.05in

\noindent \textbf{Illustrative Design 3 (continued).} 
Continuing with this design, suppose that at the true $\alpha$ in the DGP  we have $\alpha_{11}=1$, $\alpha_{21}=0$, $\alpha_{31}=0.5$ (with $\alpha_{12}=-1$, $\alpha_{22}=1$, 
$\alpha_{32}=-1$ complying with the normalization). Then 
$\mathcal{X}_0 = \{\mathbf{x}_1, \mathbf{x}_2, \mathbf{x}_3, \mathbf{x}_4\}$ with $\mathbf{x}_1=(1,0,0)^{\top}$, $\mathbf{x}_2=(1,0,1)^{\top}$, $\mathbf{x}_3=(0,1,0)^{\top}$, $\mathbf{x}_4=(0,1,0)^{\top}$. In  the population option 1 and 2 are ties at all these support points. 

The block maximum score (equivalently, the coarse structure induced by it) drops all within-block tie restrictions. 
Assembling the resulting constraints from all six support points gives
\begin{multline*}
\mathcal{A}^* = \bigl\{(a_{11},-1,a_{21},1,a_{31},-1)\in\mathcal{A}:\;
a_{21}<a_{31}<a_{21}+1,\; 
a_{31}>a_{11}-1,\; \bigr. \\ \bigl. 
a_{21}<a_{11}<a_{21}+2, \; 
a_{11}>a_{31}
\bigr\}.
\end{multline*}
This is a three-dimensional region given that $\mathcal{A}$ i large enough and has a non-empty interior in the normalized space. Coarse $\mathcal{A}^*$ is then strictly larger than the
two-dimensional set $\mathcal{A}_0$.

Since every point in $\mathcal{X}_0$ only ties options 1 and 2, it  poses the same binary question of whether $u_1>u_2$ or
$u_1<u_2$ in the score, which in the context of this example is equivalent to whether $a_{11}>a_{21}+1$ or $a_{11}<a_{21}+1$. Naturally, only two resolutions  
give  non-empty sets resulting in $L=2$. 

For region $A_1$, all ties are resolved as $u_1>u_2$, which means $a_{11}>a_{21}+1$, resulting in 
$$A_1 = \bigl\{(a_{11},-1,a_{21},1,a_{31},-1)\in\mathcal{A}:\;
a_{21}+1 < a_{11} < a_{21}+2,\; 
a_{21} < a_{31} < a_{21}+1\bigr\}.$$
This is a three-dimensional rectangular prism swept along the $a_{21}$ direction ($a_{31}<a_{11}$ is automatic since $a_{31}<a_{21}+1<a_{11}$). 

For region $A_2$, all ties resolved as $u_2>u_1$, which means $a_{11}<a_{21}+1$, resulting in 
$$A_2 = \bigl\{(a_{11},-1,a_{21},1,a_{31},-1)\in\mathcal{A}:\;
a_{21} < a_{11} < a_{21}+1,\; 
a_{21} < a_{31} < a_{11}\bigr\}.
$$
This is a three-dimensional triangular prism swept along the $a_{21}$ direction.

All mixed profiles would simultaneously require $a_{11}>a_{21}+1$ and $a_{11}<a_{21}+1$, so they
would be empty, confirming $L=2$.

$A_1$ and $A_2$ partition $\mathcal{A}^*$ and their shared boundary is precisely
$\mathcal{A}_0$:
$$\overline{A}_1\cap\overline{A}_2
= \bigl\{a_{11}=a_{21}+1,\; a_{21}\leq a_{31}\leq a_{21}+1\bigr\}
= \overline{\mathcal{A}_0}. \quad \blacksquare$$

\subsection{Proof of Theorem \ref{th:estimatorafterQRSE} and discussion of why consistency holds in all Theoretical examples 2 and 3.} 

\noindent\textit{Case 1: $\mathcal{X}_0=\varnothing$.}
By Theorem~\ref{th:general5} Case~2 and the additional conditions formulated for that case, 
$\PP(\overline{\widehat{A}} \neq \overline{\mathcal{A}_0}) \to 0$, so with $\overline{\mathcal{A}_0}$ being the detrmoinistoc set, we conclude that 
$p(u;\overline{\widehat{A}})\to\mathbf{1}\{u\in\overline{\mathcal{A}_0}\}$ 
for each $u$. Since the limit takes only values $0$ and $1$, 
thresholding at any $\tau\in(0,1)$ gives $\PP(\widehat{A}_{RSQ,\tau} \neq \overline{\mathcal{A}_0}) \to 0$, which implies 
$d_H(\widehat{A}_{RSQ,\tau},\overline{\mathcal{A}_0}) \stackrel{p}{\to} 0$. 

\noindent\textit{Case 2: $\mathcal{X}_0\neq\varnothing$.}
 by Theorem~\ref{th:general5} we have  $\PP\!\left(\overline{\widehat{A}}\in
\{\overline{A}_1,\ldots,\overline{A}_L\}\right)\to 1$ as 
$n\to\infty$, meaning that with probability approaching~1, $\overline{\widehat{A}}$ 
takes a value in the finite collection $\{\overline{A}_1,\ldots,
\overline{A}_L\}$. However, in any finite sample, 
$\overline{\widehat{A}}$ may take other values with positive 
probability. For thar reason we decompose as follows: 
$\PP(u\in\overline{\widehat{A}})
=\sum_{\ell=1}^L\PP(\overline{\widehat{A}}=\overline{A}_\ell)
\cdot\mathbf{1}\{u\in\overline{A}_\ell\}
+\PP\!\left(\overline{\widehat{A}}\notin
\{\overline{A}_1,\ldots,\overline{A}_L\},\,
u\in\overline{\widehat{A}}\right).$ 
Since 
$\PP\!\left(\overline{\widehat{A}}\notin
\{\overline{A}_1,\ldots,\overline{A}_L\},\,
u\in\overline{\widehat{A}}\right)\leq
\PP\!\left(\overline{\widehat{A}}\notin
\{\overline{A}_1,\ldots,\overline{A}_L\}\right)\to 0$, the second term here is asymptotically negligible. By Theorem~\ref{th:general5}, 
$\PP(\overline{\widehat{A}}=\overline{A}_\ell)\to
\PP(d_\ell=1)$ as $n\to\infty$,
 $\ell=1,\ldots,L.$
Therefore, 
$$\PP(u\in\overline{\widehat{A}})\to
\sum_{\ell=1}^L\PP(d_\ell=1)\cdot
\mathbf{1}\{u\in\overline{A}_\ell\} :=p_\infty(u)
\quad\text{as }n\to\infty.$$

We show $\{u:p_\infty(u)>\tau\}=\overline{\mathcal{A}_0}$ 
for $\tau\in(1/2,1)$.

\noindent\textit{($\supseteq$)} By property~(b) of 
Theorem~\ref{th:general5}, $\overline{\mathcal{A}_0}\subseteq
\overline{A}_\ell$ for every $\ell$, so 
$p_\infty(u)=\sum_\ell\PP(d_\ell=1)=1$ for all 
$u\in\overline{\mathcal{A}_0}$. 

\noindent\textit{($\subseteq$)} 
Using the additional condition of $\PP(d_{\ell}=1)=1/L$ for all $\ell=1, \ldots, L$, we have $p_\infty(u)=|\{\ell:u\in\overline{A}_\ell\}|/L$. 

If $u\notin\overline{\mathcal{A}_0}$, 
then by property~(c) of Theorem~\ref{th:general5}, $u$ belongs 
to at most $\lfloor L/2\rfloor$ of the sets 
$\overline{A}_1,\ldots,\overline{A}_L$.
Therefore,  $p_\infty(u)\leq 1/2<\tau$ for $u\notin\overline{\mathcal{A}_0}$.

Thus,  $p_\infty(u)=1$ for 
$u\in\overline{\mathcal{A}_0}$ and $p_\infty(u)\leq 1/2$ 
for $u\notin\overline{\mathcal{A}_0}$. Since 
$p(u;\overline{\widehat{A}})\to p_\infty(u)$ for each 
$u\in\mathcal{A}$ and the gap 
$\min(\tau-1/2,\,1-\tau)>0$ is uniform, we have 
$\{u:p(u;\overline{\widehat{A}})\geq\tau\}=
\overline{\mathcal{A}_0}$ for all large $n$, giving 
$\PP(\widehat{A}_{RSQ,\tau} \neq \overline{\mathcal{A}_0}))\to 0$. $\blacksquare$

\bigskip 

\textbf{Discussion of consistency for Theoretical examples 2 and 3} 

\textbf{Theoretical example 2.} For the panel model, Proposition~\ref{prop:bcpanel_probest2}(b) 
establishes the probability result 
analogous to the one in Proposition \ref{prop:bccross_probest2}(b) but now with $\Delta\widehat{P}(Y=1| \mathbf{x}_t)$ 
in place of $\widehat{P}(Y=1| x_t)-\tau$, and the coarse 
structure is identical to Theoretical Example~1. The 
argument then proceeds verbatim under (C10) (linearly 
independent first differences $\Delta \mathbf{x}_1,\ldots,\Delta \mathbf{x}_T$) 
or (C11) (the general case), to establish that $\PP(d_\ell)=1/L$, $\ell=1, \ldots, L$, when $\mathcal{X}_0\neq \varnothing$. 

When $\mathcal{X}_0 =\varnothing$, the fact that $d_H(\overline{\mathcal{A}^*}, \overline{\mathcal{A}_0})=0$ is true for the same reason as in Theoretical example 1 as the only difference between $\mathcal{A}^*$ and $\mathcal{A}_0$ in this case is driven by $\mathbf{x}$ with $m(\mathbf{x})=1$, the only difference between $D_1^*$ and $D_1$ being the weak inequality $(x_2-x_1)'a \geq 0$ in the former instead of the strict inequality $(x_2-x_1)'a > 0$ in the latter.


\textbf{Theoretical example 3.} The multinomial case requires additional care because 
$|\mathcal{M}_{as}(\mathbf{x}_t)|$ may exceed~2 when a 
support point $\mathbf{x}_t\in\mathcal{X}_0$ has a tied 
block of size greater than~2. Specifically, if 
$\mathbf{x}_t$ has $\Upsilon(\mathbf{x}_t)$ tied pairs, 
then $|\mathcal{M}_{as}(\mathbf{x}_t)|=2^{\Upsilon(\mathbf{x}_t)}$ 
(the number of strict orderings consistent with the 
between-block ordering). By 
Proposition~\ref{prop:multinom_probest2}(b), for any 
profile $(m_1,\ldots,m_T)\in\times_{t=1}^T\mathcal{M}_{as}
(\mathbf{x}_t)$:
$$\PP\!\left(\bigcap_{t=1}^T\bigl\{
\widehat{\mathcal{P}}(\mathbf{x}_t)\in D_{m_t}\bigr\}\right)
\to\frac{1}{2^{\sum_{t=1}^T\Upsilon(\mathbf{x}_t)}},$$
so again all profiles in $\times_{t=1}^T\mathcal{M}_{as}
(\mathbf{x}_t)$ are equally likely asymptotically, now 
with the weight $1/2^{\sum_t\Upsilon(\mathbf{x}_t)}$ 
reflecting the total number of within-block tie resolutions.

Under (C10), which holds under the dimensionality 
reduction condition for the multinomial case discussed in 
the Appendix, all profiles in 
$\times_{t=1}^T\mathcal{M}_{as}(\mathbf{x}_t)$ yield 
nonempty solution sets, so 
$L=\prod_{t=1}^T|\mathcal{M}_{as}(\mathbf{x}_t)|=
2^{\sum_t\Upsilon(\mathbf{x}_t)}$ and, evidently, 
$\PP(d_\ell=1)=1/L$ for each $\ell$. 

Under (C11), which is the general case for the multinomial 
model,  some profiles may be empty. The flip pairing 
now operates at each tied pair $(j,k)$ within each block 
at each $\mathbf{x}_t\in\mathcal{X}_0$: for each nonempty 
profile $\mathbf{m}_\ell$, flipping the resolution of any 
tied pair $(j,k)$ at any $\mathbf{x}_t$ produces another 
profile, and the overall pairing structure ensures that 
$L$ is divisible by~2 and each $m_t\in\mathcal{M}_{as}
(\mathbf{x}_t)$ appears exactly $N(\mathbf{x}_t)=L/
|\mathcal{M}_{as}(\mathbf{x}_t)|$ times across 
all feasible profiles. By the same conditional symmetry 
argument as in the (C11) subcase of the cross-sectional 
proof and using and 
the symmetry of the joint Gaussian distribution conditional 
on an empty profile, we can conclude that each nonempty region is selected asymptotically 
with equal conditional probability $1/L$ when the sample 
falls on an empty profile. The total probability calculation 
then gives $\PP(d_\ell=1)=1/L$ for each $\ell$, and 
$p_\infty(u)=\frac{|\{\ell:u\in\overline{A}_\ell\}|}{L}.$ 
For $u\notin\overline{\mathcal{A}_0}$, there exists a tied 
pair $(j,k)$ at some $\mathbf{x}_t\in\mathcal{X}_0$ with 
$(x_{t,j}-x_{t,k})'u\neq 0$. By condition (C11), exactly 
$L/2$ feasible profiles resolve this pair on the wrong 
side, giving $p_\infty(u)\leq 1/2<\tau$ and hence 
$d_H(\widehat{A}_{RSQ,\tau},\overline{\mathcal{A}_0})\stackrel{p}{\to} 0$ 
for $\tau\in(1/2,1)$.

In all three theoretical examples the argument has the 
same structure. First, we have equal asymptotic probability across all 
profiles (feasible or not) from the relevant proposition , 
equal selection probability $1/L$ across nonempty regions 
from the Gaussian symmetry condition, and the 
majority property from Theorem~\ref{th:general5}(c) and 
condition (C11) that assigns at most $L/2$ nonempty regions 
to any $u\notin\overline{\mathcal{A}_0}$.

\subsection{Proof of Lemma \ref{lemmaLdriftasymp} and Theorem \ref{th:QRSEdrift}}

\textbf{Proof of Lemma \ref{lemmaLdriftasymp}.}
Fix $x_t\in\mathcal X_0$. By the model,
$P_a(Y=1\mid x_t)
=
P(u\le x_t'a| X=x_t)
=
F_{u| X}(x_t'a| x_t)$
for any parameter value $a$.

Take $\alpha_n(h)=\alpha+h/\sqrt n$. Then $P_{\alpha_n(h)}(Y=1|x_t)
=
F_{u| X}\!\left(x_t'\alpha+\frac{x_t'h}{\sqrt n}\,\middle|\,x_t\right).$ 
Under the characterization of the binary choice model in
Section~\ref{sec:theoreticalMSbinary}, $x_t'\alpha=0$,
and therefore $F_{u| X}(0\mid x_t)=\gamma$. Hence, 
$$P_{\alpha_n(h)}(Y=1| x_t)-\gamma
=
F_{u| X}\!\left(\frac{x_t'h}{\sqrt n}\,\middle|\,x_t\right)
-
F_{u| X}(0| x_t).$$
Given that $F_{u| X}(\cdot| x_t)$ is differentiable at $0$, we have
$F_{u | X}(v| x_t)
=
F_{u| X}(0 | x_t)
+
f_{u| X}(0| x_t)\,v
+
o(v)$ 
as $v\to 0.$ 
Substituting $v=\frac{x_t'h}{\sqrt n}$,
we get 
$P_{\alpha_n(h)}(Y=1| x_t)-\tau
=
f_{u| X}(0| x_t)\frac{x_t'h}{\sqrt n}
+
o(n^{-1/2}). \qquad \blacksquare$

\bigskip

\textbf{Proof of  Theorem \ref{th:QRSEdrift}.} 
\textit{Case 1:} $\mathcal{X}_0 =\varnothing$ under $\alpha$. 

In this case, $\alpha$ lies in the interior of a sign-pattern region since for each $x \in \mathcal{X}$ we have $m(x) \in \{1,2\}$. Therefore, in a small neighborhood $U$ of $\alpha$,  for all $a\in U$, the sign pattern of $P_a(Y=1|x)-\gamma$ are  identical to those under $\alpha$. Therefore the
identified set is constant on $U$, and by Theorem \ref{th:estimatorafterQRSE}  the RSQ estimator is
consistent at any parameter value. Thus, for every $a\in U$, $\mathbf A(a)=\overline{\mathcal{A}_0(\alpha)}$.

Likewise, for every fixed $h$, the local perturbation sequence
$\alpha_n(h)=\alpha+h/\sqrt n$ stays in $U$ for all sufficiently large $n$, so
its weak limit is the same deterministic set:
$\mathbf A_h=\overline{\mathcal{A}_0(\alpha)}$.

Thus, $d_H(\mathbf A_h,\mathbf A(a))=0$
for all $a$ sufficiently close to $\alpha$, and the desired conclusion follows.

\noindent
\textit{Case 2:} $\mathcal{X}_0 \neq \varnothing$ under $\alpha$. Take any $h \in \mathcal{H}(\alpha)$. 

 As shown in Lemma \ref{lemmaLdriftasymp}, under $\alpha_n(h)=\alpha+h/\sqrt{n}$ for any $x_t \in \mathcal{X}_0$ we have 
$P_{\alpha_n(h)}(Y=1|x_t)-\gamma
=f_{u|x_t}(0|x_t)\frac{x_t'h}{\sqrt{n}}+o(n^{-1/2}),$ 
which is nonzero for all large $n$ with sign equal to 
$\mathrm{sgn}(x_t'h)$. Hence $\mathcal{X}_0(\alpha_n(h))=\varnothing$ for all 
large $n$, and the regime of each $x_t\in\mathcal{X}_0$ is resolved to
$m^{(h)}(x_t) \in \{1,2\}$. This determines a unique profile 
$\mathbf{m}^{(h)}=(m^{(h)}(x_1),\ldots,m^{(h)}(x_T))\in\{1,2\}^T$ . 
 
Note that the solution set $A(\mathbf{m}^{(h)})$ corresponding to 
this exact profile need not be nonempty. However, conditions  
\eqref{C3part1}-\eqref{C3part3} in (C3) together with $\mathcal{X}_0(\alpha_n(h))=\varnothing$ guarantees that for large $n$ under 
$\alpha_n(h)$, the sample objective is maximized at a unique 
$A_{\ell(h)}$ with probability approaching 1. The key point is that the regime resolution 
$m^{(h)}(x_t)=\mathrm{sgn}(x_t'h)$ at each $x_t\in\mathcal{X}_0$ 
determines which region $A_{\ell(h)}$ the sample objective 
concentrates on, regardless of whether $A(\mathbf{m}^{(h)})$ 
itself is nonempty. Hence 
$\overline{\widehat{A}}\stackrel{W}{\to}\overline{A_{\ell(h)}}$ 
under $\mathbb{P}_{\alpha_n(h)}$. Since $\overline{A_{\ell(h)}}$ is deterministic, 
the random set quantile of $\overline{\widehat{A}}$ for $\tau \in (1/2,1)$ will converge to the  same deterministic set: $$d_H\!\left(\widehat{A}_{RSQ,\tau},\overline{A_{\ell(h)}}\right)
\stackrel{p}{\to}0\quad\text{under }\mathbb{P}_{\alpha_n(h)},$$
so $\mathbf{A}_h^{RSQ}=\overline{A_{\ell(h)}}$.

Let us now analyze what happens under $a \in \mathcal{E}(h)$. For each 
$x_t\in\mathcal{X}_0$, we have  
$x_t'\frac{a-\alpha}{\|a-\alpha\|}\to\frac{x_t'h}{\|h\|}\neq 0,$ 
so $\mathrm{sgn}(x_t'(a-\alpha))=\mathrm{sgn}(x_t'h)$ for all 
sufficiently small $\|a-\alpha\|$. By 
Lemma~\ref{lemmaLdriftasymp}, 
$$P_a(Y=1\mid x_t)-\gamma=f_{u\mid X}(0\mid x_t)\,x_t'(a-\alpha)
+o(\|a-\alpha\|),$$
which is nonzero for all $x_t\in\mathcal{X}_0$ and all 
sufficiently small $\|a-\alpha\|$, with 
$\mathrm{sgn}(P_a(Y=1|x_t)-\gamma)=\mathrm{sgn}(x_t'h)$.  
Hence $\mathcal{X}_0(a)=\varnothing$ for all $a$ considered 
sufficiently close to $\alpha$, and the regime at each 
$x_t\in\mathcal{X}_0$ under $a$ is $m^{(h)}(x_t)$, the same 
as under $\alpha_n(h)$ for large $n$. The regimes at 
$x\in\mathcal{X}\setminus\mathcal{X}_0$ are also unchanged 
for small $\|a-\alpha\|$, since those were determined by 
strict inequalities under $\alpha$ that remain strict in a 
neighborhood of $\alpha$.

Therefore the sign pattern of $P_a(Y=1|x)-\gamma$ across all 
$x\in\mathcal{X}$ under $\mathbb{P}_a$ coincides with that 
under $\mathbb{P}_{\alpha_n(h)}$ for large $n$. Conditions 
\eqref{C3part1}--\eqref{C3part3} in (C3) for the sample maximum score objective function imply (analogously to our discussion above) that  with the regime profile $\mathbf{m}^{(h)}$, the sample objective 
concentrates on the same unique region $A_{\ell(h)}$:
$$\overline{\widehat{A}}\stackrel{W}{\to}\overline{A_{\ell(h)}}
\quad\text{under }\mathbb{P}_a.$$ 
Since this is convergence to a deterministic set, applying the quantile operation gives 
$$d_H\!\left(\widehat{A}_{RSQ,\tau},\overline{A_{\ell(h)}}\right)
\stackrel{p}{\to}0\quad\text{under }\mathbb{P}_a,$$
so $\mathbf{A}^{RSQ}(a)=\overline{A_{\ell(h)}}$.

Thus, 
$\mathbf{A}_h^{RSQ}=\overline{A_{\ell(h)}}=\mathbf{A}^{RSQ}(a)$ 
for all $a$ sufficiently close to $\alpha$ and satisfying the condition on the directions. Therefore, local robustness holds. $\blacksquare$

\subsection{Proof of Theorem \ref{th:bootstrap_consistency}} 
We first show that the bootstrap 
coverage function $\widehat{p}(u)$ converges in probability 
to the population coverage function limit $p_\infty(u)$ (defined in the proof of Theorem \ref{th:estimatorafterQRSE}) for 
each $u\in\mathcal{A}$, and then show that thresholding at 
$\tau\in(1/2,1)$ recovers $\overline{\mathcal{A}_0}$.

Fix $u\in\mathcal{A}$. By the law of large numbers over 
$b=1,\ldots,B$ as $B\to\infty$, it suffices to show that 
$\PP^*\!\left(u\in\overline{\widehat{A}^{(b)}}\right)
\stackrel{p}{\longrightarrow}p_\infty(u),$ 
where $\PP^*$ denotes bootstrap probability conditional on the original sample.

The bootstrap sign pattern at each $x\in\mathcal{X}$ is 
determined by $\mathrm{sgn}(\widehat{P}^{(b)}(Y=1\mid x)-\gamma)$. 
We analyze this separately for $x\in\mathcal{X}\setminus
\mathcal{X}_0$ and $x_t\in\mathcal{X}_0$.

For $x\in\mathcal{X}\setminus\mathcal{X}_0$: 
$|P(Y=1| x)-\gamma|>\delta>0$, so 
$|\widehat{P}(Y=1|x)-\gamma|>\delta/2$ with probability 
approaching 1. The bootstrap fluctuation 
$\widehat{P}^{(b)}(Y=1| x)-\widehat{P}(Y=1| x)=
O_p(m^{-1/2})$ is negligible relative to $\delta/2$ since 
$m\to\infty$, so 
$\mathrm{sgn}(\widehat{P}^{(b)}(Y=1| x)-\gamma)=
\mathrm{sgn}(\widehat{P}(Y=1| x)-\gamma)$
with bootstrap probability approaching 1 in probability.

For $x_t\in\mathcal{X}_0$: 
$\widehat{P}(Y=1| x_t)-\tau=O_p(n^{-1/2})$, while 
the bootstrap fluctuation 
$\widehat{P}^{(b)}(Y=1| x_t)-\widehat{P}(Y=1| x_t)
=O_p(m^{-1/2})$. Since $m=o(n)$, we have $m^{-1/2}\gg 
n^{-1/2}$, so the bootstrap fluctuation dominates the 
centering. By the Bickel-Freedman theorem 
(\citet{Bickelfreedman1981}, Theorem ~2.2):
$$\sqrt{m}\,\bigl(\widehat{P}^{(b)}(Y=1| x_t)-
\widehat{P}(Y=1| x_t)\bigr)_{t=1}^T
\stackrel{d^*}{\longrightarrow}\mathcal{N}(0,\Sigma)
\quad\text{in probability},$$
where $\Sigma$ is the same positive definite matrix as in 
Proposition \ref{prop:bccross_probest2}(a). Since 
$\widehat{P}(Y=1|x_t)-\gamma=o_p(m^{-1/2})$, the 
bootstrap distribution of 
$\sqrt{m}(\widehat{P}^{(b)}(Y=1| x_t)-\gamma)_{t=1}^T$ 
is asymptotically $\mathcal{N}(0,\Sigma)$ conditional on 
the data. Therefore, for any profile 
$(m_1,\ldots,m_T)\in\{1,2\}^T$:
\begin{equation}\label{eq:boot_equal_prob}
\PP^*\!\left(\bigcap_{t=1}^T\bigl\{
\mathrm{sgn}(\widehat{P}^{(b)}(Y=1\mid x_t)-\gamma)
=2\cdot\mathbf{1}(m_t=2)-1\bigr\}\right)\to\frac{1}{2^T}
\quad\text{in probability}.
\end{equation}
Evidently then 
$\PP^*\!\left(\overline{\widehat{A}^{(b)}}\notin
\{\overline{A}_1,\ldots,\overline{A}_L\}\right)\to 0$ in probability. 

The argument now proceeds exactly as in the infeasible 
consistency proof, with $\PP^*$ replacing $\PP$.  Under (C10), all $2^T$ profiles 
are feasible and
$\PP^*(\overline{\widehat{A}^{(b)}}=\overline{A}_\ell)
\to\frac{1}{2^T}=\frac{1}{L}$ in probability, $\ell=1,\ldots,L.$ 

Under (C11), the same empty-profile argument applies in 
the bootstrap. When the bootstrap sign pattern falls on 
an empty profile, the bootstrap sample objective 
concentrates on each nonempty $A_\ell$ with conditional 
bootstrap probability $1/L$ by the same Gaussian symmetry, now applied to the bootstrap 
distribution. The total bootstrap probability calculation 
gives 
$\PP^*(\overline{\widehat{A}^{(b)}}=\overline{A}_\ell)
\to\frac{1}{L}$ in probability, 
$\ell=1,\ldots,L,$
in both subcases. Therefore, 
$$\PP^*(u\in\overline{\widehat{A}^{(b)}})
=\sum_{\ell=1}^L\PP^*(\overline{\widehat{A}^{(b)}}
=\overline{A}_\ell)\cdot\mathbf{1}\{u\in\overline{A}_\ell\}
+o_p(1)\to p_\infty(u)\quad\text{in probability}.$$
By the law of large numbers over $b=1,\ldots,B$, we have $\widehat{p}(u)\stackrel{p}{\longrightarrow}p_\infty(u)
\quad\forall\,u\in\mathcal{A}.$

Now, let us show that thresholding recovers 
$\overline{\mathcal{A}_0}$. From the infeasible consistency proof, $p_\infty(u)=1$ 
for $u\in\overline{\mathcal{A}_0}$ and $p_\infty(u)\leq 1/2$ 
for $u\notin\overline{\mathcal{A}_0}$, with uniform gap 
$\min(\tau-1/2,1-\tau)>0$ for any $\tau\in(1/2,1)$. Since 
$\widehat{p}(u)\stackrel{p}{\to}p_\infty(u)$ for each 
$u\in\mathcal{A}$ and $p_\infty$ is bounded away from 
$\tau$ uniformly, the thresholding is stable in probability 
and $P(\widehat{A}^{boot}_{RSQ,\tau} \neq \overline{\mathcal{A}_0}) \to 0$ in probability.  This implies that $d_H\!\left(\widehat{A}^{boot}_{RSQ,\tau},\,
\overline{\mathcal{A}_0}\right)\stackrel{p}{\longrightarrow}0,$
completing the proof. $\blacksquare$

\subsection{Proof of Theorem \ref{th:bootstrap_robustness}.} 

\noindent\textit{Case 1: $\mathcal{X}_0=\varnothing$.} Since $|P_\alpha(Y=1|x)-\gamma|>\delta>0$ for all $x\in\mathcal{X}$, 
the sign pattern of $P_a(Y=1|x)-\gamma$ is identical to that under 
$\alpha$ for all $a$ in a neighborhood $U$ of $\alpha$, and 
similarly for $\alpha_n(h)$ for large $n$. In this regime, 
$\overline{\widehat{A}}$ converges to $\overline{\mathcal{A}_0(\alpha)}$ 
in probability under both $\PP_a$ and $\PP_{\alpha_n(h)}$. The 
bootstrap then concentrates on $\overline{\mathcal{A}_0(\alpha)}$ 
as well in the sense that $\overline{\widehat{A}^{(b)}}=\overline{\mathcal{A}_0(\alpha)}$ 
with probability approaching~1 for each bootstrap replication, 
$\widehat{p}(u)\to\mathbf{1}\{u\in\overline{\mathcal{A}_0(\alpha)}\}$ 
and $\widehat{A}^{boot}_{RSQ,\tau}\stackrel{W}{\to} \overline{\mathcal{A}_0(\alpha)}$ 
under both measures. Hence $\mathbf{A}^{boot}_h=\mathbf{A}^{boot}(a)=
\overline{\mathcal{A}_0(\alpha)}$.

\medskip
\noindent\textit{Case 2: $\mathcal{X}_0\neq\varnothing$, 
$h\in\mathcal{H}(\alpha)$.} By Lemma~\ref{lemmaLdriftasymp} and the proof of 
Theorem~\ref{th:QRSEdrift}, $\mathcal{X}_0(\alpha_n(h))
=\varnothing$ for large $n$ and the sign pattern of 
$P_{\alpha_n(h)}(Y=1|x_t)-\gamma$ at each 
$x_t\in\mathcal{X}_0$ is $\mathrm{sgn}(x_t'h)$ for large 
$n$. Similarly, for $a\in\mathcal{E}(h)$ sufficiently 
close to $\alpha$, $\mathcal{X}_0(a)=\varnothing$ with 
the same sign pattern $\mathrm{sgn}(x_t'h)$ at each 
$x_t\in\mathcal{X}_0$. In both cases the regime profile 
is the same $\mathbf{m}^{(h)}$, determining the same 
unique region $A_{\ell(h)}$.

Let's show $d_H(\widehat{A}^{boot}_{RSQ,\tau},
\overline{A_{\ell(h)}}) \stackrel{p}{\to} 0$ under both $\PP_{\alpha_n(h)}$ 
and $\PP_a$.

\noindent\textit{Under $\PP_{\alpha_n(h)}$.}
The original sample has $P_{\alpha_n(h)}(Y=1|x_t)-\gamma=
f_{u|X}(0|x_t)x_t'h/\sqrt{n}+o(n^{-1/2})$ at each 
$x_t\in\mathcal{X}_0$, which is nonzero with sign 
$\mathrm{sgn}(x_t'h)$ for large $n$. The bootstrap 
fluctuation at $x_t\in\mathcal{X}_0$ satisfies 
$\widehat{P}^{(b)}(Y=1|x_t)-\widehat{P}(Y=1|x_t)=
O_p(m^{-1/2})$ with $m^{-1/2}\gg n^{-1/2}$ since 
$m=o(n)$. Since $\widehat{P}(Y=1|x_t)-\gamma=O_p(n^{-1/2})
=o_p(m^{-1/2})$, the bootstrap sign at $x_t$ is 
determined by the bootstrap fluctuation rather than 
the centering, and 
sign pattern in $\{1,2\}^T$ occurs with bootstrap 
probability approaching $1/2^T$. By the same argument 
as in Theorem \ref{th:bootstrap_consistency}:
$$\widehat{p}(u)\stackrel{p}{\longrightarrow}
p_\infty(u)=\frac{|\{\ell:u\in\overline{A}_\ell\}|}{L}
\quad\text{under }\PP_{\alpha_n(h)},$$
and thresholding at $\tau\in(1/2,1)$ gives 
$d_H(\widehat{A}^{boot}_{RSQ,\tau},
\overline{\mathcal{A}_0(\alpha_n(h))}) \stackrel{p}{\to} 0$. Since 
$\mathcal{X}_0(\alpha_n(h))=\varnothing$ for large $n$ 
and the regime profile is $\mathbf{m}^{(h)}$, the 
identified set under $\alpha_n(h)$ is $A_{\ell(h)}$ 
for large $n$, so $\mathbf{A}^{boot}_h=
\overline{A_{\ell(h)}}$.

\smallskip 

\noindent\textit{Under $\PP_a$, $a\in\mathcal{E}(h)$.}
For $a\in\mathcal{E}(h)$ sufficiently close to $\alpha$, 
$P_a(Y=1|x_t)-\gamma=f_{u|X}(0|x_t)x_t'(a-\alpha)+
o(\|a-\alpha\|)$ is a fixed nonzero constant with sign 
$\mathrm{sgn}(x_t'h)$ by Lemma~\ref{lemmaLdriftasymp}(b). 
Since $|P_a(Y=1|x_t)-\gamma|$ is bounded away from zero 
for fixed $a$, and $\widehat{P}(Y=1|x_t)\stackrel{p}{\to}
P_a(Y=1|x_t)$ under $\PP_a$, we have 
$|\widehat{P}(Y=1|x_t)-\gamma|>\delta/2>0$ with probability 
approaching~1 for some $\delta>0$. The bootstrap 
fluctuation $O_p(m^{-1/2})$ is therefore negligible 
relative to this gap since $m\to\infty$, and the 
bootstrap sign at $x_t$ equals $\mathrm{sgn}(x_t'h)$ 
with bootstrap probability approaching~1 in probability. 
Hence every bootstrap replication concentrates on the 
same region $\overline{A_{\ell(h)}}$:
 $\widehat{p}(u)\stackrel{p}{\longrightarrow}
\mathbf{1}\{u\in\overline{A_{\ell(h)}}\}
\quad\text{under }\PP_a,$
giving $d_H(\widehat{A}^{boot}_{RSQ,\tau}, \overline{A_{\ell(h)}}) \stackrel{p}{\to} 0$, and $\mathbf{A}^{boot}(a)=
\overline{A_{\ell(h)}}$. $\blacksquare$

\subsection{Proof of Theorem  \ref{th:MSdrift}.} We verify Definition~\ref{def:localrobustness} for the 
maximum score estimator $\widehat{A}$. 

\noindent\textit{Case 1: $\mathcal{X}_0=\varnothing$.} Since $|P_\alpha(Y=1|x)-\gamma|>\delta>0$ for all 
$x\in\mathcal{X}$, the sign pattern of $P(Y=1|x)-\gamma$ 
is stable in a neighborhood $U$ of $\alpha$. For all 
large $n$, $\alpha_n(h)\in U$, so the regime profile 
under $\alpha_n(h)$ equals that under $\alpha$ and 
by Theorem~\ref{th:MSgeneral2} part~2, we have 
$\overline{\widehat{A}}\stackrel{W}{\to}
\overline{\mathcal{A}^*}=\overline{\mathcal{A}_0(\alpha)}
\quad\text{under }\PP_{\alpha_n(h)},$ 
so $\mathbf{A}_h=\overline{\mathcal{A}_0(\alpha)}$. 
For any $a\in U$, the same regime profile applies and 
$\overline{\widehat{A}}\stackrel{W}{\to}
\overline{\mathcal{A}_0(\alpha)}$ under $\PP_a$, so 
$\mathbf{A}(a)=\overline{\mathcal{A}_0(\alpha)}$. 
Hence $d_H(\mathbf{A}_h,\mathbf{A}(a))=0$ for all 
$a$ sufficiently close to $\alpha$, and robustness 
holds.

\smallskip 
\noindent\textit{Case 2: $\mathcal{X}_0\neq\varnothing$, 
$h\in\mathcal{H}(\alpha)$.}

First, let's find $\mathbf{A}_h$. 
By Lemma~\ref{lemmaLdriftasymp}(a), for each 
$x_t\in\mathcal{X}_0$, we have 
$P_{\alpha_n(h)}(Y=1| x_t)-\gamma=
f_{u|X}(0| x_t)\,\frac{x_t'h}{\sqrt{n}}+o(n^{-1/2}),$ 
which is nonzero for all large $n$ with sign 
$\mathrm{sgn}(x_t'h)$, since $f_{u|X}(0\mid x_t)>0$ 
and $x_t'h\neq 0$ by $h\in\mathcal{H}(\alpha)$. Hence 
$\mathcal{X}_0(\alpha_n(h))=\varnothing$ for all large 
$n$, and the regime at each $x_t\in\mathcal{X}_0$ is 
resolved to $m^{(h)}(x_t)=\mathbf{1}(x_t'h>0)+
2\cdot\mathbf{1}(x_t'h<0)$. This pins down a unique 
profile $\mathbf{m}^{(h)}\in\{1,2\}^T$, and by 
conditions \eqref{C3part1}--\eqref{C3part3} and 
Theorem~\ref{th:MSgeneral2} part~2 applied at 
$\alpha_n(h)$, the maximum score estimator concentrates 
on the unique region $A_{\ell(h)}$ determined by 
$\mathbf{m}^{(h)}$ (see the proof for the local robustness of the RSQ estimator for more detail and where it is explained that  $A_{\ell(h)}$ is not necessarily a region obtained directly by  solving \eqref{eq:sys_outside}-\eqref{eq:sys_inside} for  $\mathbf{m}^{(h)}$), so we have 
$$\overline{\widehat{A}}\stackrel{W}{\to}
\overline{A_{\ell(h)}}\quad\text{under }
\PP_{\alpha_n(h)},$$
and hence, $\mathbf{A}_h=\overline{A_{\ell(h)}}$.

Second, let's find $\mathbf{A}(a)$ for $a\in\mathcal{E}(h)$. For $a\in\mathcal{E}(h)$, i.e.\ 
$(a-\alpha)/\|a-\alpha\|\to h/\|h\|$, and each 
$x_t\in\mathcal{X}_0$ we have 
$x_t'\frac{a-\alpha}{\|a-\alpha\|}\to
\frac{x_t'h}{\|h\|}\neq 0,$ 
so $\mathrm{sgn}(x_t'(a-\alpha))=\mathrm{sgn}(x_t'h)$ 
for all sufficiently small $\|a-\alpha\|$. By 
Lemma~\ref{lemmaLdriftasymp}:
$$P_a(Y=1| x_t)-\gamma=f_{u|X}(0| x_t)\,
x_t'(a-\alpha)+o(\|a-\alpha\|)\neq 0,$$
with sign $\mathrm{sgn}(x_t'h)=m^{(h)}(x_t)$ for 
all sufficiently small $\|a-\alpha\|$. Hence 
$\mathcal{X}_0(a)=\varnothing$ and the regime profile 
under $a$ is the same $\mathbf{m}^{(h)}$ as under 
$\alpha_n(h)$ for large $n$. The regimes at 
$x\in\mathcal{X}\setminus\mathcal{X}_0$ are also 
unchanged for small $\|a-\alpha\|$ since they are 
determined by strict inequalities under $\alpha$. 
By Theorem~\ref{th:MSgeneral2} part~2 applied at $a$, 
the maximum score estimator concentrates on the same 
unique region $A_{\ell(h)}$:
$$\overline{\widehat{A}}\stackrel{W}{\to}
\overline{A_{\ell(h)}}\quad\text{under }\PP_a,$$
so $\mathbf{A}(a)=\overline{A_{\ell(h)}}$.

In summary,
$\mathbf{A}_h=\overline{A_{\ell(h)}}=\mathbf{A}(a)$ 
for all $a\in\mathcal{E}(h)$ sufficiently close to 
$\alpha$, so 
$\lim_{a\to\alpha,\,a\in\mathcal{E}(h)}
\PP\!\left(d_H\!\left(\mathbf{A}_h,
\mathbf{A}(a)\right)=0\right)=1.$ 
Since this holds for every $h\in\mathcal{H}(\alpha)$ 
with $\|h\|>0$, the supremum over $h$ equals~1, 
confirming the local robustness of $\widehat{A}$ in the 
sense of Definition~\ref{def:localrobustness}.$\blacksquare$

\subsection{Additional illustrations in the Empirical application} 

\noindent \textbf{Maximum score description.} The maximum score estimate collects all ${\alpha}=(\alpha_0,\alpha_1,-1,\alpha_3,\alpha_4,\alpha_5)'$ that satisfy 
$$(\alpha_0,\alpha_1,\alpha_3,\alpha_4,\alpha_5) 
\left(
\begin{array}{ccccccccccccccccc}
     1   &  1  &   1 &    1   &  1   &  1  &   1  &   1   &  1  &   1  &   1   &  1   &  1   &  1   &  1   &  1  &   1\\
     0   &  0   &  0  &   0   &  0  &   0   &  0   &  0  &   1  &   1   &  1  &   1  &   1  &   1  &   1   &  1  &   1\\
     0   &  0  &   1  &   1  &   2  &   2  &   1  &   2  &   0  &   0   &  1   &  1   &  2   &  2  &   0 &    1  &   2\\
     0   &  1   &  0  &   1   &  0  &   1  &   1   &  1    & 0   &  1  &   0  &   1   &  0  &   1  &   0   &  0  &   0\\
     0   &  0  &   0   &  0  &   0   &  0   &  0  &   0  &   0   &  1   &  0  &   1  &   0  &   1  &   0   &  0   &  0\\  
\end{array}
\right) \geq c_1
$$
with 
$c_1=(0,0,0,0,0,0,1,1,0,0,0,0,0,0,1,1,1)$
and 
$$
(\alpha_0,\alpha_1,\alpha_3,\alpha_4,\alpha_5)  \left( 
    \begin{array}{cccc}
    1   &  1 &    1  &   1 \\
     0  &   0  &   1  &   1 \\
     1  &   2  &   0  &   1 \\
     0  &   0  &   1  &   1 \\
     0  &   0   &  1  &   1 \\
     \end{array}
     \right)  < c_2$$
with $c_2=(1,1,1,1)$. We can establish that 
$$\widehat{{A}}\subseteq
[0,1.5]\times[0,+\infty)\times[-0.5,0.5]\times
[0,+\infty)\times(-\infty,0].$$

\begin{figure}[ht]
    \centering
    \begin{minipage}{0.46\textwidth}
        \centering
        \includegraphics[width=\linewidth]{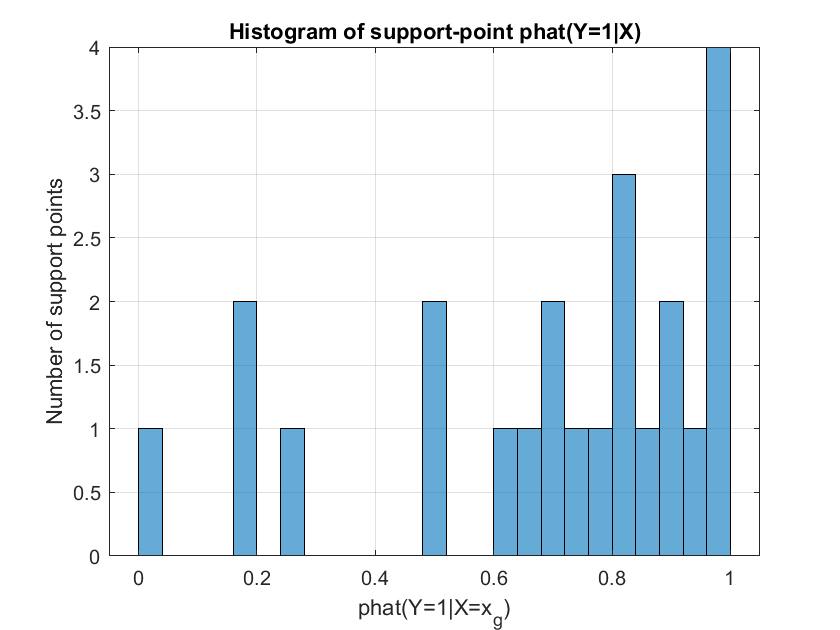}
    \end{minipage}
    \hfill
    \begin{minipage}{0.53\textwidth}
        \centering
        \includegraphics[width=\linewidth]{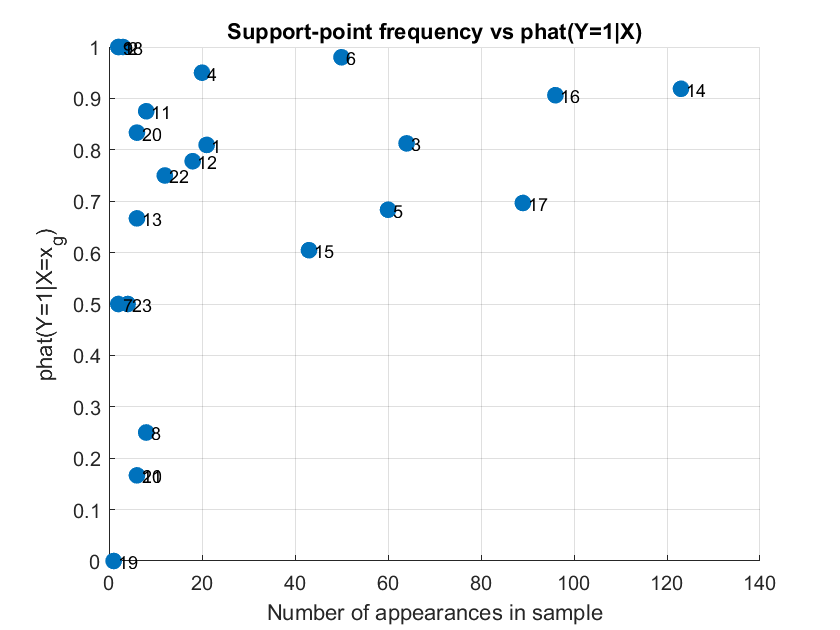}
    \end{minipage}
    \caption{\label{fig:Empiriccs_preliminary} Left: Histogram of the estimated choice probabilities at the support points. Right: empirical frequencies of the support points and their respective estimated choice probabilities.}
\end{figure}

\end{document}